\documentclass[accepted]{uai2026} 
                        
\usepackage[american]{babel}
\usepackage{natbib} 
\usepackage{mathtools} 
\usepackage[ruled,vlined]{algorithm2e}
\usepackage{siunitx} 
\usepackage{booktabs}
\usepackage{tikz}
\usepackage{circuitikz}
\usepackage{multicol,multirow}
\usepackage[table, dvipsnames]{xcolor}
\usepackage{float}
\usepackage{bigints}
\usepackage{tabularx}
\usepackage{chngcntr} 
\usepackage{xltabular}
\usepackage{array}
\usepackage{amsfonts,amsmath,amssymb,amsthm,mathptmx,mathrsfs, bm}
\usepackage{subcaption}
\usepackage[most]{tcolorbox}
\usepackage{pifont}  
\usepackage{hyperref}

\makeatletter
\let\oldaddcontentsline\addcontentsline
\renewcommand{\addcontentsline}[3]{} 
\makeatother

\newcommand{\vasst}{$\mathsf{VaSST}$}
\newcommand{\vasstmain}{\hyperref[alg:vasst]{$\mathsf{VaSST}$}}
\newcommand{\bms}{\texttt{BMS}}
\newcommand{\bsr}{\texttt{BSR}}
\newcommand{\qlattice}{\texttt{QLattice}}
\newcommand{\pysr}{\texttt{PySR}}
\newcommand{\dsr}{\texttt{DSR}}
\newcommand{\operon}{\texttt{operon}}
\newcommand{\gplearn}{\texttt{gplearn}}
\newcommand{\deap}{\texttt{DEAP}}
\newcommand{\sr}{\texttt{SR}}
\newcommand{\vart}{\texttt{VaRT}}

\definecolor{TreeGray}{RGB}{171,171,171}
\definecolor{TreeOuter}{RGB}{20,71,230}
\definecolor{TreeFill}{RGB}{219,234,254}
\definecolor{vasstgray}{gray}{0.94}
\hypersetup{
  colorlinks=true,
  linkcolor=Blue,
  citecolor=Blue,
  urlcolor=Blue,
  filecolor=Blue
}

\SetKwComment{Comment}{$\triangleright$\ }{}

\newcommand{\eqnref}[1]{%
  \hyperref[#1]{\textcolor{RawSienna}{\textbf{(\ref*{#1})}}}%
}
\allowdisplaybreaks

\newtheorem{theorem}{Theorem}[]
\newtheorem{remark}[theorem]{Remark}
\newtheorem{lemma}{Lemma}[section]

\tcbset{
  llmsrpromptbox/.style={
    enhanced,
    breakable,
    colback=Blue!2,
    colframe=Blue!55!black,
    coltitle=Blue!55!black,
    colbacktitle=Blue!2,
    fonttitle=\bfseries\scriptsize,
    fontupper=\scriptsize,
    boxrule=0.65pt,
    arc=1.6mm,
    left=1.2mm,
    right=1.2mm,
    top=1.2mm,
    bottom=1.2mm,
    titlerule=0.4pt,
    toptitle=0.8mm,
    bottomtitle=0.8mm,
    before skip=6pt,
    after skip=6pt,
    title after break={\texttt{LLM-SR} prompt design continued from previous page},
    before upper={
      \setlength{\parskip}{2pt}
      \setlength{\itemsep}{1pt}
      \setlength{\parsep}{0pt}
      \setlength{\topsep}{2pt}
    }
  }
}

\title{\texorpdfstring{\vasst}{VaSST}: Variational Inference for Symbolic Regression using Soft Symbolic Trees}

\author[1]{\href{mailto:sroy_123@tamu.edu}{Somjit Roy}\thanks{Corresponding author: <\href{mailto:sroy_123@tamu.edu}{sroy\_123@tamu.edu}>.}}
\author[1]{\href{mailto:pritam.dey@tamu.edu}{Pritam Dey}{}}
\author[1]{\href{mailto:bmallick@stat.tamu.edu}{Bani K. Mallick}{}}
\affil[1]{
    Department of Statistics\\
    Texas A\&M University\\
    College Station, Texas, USA
}

\begin{document}
\maketitle


\begin{abstract}
\emph{Symbolic regression} (\sr) has gained recent traction in AI-driven scientific discovery for learning closed-form physical laws. Yet existing methods are dominated by heuristic search or data-intensive approaches that often assume low-noise regimes and lack principled uncertainty quantification, while fully probabilistic \sr\ formulations remain scarce.
We introduce a scalable probabilistic framework for \sr, \textcolor{Blue}{\vasst}, based on \emph{variational inference}. \textcolor{Blue}{\vasst}\ uses \emph{soft symbolic trees}, a continuous relaxation of symbolic expression trees in which discrete operator and feature assignments are replaced by probability distributions over allowable components. This transforms combinatorial symbolic search through an astronomically large expression space into efficient gradient-based optimization while preserving a coherent probabilistic interpretation. The learned soft representations induce posterior distributions over symbolic structures, enabling uncertainty quantification across plausible symbolic forms through posterior-aware symbolic model selection. On simulated experiments and the \emph{Feynman Symbolic Regression Database}, \textcolor{Blue}{\vasst}\ achieves strong structural recovery and predictive accuracy compared to state-of-the-art competing \sr\ methods.
\end{abstract}



\section{Introduction}
\label{sec:intro}
\paragraph{Symbolic regression for scientific discovery.}
\emph{Scientific machine learning} has increasingly emerged as a powerful paradigm for integrating domain knowledge with data-driven modeling, enabling advances across materials science~\citep{SciML-materials-1}, climate and weather prediction~\citep{ML-weather-1}, biology~\citep{SciML-biology-1}, and physics~\citep{SciML-2}. A central objective in these domains is the discovery of explicit governing equations that encode mechanistic structure rather than merely predictive relationships. In this setting, \emph{symbolic regression} (\sr) plays a pivotal role~\citep{MakkeChawla2024SRReview}. Unlike classical prediction-centric regression methods~\citep{LASSO-Tibshirani,RasmussenWilliams2006,BART}, \sr\ operates directly over functional forms to recover concise, interpretable mathematical expressions from experimental data. By identifying explicit equations underlying complex scientific phenomena, \sr\ has enabled sparse discovery of nonlinear dynamical systems~\citep{SR-SciML-2}, accelerated materials design~\citep{SR-SciML-3}, and uncovered fundamental scientific laws~\citep{SR-SciML-1}.
\paragraph{Related work.}
Existing \sr\ methods can be broadly categorized into evolutionary, machine learning-based, and Bayesian approaches. Classical \sr\ is dominated by {genetic programming} and related heuristic search algorithms~\citep{deap,stephens2016gplearn,operon,pysr}, which explore expressions through stochastic evolution but often suffer from high computational complexity, sensitivity to initialization, and the generation of overly complex formulas~\citep{Korns2011}. Machine learning-driven methods cast \sr\ as a sequential decision-making problem, learning to generate grammar rules, tree traversals, or executable symbolic strings via neural architectures~\citep{AI-Feynman,Deep-SR,feyn-qlattice,Kamienny2022,xu2024reinforcement}. Although these methods enhance scalability and predictive performance, they remain inherently search-driven and computationally challenging due to the NP-hard nature of \sr~\citep{Virgolin2022}. Moreover, their effectiveness is often contingent on large training datasets and low-noise regimes, as demonstrated in \hyperref[sec:vasst-in-action]{Section~\ref{sec:vasst-in-action}}.
More recent machine learning-centric \sr\ modules use pretrained large language models (\texttt{LLM}s) to guide symbolic expression generation, refinement, or scientific reasoning~\citep{lasr, llmsr, drsr, phye2e}. While these approaches provide a flexible interface for incorporating domain knowledge, they can be highly sensitive to prompt design and model choice, and may produce syntactically invalid or unstable expressions; see~\hyperref[app:LLM-SR]{Appendix~\ref{app:LLM-SR}} for a representative example. The preceding review highlights the necessity for fully probabilistic \sr\ formulations.

Scientific expressions possess an inherent hierarchical and compositional structure that aligns naturally with tree representations. It is important, however, to distinguish such representations from \emph{decision trees} which recursively partition the covariate space using split rules and assign local predictions to terminal regions~\citep{CART,BCART,BCART-Mallick, BART}. In contrast, tree representations of scientific expressions encode recursive compositions of mathematical operators (e.g., $+$, $\times$, $\exp$, $\sin$; assigned to internal nodes) with primitive data features (assigned to leaves)~\citep{Bartlett,hierbosss}; see \hyperref[fig:symbolic-tree-representation]{Figure~\ref{fig:symbolic-tree-representation}}. Among existing tree-based \sr\ formulations, Bayesian Machine Scientist (\bms)~\citep{guimera2020bayesian} represents an important step toward model-based equation discovery. However, it employs an ad hoc structural prior built from corpus-parsing based on a priori knowledge, and performs inference using Metropolis-Hastings (\texttt{MH}) proposals over discrete symbolic tree structures. 
Given the highly multimodal and combinatorial posterior landscape, such local \texttt{MH} updates can exhibit poor mixing in complex discrete spaces~\citep{bhamidi2008mixing, latuszynski2025mcmc}, leading to slow convergence and inefficient exploration of the symbolic expression space. Similarly, Bayesian Symbolic Regression (\bsr)~\citep{BSR} adopts a tree-based partial Bayesian formulation, where it uses plug-in ordinary least squares estimates for the outer regression coefficients. This incomplete parameter uncertainty propagation, combined with local stochastic \texttt{MH} structural updates, further impedes effective traversal of the symbolic expression space and frequently produces overly complicated output symbolic structures, as evidenced in \hyperref[sec:vasst-in-action]{Section~\ref{sec:vasst-in-action}}. Finally, the reliance of these existing Bayesian \sr\ approaches on Markov chain Monte Carlo-based discrete structural exploration can limit their scalability beyond small- to moderate-sized data settings.
\paragraph{Our contributions.}
Motivated by these limitations, we propose \emph{\underline{Va}riational Inference for Symbolic Regression using \underline{S}oft \underline{S}ymbolic \underline{T}rees} (\vasstmain), a \emph{variational inference} framework for \sr\ that combines principled Bayesian modeling with improved computational scalability over existing probabilistic \sr\ methods. Variational inference recasts Bayesian inference as an optimization problem~\citep{Blei-VI-review-JASA}, offering a scalable alternative to traditional Monte Carlo-based approaches for modern data-intensive settings~\citep{jordan1999introduction,Wainwright-Jordan-Graphical-VI,Graves-NN-VI}. However, a na\"{i}ve application of variational inference over the discrete structural space of symbolic expressions results in a combinatorial optimization problem that negates these scalability benefits~\citep{williams1992simple,koza1994genetic}.

To overcome this challenge, \vasstmain\ uses a novel latent representation of symbolic expressions using \emph{soft symbolic trees}, in which discrete operator and feature assignments are replaced by probability distributions over all allowable operators and features. This relaxation transforms the discrete structural search into a continuous optimization problem, enabling efficient gradient-based exploration of the symbolic expression space through black-box variational inference~\citep{ranganath2014bbvi,advi,dadvi}. Although this relaxation is similar in spirit to the soft tree formulation used in \emph{Variational Regression Trees} (\vart)~\citep{VART}, the relaxed objects and inferential goals here are fundamentally different. While \vart\ performs soft routing over data-partitioning trees, thus relaxing discrete decision splits, \vasstmain\ relaxes discrete assignments of operators and features defining a tree-structured scientific expression. Thus, its soft tree representation serves as a differentiable surrogate for symbolic expression evaluation, rather than as a probabilistic partitioning rule. For a detailed comparison of \vart\ and \vasstmain, refer to~\hyperref[app:vart-vs-vasst]{Appendix~\ref{app:vart-vs-vasst}}.

The learned soft symbolic trees naturally define probability distributions over symbolic expressions. Thus, post-optimization, structurally interpretable expressions are sampled from these learned representations and ranked using a posterior-aware evidence score. The top candidates are then retained yielding an Occam's window-based~\citep{madigan1994model} posterior summary of high-support symbolic explanations.
Finally, a primary goal in scientific equation discovery is to favor structural parsimony in accordance to the {Occam's razor} principle~\citep{Occams-Razor-1}. \vasstmain\ achieves this by employing a {depth-dependent regularizing tree prior} that penalizes overly complex sampled symbolic expressions.

Through extensive experiments, we demonstrate the superior performance of \vasstmain\ relative to a range of state-of-the-art \sr\ methods from \texttt{SRBench}~\citep{la2021contemporary,imai2025call,SRBench}, where it effectively balances structural discovery and predictive accuracy while maintaining computational stability. The \texttt{Python} implementation of \vasstmain\ is available at~\cite{VaSST-Github}.


\section{Scientific Expressions as Symbolic Trees}
\label{sec:symbolic-tree-representation}
Scientific expressions can be recursively constructed by combining primary features (e.g., $x_1, x_2$) and elementary mathematical operators (e.g., $\exp, \sin, +, \times$).
We denote by $\bm x = (x_1, \ldots, x_p)^\top$, the vector of $p$ primary features and by $\mathbf{O} = \mathbf{O}_u \cup \mathbf{O}_b$, the set of allowed mathematical operators, where $\mathbf{O}_u$ and $\mathbf{O}_b$ contain the unary and binary mathematical operators, respectively. Typical choices in scientific modeling include $\mathbf{O}_u = \{\sin, \cos, \exp, \log, ^2, ^3\}$ and $\mathbf{O}_b = \{+, \times, -, /\}$~\citep{AI-Feynman}. 

Let $\mathcal{S}$ denote the set of admissible feature-operator compositions.
Any symbolic expression $f \in \mathcal{S}$ is defined recursively as one of, (a) \emph{primitive expression}: $f(\bm x) = x_m$, where $m \in \{1,\ldots, p\}$; (b) \emph{binary composition}: $f(\bm x) = b(f_1(\bm x),f_2(\bm x))$, where $f_1, f_2\in \mathcal{S}$ and $b\in \mathbf{O}_b$, e.g., $x_1 + x_2$; and (c) \emph{unary composition}: $f(\bm x) = u( \tilde{f}(\bm x))$, where $\tilde{f} \in \mathcal{S}$ and $u\in \mathbf{O}_u$, e.g., $\exp(x_1)$.
This set of characterizations allows representations of $f\in \mathcal{S}$ by a symbolic tree structure $\mathsf{T}(f)$, where {internal} ({nonterminal}) nodes either represent unary (with $1$ child node) or binary (with $2$ children nodes) operators, while {leaves} ({terminal} nodes) correspond to primary features; see \hyperref[fig:symbolic-tree-representation]{Figure~\ref{fig:symbolic-tree-representation}}. Such tree representations of symbolic expressions are not necessarily unique, e.g., $x_1(x_2+x_3) = x_1x_2 + x_1x_3$. With this setup, now in the sequel we formally present the \vasstmain\ modeling framework.

\begin{figure}[!htp]
\centering
\resizebox{\linewidth}{!}{%
\begin{circuitikz}
\tikzset{
  every node/.style={font=\Huge},
  treeNode/.style={
    circle,
    draw=TreeOuter,
    line width=4pt,
    minimum size=2.5cm,
    inner sep=0pt
  },
  opNode/.style={
    treeNode,
    fill=TreeFill
  },
  treeEdge/.style={
    line width=2pt,
    short
  }
}

\node[opNode] at (5,-10.25) {$+$};
\node[treeNode] at (2.5,-14.75) {$x_1$};
\node[treeNode] at (7.5,-14.75) {$x_2$};

\draw[treeEdge] (4.25,-11.25) -- (2.5,-13.5);
\draw[treeEdge] (5.75,-11.25) -- (7.5,-13.5);

\node[treeNode] at (-6.25,-14.75) {$x_1$};

\node[treeNode] at (16.25,-15) {$x_1$};
\node[opNode] at (16.25,-10.25) {$\exp$};

\draw[treeEdge] (16.25,-11.5) -- (16.25,-13.75);

\node[font=\Huge] at (-6.75,-17.25) {(a) $f(\bm x) = x_1$};
\node[font=\Huge] at (5,-17.25) {(b) $f(\bm x) = x_1 + x_2$};
\node[font=\Huge] at (16,-17.25) {(c) $f(\bm x) = \exp(x_1)$};

\end{circuitikz}
}%
\caption{Symbolic tree representation.}
\label{fig:symbolic-tree-representation}
\end{figure}


\section{The \texorpdfstring{\vasst}{VaSST} Model}
\label{sec:vasst-model}
We outline the \vasstmain\ model which comprises two major components: (i) a {symbolic ensemble} consisting of an affine (linear) combination of $K$ symbolic trees and (ii) a hierarchical prior specification over the model regression coefficients, the model noise variance, and the symbolic tree structures.
\subsection{The Symbolic Ensemble Component}
\label{subsec:symbolic-ensemble-compoenent}
Let $\mathbb{D}_n = \{(\bm{x}_i, y_i)\}_{i=1}^{n}$ be the collection of observed data units. \vasstmain\ models and structurally learns the hidden symbolic relationship between the responses $y_i\in \mathbb{R}$ and primary feature vector $\bm{x}_i = (x_{i,1},\ldots,x_{i,p})^\top\in \mathbb{R}^{p}$ using $K$ symbolic expressions $\{f_j\}_{j=1}^{K} \subset \mathcal{S}$ evaluated at $\bm{x}_i$, as:
\begin{align}
\label{eq:vasst-model}
y_i = \beta_0 + \sum_{j=1}^{K}\beta_j\;f_j(\bm{x}_i) + \epsilon_i, \quad i=1, \ldots, n,
\end{align}
where $\bm\beta = (\beta_0, \ldots, \beta_K)^{\top} \in \mathbb{R}^{K+1}$ is the model regression coefficient vector and $\epsilon_i\in \mathbb{R}$ is the model noise with $\epsilon_i \sim \mathrm{N}_1(0, \sigma^2)$ independently. From~\hyperref[eq:vasst-model]{\eqnref{eq:vasst-model}}, we obtain the vector representation of the symbolic ensemble component:
\begin{equation}
\label{eq:vasst-model-vector}
\mathbf{y} = \mathbf{T}\bm\beta + \bm\epsilon, \quad \bm\epsilon\sim \mathrm{N}_n(\bm 0_n, \sigma^{2}\mathbf{I_n}),
\end{equation}
where $\mathbf{T} = (f_j(\bm{x}_i))_{1\leq i \leq n,\; 0\leq j \leq K}\in \mathbb{R}^{n\times \overline{K+1}}$ is the {expression design matrix}, $f_0(\bm{x}_i) = 1$ for all $i=1,\ldots, n$, $\mathbf{y} = (y_1, \ldots, y_n)^{\top}\in \mathbb{R}^{n}$ is the response vector, and $\bm\epsilon = (\epsilon_1, \ldots, \epsilon_n)^{\top}\in \mathbb{R}^{n}$ is the model noise vector.

The model regression parameters $(\bm\beta, \sigma^2)$ in~\eqnref{eq:vasst-model-vector} jointly are endowed upon with the conjugate Normal Inverse-Gamma (NIG) prior viz., $\bm\beta \mid \sigma^2 \sim \mathrm{N}_{K+1}(\bm\mu_0, \sigma^{2}\bm{\Sigma}_0)$ and $\sigma^{2}\sim \mathrm{IG}(a_0, b_0)$, where $\bm\mu_0\in \mathbb{R}^{K+1}$, $\bm\Sigma_0$ (positive definite matrix of order $K+1$), and $a_0,\; b_0 > 0$ are the hyperparameters of the Normal and Inverse-Gamma prior distributions, respectively. We complete the \vasstmain\ model specification by describing the symbolic tree prior in \hyperref[subsec:symbolic-tree-prior]{Section~\ref{subsec:symbolic-tree-prior}}.

\subsection{The \texorpdfstring{\vasst}{VaSST} Symbolic Tree Prior}
\label{subsec:symbolic-tree-prior}
The \vasstmain\ symbolic tree prior for each $\mathsf{T}(f_j)$ is constructed in two stages: (a) a probabilistic specification over a {maximal binary tree skeleton} and (b) a {deterministic pruning operation} that maps the skeleton tree to a valid symbolic tree, as depicted in \hyperref[fig:vasst-symbolic-tree-prior]{Figure~\ref{fig:vasst-symbolic-tree-prior}}.
\paragraph{Full binary tree skeleton.}
To decouple structural decisions from operator and feature assignments, we embed each symbolic tree $\mathsf{T}(f_j)$ into a full binary tree skeleton, denoted by $\mathsf{S}_j$, of fixed maximum depth $D\in \mathbb{N}$. Let $\mathsf{Z}_D = \{0, 1, \ldots, N-1\}$ index the skeleton tree nodes in \emph{heap order}, where $0$ is the root and the left and right children of a skeleton node $\zeta \in \mathsf{Z}_D$ are given by $L(\zeta) = 2\zeta+1$ and $R(\zeta) = 2\zeta +2$, respectively. The total number of nodes is $N = 2^{D+1} - 1$. Also, let $d_{\zeta} \in \{0, \ldots, D\}$ denote the depth of node $\zeta$. For each skeleton node $\zeta\in \mathsf{Z}_D$, we introduce the following, (a) {expansion indicator}: $e_{j\zeta}\in \{0, 1\}$, where $e_{j\zeta} = 1$ or $e_{j\zeta} = 0$ indicates that $\zeta$ is an internal node or a leaf of $\mathsf{T}(f_j)$, respectively; (b) {operator assignment}: $o_{j\zeta}\in \mathbf{O}$, specifies the operator assigned to $\zeta$ if $e_{j\zeta}=1$; and (c) {feature assignment}: $h_{j\zeta}$, specifies the primary feature assigned to $\zeta$ if $e_{j\zeta} = 0$. Thus, the skeleton tree $\mathsf S_j$ can be encoded by the structural variables  $\{e_{j\zeta}, o_{j\zeta}, h_{j\zeta}:\zeta\in \mathsf{Z}_D\}$.
\paragraph{Valid symbolic tree via deterministic pruning.}
The skeleton $\mathsf{S}_j$ is an {ambient} representation. We obtain the corresponding valid symbolic tree $\mathsf{T}(f_j)$ by applying a deterministic pruning operator $\mathfrak{p}$ which proceeds as follows, (a) {terminal pruning}: if $e_{j\zeta} = 0$, then all descendants of $\zeta$ in the skeleton are removed, making it a leaf and (b) {unary operator pruning}: if $e_{j\zeta} = 1$ and $o_{j\zeta} \in \mathbf{O}_u$, then the right subtree rooted at $R(\zeta)$ is removed. Thus, the resulting symbolic tree is $\mathsf{T}(f_j) = \mathfrak{p}(\mathsf{S}_j)$ and the corresponding pruned node set is $\mathsf Z(f_j)$.
\paragraph{Prior over full binary tree skeleton.}
The prior specifications over $\{e_{j\zeta}, o_{j\zeta}, h_{j\zeta}:\zeta\in \mathsf{Z}_D\}$ and hence $\mathsf{S}_j$ is:
\begin{equation}
\label{eq:skeleton-prior}
\begin{split}
    &\Pi(e_{j\zeta}) \equiv \mathrm{Ber}(e_{j\zeta}; p_\zeta = \alpha(1 + d_{\zeta})^{-\delta}),\\
    &\Pi(o_{j\zeta}\mid \mathbf{w}_{\mathrm{op}})
    \equiv \mathrm{Cat}(o_{j\zeta};\mathbf{w}_{\mathrm{op}}),\;\Pi(h_{j\zeta}\mid \mathbf{w}_{\mathrm{ft}})
    \equiv \mathrm{Cat}(h_{j\zeta};\mathbf{w}_{\mathrm{ft}}),\\
    &\Pi(\mathsf S_j \mid \mathbf{w}_{\mathrm{op}}, \mathbf{w}_{\mathrm{ft}})=
    \prod_{\zeta \in \mathsf Z_D}
        \Pi(e_{j\zeta})
        \Pi(o_{j\zeta}\mid \mathbf{w}_{\mathrm{op}})
        \Pi(h_{j\zeta}\mid \mathbf{w}_{\mathrm{ft}}),
\end{split}
\end{equation}
for $j=1,\ldots,K$. In \eqnref{eq:skeleton-prior}, $\alpha, \delta > 0$ and $\mathbf{w}_{\mathrm{op}} \in \Delta^{|\mathbf{O}|}, \mathbf{w}_{\mathrm{ft}}\in \Delta^{p}$ are the operator and feature weight vectors, where $\Delta^{m} = \{\mathbf{z}\in [0,\infty)^{m}\mid \mathbf{1}_m^{\top}\mathbf{z}=1\}$ is the $m$-dimensional simplex. Also, $\mathrm{Ber}(\cdot)$ and $\mathrm{Cat}(\cdot)$ denote Bernoulli and Categorical distributions, respectively. Note that, the prior over $\mathsf{S}_j$ in~\eqnref{eq:skeleton-prior} induces a prior over $\mathsf{T}(f_j)$, as:
\begin{align}
\label{eq:OG-symbolic-tree-prior}
\begin{split}
&\Pi(\mathsf{T}(f_j) \mid \mathbf{w}_{\mathrm{op}}, \mathbf{w}_{\mathrm{ft}}) \\
&\qquad = \prod_{\zeta \in \mathsf Z(f_j)}\Pi(e_{j\zeta}) \Pi(o_{j\zeta}\mid \mathbf{w}_{\mathrm{op}}) \Pi(h_{j\zeta}\mid \mathbf{w}_{\mathrm{ft}}).
\end{split}
\end{align}
Further, the priors over $\mathbf{w}_{\mathrm{op}}$ and $\mathbf{w}_{\mathrm{ft}}$ are:
\begin{align}
\label{eq:weight-priors}
\Pi(\mathbf{w}_{\mathrm{op}}) \equiv \mathrm{Dir}(\mathbf{w}_{\mathrm{op}}; \bm{\eta}_{\mathrm{op}}),
\quad 
\Pi(\mathbf{w}_{\mathrm{ft}}) \equiv \mathrm{Dir}(\mathbf{w}_{\mathrm{ft}};\bm\eta_{\mathrm{ft}}),
\end{align}
where $\bm\eta_{\mathrm{op}} \in {(0,\infty)}^{|\mathbf{O}|}$ and $\bm\eta_{\mathrm{ft}} \in {(0,\infty)}^{p}$ are concentration hyperparameters of Dirichlet distributions denoted by $\mathrm{Dir}(\cdot)$. Therefore,~\eqnref{eq:skeleton-prior} and~\eqnref{eq:weight-priors} define the complete prior specification over $\Theta = (\{\mathsf{S}_j\}_{j=1}^{K}, \mathbf{w}_{\mathrm{op}}, \mathbf{w}_{\mathrm{ft}})$. 

We conclude with \hyperref[remark:depth-dependent-split-probability]{Remark~\ref{remark:depth-dependent-split-probability}} emphasizing the role of the {depth-dependent split probability} $p_\zeta = \alpha(1 + d_{\zeta})^{-\delta}$ as a mechanism for controlling symbolic expression complexity.
\begin{figure}[!ht]
\centering
\resizebox{0.4\textwidth}{!}{%
\begin{circuitikz}
\tikzset{
  every node/.style={font=\Huge},
  liveNode/.style={
    circle,
    draw=TreeOuter,
    line width=4pt,
    minimum size=2.5cm,
    inner sep=0pt
  },
  liveOpNode/.style={
    liveNode,
    fill=TreeFill
  },
  prunedNode/.style={
    circle,
    draw=TreeGray,
    dashed,
    line width=1.5pt,
    minimum size=2.5cm,
    inner sep=0pt
  },
  liveEdge/.style={
    line width=1.5pt,
    short
  },
  prunedEdge/.style={
    draw=TreeGray,
    line width=1.5pt,
    dashed
  },
  levelGuide/.style={
    line width=1.1pt,
    dashed
  },
  thinArrow/.style={
    line width=0.6pt,
    ->,
    >=Stealth
  }
}

\node[liveOpNode] at (1.25,15.75) {$\mathrm{root}$};
\node[liveOpNode] at (8.75,10.75) {};
\node[liveNode] at (-6.25,10.75) {};
\node[liveNode] at (5,5.75) {};
\node[liveNode] at (12.5,5.75) {};

\node[prunedNode] at (-10,5.75) {};
\node[prunedNode] at (-2.5,5.75) {};

\draw[liveEdge] (-6.25,12) -- (0.5,14.75);
\draw[liveEdge] (8.75,12) -- (2,14.75);
\draw[liveEdge] (5,7) -- (8,9.75);
\draw[liveEdge] (12.5,7) -- (9.5,9.75);

\draw[prunedEdge] (-10,7) -- (-7,9.75);
\draw[prunedEdge] (-2.5,7) -- (-5.5,9.75);

\draw[levelGuide] (-11.25,14.5) -- (-1.25,14.5);
\draw[levelGuide] (3.75,14.5) -- (13.75,14.5);
\draw[levelGuide] (-11.25,9.5) -- (-8.75,9.5);
\draw[levelGuide] (11.25,9.5) -- (13.75,9.5);
\draw[levelGuide] (13.75,4.5) -- (16.25,4.5);
\draw[levelGuide] (-13.75,4.5) -- (-11.25,4.5);

\node at (15,5) {$d_{\zeta} = 2$};
\node at (12.5,10) {$d_{\zeta} = 1$};
\node at (8.75,15) {$d_{\zeta} = 0$};

\node at (-3.75,10) {$\zeta = 1$};
\node at (6,10) {$\zeta = 2$};
\node at (3.9,15.75) {$\zeta = 0$};
\node[color=TreeGray] at (-10,3.75) {$\zeta = 3$};
\node[color=TreeGray] at (-2.5,3.75) {$\zeta = 4$};
\node at (5,3.75) {$\zeta = 5$};
\node at (12.5,3.75) {$\zeta = 6$};

\draw[thinArrow] (-6.25,9.5) -- (-6.25,2.5);
\draw[thinArrow] (8.75,9.5) -- (8.75,2.5);

\node at (-6,1.25)
{$e_{j1}=0,\;o_{j1}\in \mathbf{O},\;\text{feature }h_{j1}$};

\node at (9.25,1.25)
{$e_{j2}=1,\;o_{j2}\in \mathbf{O}_b,\;\text{feature }h_{j2}$};

\node at (2.95,0)
{$[\Pi(e_{j\zeta}),\;p_{\zeta};\;\Pi(o_{j\zeta}\mid \mathbf{w}_{\mathrm{op}});\;\Pi(h_{j\zeta}\mid \mathbf{w}_{\mathrm{ft}})]$};

\node at (1.5,17.75)
{$(D,N)=(2,7),\;\text{prior over}\;\mathsf{S}_j\;\text{and induced prior over}\;\mathsf{T}(f_j)$};

\node at (2.0,-1.25)
{$\text{Node set of}\;\mathsf{S}_j:\{0,\ldots,6\};\;\text{Node set of}\;\mathsf{T}(f_j):\{0,1,2,5,6\}$};

\end{circuitikz}
}%
\caption{A live representation of the prior over $\mathsf{S}_j$ and the induced prior over $\mathsf{T}(f_j)$. In $\mathsf{S}_j$, the children nodes $\zeta = 3, 4$ of $\zeta=1$ are pruned using $\mathfrak{p}$ as $e_{j1} = 0$ to obtain the valid symbolic tree $\mathsf{T}(f_j)$.}
\label{fig:vasst-symbolic-tree-prior}
\end{figure}
\begin{remark}[Depth-dependent split probability]
\label{remark:depth-dependent-split-probability}
Guided by the Occam's razor principle~\citep{Occams-Razor-1}, we aim to learn interpretable and parsimonious symbolic expressions which adequately captures the underlying scientific mechanism. This is achieved by the depth-dependent split probability $p_\zeta$ in~\eqnref{eq:skeleton-prior} which imparts a regularizing effect on the individual tree depths~\citep{BCART}.
\end{remark}


\section{Variational Inference for \texorpdfstring{\vasst}{VaSST}}
\label{sec:VI-vasst}
Combining the data likelihood $p(\mathbf{y}\mid \mathbf{T}, \bm\beta, \sigma^2)$, the joint prior over the model regression parameters in \hyperref[subsec:symbolic-ensemble-compoenent]{Section~\ref{subsec:symbolic-ensemble-compoenent}}, and the \vasstmain\ symbolic tree prior in \hyperref[subsec:symbolic-tree-prior]{Section~\ref{subsec:symbolic-tree-prior}}, the joint posterior distribution induced over all unknowns $(\Theta,\bm\beta, \sigma^2)$ is:
\begin{align}
\label{eq:joint-vasst-posterior}
\begin{split}
    &\Pi(\Theta,\bm\beta,\sigma^2\mid \mathbb{D}_n) \propto p(\mathbf{y}\mid \mathbf{T}, \bm\beta, \sigma^2)\Pi(\bm\beta, \sigma^2)\Pi(\mathbf{w}_{\mathrm{op}})\\
    &\quad\Pi(\mathbf{w}_{\mathrm{ft}})\prod_{j=1}^{K}\prod_{\zeta\in \mathsf{Z}_D}\Pi(e_{j\zeta})\Pi(o_{j\zeta}\mid \mathbf{w}_{\mathrm{op}})\Pi(h_{j\zeta}\mid \mathbf{w}_{\mathrm{ft}}).
\end{split}
\end{align}
\paragraph{Marginalization over model regression parameters.}
The conjugate NIG prior over $(\bm \beta, \sigma^2)$ enables marginalization in~\eqnref{eq:joint-vasst-posterior} and results into $\Pi(\Theta\mid \mathbb{D}_n)\propto p(\mathbf{y}\mid \mathbf{T})\Pi(\Theta)$, where $p(\mathbf{y}\mid \mathbf{T})$ up to constants is:
\begin{align}
\label{eq:p(y|T)}
\log p(\mathbf{y}\mid \mathbf{T}) = \tfrac{1}{2}\log|\bm{\Sigma}_n| - a_n\log b_n + \log \Gamma(a_n),
\end{align}
where:
\begin{align}
\label{eq:posterior-model-parameters}
\begin{gathered}
\bm\Sigma_n^{-1} = \bm\Sigma_0^{-1} + \mathbf{T}^{\top}\mathbf{T}, \quad 
\bm\mu_n = \bm\Sigma_n(\bm\Sigma_0^{-1}\bm\mu_0 + \mathbf{T}^{\top}\mathbf{y}), \\
a_n = a_0+\tfrac{n}{2}, \\ b_n = b_0 + \tfrac{1}{2}(\mathbf{y}^{\top}\mathbf{y} + \bm\mu_0^{\top}\bm\Sigma_0^{-1}\bm{\mu}_0 - \bm{\mu}_n^{\top}\bm\Sigma_n^{-1}\bm\mu_n),
\end{gathered}
\end{align}
are the hyperparameters of the posterior NIG distribution. Also, $\Gamma(\cdot)$ and $|\bm\Sigma_n|$ are the Gamma function and the determinant of the matrix $\bm\Sigma_n$, respectively.
See \hyperref[app:parameter-marginalization]{Appendix~\ref{app:parameter-marginalization}} for complete derivations of~\eqnref{eq:p(y|T)} and~\eqnref{eq:posterior-model-parameters}. Therefore, it remains to draw inference from $\Pi(\Theta\mid \mathbb{D}_n)$.
\paragraph{Variational family.}
To conduct scalable and efficient inference, we adopt a variational inference routine for approximating the posterior $\Pi(\Theta\mid \mathbb{D}_n)$ with a tractable {mean-field approximation}~\citep{jordan1999introduction, Blei-VI-review-JASA} $q \in \mathbb{Q}$ detailed in~\eqnref{eq:mean-field-variational-family} below:
\begin{align}
\label{eq:mean-field-variational-family}
\begin{gathered}
q(\Theta) = q(\mathbf{w}_{\mathrm{op}})q(\mathbf{w}_{\mathrm{ft}})\prod_{j=1}^{K}\prod_{\zeta \in \mathsf{Z}_D}q(e_{j\zeta})q(o_{j\zeta})q(h_{j\zeta}),\\
q(e_{j\zeta}) \equiv \mathrm{Ber}(e_{j\zeta}; \widetilde{p}_{j\zeta}),\\
q(o_{j\zeta})\equiv \mathrm{Cat}(o_{j\zeta}; \widetilde{\pi}_{j\zeta}^{\mathrm{op}}),\; q(h_{j\zeta})\equiv \mathrm{Cat}(h_{j\zeta}; \widetilde{\pi}_{j\zeta}^{\mathrm{ft}}),\\
q(\mathbf{w}_{\mathrm{op}})\equiv \mathrm{Dir}(\mathbf{w}_{\mathrm{op}};\widetilde{\bm\eta}_{\mathrm{op}}),\; q(\mathbf{w}_{\mathrm{ft}})\equiv \mathrm{Dir}(\mathbf{w}_{\mathrm{ft}};\widetilde{\bm\eta}_{\mathrm{ft}}),
\end{gathered}
\end{align}
for $j=1,\ldots,K$ and $\zeta\in \mathsf{Z}_D$, where $\widetilde{p}_{j\zeta} = \sigma(\ell_{j\zeta}$), $\widetilde{\pi}^{\mathrm{op}}_{j\zeta}=\texttt{smax}(\mathbf{a}_{j\zeta}^{\mathrm{op}})$, and $\widetilde{\pi}^{\mathrm{ft}}_{j\zeta}=\texttt{smax}(\mathbf{a}_{j\zeta}^{\mathrm{ft}})$; with $\sigma(\cdot)$ and $\texttt{smax}(\cdot)$ denoting the sigmoid and softmax functions, respectively, and 
$\ell_{j\zeta}\in \mathbb{R}$, $\mathbf{a}_{j\zeta}^{\mathrm{op}}\in \mathbb{R}^{|\mathbf{O}|}$, and $\mathbf{a}_{j\zeta}^{\mathrm{ft}}\in \mathbb{R}^{p}$. Hence, $q$ is parameterized by:
\begin{align}
\label{eq:variational-parameters}
\phi = \left(
\widetilde{\bm{\eta}}_{\mathrm{op}}, \widetilde{\bm{\eta}}_{\mathrm{ft}},
\{\ell_{j\zeta}, \mathbf{a}_{j\zeta}^{\mathrm{op}},
\mathbf{a}_{j\zeta}^{\mathrm{ft}}\}_{j = 1,\ldots,K; \zeta \in \mathsf Z_D}
\right).
\end{align}
\paragraph{Evidence lower bound.}
In theory, one can obtain the optimal value of the variational parameters $\phi$ in \eqnref{eq:variational-parameters} by maximizing the {evidence lower bound} (\texttt{ELBO}):
\begin{align}
\label{eq:elbo-1}
\begin{split}
\mathcal{E}(\phi) &= \mathbb{E}_{q}[\log p(\mathbf{y}\mid \mathbf{T})] - \mathrm{KL}[q(\Theta)\parallel \Pi(\Theta)],
\end{split}
\end{align}
where $\mathrm{KL}[q(\Theta)\parallel\Pi(\Theta)] = \mathbb{E}_ q[\log q(\Theta) - \log \Pi(\Theta)]$ is the Kullback-Leibler (KL) divergence and $\log p(\mathbf{y}\mid \mathbf{T})$ is as in~\eqnref{eq:p(y|T)}. By~\eqnref{eq:mean-field-variational-family}, $\mathrm{KL}[q(\Theta)\parallel\Pi(\Theta)]$ splits as:
\begin{align}
\label{eq:elbo-2}
\begin{split}
    &\mathrm{KL}[q(\Theta)\parallel \Pi(\Theta)]
    \\&
    = \mathrm{KL}[q(\mathbf{w}_{\mathrm{op}})\parallel \Pi(\mathbf{w}_{\mathrm{op}})]
    + \mathrm{KL}[q(\mathbf{w}_{\mathrm{ft}})\parallel \Pi(\mathbf{w}_{\mathrm{ft}})]
    \\&
    \qquad + \sum_{j=1}^{K}\sum_{\zeta\in \mathsf{Z}_D}\bigg[\mathrm{KL}[q(e_{j\zeta})\parallel \Pi(e_{j\zeta})]
    \\&\quad\qquad
    +\mathbb{E}_{q(\mathbf{w}_{\mathrm{ft}})}\mathrm{KL}[q(h_{j\zeta})\parallel \Pi(h_{j\zeta}|\mathbf{w}_{\mathrm{ft}})]
    \\&\quad\qquad 
    + \mathbb{E}_{q(\mathbf{w}_{\mathrm{op}})}\mathrm{KL}[q(o_{j\zeta})\parallel\Pi(o_{j\zeta}\mid \mathbf{w}_{\mathrm{op}})]\bigg].
\end{split}
\end{align}
The individual terms in~\eqnref{eq:elbo-2} are given by:
\begin{align*}
\begin{split}
&\mathrm{KL}[q(e_{j\zeta})\parallel \Pi(e_{j\zeta})]
\\&\quad
= \widetilde{p}_{j\zeta}\log\tfrac{\widetilde{p}_{j\zeta}}{p_{\zeta}} + (1 - \widetilde{p}_{j\zeta})\log \tfrac{1-\widetilde{p}_{j\zeta}}{1-p_{\zeta}},\\
&\mathrm{KL}[q(\mathbf{w}_{\mathrm{op}})\parallel \Pi(\mathbf{w}_{\mathrm{op}})]= \log\tfrac{\mathcal{B}(\bm\eta_{\mathrm{op}})}{\mathcal{B}(\widetilde{\bm\eta}_{\mathrm{op}})}\\
&\quad +\sum_{k=1}^{|\mathbf O|}(\widetilde{\eta}_{\mathrm{op}, k} - \eta_{\mathrm{op}, k})(\Psi(\widetilde{\eta}_{\mathrm{op}, k}) - \Psi(\mathbf{1}_{|\mathbf O|}^{\top}\widetilde{\bm\eta}_{\mathrm{op}})),\\
&\mathbb{E}_{q(\mathbf{w}_{\mathrm{op}})}\mathrm{KL}[q(o_{j\zeta})\parallel \Pi(o_{j\zeta}\mid \mathbf{w}_{\mathrm{op}})]\\
&\quad = \sum_{k=1}^{|\mathbf O|}\widetilde{\pi}^{\mathrm{op}}_{j\zeta, k}\left[\log(\widetilde{\pi}^{\mathrm{op}}_{j\zeta, k}) - \Psi(\widetilde{\eta}_{\mathrm{op}, k}) + \Psi(\mathbf{1}^{\top}_{|\mathbf O|}\widetilde{\bm\eta}_{\mathrm{op}})\right],
\end{split}
\end{align*}
where $\mathcal{B}(\cdot)$ and $\Psi(\cdot)$ are the multivariate Beta and Digamma functions, respectively. $\mathrm{KL}[q(\mathbf{w}_{\mathrm{ft}})\parallel \Pi(\mathbf{w}_{\mathrm{ft}})]$ and $\mathbb{E}_{q(\mathbf{w}_{\mathrm{ft}})}\mathrm{KL}[q(h_{j\zeta})\parallel \Pi(h_{j\zeta}\mid \mathbf{w}_{\mathrm{ft}})]$ are given analogously; see \hyperref[app:kl-derivations]{Appendix~\ref{app:kl-derivations}} for details. 

The term $\mathbb{E}_{q}[\log p(\mathbf{y}\mid \mathbf{T})]$ in~\eqnref{eq:elbo-1} does not admit an analytical form. A direct stochastic optimization is computationally prohibitive due to the combinatorial nature of the skeleton tree structural variables $\{e_{j\zeta}, o_{j\zeta}, h_{j\zeta}:\zeta\in \mathsf{Z}_D\}_{j=1}^{K}$. The induced discrete model space grows exponentially with both tree depth ($D$) and ensemble size ($K$) having $\mathsf{O}((2p|\mathbf O|)^{N\cdot K})$ possible configurations, where $\mathsf{O}$ is the big-O notation. Combinatorial search over this astronomically large space is therefore infeasible. Moreover, gradient-based optimization over discrete variables would require high-variance score-function estimators~\citep{williams1992simple,ranganath2014bbvi} or specialized structured search procedures~\citep{koza1994genetic,SR-SciML-1}, both of which scale poorly and become impractical even for moderate values of $D$ and $K$. To overcome this computational bottleneck, \vasstmain\ introduces a differentiable relaxation of symbolic trees, regarded as \emph{soft symbolic trees}, transforming discrete structural and labeling assignments into probability distributions~\citep{maddison2017concretedistributioncontinuousrelaxation,liu2018darts}.
\begin{figure*}[!htp]
    \centering
    \includegraphics[width=\linewidth]{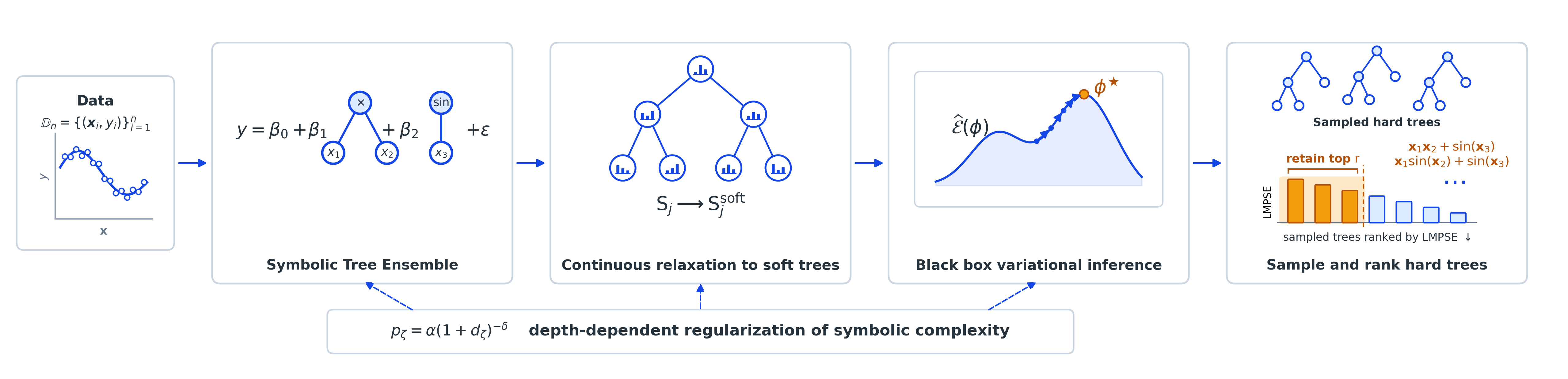}
    \caption{The complete \vasstmain\ pipeline.}
    \label{fig:schematic}
\end{figure*}
\paragraph{Soft symbolic trees via continuous relaxations.}
To construct soft symbolic trees, i.e., $\mathsf{S}_j\to \mathsf{S}_j^{\mathrm{soft}}$ for $j=1,\ldots,K$, we implement the {Binary-Concrete}~\citep{maddison2017concretedistributioncontinuousrelaxation} and {Gumbel-Softmax}~\citep{gumbelsoftmax} {continuous relaxations} of the structural variables $\{e_{j\zeta}, o_{j\zeta}, h_{j\zeta}:\zeta\in \mathsf{Z}_D\}$ of the skeleton tree $\mathsf S_j$ for $j=1,\ldots,K$, as:
\begin{align}
\label{eq:continuous-relaxation}
\begin{split}
\text{\emph{Binary-Concrete}}:\;
&\widetilde{e}_{j\zeta}
= \sigma\!\left(
\tfrac{\ell_{j\zeta} + \log u_{j\zeta}^{\mathrm{ex}} - \log(1 - u_{j\zeta}^{\mathrm{ex}})}
{\tau_\mathrm{ex}}
\right),\\[6pt]
\text{\emph{Gumbel-Softmax}}:\;
&\mathbf{\widetilde{o}}_{j\zeta}
= \texttt{smax}\!\left(
\tfrac{\mathbf{a}_{j\zeta}^{\mathrm{op}} + \mathbf{g}_{j\zeta}^{\mathrm{op}}}
{\tau_{\mathrm{op}}}
\right),\\
&\mathbf{\widetilde{h}}_{j\zeta}
= \texttt{smax}\!\left(
\tfrac{\mathbf{a}_{j\zeta}^{\mathrm{ft}} + \mathbf{g}_{j\zeta}^{\mathrm{ft}}}
{\tau_{\mathrm{ft}}}
\right),
\end{split}
\end{align}
where $u_{j\zeta}^{\mathrm{ex}} \sim \mathrm{Unif}(0,1)$, $\mathbf{g}_{j\zeta}^{\mathrm{op}} = \{-\log(-\log u_{j\zeta, r}^{\mathrm{op}})\}_{r=1}^{|\mathbf O|}$, $\mathbf{g}_{j\zeta}^{\mathrm{ft}} = \{-\log(-\log u_{j\zeta, k}^{\mathrm{ft}})\}_{k=1}^{p}$, $u_{j\zeta, r}^{\mathrm{op}}\sim \mathrm{Unif}(0,1)$, and $u_{j\zeta, k}^{\mathrm{ft}}\sim \mathrm{Unif}(0, 1)$. Also, all uniform random variables are mutually independent. Observe that, $\widetilde{e}_{j\zeta}\in \mathbb{R}$, $\widetilde{\mathbf{o}}_{j\zeta}\in \mathbb{R}^{|\mathbf{O}|}$, and $\widetilde{\mathbf{h}}_{j\zeta}\in \mathbb{R}{^p}$ can be interpreted as the {soft one-hot encoding} of the corresponding original discrete structural variables. In~\eqnref{eq:continuous-relaxation}, $\bm \tau = (\tau_{\mathrm{ex}}, \tau_{\mathrm{op}}, \tau_{\mathrm{ft}})$ are {temperature parameters} controlling the sharpness of the Binary-Concrete and Gumbel-Softmax relaxations. Particularly, higher temperatures yield smooth mixtures of structural configurations, whereas for smaller temperatures, the relaxed variables concentrate toward discrete structures. A careful annealing of these parameters allows for a balance between exploration and structural learning of symbolic expressions. Thus, the resulting soft symbolic tree $\mathsf{S}_j^{\mathrm{soft}}$ is encoded by $\{\widetilde e_{j\zeta}, \widetilde{\mathbf o}_{j\zeta}, \widetilde{\mathbf h}_{j\zeta}:\zeta\in \mathsf{Z}_D\}$.
\paragraph{Evaluation of soft symbolic trees.}
Given the soft symbolic trees $\mathsf{S}_j^{\mathrm{soft}}$ for $j=1,\ldots,K$, as discussed above, we now outline an algorithm for evaluation of $\mathsf{S}_j^{\mathrm{soft}}$ at a given feature vector instance $\bm{x}_i\in \mathbb{R}^{p}$. This evaluation is done recursively over the nodes of $\mathsf{S}_j^{\mathrm{soft}}$. For a given node $\zeta\in \mathsf{Z}_D$, the soft terminal (leaf) contribution is computed as a convex combination of the input features, $ \widetilde{\mathbf{h}}_{j\zeta}^{\top}\bm{x}_i$, which corresponds to a {soft feature evaluation}. If the node acts as nonterminal (internal), its output is obtained by a weighted mixture of unary and binary operations over its children. Specifically, the {unary contribution} aggregates $\widetilde{o}_{j\zeta, u}\cdot u(\mathsf{S}_j^{\mathrm{soft}}(\bm{x}_i; L(\zeta)))$, over $u\in \mathbf{O}_u$, while the {binary contribution} aggregates $\widetilde{o}_{j\zeta, b}\cdot b(\mathsf{S}_j^{\mathrm{soft}}(\bm{x}_i; L(\zeta)), \mathsf{S}_j^{\mathrm{soft}}(\bm{x}_i; R(\zeta)))$, over $b\in \mathbf{O}_b$. The {complete node evaluation} is then obtained via \emph{soft gating}:
\begin{align}
\label{eq:soft-gating}
\begin{split}
    &\mathsf{S}_j^{\mathrm{soft}}(\bm{x}_i; \zeta) = (1-\widetilde{e}_{j\zeta})\widetilde{\mathbf{h}}_{j\zeta}^{\top}\bm{x}_i\\ &+\widetilde{e}_{j\zeta}\sum_{u\in \mathbf{O}_u}\widetilde{o}_{j\zeta, u}\cdot u(\mathsf{S}_j^{\mathrm{soft}}(\bm{x}_i; L(\zeta)))\\
    &+\widetilde{e}_{j\zeta}\sum_{b\in \mathbf{O}_b}\widetilde{o}_{j\zeta, b}\cdot b(\mathsf{S}_j^{\mathrm{soft}}(\bm{x}_i; L(\zeta)), \mathsf{S}_j^{\mathrm{soft}}(\bm{x}_i; R(\zeta))),
\end{split}
\end{align}
thus smoothly interpolating over all possible combinations of operators and features; see \hyperref[alg:soft-eval-at-node]{\textsc{SoftEvalAtNode} Algorithm~\ref{alg:soft-eval-at-node}} in \hyperref[app:soft-evaluation-algorithms]{Appendix~\ref{app:soft-evaluation-algorithms}}. Consequently, evaluating a soft symbolic tree corresponds to computing $\mathsf{S}_j^{\mathrm{soft}}(\bm{x}_i; \zeta)$ from~\eqnref{eq:soft-gating} at the root node $\zeta=0$. Repeating this procedure for the collection of $K$ soft symbolic trees and all observations yields the {soft design matrix} $\mathbf{T}_{\mathrm{soft}}\in \mathbb{R}^{n\times \overline{K+1}}$, which includes the intercept column; see \hyperref[alg:soft-eval]{\textsc{SoftEval} Algorithm~\ref{alg:soft-eval}} in \hyperref[app:soft-evaluation-algorithms]{Appendix~\ref{app:soft-evaluation-algorithms}}.
\paragraph{Stochastic approximation of $\mathcal{E}(\phi)$.}
The \texttt{ELBO} objective $\mathcal{E}(\phi)$ in~\eqnref{eq:elbo-1} involves the analytically intractable term $\mathbb{E}_{q}[\log p(\mathbf{y}\mid \mathbf{T})]$, since $\mathbf{T}$ exhibits nonlinear dependence on the soft symbolic trees. We therefore approximate this term using {Monte Carlo} (\texttt{MC}) sampling from the distribution induced over $\{\widetilde{e}_{j\zeta}, \widetilde{\mathbf{o}}_{j\zeta}, \widetilde{\mathbf{h}}_{j\zeta}:\zeta\in \mathsf{Z}_D\}_{j=1}^{K}$ in~\eqnref{eq:continuous-relaxation}, denoted as $\mathcal{P}_{\phi, \bm \tau}^{\mathrm{soft}}$. In particular, consider $S$ \texttt{MC} samples from $\mathcal{P}_{\phi, \bm \tau}^{\mathrm{soft}}$ and invoking the \hyperref[alg:soft-eval]{\textsc{SoftEval} Algorithm~\ref{alg:soft-eval}} in \hyperref[app:soft-evaluation-algorithms]{Appendix~\ref{app:soft-evaluation-algorithms}}, we compute the soft design matrices $\mathbf{T}_{\mathrm{soft}}^{(s)}$, and hence $\log p(\mathbf{y}\mid \mathbf{T}_{\mathrm{soft}}^{(s)})$ using~\eqnref{eq:p(y|T)}, for $s=1,\ldots,S$. Finally, the \texttt{MC} approximation of $\mathcal{E}(\phi)$ is obtained as:
\begin{align}
\label{eq:MC-approx-elbo}
\widehat{\mathcal{E}}(\phi) = \frac{1}{S}\sum_{s=1}^{S}\log p(\mathbf y\mid \mathbf T_{\mathrm{soft}}^{(s)}) - \mathrm{KL}[q(\Theta)\parallel \Pi(\Theta)],
\end{align}
where $\mathrm{KL}[q_\phi(\Theta)\parallel \Pi(\Theta)]$ is given by~\eqnref{eq:elbo-2}. \hyperref[alg:approx-elbo]{\textsc{ApproxELBO} Algorithm~\ref{alg:approx-elbo}} in \hyperref[app:approx-elbo]{Appendix~\ref{app:approx-elbo}} evaluates the stochastic approximation of $\mathcal{E}(\phi)$ in~\eqnref{eq:elbo-1} by $\widehat{\mathcal{E}}(\phi)$ in~\eqnref{eq:MC-approx-elbo} and \hyperref[app:elbo-gap]{Appendix~\ref{app:elbo-gap}} discusses the corresponding approximation gap. Importantly, $\widehat{\mathcal{E}}(\phi)$  is differentiable with respect to the variational parameters in $\phi$, enabling efficient gradient-based optimization.
\paragraph{Black-box optimization.}
To obtain the optimal variational parameter vector $\phi^{\star}$, we maximize $\widehat{\mathcal{E}}(\phi)$ in~\eqnref{eq:MC-approx-elbo} using {gradient-based black-box} variational inference~\citep{ranganath2014bbvi, advi}, leveraging {automatic differentiation}~\citep{rall1981automatic} to compute $\nabla_{\phi}\widehat{\mathcal{E}}(\phi)$. The resulting objective is optimized using the \texttt{AdamW} optimizer~\citep{AdamW}. To progressively sharpen the continuous relaxations in~\eqnref{eq:continuous-relaxation}, we employ an {annealing schedule} which linearly decreases the temperature across iterations, gradually transitioning from smooth structural mixtures to near-discrete symbolic trees; refer to \hyperref[alg:vasst]{\vasst\ Algorithm~\ref{alg:vasst}} in \hyperref[app:vasst-algorithm]{Appendix~\ref{app:vasst-algorithm}}.
\paragraph{Interpretable hard symbolic ensembles.}
While soft symbolic trees $S_j^{\mathrm{soft}}$ serve as tractable computational intermediaries, to recover interpretable symbolic expressions we sample \emph{hard symbolic trees} using $\phi^{\star}$. Concretely, we draw $H \in \mathbb N$ samples, where for each draw $s=1,\ldots,H$ and each tree $j=1,\ldots,K$, we sample: $\widehat{e}_{j\zeta}^{(s)}\sim \mathrm{Ber}(\sigma(\ell^{\star}_{j\zeta}))$, $\widehat{o}_{j\zeta}^{(s)}\sim \mathrm{Cat}(\texttt{smax}((\mathbf{a}_{j\zeta}^{\mathrm{op}})^{\star}))$, and $\widehat{h}_{j\zeta}^{(s)}\sim \mathrm{Cat}(\texttt{smax}((\mathbf{a}_{j\zeta}^{\mathrm{ft}})^{\star}))$, respectively for all $\zeta\in \mathsf{Z}_D$. This sampling defines a full depth-$D$ hard binary tree skeleton $\widehat{\mathsf{S}}_{j}^{(s)}$, which is then deterministically pruned to obtain $\mathsf{T}(f_{j}^{(s)}) = \mathfrak{p}(\widehat{\mathsf{S}}_{j}^{(s)})$ and hence expressions $f_{j}^{(s)} \in \mathcal{S}$. The expression design matrix $\mathbf{T}^{(s)}$ is computed by evaluating $f_{j}^{(s)}$ at instances $\{\bm{x}_i\}_{i=1}^{n}$. Conditioned on $\mathbf{T}^{(s)}$ and $\mathbf{y}$, the posterior mean estimates $\bm\beta^{(s)}_{\mathrm{PM}} = \mathbb{E}[\bm\beta\mid \sigma^{2}, \mathbf{T}^{(s)}, \mathbf{y}]$ and $(\sigma^{2}_{\mathrm{PM}})^{(s)} = \mathbb{E}[\sigma^{2}\mid \mathbf{T}^{(s)}, \mathbf{y}]$ are computed using~\eqnref{eq:posterior-model-parameters}. See \hyperref[alg:sample-hard-trees]{\textsc{SampleHard} Algorithm~\ref{alg:sample-hard-trees}} in \hyperref[app:sample-hard-symbolic-trees]{Appendix~\ref{app:sample-hard-symbolic-trees}} for details of obtaining hard symbolic expressions.  
\begin{figure*}[!t]
    \centering
    \includegraphics[width=\linewidth]{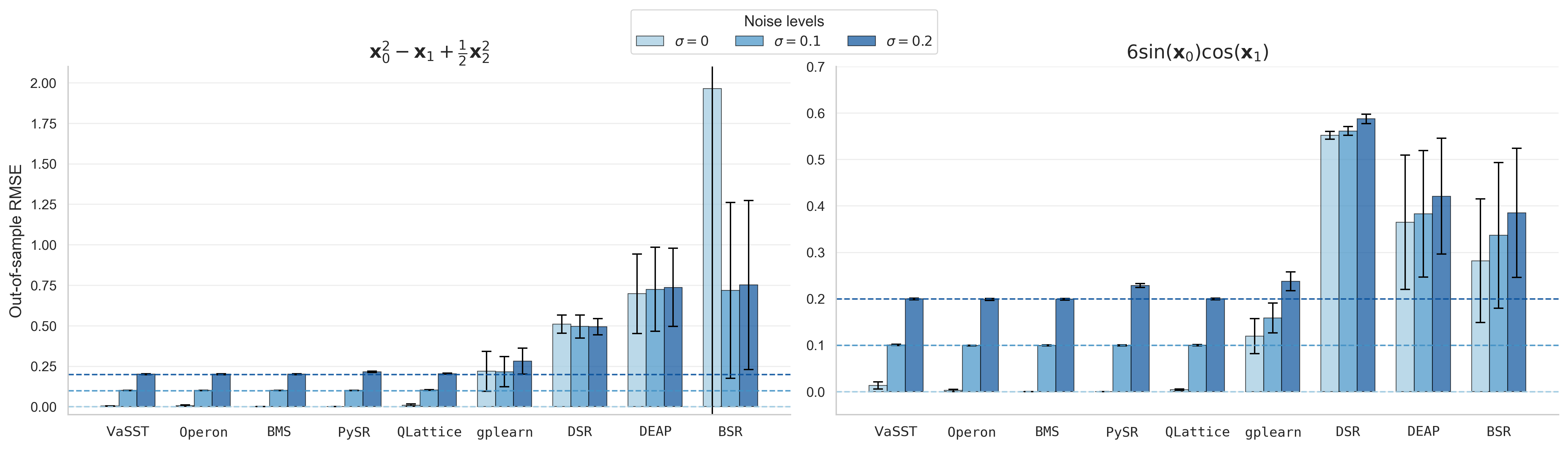}
    \caption{Out-of-sample \texttt{RMSE}s computed on a $10\%$ held-out test set of \vasstmain\ and competitors for learning~\eqnref{eq:simulation-generating-model-1} and~\eqnref{eq:simulation-generating-model-2} across $10$ repetitions under varying noise levels.}
    \label{fig:rmse-barplot-simulations}
\end{figure*}
\paragraph{Posterior-aware ranking of hard symbolic ensembles.}
We complete the \vasstmain\ pipeline by ranking the above $H$ sampled hard symbolic ensembles using a posterior-aware selection criterion. For any ensemble $\{\mathsf{T} (f_j)\}_{j=1}^{K}$, the $\log$-marginal posterior over symbolic ensembles ($\mathrm{LMPSE}$) is:
\begin{align}
\label{eq:LMPSE}
\begin{split}
&\mathrm{LMPSE}(\{\mathsf T(f_j)\}_{j=1}^K) = \log \Pi(\mathsf T(f_j)\}_{j=1}^K \mid \mathbb D_n)
\\& \qquad \quad
= \log p(\mathbf y \mid \mathbf T) + \log \Pi(\{\mathsf T(f_j)\}_{j=1}^K),
\end{split}
\end{align}
where $p(\mathbf y \mid \mathbf T)$ is as in \eqnref{eq:p(y|T)} and $\Pi(\{\mathsf T(f_j)\}_{j=1}^K)$ is: 
\begin{align*}
\begin{split}
&\Pi(\{\mathsf T(f_j)\}_{j=1}^K) = \mathbb{E}_{\mathbf w_{\mathrm{op}},\mathbf w_{\mathrm{ft}}} \left[\prod_{j=1}^K \Pi(\mathsf T(f_j) \mid \mathbf w_{\mathrm{op}}, \mathbf w_{\mathrm{ft}}) \right],
\end{split}
\end{align*}
with $\Pi(\mathsf T(f_j) \mid \mathbf w_{\mathrm{op}}, \mathbf w_{\mathrm{ft}})$ given in \eqnref{eq:OG-symbolic-tree-prior}. \hyperref[subsec:LMPSE-derivation]{Appendix~\ref{subsec:LMPSE-derivation}} gives the analytical formula and derivation of $\mathrm{LMPSE}$.
Note that, $\mathrm{LMPSE}$ in~\eqnref{eq:LMPSE} quantifies an interpretability-accuracy trade-off by balancing data fit, through the marginal likelihood, with structural parsimony, through the marginal prior over the symbolic ensemble.
Ranking hard symbolic ensembles by $\mathrm{LMPSE}$ and consequently retaining the top $\mathsf{r}\in \mathbb{N}$ ensembles with the largest $\mathrm{LMPSE}$ values provides an Occam's window-based posterior summary~\citep{madigan1994model} of competing symbolic structures which account for posterior uncertainty in symbolic model selection. A complete overview of the \vasstmain\ pipeline is in \href{fig:schematic}{Figure~\ref{fig:schematic}}.


\section{\texorpdfstring{\vasst}{VaSST} in Action}
\label{sec:vasst-in-action}

We evaluate \vasstmain\ through simulation studies and a suite of canonical Feynman equations, assessing: (a) structural learning of symbolic expressions, (b) predictive accuracy, (c) stability under experimental noise, and (d) computational scalability relative to existing Bayesian \sr\ methods. \vasstmain\ is compared with state-of-the-art \sr\ modules from \texttt{SRBench}~\citep{SRBench}, including Distributed Evolutionary Algorithms in \texttt{Python} (\deap)~\citep{deap}, \gplearn~\citep{stephens2016gplearn}, \operon~\citep{operon}, \pysr~\citep{pysr}, Deep Symbolic Regression (\dsr)~\citep{Deep-SR}, \qlattice~\citep{feyn-qlattice}, \bms~\citep{guimera2020bayesian}, and \bsr~\citep{BSR}. Across all experiments, we implement \vasstmain\ with $(K, D) = (3, 3)$ and use a $90/10$ train-test split for measuring predictive accuracy by computing out-of-sample root mean square error (\texttt{RMSE}) on the held-out test set. For \vasstmain, after variational optimization, we sample $H=2000$ hard symbolic ensembles from the learned soft representation and rank them using $\mathrm{LMPSE}$ in~\eqnref{eq:LMPSE}. 
Throughout, we use the top ranked symbolic model of \vasstmain\ for empirical comparisons, while the top $\mathsf{r}=10$ ranked models summarize posterior uncertainty over different symbolic structures. For budget equalization, we use common operator sets across methods whenever possible; see~\hyperref[app:configs]{Appendix~\ref{app:configs}} for detailed configurations of all methods. Experiments were conducted in \texttt{Python} on a MacBook Air with an Apple M2 chip and 8GB RAM.

\subsection{Simulation Experiments}
\label{subsec:simulation-study}

We consider the following symbolic expressions of varying levels of structural complexity:
\begin{align}
\label{eq:simulation-generating-model-1}
\mathbf{y}&=\mathbf{x}_0^2 - \mathbf{x}_1 + \tfrac{1}{2}\mathbf{x}_2^2 + \bm\epsilon, \\
\label{eq:simulation-generating-model-2}
\mathbf{y}&=6\sin(\mathbf{x}_0)\cos(\mathbf{x}_1) + \bm\epsilon,
\end{align}
where $\mathbf{x}_j=(x_{1,j},\ldots,x_{n,j})$ denotes the observed values of the $j$th feature, with all operations on these vectors interpreted coordinate-wise. We simulate $x_{i,j}\sim \mathrm{Unif}(2j,2j+1)$, independently for $i=1,\ldots, n$ with $n=2000$ and $j = 0,\ldots,p-1$. Note that, $p=3$ for~\eqnref{eq:simulation-generating-model-1} and $p=2$ for~\eqnref{eq:simulation-generating-model-2}. Two experimental regimes are studied: (a) a noiseless setting, $\bm \epsilon = \bm 0_n$ and (b) noisy settings, where $\bm \epsilon \sim \mathrm{N}_n(\bm 0_n, \sigma^{2}\mathbf{I}_n)$ with $\sigma^2 \in \{0.1^2, 0.2^2\}$.

\hyperref[fig:rmse-barplot-simulations]{Figure~\ref{fig:rmse-barplot-simulations}} reports the out-of-sample \texttt{RMSE}s across $10$ repetitions for both simulation settings, with detailed numerical summaries provided in~\hyperref[app:ooRMSE-sim1]{Appendix~\ref{app:ooRMSE-sim1}}. For~\eqnref{eq:simulation-generating-model-1} and~\eqnref{eq:simulation-generating-model-2}, \vasstmain\ achieves superior predictive accuracy with \texttt{RMSE}s that closely track the corresponding noise floors. Several competing methods, including \operon, \bms, \pysr, \qlattice, and \gplearn, also achieve competitive \texttt{RMSE}s. Conversely, \dsr, \deap, and \bsr\ exhibit degradation in accuracy with substantially larger errors, where \bsr\ is unstable particularly for~\eqnref{eq:simulation-generating-model-1}, while \dsr\ produces high \texttt{RMSE} values across both simulations. 

\begin{figure}[!htp]
    \centering
    \includegraphics[width=\linewidth]{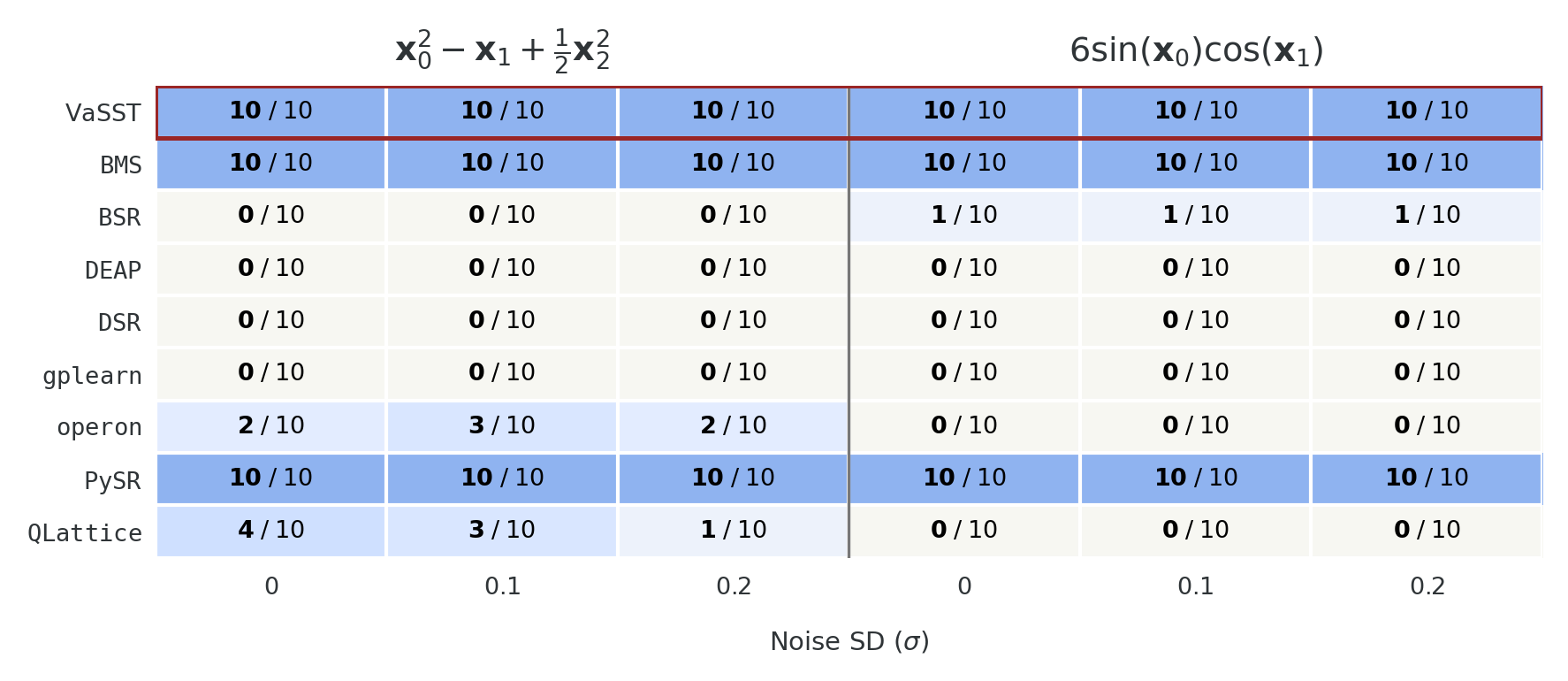}
    \caption{Expression recovery frequency of \vasstmain\ and competitors for learning~\eqnref{eq:simulation-generating-model-1} and~\eqnref{eq:simulation-generating-model-2}.}
    \label{fig:expression-recovery-simulations}
\end{figure}

We next examine whether predictive accuracy translates into structural recovery.
\hyperref[fig:expression-recovery-simulations]{Figure~\ref{fig:expression-recovery-simulations}} summarizes the expression recovery frequencies over same $10$ repetitions. \vasstmain, \bms, and \pysr\ recover the correct structure in all repetitions and across all noise levels while learning~\eqnref{eq:simulation-generating-model-1} and~\eqnref{eq:simulation-generating-model-2}, whereas \bsr, \deap, \dsr, \gplearn, \operon, and \qlattice\ either fail to recover the target expressions or do so inconsistently across repetitions and noise levels.
The symbolic expressions learned by each method reported in~\hyperref[app:additional-symbolic-expressions-sim1]{Appendix~\ref{app:additional-symbolic-expressions-sim1}} further explain the distinction between accuracy and structural recovery. The $\mathrm{LMPSE}$ ranked expressions selected by \vasstmain\ retain the data-generating components of~\eqnref{eq:simulation-generating-model-1} and~\eqnref{eq:simulation-generating-model-2}, while methods such as \operon, \qlattice, and \gplearn\ often achieve low \texttt{RMSE}s through structurally indirect formulas. In particular, \operon\ and \qlattice\ occasionally recover the polynomial structure in~\eqnref{eq:simulation-generating-model-1}, but via unwieldy and complex expressions with multiple high-order powers or nested trigonometric transformations; this recovery also deteriorates under noise and does not transfer to the trigonometric model in~\eqnref{eq:simulation-generating-model-2}. Thus, \hyperref[fig:expression-recovery-simulations]{Figures~\ref{fig:rmse-barplot-simulations}}-\ref{fig:expression-recovery-simulations} and~\hyperref[tab:rep1-expressions]{Tables~\ref{tab:rep1-expressions}}–\ref{tab:sim3-noise2-rep1-expressions} in~\hyperref[app:additional-symbolic-expressions-sim1]{Appendix~\ref{app:additional-symbolic-expressions-sim1}} collectively show that \vasstmain\ balances strong predictive performance with stable symbolic recovery through structurally parsimonious expressions, which is driven by the depth-dependent split probability in~\eqnref{eq:skeleton-prior} penalizing complex symbolic tree forms.

\begin{figure}[!htp]
    \centering
    \includegraphics[width=\linewidth]{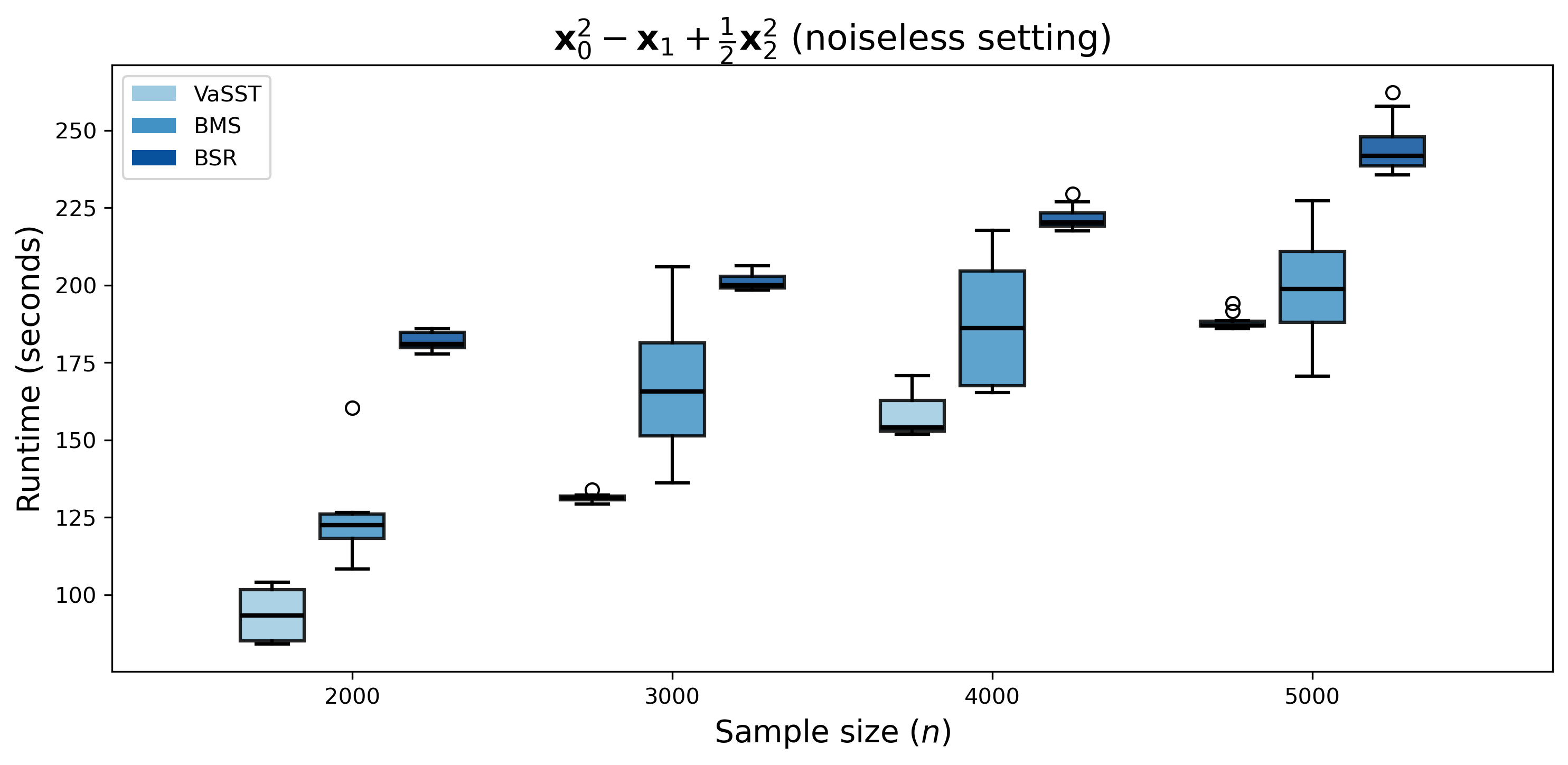}
    \caption{Computational scalability of \vasstmain, \bms, \bsr.}
    \label{fig:runtime_boxplot}
\end{figure}

Furthermore, in learning~\eqnref{eq:simulation-generating-model-1} under the noiseless setting,~\hyperref[fig:runtime_boxplot]{Figure~\ref{fig:runtime_boxplot}} compares the single-run compute time of \vasstmain\ with the Bayesian baselines \bms\ and \bsr\ (maintained at similar configurations to that of \vasstmain). For increasing sample sizes $n\in \{2000, 3000, 4000, 5000\}$ with $10$ repetitions per $n$, \vasstmain\ invariably records the lowest runtime, demonstrating improved scalability.
Finally, the top $\mathsf{r}=10$ hard symbolic ensembles in~\hyperref[tab:vasst-top10-LMPSE]{Tables~\ref{tab:vasst-top10-LMPSE}}-\ref{tab:sim3-noise2-vasst-top10-jmp} of~\hyperref[app:top-symbolic-expressions-vasst]{Appendix~\ref{app:top-symbolic-expressions-vasst}}, ranked by $\mathrm{LMPSE}$, provide a posterior-aware summary of uncertainty over competing symbolic expressions. These ranked candidates show how \vasstmain\ concentrates posterior support on expressions with the correct symbolic structure while retaining structurally similar variants as low rank alternatives.

\subsection{Application to Feynman Equations}
\label{subsec:feynman-data-study}

To further assess symbolic discovery in scientific settings, we apply \vasstmain\ and the competing methods to benchmark \sr\ problems from the \emph{Feynman Symbolic Regression Database} (\texttt{FSReD})~\citep{AI-Feynman}, which contains over $100$ equations from the Feynman Lectures on Physics~\citep{feynman-1}, each with $10^5$ observations. We consider five representative laws with varying symbolic complexity viz., Coulomb’s law~\eqnref{eq:feynman-cl}, change in gravitational potential energy~\eqnref{eq:feynman-cpe}, Lorentz force on a moving charge in an electromagnetic field~\eqnref{eq:feynman-fce}, Fourier’s law of thermal conduction~\eqnref{eq:feynman-ftc}, and angular distribution of scattering with forward-backward asymmetry~\eqnref{eq:feynman-ada}:
\begin{align}
F
&= 0.08\,\tfrac{q_1 q_2}{\epsilon r^{2}},
\tag{{CL}}
\label{eq:feynman-cl}
\\
\Delta U
&= Gm_1m_2\left(\tfrac{1}{r_2}-\tfrac{1}{r_1}\right),
\tag{{CPE}}
\label{eq:feynman-cpe}
\\
F
&= q\left(E_f+vB\sin\theta\right),
\tag{{FCE}}
\label{eq:feynman-fce}
\\
P
&= \tfrac{\kappa A(T_2-T_1)}{d},
\tag{{FTC}}
\label{eq:feynman-ftc}
\\
f
&= \beta^{\dagger}\left(1+\alpha^{\dagger}\cos\theta\right),
\tag{{ADA}}
\label{eq:feynman-ada}
\end{align}
where the input features, responses, and constants are described in \hyperref[app:feynman-input-output]{Appendix~\ref{app:feynman-input-output}}. For each equation, we randomly subsample $n=2000$ observations and study the same regimes as in~\hyperref[subsec:simulation-study]{Section~\ref{subsec:simulation-study}}: the original noiseless response and noisy regimes obtained by adding Gaussian noise with variance $\sigma^2 \in \{0.1^2,0.2^2\}$ to the response variable, mimicking measurement error and/or experimental perturbations.
\begin{table*}[!t]
\centering
\scriptsize
\renewcommand{\arraystretch}{1.15}
\caption{Expressions recovered by \vasstmain\ and competitors for learning \eqnref{eq:feynman-cpe} under $\sigma=0.2$.}
\label{tab:main-feynman-I-13-12-noise0p2-substituted-results}
\begin{tabularx}{\linewidth}{@{}l >{\raggedright\arraybackslash}X r@{}}
\toprule
\toprule
\textbf{Method} & \textbf{Symbolic Expression Learned} & \textbf{\texttt{RMSE}} \\
\midrule

\rowcolor{vasstgray}
\vasstmain &
$\,\underline{1.014765\,\tfrac{Gm_1m_2}{r_2}
-1.000981\,\tfrac{Gm_1m_2}{r_1}}
-0.006813\,\tfrac{r_2}{r_1}
-0.055216$ &
$0.199507$ \\

\addlinespace[1pt]

\operon &
$-0.010000
-0.332\left[
\tfrac{
0.707\,Gm_1^2/r_1
-6.275451\,Gm_1m_2/r_1
}
{2.031\,r_2/r_1}
+3.007296\,\tfrac{Gm_1m_2}{r_1}
+\tfrac{
0.670\,Gm_1^2/r_1
}
{
3.388\,Gm_1^2/r_1
+2.718552\,m_2^2/m_1^2
}
\right]$ &
$0.199207$ \\

\addlinespace[1pt]

\pysr &
$Gm_1m_2\left(\tfrac{1}{r_2}-\tfrac{1}{r_1}\right)$ &
$0.199569$ \\

\addlinespace[1pt]

\gplearn &
$Gm_1m_2\left(\tfrac{1}{r_2} - \tfrac{1}{r_1}\right)$ &
$0.199569$ \\

\addlinespace[1pt]

\deap &
$-\tfrac{Gm_1m_2}{r_1}
-\tfrac{Gm_1^2}{r_1}
+\tfrac{m_2}{m_1}
+\tfrac{2.8770}{-Gm_1^2/r_1+m_2/m_1}
-4.5144$ &
$0.589860$ \\

\addlinespace[1pt]

\bsr &
$8.54074
+0.827664\,\tfrac{m_2r_2}{m_1r_1}
+0.253507\left(
\tfrac{Gm_1m_2r_2}{r_1^2}
-\tfrac{Gm_1^2}{r_1}\right)
+0.379881\,\tfrac{m_2^2}{m_1^2}$ &
$1.107579$ \\

\addlinespace[1pt]

\bms &
$-\tfrac{Gm_1m_2}{r_1}
+\tfrac{a_0Gm_1m_2}{2r_2}$ &
$0.199557$ \\

\addlinespace[1pt]

\qlattice &
$2.11181
\left(
-0.270397\tfrac{Gm_1^2}{r_1}
-0.00788965
\right)
\left(
-0.0392111\tfrac{r_2}{r_1}
+1.45354
+1.86691\exp\!\left\{0.281573\tfrac{r_2}{r_1}\right\}
\right)
\sqrt{
(
\tfrac{m_2}{m_1}
-0.0186931
)^2}
+0.238388$ &
$0.246633$ \\

\addlinespace[1pt]

\dsr &
$Gm_1m_2\left(\tfrac{1}{r_2}-\tfrac{1}{r_1}\right)$ &
$0.199569$ \\

\bottomrule
\bottomrule
\end{tabularx}
{
\scriptsize
{
\emph{Note}: $a_0$ is a constant learned by \bms. The \texttt{RMSE} values correspond to the out-of-sample \texttt{RMSE}s computed on a $10\%$ held-out test set.
}
}
\end{table*}
\begin{figure}[!htp]
    \centering
    \includegraphics[width=\linewidth]{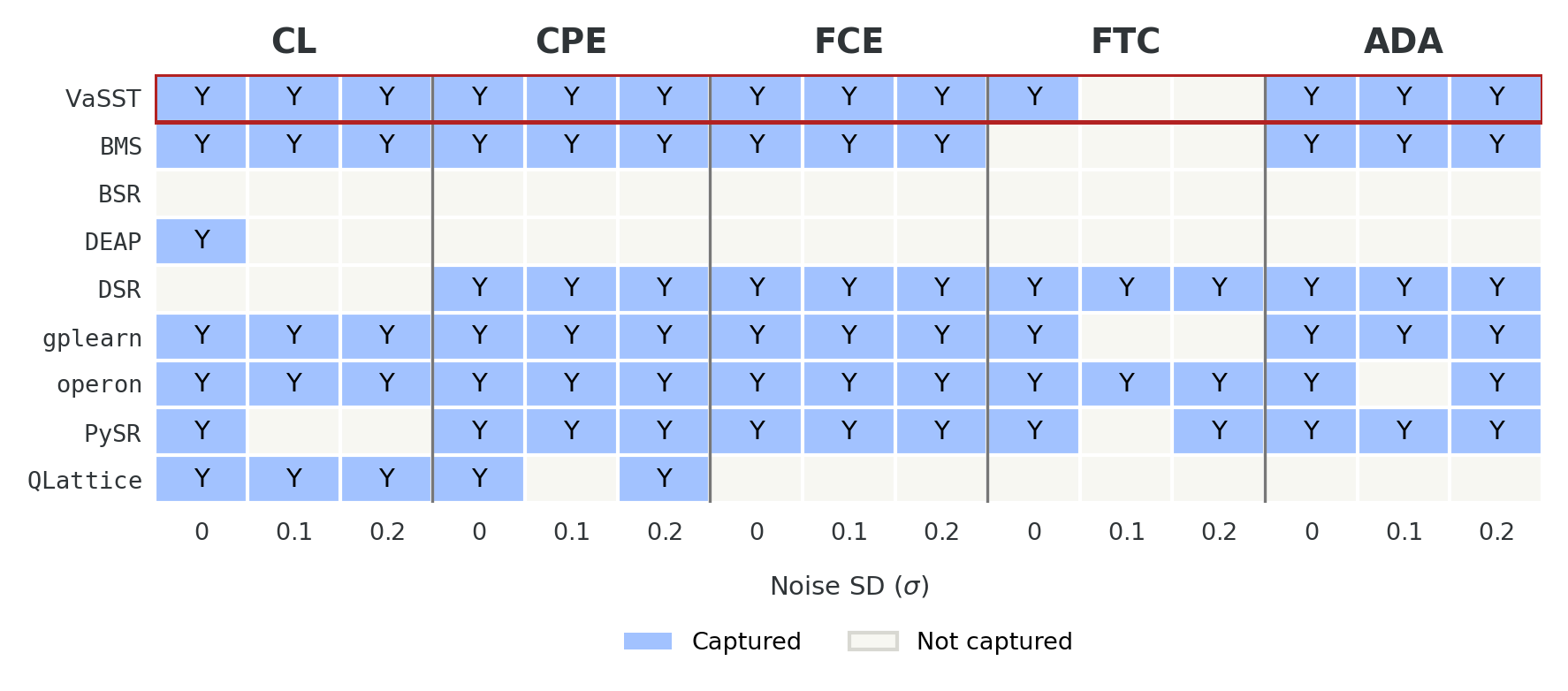}
    \caption{Expression recovery of \vasstmain\ and competitors for learning the Feynman equations.}
    \label{fig:expression-recovery-simulations-feynman}
\end{figure}

\hyperref[fig:expression-recovery-simulations-feynman]{Figure~\ref{fig:expression-recovery-simulations-feynman}} summarizes symbolic recovery across the five Feynman equations and varying noise levels, with full expression-level results reported in~\hyperref[app:additional-feynman-results-expressions-learned]{Appendix~\ref{app:additional-feynman-results-expressions-learned}} and a representative example for \eqnref{eq:feynman-cpe} under $\sigma=0.2$ shown in~\hyperref[tab:feynman-I-13-12-noise0p2-substituted-results]{Table~\ref{tab:main-feynman-I-13-12-noise0p2-substituted-results}}. \vasstmain\ recovers the target structure for~\eqnref{eq:feynman-cl},~\eqnref{eq:feynman-cpe},~\eqnref{eq:feynman-fce}, and~\eqnref{eq:feynman-ada} across noise regimes. For~\eqnref{eq:feynman-ftc}, it learns the exact symbolic form in the noiseless setting and retains dominant thermal conduction components under the noisy regimes. Across these settings, the out-of-sample \texttt{RMSE}s of \vasstmain\ closely track the corresponding noise levels, indicating that its symbolic recovery is accompanied by accurate prediction. The genetic programming modules are competitive on several equations. \pysr\ and \gplearn\ recover exact or near-exact formulas with near-noise-floor \texttt{RMSE} values. \operon\ also achieves accurate prediction in many cases, although its learned formulas are sometimes more algebraically elaborate, involving split terms, fitted constants, or indirect rearrangements of the target law. \bms\ is structurally accurate for~\eqnref{eq:feynman-cl},~\eqnref{eq:feynman-cpe},~\eqnref{eq:feynman-fce}, and~\eqnref{eq:feynman-ada}, but is less successful in recovering the full difference structure of~\eqnref{eq:feynman-ftc}. In contrast, \qlattice, \bsr, and \deap\ return unwieldy expressions involving multiple high-order interactions, consistent with their behavior in~\hyperref[subsec:simulation-study]{Section~\ref{subsec:simulation-study}}. When these methods attain competitive \texttt{RMSE}s, the fit is typically achieved through complex surrogate structures rather than clean recovery of the underlying physical law. Overall, the Feynman results show that \vasstmain\ accurately learns the underlying symbolic structure of the benchmark equations while providing a posterior-aware summary of uncertainty over competing symbolic forms through the top $\mathsf{r}=10$ $\mathrm{LMPSE}$ ranked hard symbolic ensembles reported in~\hyperref[tab:vasst-top10-I-12-2-mapped-noiseless]{Tables~\ref{tab:vasst-top10-I-12-2-mapped-noiseless}}-\ref{tab:vasst-top10-III-17-37-mapped-noise0p2} of~\hyperref[app:top-expressions-vasst-feynman-equations]{Appendix~\ref{app:top-expressions-vasst-feynman-equations}}. As in \hyperref[subsec:simulation-study]{Section~\ref{subsec:simulation-study}}, \vasstmain\ also exhibits substantial runtime gains over \bms\ and \bsr, with detailed runtime comparisons provided in~\hyperref[tab:time_all_vertical]{Table~\ref{tab:time_all_vertical}} of \hyperref[app:runtime_bayes_final]{Appendix~\ref{app:runtime_bayes_final}}.

We conclude with additional diagnostics that further characterize the robustness and computational behavior of \vasstmain. Specifically,~\hyperref[app:ablation-study]{Appendix~\ref{app:ablation-study}} presents ablation studies for \vasstmain\ over $K$ and $D$, providing practical guideline for choosing the symbolic ensemble size and tree depth, and showing that $(K, D)=(3, 3)$ balances symbolic expressivity, recovery accuracy, and computational cost in the applications considered above. Memory footprint and gradient variability analyses in~\hyperref[app:memory-footprint]{Appendix~\ref{app:memory-footprint}} indicate that the variational optimization in \vasstmain\ remains computationally stable as the symbolic search space grows. Finally, \hyperref[app:LLM-SR]{Appendix~\ref{app:LLM-SR}} compares \vasstmain\ with recent machine learning-based \sr\ architectures, including \texttt{LLM-SR}~\citep{llmsr} for learning~\eqnref{eq:feynman-ada} under $\sigma=0.2$. The results highlight a complementary distinction between prompt-driven and probabilistic approaches to \sr. While prompt-driven methods can leverage the generative capacity of language models, the \texttt{LLM-SR} experiments show sensitivity to initialization, language model choice, and prompt specification. In particular, weaker prompts often yield incomplete or non-executable symbolic programs, while more structured prompts improve execution but may still produce overly complex or structurally invalid expressions. In contrast, \vasstmain\ frames symbolic discovery as posterior inference over expression structures, enabling efficient optimization and posterior-aware uncertainty summaries across competing symbolic forms.
%


\section{Conclusion}
We presented \vasstmain, a fully probabilistic framework for \sr\ that leverages variational inference with continuous relaxations of symbolic trees. It enables scalable gradient-based optimization while preserving interpretability equipped with posterior-aware uncertainty quantification. Empirically, \vasstmain\ demonstrates strong structural recovery, competitive predictive accuracy, stability under noise, and notable computational gains over existing Bayesian \sr\ methods.

This work opens promising future avenues for fully probabilistic and scalable variational inference–based \sr. First, extending \vasstmain\ to infer internal numerical constants would broaden its applicability in learning scientific laws with unknown physical coefficients. Although the current formulation can represent simple constants compositionally, such as $\sin(2x)$ through $\sin(x+x)$, it does not directly learn free constants inside nonlinear terms such as $\exp(-\tfrac{1}{2}x^2)$. Second, because algebraically equivalent expressions are naturally pervasive in \sr, future work could incorporate equivalence-aware symbolic model comparison or posterior summaries that group semantically identical expression forms. Finally, more structured optimization strategies for the variational objective may further improve scalability relative to modern machine learning-based \sr\ approaches while preserving the probabilistic interpretation of \vasstmain.

\begin{contributions}
\href{https://roy-sr-007.github.io/}{SR} and \href{https://pritamdey.github.io/}{PD} were responsible for the conceptualization of \vasstmain, the development of the associated methodology, the design, code, and implementation of the experiments, and the writing of the article. \href{https://sites.google.com/tamu.edu/bani-k-mallick/bio}{BKM} provided supervision and guidance throughout the course of the research.
\end{contributions}

\begin{acknowledgements}
Research reported in this article was supported by National Institute of General Medical Sciences of the National  Institutes of Health under award number R01GM163238. Experiments in~\hyperref[app:LLM-SR]{Appendix~\ref{app:LLM-SR}} were conducted with the computing resources provided by Texas A\&M Department of Statistics Arseven Computing Cluster.
The authors are grateful to the anonymous reviewers for their constructive comments, which helped improve the quality and clarity of the article.
\end{acknowledgements}

\bibliography{references}



\newpage
\onecolumn

\begin{center}
\hrule height 4pt
\vspace{0.25in}
{\Large\bfseries Supplementary Materials for\\[0.3em]
\texorpdfstring{\vasst}{VaSST}: Variational Inference for Symbolic Regression using Soft Symbolic Trees}
\vspace{0.25in}
\hrule height 1pt
\end{center}
\vspace{1em}

\appendix

\renewcommand{\theequation}{\Alph{section}.\arabic{equation}}
\counterwithin{equation}{section}
\renewcommand{\thefigure}{\Alph{section}.\arabic{figure}}
\renewcommand{\thetable}{\Alph{section}.\arabic{table}}
\counterwithin{figure}{section}
\counterwithin{table}{section}

\tableofcontents
\makeatletter
\let\addcontentsline\oldaddcontentsline 
\makeatother

\newpage

\section{Notations}
\label{app:notation}

\begin{table}[!htp]
\centering
\small
\renewcommand{\arraystretch}{1.15}
\caption{Special functions and other notations.}
\label{tab:notation}
\begin{tabular}{lll}
\toprule
\toprule
\textbf{Symbol} & \textbf{Name} & \textbf{Definition} \\ 
\midrule

$\mathbb{N}$ 
& Natural numbers 
& Set $\{1,2,3,\dots\}$ \\[10pt]

$\mathbb{R}$ 
& Real numbers 
& Set of real numbers \\[10pt]

$\mathbb{R}^{p}$ 
& $p$-dimensional Euclidean space 
& $\{\bm{x}=(x_1,\dots,x_p)^\top : x_i \in \mathbb{R}\}$ \\[10pt]

$\mathbb{R}^{m\times n}$ 
& Real matrix space 
& Space of real $m\times n$ matrices \\[10pt]

$|A|$ 
& Cardinality 
& Number of elements in a finite set $A$ \\[10pt]

$\Gamma(t)$ 
& Gamma function 
& $\displaystyle \Gamma(t)=\int_{0}^{\infty} x^{t-1} e^{-x}\,dx$ \\[10pt]

$\Psi(t)$ 
& Digamma function 
& $\displaystyle \Psi(t)=\frac{d}{dt}\log\Gamma(t)$ \\[10pt]

$\mathcal{B}(\bm\eta)$ 
& Multivariate Beta function 
& $\displaystyle 
\mathcal{B}(\bm\eta)
=\frac{\prod_{k=1}^{m}\Gamma(\eta_k)}
{\Gamma\!\left(\sum_{k=1}^{m}\eta_k\right)}$ \\[10pt]

$\mathbf{1}_m$ 
& Vector of ones 
& $(1,\dots,1)^\top \in \mathbb{R}^m$ \\[10pt]

$\bm{0}_m$ 
& Vector of zeros 
& $(0,\dots,0)^\top \in \mathbb{R}^m$ \\[10pt]

$\mathbf{I}_m$ 
& Identity matrix of order $m$ 
& $(\mathbf{I}_m)_{ij} = \mathbf{1}\{i=j\}, \quad i,j = 1,\dots,m$ \\[10pt]

$|\mathbf{A}|$ 
& Determinant 
& Determinant of the matrix $\mathbf{A}\in \mathbb{R}^{m\times m}$ \\[10pt]

$\Delta^m$ 
& $m$-dimensional simplex 
& $\displaystyle 
\left\{\mathbf{w}\in[0,\infty)^m :
\mathbf{1}_m^\top \mathbf{w}=1\right\}$ \\[10pt]

$\texttt{smax}(\bm z)$
& Softmax function
& $\displaystyle
\texttt{smax}(\bm z)_k
=
\frac{\exp(z_k)}
{\sum_{\ell=1}^{m}\exp(z_\ell)},
\; k=1,\dots,m$ \\[10pt]

$\sigma(t)$
& Sigmoid function
& $\displaystyle
\sigma(t)=\frac{1}{1+\exp(-t)}$ \\[10pt]

$\mathsf{O}(g(n))$ 
& Order notation 
& $f(n) = \mathsf{O}(g(n))$ if 
$|f(n)| \le C\,|g(n)|$ 
for some constant $C>0$ 
and sufficiently large $n$\\
\bottomrule
\bottomrule
\end{tabular}
\emph{Note}: Notations listed here are used throughout the article.
\end{table}

\newpage

\section{Parameterization Details of Distribution Families used in \texorpdfstring{\vasst}{VaSST}}
\label{app:dist-families}

The \vasstmain\ framework utilizes several standard probability distributions for model specification and prior construction. Although these distributions are widely used in the machine learning literature, multiple parameterizations are common for some of them across various sources. For clarity and reproducibility, we explicitly document the notation and parameterizations adopted in this work.
\paragraph{Normal Inverse-Gamma.}
We use the conjugate Normal Inverse-Gamma ($\mathrm{NIG}$) prior for Gaussian linear regression parameters:
\begin{align}
\label{eq:nig-def}
\bm\beta \mid \sigma^2 &\sim \mathrm{N}_d(\bm\mu_0,\sigma^2\bm\Sigma_0), \quad
\sigma^2 \sim \mathrm{IG}(a_0,b_0),
\end{align}
where $d=K+1$, $\bm\mu_0\in\mathbb{R}^d$, $\bm\Sigma_0\in\mathbb{R}^{d\times d}$ is symmetric positive definite, and $a_0,b_0>0$.
Our Inverse-Gamma parameterization, i.e., $\mathrm{IG}(a,b)$, uses the density:
\begin{align}
\label{eq:ig-pdf}
\Pi(\sigma^2)
=
\frac{b^a}{\Gamma(a)}(\sigma^2)^{-(a+1)}
\exp\!\left(-\frac{b}{\sigma^2}\right),
\quad \sigma^2>0,
\end{align}
where $\Gamma(\cdot)$ is the Gamma function as in \hyperref[tab:notation]{Table~\ref{tab:notation}}. The conditional Gaussian density in~\eqnref{eq:nig-def} is:
\begin{align}
\label{eq:normal-pdf}
\Pi(\bm\beta\mid \sigma^2)
=
(2\pi)^{-\tfrac{d}{2}}(\sigma^2)^{-\tfrac{d}{2}}|\bm\Sigma_0|^{-\tfrac{1}{2}}
\exp\!\left\{-\frac{1}{2\sigma^2}(\bm\beta-\bm\mu_0)^\top \bm\Sigma_0^{-1}(\bm\beta-\bm\mu_0)\right\}.
\end{align}
Equivalently, the joint prior is $\Pi(\bm\beta,\sigma^2)=\Pi(\bm\beta\mid \sigma^2)\Pi(\sigma^2)$.
\paragraph{Bernoulli.} 
A Bernoulli random variable $E\sim \mathrm{Ber}(p)$, with $p\in(0,1)$, has support $\{0,1\}$ and probability mass function:
\begin{align}
\label{eq:bernoulli-pmf}
\Pi(e) = p^{e}(1-p)^{1-e},\quad e\in\{0,1\}.
\end{align}
In our tree prior, the expansion indicator follows $e_{j\zeta}\sim \mathrm{Ber}(p_\zeta)$ with
$p_\zeta=\alpha(1+d_\zeta)^{-\delta}$, where $\alpha\in(0,1)$ and $\delta>0$.
\paragraph{Dirichlet.}
For $m\ge 2$, a vector $\mathbf{w}$ follows a Dirichlet distribution with concentration vector $\bm\eta=(\eta_1,\ldots,\eta_m)^\top\in(0,\infty)^m$, i.e., $\mathbf{w}\sim \mathrm{Dir}(\bm\eta)$, having density:
\begin{align}
\label{eq:dirichlet-pdf}
p(\mathbf{w}\mid \bm\eta)
=
\frac{1}{\mathcal{B}(\bm\eta)}\prod_{k=1}^{m} w_k^{\eta_k-1},
\quad \mathbf{w}\in\Delta^m,
\end{align}
where $\mathcal{B}(\cdot)$ is the multivariate Beta function and $\Delta^{m}$ is the $m$-dimensional simplex, as defined in \hyperref[tab:notation]{Table~\ref{tab:notation}}. In \vasstmain, the operator and feature weight vectors follow
$\mathbf{w}_{\mathrm{op}}\sim \mathrm{Dir}(\bm\eta_{\mathrm{op}})$ and
$\mathbf{w}_{\mathrm{ft}}\sim \mathrm{Dir}(\bm\eta_{\mathrm{ft}})$, where
$\mathbf{w}_{\mathrm{op}}\in\Delta^{|\mathbf{O}|}$ and $\mathbf{w}_{\mathrm{ft}}\in\Delta^{p}$, respectively.
\paragraph{Categorical.}
Let $\mathbf{w}\in\Delta^m$. A Categorical random variable, $c \sim \mathrm{Cat}(\mathbf{w})$,
takes values in $\{1,\ldots,m\}$ with probability mass function:
\begin{align}
\label{eq:categorical-pmf}
\Pi(c = j\mid \mathbf{w}) = w_j,\qquad j\in\{1,\ldots,m\}
\end{align}
In \vasstmain, conditional on weights, we assign operators and features via:
\begin{align}
\label{eq:cat-assignments}
o_{j\zeta}\mid \mathbf{w}_{\mathrm{op}} \sim \mathrm{Cat}(\mathbf{w}_{\mathrm{op}}),\quad
h_{j\zeta}\mid \mathbf{w}_{\mathrm{ft}} \sim \mathrm{Cat}(\mathbf{w}_{\mathrm{ft}}), \quad
o_{j\zeta}\in\mathbf{O},\;
\text{feature }h_{j\zeta}.
\end{align}
When needed, we identify the operator set $\mathbf{O}$ and the set of features with index sets of sizes $|\mathbf{O}|$ and $p$,
respectively, so that~\eqnref{eq:categorical-pmf} applies directly.
\paragraph{Uniform.}
A continuous random variable $u \sim \mathrm{Unif}(a,b)$, with $a<b$, has support on the interval $(a,b)$ and probability density function:
\begin{align}
\label{eq:uniform-pdf}
\Pi(u)
=
\frac{1}{b-a}, \quad a < u < b,
\end{align}
and $\Pi(u)=0$ otherwise. In the Binary-Concrete and Gumbel-Softmax continuous relaxations in \hyperref[sec:VI-vasst]{Section \ref{sec:VI-vasst}}, the $\mathrm{Unif}(0,1)$ distribution was used to reparameterize the discrete structural variables. Further, in simulation experiments in \hyperref[subsec:simulation-study]{Section \ref{subsec:simulation-study}}, uniform distributions were used to generate the input variables.

\newpage

\section{Derivations of Key Quantities}
\label{app:mathematical-details}

\subsection{Marginalization of the Model Regression Parameters}
\label{app:parameter-marginalization}

The full joint posterior distribution as stated in~\eqnref{eq:joint-vasst-posterior} in \hyperref[sec:VI-vasst]{Section~\ref{sec:VI-vasst}} is:
\begin{align}
\label{eq:full-posterior-app}
\begin{split}
\Pi(\Theta,\bm\beta,\sigma^2\mid \mathbb{D}_n)
&\propto 
p(\mathbf{y}\mid \mathbf{T}, \bm\beta, \sigma^2)
\Pi(\bm\beta, \sigma^2)
\Pi(\mathbf{w}_{\mathrm{op}})\Pi(\mathbf{w}_{\mathrm{ft}})\prod_{j=1}^{K}\prod_{\zeta\in \mathsf{Z}_D}\Pi(e_{j\zeta})\Pi(o_{j\zeta}\mid \mathbf{w}_{\mathrm{op}})\Pi(h_{j\zeta}\mid \mathbf{w}_{\mathrm{ft}}).
\end{split}
\end{align}
Under the Gaussian likelihood and the conjugate $\mathrm{NIG}$ prior:
\[
\mathbf{y}\mid \bm\beta,\sigma^2,\mathbf{T}
\sim \mathrm{N}_n(\mathbf{T}\bm\beta,\sigma^2\mathbf{I}_n), \quad \bm\beta\mid\sigma^2 \sim \mathrm{N}_{K+1}(\bm\mu_0,\sigma^2\bm\Sigma_0),
\quad
\sigma^2 \sim \mathrm{IG}(a_0,b_0),
\]
the posterior remains $\mathrm{NIG}$:
\begin{equation}
\label{eq:app-model-posterior}
\bm\beta\mid\sigma^2,\mathbf{T},\mathbf{y}
\sim \mathrm{N}_{K+1}(\bm\mu_n,\sigma^2\bm\Sigma_n),
\quad
\sigma^2\mid \mathbf{T},\mathbf{y}
\sim \mathrm{IG}(a_n,b_n),
\end{equation}
where the posterior hyperparameters are:
\[
\bm\Sigma_n^{-1} = \bm\Sigma_0^{-1} + \mathbf{T}^{\top}\mathbf{T}, \quad \bm\mu_n = \bm\Sigma_n(\bm\Sigma_0^{-1}\bm\mu_0 + \mathbf{T}^{\top}\mathbf{y}), \quad a_n = a_0 + \tfrac{n}{2}, \quad b_n = b_0 + \tfrac{1}{2} \left(\mathbf{y}^{\top}\mathbf{y}
+\bm\mu_0^{\top}\bm\Sigma_0^{-1}\bm\mu_0
-\bm\mu_n^{\top}\bm\Sigma_n^{-1}\bm\mu_n\right).
\]
Integrating out $(\bm\beta,\sigma^2)$ yields the marginal likelihood:
\begin{align}
p(\mathbf{y}\mid \mathbf{T})
&=
\int p(\mathbf{y}\mid \mathbf{T},\bm\beta,\sigma^2)
\Pi(\bm\beta,\sigma^2)
\, d\bm\beta\, d\sigma^2
\nonumber\\
&= (2 \pi)^{-\frac{n}{2}}
\frac{\Gamma(a_n)}{\Gamma(a_0)}
\frac{|\bm\Sigma_n|^{\tfrac{1}{2}}}{|\bm\Sigma_0|^{\tfrac{1}{2}}}
\frac{b_0^{a_0}}{b_n^{a_n}},
\label{eq:marginal-likelihood-app}
\end{align}
where $\Gamma(\cdot)$ is the Gamma function as in \hyperref[tab:notation]{Table~\ref{tab:notation}}. Up to additive constants independent of $\mathbf{T}$, the log-marginal likelihood is therefore:
\begin{align}
\boxed{
\log p(\mathbf{y}\mid \mathbf{T})
=
\tfrac{1}{2}\log|\bm\Sigma_n|
-
a_n \log b_n
+
\log \Gamma(a_n)
+ C,
}
\label{eq:log-marginal-likelihood-app}
\end{align}
where $C$ collects terms independent of $\mathbf{T}$.
Consequently, the marginal posterior over the tree parameters $\Theta$ becomes:
\begin{align}
\boxed{
\Pi(\Theta\mid \mathbb{D}_n)
\propto
p(\mathbf{y}\mid \mathbf{T})
\Pi(\mathbf{w}_{\mathrm{op}})\Pi(\mathbf{w}_{\mathrm{ft}})\prod_{j=1}^{K}\prod_{\zeta\in \mathsf{Z}_D}\Pi(e_{j\zeta})\Pi(o_{j\zeta}\mid \mathbf{w}_{\mathrm{op}})\Pi(h_{j\zeta}\mid \mathbf{w}_{\mathrm{ft}}),
}
\label{eq:marginal-posterior-theta}
\end{align}
which is the quantity of interest we approximate by optimizing the variational objective.

\subsection{Derivation of the Analytical Kullback-Leibler Divergence Terms}
\label{app:kl-derivations}

In this section, we provide a detailed derivation of the Kullback-Leibler ($\mathrm{KL}$) term in~\eqnref{eq:elbo-1} of \hyperref[subsec:symbolic-tree-prior]{Section \ref{subsec:symbolic-tree-prior}}. Specifically, this term corresponds to the $\mathrm{KL}$ divergence between the variational posterior $q(\Theta)$ in \eqnref{eq:mean-field-variational-family} and the prior $\Pi(\Theta)$ over the tree parameters defined by~\eqnref{eq:skeleton-prior} and~\eqnref{eq:weight-priors}.
\begin{align}
\label{eq:KL-app-1}
\begin{split}
\mathrm{KL}&[q(\Theta)\parallel \Pi(\Theta))]
= \mathbb{E}_{q}\left[\log \frac{q(\Theta)}{\Pi(\Theta)} \right]
\\
&=\mathbb{E}_{q}\left[\log \frac{q(\mathbf{w}_{\mathrm{op}})q(\mathbf{w}_{\mathrm{ft}})\prod_{j=1}^{K}\prod_{\zeta \in \mathsf{Z}_D}q(e_{j\zeta})q(o_{j\zeta})q(h_{j\zeta})}{\Pi(\mathbf{w}_{\mathrm{op}})\Pi(\mathbf{w}_{\mathrm{ft}})\prod_{j=1}^{K}\prod_{\zeta\in \mathsf{Z}_D}\Pi(e_{j\zeta})\Pi(o_{j\zeta}\mid \mathbf{w}_{\mathrm{op}})\Pi(h_{j\zeta}\mid \mathbf{w}_{\mathrm{ft}})} \right]
\\
&=\mathbb{E}_{q}\left[\log \frac{q(\mathbf{w}_{\mathrm{op}})}{\Pi(\mathbf{w}_{\mathrm{op}})} + \log \frac{q(\mathbf{w}_{\mathrm{ft}})}{\Pi(\mathbf{w}_{\mathrm{ft}})} +
\sum_{j=1}^{K}\sum_{\zeta \in \mathsf{Z}_{D}}
\left(\log \frac{q(e_{j\zeta})}{\Pi(e_{j\zeta})} + \log \frac{q(o_{j\zeta})}{\Pi(o_{j\zeta}\mid \mathbf{w}_{\mathrm{op}})} + \log \frac{q(h_{j\zeta})}{\Pi(h_{j\zeta}\mid \mathbf{w}_{\mathrm{ft}})} \right)
\right]
\\
&=\mathrm{KL}[q(\mathbf{w}_{\mathrm{op}})\parallel \Pi(\mathbf{w}_{\mathrm{op}})] + \mathrm{KL}[q(\mathbf{w}_{\mathrm{ft}})\parallel \Pi(\mathbf{w}_{\mathrm{ft}})]
+ \sum_{j=1}^{K}\sum_{\zeta\in \mathsf{Z}_D}\bigg[\mathrm{KL}[q(e_{j\zeta})\parallel \Pi(e_{j\zeta})]
\\ & \qquad \qquad \qquad \qquad\qquad\quad
+\mathbb{E}_{q(\mathbf{w}_{\mathrm{op}})}\mathrm{KL}[q(o_{j\zeta})\parallel\Pi(o_{j\zeta}\mid \mathbf{w}_{\mathrm{op}})] 
+ \mathbb{E}_{q(\mathbf{w}_{\mathrm{ft}})}\mathrm{KL}[q(h_{j\zeta})\parallel \Pi(h_{j\zeta}|\mathbf{w}_{\mathrm{ft}})]\bigg].
\end{split}
\end{align}
Now, we derive the analytic forms of each of the terms in the final sum in~\eqnref{eq:KL-app-1}.

\begin{lemma}[KL divergence between Dirichlet distributions]
\label{lem:kl-dirichlet}
Let $q \equiv \mathrm{Dir}(\widetilde{\bm\eta})$ and
$\pi \equiv \mathrm{Dir}(\bm\eta)$ on $\Delta^m$, where
$\widetilde{\bm\eta},\bm\eta\in(0,\infty)^m$. Define $\widetilde\eta_0=\sum_{k=1}^m \widetilde\eta_k$.
Then:
\begin{align}
\mathrm{KL}\!\left[\mathrm{Dir}(\widetilde{\bm\eta})\,\|\,\mathrm{Dir}(\bm\eta)\right]
=
\log\frac{\mathcal{B}(\bm\eta)}{\mathcal{B}(\widetilde{\bm\eta})}
+ \sum_{k=1}^m (\widetilde\eta_k-\eta_k)\bigl[\Psi(\widetilde\eta_k)-\Psi(\widetilde\eta_0)\bigr],
\label{eq:kl-dirichlet}
\end{align}
where $\mathcal{B}(\cdot)$ and $\Psi(\cdot)$ are the multivariate Beta and Digamma functions, respectively, as defined in \hyperref[tab:notation]{Table~\ref{tab:notation}}.
\end{lemma}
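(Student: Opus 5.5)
The plan is to write the KL divergence as the expectation under $q$ of the log-density ratio, using the Dirichlet density in~\eqnref{eq:dirichlet-pdf}. For $\mathbf{w}\in\Delta^m$ the ratio of densities is
\[
\log\frac{q(\mathbf{w})}{\pi(\mathbf{w})}
=\log\frac{\mathcal{B}(\bm\eta)}{\mathcal{B}(\widetilde{\bm\eta})}
+\sum_{k=1}^m(\widetilde\eta_k-\eta_k)\log w_k .
\]
The normalizing constants are deterministic, so taking $\mathbb{E}_q$ gives
\[
\mathrm{KL}=\log\frac{\mathcal{B}(\bm\eta)}{\mathcal{B}(\widetilde{\bm\eta})}+\sum_{k=1}^m(\widetilde\eta_k-\eta_k)\,\mathbb{E}_q[\log w_k].
\]
Everything therefore reduces to the single identity $\mathbb{E}_{\mathrm{Dir}(\widetilde{\bm\eta})}[\log w_k]=\Psi(\widetilde\eta_k)-\Psi(\widetilde\eta_0)$.

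The key step, and the only nontrivial one, is establishing this identity. I would use the exponential-family (differentiation under the integral) trick. Start from the normalization
\[
\int_{\Delta^m}\prod_{l} w_l^{\widetilde\eta_l-1}\,d\mathbf{w}=\mathcal{B}(\widetilde{\bm\eta}),
\]
and differentiate both sides with respect to $\widetilde\eta_k$. The left side becomes $\int \log w_k \prod_l w_l^{\widetilde\eta_l-1}\,d\mathbf{w}$. Dividing by $\mathcal{B}(\widetilde{\bm\eta})$ then gives
\[
\mathbb{E}_q[\log w_k]=\partial_{\widetilde\eta_k}\log\mathcal{B}(\widetilde{\bm\eta}).
\]
By the definition in Table~\ref{tab:notation},
\[
\log\mathcal{B}(\widetilde{\bm\eta})=\sum_l\log\Gamma(\widetilde\eta_l)-\log\Gamma(\widetilde\eta_0),
\]
and since $\partial\widetilde\eta_0/\partial\widetilde\eta_k=1$, the derivative is $\Psi(\widetilde\eta_k)-\Psi(\widetilde\eta_0)$ by the definition of the digamma function.

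The main obstacle is justifying the interchange of derivative and integral. I would handle it by dominated convergence on a neighborhood $[\widetilde\eta_k-\varepsilon,\widetilde\eta_k+\varepsilon]$ with $\varepsilon<\widetilde\eta_k/2$. On this neighborhood $|\log w_k|\,w_k^{\widetilde\eta_k-\varepsilon-1}$, times the other factors, is integrable on the simplex, because $\int_0^1|\log t|\,t^{a-1}\,dt<\infty$ for every $a>0$. As a fallback, the identity can also be obtained from the Gamma representation $w_k=G_k/\sum_l G_l$ with independent $G_l\sim\mathrm{Gamma}(\widetilde\eta_l,1)$. In that representation $\sum_l G_l\sim\mathrm{Gamma}(\widetilde\eta_0,1)$ is independent of $\mathbf{w}$, and $\mathbb{E}\log G=\Psi(a)$ follows from differentiating $\Gamma(a)$. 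Substituting the identity into the display above yields~\eqnref{eq:kl-dirichlet}.
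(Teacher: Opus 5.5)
Your proposal is correct and takes essentially the same route as the paper: expand the log-density ratio, reduce to $\mathbb{E}_q[\log w_k]$, and obtain $\Psi(\widetilde\eta_k)-\Psi(\widetilde\eta_0)$ by differentiating the Dirichlet normalization identity with respect to $\widetilde\eta_k$. The paper differentiates $\int q\,d\mathbf{w}=1$ rather than your unnormalized integral, which is equivalent; your dominated-convergence justification of the derivative--integral interchange is a sound addition that the paper leaves implicit.
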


\begin{proof}
Let $\mathbf{w} \in \Delta^{m}$ such that $\mathbf{w} \sim q$. From the definition of $\mathrm{KL}$ divergence, we conclude:
\begin{align}
\begin{split}
\mathrm{KL}\!\left[\mathrm{Dir}(\widetilde{\bm\eta})\,\|\,\mathrm{Dir}(\bm\eta)\right]
&= \mathbb{E}_{q}\!\left[\log\frac{q(\mathbf{w})}{\pi(\mathbf{w})}\right] 
= \mathbb{E}_{q}\!\Bigg[
\log\frac{\mathcal{B}(\bm\eta)}{\mathcal{B}(\widetilde{\bm\eta})}
+ \sum_{k=1}^m (\widetilde\eta_k-\eta_k)\log w_k
\Bigg] \\
&= \log\frac{\mathcal{B}(\bm\eta)}{\mathcal{B}(\widetilde{\bm\eta})}
+ \sum_{k=1}^m (\widetilde\eta_k-\eta_k)\,\mathbb{E}_{q}[\log w_k].
\end{split}
\label{eq:kl-dirichlet-step1}
\end{align}
It remains to compute $\mathbb{E}_{q}[\log w_k]$ under $q\equiv\mathrm{Dir}(\widetilde{\bm\eta})$. For any $k\in\{1,\ldots,m\}$, using differentiation under the integral sign:
\begin{align}
0
&=
\frac{\partial}{\partial \widetilde\eta_k}
\int_{\Delta^m} q(\mathbf{w})\,d\mathbf{w}
=
\int_{\Delta^m}
\frac{\partial}{\partial \widetilde\eta_k} q(\mathbf{w})\,d\mathbf{w}.
\label{eq:kl-dirichlet-step2}
\end{align}
Moreover:
\begin{align}
\begin{split}
&\log q(\mathbf{w})
=
-\log \mathcal{B}(\widetilde{\bm\eta})
+ \sum_{r=1}^m (\widetilde\eta_r-1)\log w_r\\
\implies & \frac{\partial}{\partial \widetilde\eta_k}\log q(\mathbf{w})
=
-\frac{\partial}{\partial \widetilde\eta_k}\log \mathcal{B}(\widetilde{\bm\eta})
+ \log w_k \\
\implies & \frac{\partial}{\partial \widetilde\eta_k} q(\mathbf{w})
=
q(\mathbf{w})
\left(
-\frac{\partial}{\partial \widetilde\eta_k}\log \mathcal{B}(\widetilde{\bm\eta})
+ \log w_k
\right).
\end{split}
\label{eq:kl-dirichlet-step3}
\end{align}
Substituting~\eqnref{eq:kl-dirichlet-step3} into \eqnref{eq:kl-dirichlet-step2} and noting that $\int_{\Delta^m} q(\mathbf{w})\,d\mathbf{w}=1$ gives:
\begin{align}
0
&=
-\frac{\partial}{\partial \widetilde\eta_k}\log \mathcal{B}(\widetilde{\bm\eta})
+ \int_{\Delta^m} q(\mathbf{w})\log w_k\,d\mathbf{w}
=
-\frac{\partial}{\partial \widetilde\eta_k}\log \mathcal{B}(\widetilde{\bm\eta})
+ \mathbb{E}_{q}[\log w_k],
\end{align}
and therefore:
\begin{align}
\mathbb{E}_{q}[\log w_k]
=
\frac{\partial}{\partial \widetilde\eta_k}\log \mathcal{B}(\widetilde{\bm\eta}) = \frac{\partial}{\partial \widetilde\eta_k} \left[\sum_{r=1}^m \log \Gamma(\widetilde\eta_r) - \log \Gamma(\widetilde\eta_0)\right] = \Psi(\widetilde\eta_k) - \Psi(\widetilde\eta_0)
\label{eq:dir-logwk-beta-deriv}.
\end{align}
Substituting this expression into \eqnref{eq:kl-dirichlet-step1} gives \eqnref{eq:kl-dirichlet}, completing the proof.
\end{proof}
Now, an application of \hyperref[lem:kl-dirichlet]{Lemma \ref{lem:kl-dirichlet}} yields the following:
\begin{align}
\mathrm{KL}[q(\mathbf{w}_{\mathrm{op}})\parallel \Pi(\mathbf{w}_{\mathrm{op}})]
&=
\log\!\frac{\mathcal{B}(\bm\eta_{\mathrm{op}})}
{\mathcal{B}(\widetilde{\bm\eta}_{\mathrm{op}})}
+ \sum_{k=1}^{|\mathbf{O}|}
(\widetilde{\eta}_{\mathrm{op}, k} - \eta_{\mathrm{op}, k})
\left[
\Psi(\widetilde{\eta}_{\mathrm{op}, k})
-
\Psi(\mathbf{1}_{|\mathbf{O}|}^{\top}\widetilde{\bm\eta}_{\mathrm{op}})
\right],
\label{eq:kl-operator-weights}
\\[6pt]
\mathrm{KL}[q(\mathbf{w}_{\mathrm{ft}})\parallel \Pi(\mathbf{w}_{\mathrm{ft}})]
&=
\log\!\frac{\mathcal{B}(\bm\eta_{\mathrm{ft}})}
{\mathcal{B}(\widetilde{\bm\eta}_{\mathrm{ft}})}
+ \sum_{k=1}^{p}
(\widetilde{\eta}_{\mathrm{ft}, k} - \eta_{\mathrm{ft}, k})
\left[
\Psi(\widetilde{\eta}_{\mathrm{ft}, k})
-
\Psi(\mathbf{1}_{p}^{\top}\widetilde{\bm\eta}_{\mathrm{ft}})
\right],
\label{eq:kl-feature-weights}
\end{align}
where $\mathbf{1}_m$ is a $m$-dimensional vector of ones as in \hyperref[tab:notation]{Table~\ref{tab:notation}}.
By definition, the $\mathrm{KL}$ divergence between two Bernoulli distributions $\mathrm{Ber}(\widetilde p)$ and $\mathrm{Ber}(p)$ is: 
\begin{align}
\begin{split}
\mathrm{KL}\!\left[\mathrm{Ber}(\widetilde p)\,\|\,\mathrm{Ber}(p)\right]
= \widetilde p\log\frac{\widetilde p}{p} + (1-\widetilde p)\log\frac{1-\widetilde p}{1-p}.
\end{split}
\label{eq:kl-bernoulli-step1}
\end{align}
Consequently using~\eqnref{eq:kl-bernoulli-step1}:
\begin{align}
\mathrm{KL}[q(e_{j\zeta})\parallel \Pi(e_{j\zeta})] = \widetilde{p}_{j\zeta}\log \frac{\widetilde{p}_{j\zeta}}{p_{\zeta}} + (1 - \widetilde{p}_{j\zeta})\log \frac{1-\widetilde{p}_{j\zeta}}{1-p_{\zeta}}.
\label{eq:kl-expansion-indicators}
\end{align}
Similarly, the $\mathrm{KL}$ divergence between two categorial distributions $\mathrm{Cat}(\widetilde{\bm\pi})$ and $\mathrm{Cat}(\mathbf{w})$ is:
\begin{align}
\mathrm{KL}\!\left[\mathrm{Cat}(\widetilde{\bm\pi})\,\|\,\mathrm{Cat}(\mathbf{w})\right]
&=\sum_{k=1}^{m} \widetilde{\pi}_{k}\log\frac{\widetilde{\pi}_{k}}{w_k}
=\sum_{k=1}^{m}\widetilde{\pi}_{k}\left[\log \widetilde{\pi}_{k}-\log w_{k}\right],
\label{eq:ekl-cat-step1}
\end{align}
which implies:
\begin{align}
\begin{split}
\mathbb{E}_{q(\mathbf{w}_{\mathrm{op}})}\mathrm{KL}[q(o_{j\zeta})\parallel \Pi(o_{j\zeta}\mid \mathbf{w}_{\mathrm{op}})] &= 
\sum_{k=1}^{|\mathbf{O}|}\widetilde{\pi}^{\mathrm{op}}_{j\zeta, k}\left[\log(\widetilde{\pi}^{\mathrm{op}}_{j\zeta, k}) - \mathbb{E}_{q(\mathbf{w}_{\mathrm{op}})}[\log w_k]\right]
\\
&= 
\sum_{k=1}^{|\mathbf{O}|}\widetilde{\pi}^{\mathrm{op}}_{j\zeta, k}\left[\log(\widetilde{\pi}^{\mathrm{op}}_{j\zeta, k}) - \Psi(\widetilde{\eta}_{\mathrm{op}, k}) + \Psi(\mathbf{1}^{\top}_{|\mathbf{O}|}\widetilde{\bm\eta}_{\mathrm{op}})\right],
\end{split}
\label{eq:kl-operator-indices}
\end{align}
where the last equality follows from~\eqnref{eq:dir-logwk-beta-deriv} in the proof of \hyperref[lem:kl-dirichlet]{Lemma \ref{lem:kl-dirichlet}}. Analogously:
\begin{align}
\begin{split}
\mathbb{E}_{q(\mathbf{w}_{\mathrm{ft}})}\mathrm{KL}[q(h_{j\zeta})\parallel \Pi(h_{j\zeta}\mid \mathbf{w}_{\mathrm{ft}})] =
\sum_{k=1}^{p}\widetilde{\pi}^{\mathrm{ft}}_{j\zeta, k}\left[\log(\widetilde{\pi}^{\mathrm{ft}}_{j\zeta, k}) - \Psi(\widetilde{\eta}_{\mathrm{ft}, k}) + \Psi(\mathbf{1}^{\top}_{p}\widetilde{\bm\eta}_{\mathrm{ft}})\right].
\end{split}
\label{eq:kl-feature-indices}
\end{align}
Putting together~\eqnref{eq:kl-operator-weights}, \eqnref{eq:kl-feature-weights}, \eqnref{eq:kl-expansion-indicators}, \eqnref{eq:kl-operator-indices}, and \eqnref{eq:kl-feature-indices} in \eqnref{eq:KL-app-1}, we get the following composite expression for the $\mathrm{KL}$ component:
\begin{equation}
\boxed{
\begin{aligned}
\mathrm{KL}[q(\Theta)&\parallel \Pi(\Theta)] = \log\! \frac{\mathcal{B}(\bm\eta_{\mathrm{op}})}
{\mathcal{B}(\widetilde{\bm\eta}_{\mathrm{op}})}
+ \sum_{k=1}^{|\mathbf{O}|}
(\widetilde{\eta}_{\mathrm{op}, k} - \eta_{\mathrm{op}, k})
\left[
\Psi(\widetilde{\eta}_{\mathrm{op}, k})
-
\Psi(\mathbf{1}_{|\mathbf{O}|}^{\top}\widetilde{\bm\eta}_{\mathrm{op}})
\right] + \log\!\frac{\mathcal{B}(\bm\eta_{\mathrm{ft}})}
{\mathcal{B}(\widetilde{\bm\eta}_{\mathrm{ft}})}\\
& 
+ \sum_{k=1}^{p}
(\widetilde{\eta}_{\mathrm{ft}, k} - \eta_{\mathrm{ft}, k})
\left[
\Psi(\widetilde{\eta}_{\mathrm{ft}, k})
-
\Psi(\mathbf{1}_{p}^{\top}\widetilde{\bm\eta}_{\mathrm{ft}})
\right] 
+  \sum_{j=1}^{K} \sum_{\zeta \in \mathsf{Z}_D} \Bigg[ \widetilde{p}_{j\zeta}\log \frac{\widetilde{p}_{j\zeta}}{p_{\zeta}} + (1 - \widetilde{p}_{j\zeta})\log \frac{1-\widetilde{p}_{j\zeta}}{1-p_{\zeta}} 
\\ &+ 
\sum_{k=1}^{|\mathbf{O}|}\widetilde{\pi}^{\mathrm{op}}_{j\zeta, k}\left[\log(\widetilde{\pi}^{\mathrm{op}}_{j\zeta, k}) - \Psi(\widetilde{\eta}_{\mathrm{op}, k}) + \Psi(\mathbf{1}^{\top}_{|\mathbf{O}|}\widetilde{\bm\eta}_{\mathrm{op}})\right]
+
\sum_{k=1}^{p}\widetilde{\pi}^{\mathrm{ft}}_{j\zeta, k}\left[\log(\widetilde{\pi}^{\mathrm{ft}}_{j\zeta, k}) - \Psi(\widetilde{\eta}_{\mathrm{ft}, k}) + \Psi(\mathbf{1}^{\top}_{p}\widetilde{\bm\eta}_{\mathrm{ft}})\right]\Bigg].
\end{aligned}
}
\label{eq:KL-analytic-grand}
\end{equation}
%

\subsection{Derivation of LMPSE}
\label{subsec:LMPSE-derivation}

From~\eqnref{eq:vasst-model} and~\eqnref{eq:OG-symbolic-tree-prior}, the joint posterior distribution induced over the tree structures $\{\mathsf T(f_j)\}_{j=1}^K$, coefficient vector $\bm \beta$, noise variance $\sigma^2$, and operator and feature weights $\mathbf{w}_{\mathrm{op}}$ and $\mathbf{w}_{\mathrm{ft}}$ is:
\begin{align}
\label{eq:actual-tree-joint-posterior}
\begin{split}
    &\Pi\left(\{\mathsf T(f_j)\}_{j=1}^K,\bm\beta,\sigma^2, \mathbf{w}_{\mathrm{op}}, \mathbf{w}_{\mathrm{ft}} \mid \mathbb{D}_n\right) \propto p(\mathbf{y}\mid \mathbf{T}, \bm\beta, \sigma^2)\Pi(\bm\beta, \sigma^2)\Pi(\mathbf{w}_{\mathrm{op}})
    \Pi(\mathbf{w}_{\mathrm{ft}})\prod_{j=1}^{K}\prod_{\zeta\in \mathsf{Z}(f_j)}\Pi(e_{j\zeta})\Pi(o_{j\zeta}\mid \mathbf{w}_{\mathrm{op}})\Pi(h_{j\zeta}\mid \mathbf{w}_{\mathrm{ft}}).
\end{split}   
\end{align}
The marginal posterior over $\{\mathsf T(f_j)\}_{j=1}^K$ is:
\begin{align}
\label{eq:actual-tree-marginal-posterior}
\begin{split}
&\Pi\left(\{\mathsf T(f_j)\}_{j=1}^K \mid \mathbb{D}_n\right) 
\propto \mathbb E_{\bm \beta, \sigma^2} \left[p(\mathbf{y}\mid \mathbf{T}, \bm\beta, \sigma^2) \right] \mathbb E_{\mathbf{w}_{\mathrm{op}}, \mathbf{w}_{\mathrm{ft}}} \left[\prod_{j=1}^{K}\prod_{\zeta\in \mathsf{Z}(f_j)}\Pi(e_{j\zeta})\Pi(o_{j\zeta}\mid \mathbf{w}_{\mathrm{op}})\Pi(h_{j\zeta}\mid \mathbf{w}_{\mathrm{ft}})\right]
\\ &
= p(\mathbf{y}\mid \mathbf{T}) \times 
\underbrace{\prod_{d=0}^{D} \left[ \mathrm{ps}(d)^{\mathrm{nonterm}(d)} (1-\mathrm{ps}(d))^{\mathrm{term}(d)} \right]}_{\prod_{j=1}^{K}\prod_{\zeta\in \mathsf{Z}(f_j)}\Pi(e_{j\zeta})} 
\\ & \qquad \qquad \qquad \qquad 
\bigintsss \bigintsss \underbrace{\prod_{\ell=1}^{|\mathbf O|} w_{\mathrm{op}, \ell}^{\gamma_{\ell}}}_{\prod_{j=1}^{K}\prod_{\zeta\in \mathsf{Z}(f_j)}\Pi(o_{j\zeta})} \times
\underbrace{\prod_{m=1}^{p} w_{\mathrm{ft}, m}^{\varrho_{m}}}_{\prod_{j=1}^{K}\prod_{\zeta\in \mathsf{Z}(f_j)}\Pi(o_{j\zeta})} \times
\underbrace{\frac{\prod_{\ell=1}^{|\mathbf O|} w_{\mathrm{op}, \ell}^{\eta_{\mathrm{op},\ell} - 1}}{\mathcal{B}(\bm \eta_{\mathrm{op}})}}_{\mathbf{w}_{\mathrm{op}} \sim \mathrm{Dir}(\bm\eta_{\mathrm{op}})} \times
\underbrace{\frac{\prod_{m=1}^{p} w_{\mathrm{ft}, m}^{\eta_{\mathrm{ft},m} - 1}}{\mathcal{B}(\bm \eta_{\mathrm{ft}})}}_{\mathbf{w}_{\mathrm{op}} \sim \mathrm{Dir}(\bm\eta_{\mathrm{op}})}
d\mathbf{w}_{\mathrm{op}}\; d\mathbf{w}_{\mathrm{ft}},
\end{split}   
\end{align}
where $\mathbf{w}_{\mathrm{op}} = (w_{\mathrm{op}, 1},\ldots,w_{\mathrm{op},|\mathbf O|})^\top$, $\mathbf{w}_{\mathrm{ft}} = (w_{\mathrm{ft}, 1},\ldots,w_{\mathrm{ft}, p})^\top$, $\bm \eta_{\mathrm{op}} = (\eta_{\mathrm{op}, 1},\ldots,\eta_{\mathrm{op},|\mathbf O|})^\top$, $\bm \eta_{\mathrm{ft}} = (\eta_{\mathrm{ft}, 1},\ldots,\eta_{\mathrm{ft}, p})^\top$, $\gamma_\ell$ and $\varrho_m$ denote the total number of occurrences of the $\ell$th operator and the $m$th feature among the $K$ symbolic trees $\left\{\mathsf T(f_j)\right\}_{j=1}^K$, $\mathrm{ps}(d) = \alpha(1 + d)^{-\delta}$ is the split probability at depth $d$, $\mathrm{term}(d)$ and $\mathrm{nonterm}(d)$ denote the total number of terminal and nonterminal nodes at depth $d$ among $\left\{\mathsf T(f_j)\right\}_{j=1}^K$. Define $\bm\gamma = (\gamma_1,\ldots,\gamma_{|\mathbf O|})^\top$ and $\bm\varrho = (\varrho_1,\dots,\varrho_p)^\top$. Then the marginal posterior from~\eqnref{eq:actual-tree-marginal-posterior} is: 
\begin{align}
\label{eq:actual-tree-marginal-posterior-2}
\begin{split}
\Pi\left(\{\mathsf T(f_j)\}_{j=1}^K \mid \mathbb{D}_n\right) 
\propto p(\mathbf{y}\mid \mathbf{T}) \times \prod_{d=0}^{D} \left[ \mathrm{ps}(d)^{\mathrm{nonterm}(d)} (1-\mathrm{ps}(d))^{\mathrm{term}(d)} \right]  \frac{\mathcal{B}(\bm \eta_{\mathrm{op}} + \bm \gamma)}{\mathcal{B}(\bm \eta_{\mathrm{op}})} \frac{\mathcal{B}(\bm \eta_{\mathrm{ft}} + \bm\varrho)}{\mathcal{B}(\bm \eta_{\mathrm{ft}})},
\end{split}   
\end{align}
Therefore the $\mathrm{LMPSE}$ is:
\begin{equation*}
\boxed{
\begin{aligned}
&\mathrm{LMPSE}\left(\{\mathsf T(f_j)\}_{j=1}^K\right) 
= \log \Pi\left(\{\mathsf T(f_j)\}_{j=1}^K \mid \mathbb{D}_n\right) \\
&\qquad = \log p(\mathbf{y}\mid \mathbf{T}) + \sum_{d=0}^{D} \bigg[\mathrm{nonterm}(d) \log  \mathrm{ps}(d) + {\mathrm{term}(d)} \log (1-\mathrm{ps}(d)) \bigg] + \log \frac{\mathcal{B}(\bm \eta_{\mathrm{op}} + \bm \gamma)}{\mathcal{B}(\bm \eta_{\mathrm{op}})} + \log \frac{\mathcal{B}(\bm \eta_{\mathrm{ft}} + \bm\varrho)}{\mathcal{B}(\bm \eta_{\mathrm{ft}})}.
\end{aligned}   
}
\end{equation*}

\newpage

\section{Soft Evaluation Algorithms}
\label{app:soft-evaluation-algorithms}

\subsection{Soft Evaluation at a Node}

\begin{algorithm}[H]
\caption{The \textsc{SoftEvalAtNode}$(\zeta',\bm{x}_i, j, \mathbf{O}, \{\widetilde{e}_{j\zeta}, \widetilde{\mathbf{o}}_{j\zeta}, \widetilde{\mathbf{h}}_{j\zeta}:\zeta\in \mathsf{Z}_D\})$ algorithm for evaluating a soft symbolic subtree rooted at a given node $\zeta'$.}
\label{alg:soft-eval-at-node}
\DontPrintSemicolon
\KwIn{
A given node: $\zeta'$; feature vector: $\bm{x}_i\in \mathbb{R}^{p}$; soft symbolic tree index: $j\in \{1, \ldots, K\}$; the operator set: $\mathbf{O} = \mathbf{O}_u\cup \mathbf{O}_b$; soft relaxations: $\{\widetilde{e}_{j\zeta}, \widetilde{\mathbf{o}}_{j\zeta}, \widetilde{\mathbf{h}}_{j\zeta}:\zeta\in \mathsf{Z}_D\}$.\;
}
\KwOut{$\mathsf{S}_j^{\mathrm{soft}}(\bm{x}_i; \zeta')\in \mathbb{R}$.\;}

\BlankLine
\Comment{\textcolor{Blue}{Soft feature evaluation}}
$\mathrm{Term}(\bm{x}_i, j, \zeta') \leftarrow \widetilde{\mathbf{h}}^{\top}_{j\zeta'}\bm{x}_i$\;

\BlankLine
\Comment{\textcolor{Blue}{Soft operator evaluation}}
\tcp{$L(\zeta')$ and $R(\zeta')$ are the left and right children of $\zeta'$, respectively}
$\mathrm{NontermUnary}(\bm{x}_i, j, \zeta') \leftarrow \sum_{u\in \mathbf{O}_u}\widetilde{o}_{j\zeta', u}\cdot u(\mathsf{S}_j^{\mathrm{soft}}(\bm{x}_i, L(\zeta')))$\;
$\mathrm{NontermBinary}(\bm{x}_i, j, \zeta') \leftarrow \sum_{b\in \mathbf{O}_b}\widetilde{o}_{j\zeta', b}\cdot b(\mathsf{S}_j^{\mathrm{soft}}(\bm{x}_i; L(\zeta')),\; \mathsf{S}_j^{\mathrm{soft}}(\bm{x}_i; R(\zeta')))$\;
$\mathrm{Nonterm}(\bm{x}_i, j, \zeta') \leftarrow \mathrm{NontermUnary}(\bm{x}_i, j, \zeta') + \mathrm{NontermBinary}(\bm{x}_i, j, \zeta')$\;

\BlankLine
\Comment{\textcolor{Blue}{Complete soft evaluation at $\zeta'$}}
$\mathsf{S}_j^{\mathrm{soft}}(\bm{x}_i; \zeta') \leftarrow \widetilde{e}_{j\zeta'}\mathrm{Nonterm}(\bm{x}_i, j, \zeta') + (1 - \widetilde{e}_{j\zeta'})\mathrm{Term}(\bm{x}_i, j, \zeta')$\;

\BlankLine
\Return{$\mathsf{S}_j^{\mathrm{soft}}(\bm{x}_i; \zeta')$}\;
\end{algorithm}

\subsection{Complete Soft Evaluation}
\begin{algorithm}[H]
\caption{The \textsc{SoftEval}$(\{\bm{x}_i\in \mathbb{R}^{p}\}_{i=1}^{n},\mathbf{O},\{\widetilde{e}_{j\zeta}, \widetilde{\mathbf{o}}_{j\zeta}, \widetilde{\mathbf{h}}_{j\zeta}:\zeta\in \mathsf{Z}_D\}_{j=1}^{K})$ algorithm for evaluating an ensemble of $K$ soft symbolic trees.}
\label{alg:soft-eval}
\DontPrintSemicolon
\KwIn{
Feature vector instances: $\{\bm{x}_i\in \mathbb{R}^{p}\}_{i=1}^{n}$; the operator set: $\mathbf{O} = \mathbf{O}_u\cup \mathbf{O}_b$; soft relaxations: $\{\widetilde{e}_{j\zeta}, \widetilde{\mathbf{o}}_{j\zeta}, \widetilde{\mathbf{h}}_{j\zeta}:\zeta\in \mathsf{Z}_D\}_{j=1}^{K}$.\;
}
\KwOut{$\mathbf{T}_{\mathrm{soft}}\in \mathbb{R}^{n\times \overline{K+1}}$.\;}

\BlankLine
\For{$i \leftarrow 1$ \KwTo $n$}{
    \Comment{\textcolor{Blue}{Intercept term set to $1$}}
    $t_{i0} \leftarrow 1$\;
    \For{$j\leftarrow 1$ \KwTo $K$}{
        \Comment{\textcolor{Blue}{$j$th soft symbolic tree evaluated at root node for $\bm{x}_i$}}
        $t_{ij} = \mathsf{S}_{j}^{\mathrm{soft}}(\bm{x}_i; \zeta'=0)$ from \hyperref[alg:soft-eval-at-node]{Algorithm~\ref{alg:soft-eval-at-node}: \textsc{SoftEvalAtNode}}\;
    }
}

\BlankLine
$\mathbf{T}_{\mathrm{soft}}\leftarrow (t_{ij})_{1\leq i \leq n,\; 0\leq j \leq K}$\;

\BlankLine
\Return{$\mathbf{T}_{\mathrm{soft}}$\;}
\end{algorithm}

\newpage

\section{Algorithm for Approximation of \texorpdfstring{$\mathcal{E}(\phi)$}{Ephi}}
\label{app:approx-elbo}
\begin{algorithm}[H]
\caption{The \textsc{ApproxELBO}$(\mathbb{D}_n, (\bm\mu_0, \bm\Sigma_0, a_0, b_0, \bm\eta_\mathrm{op}, \bm\eta_{\mathrm{ft}}, \alpha, \delta), \mathbf{O}, \phi, \bm \tau = (\tau_{\mathrm{ex}}, \tau_{\mathrm{op}}, \tau_{\mathrm{ft}}), S)$ algorithm for computing Monte Carlo approximation of $\mathcal{E}(\phi)$.}
\label{alg:approx-elbo}
\DontPrintSemicolon
\KwIn{
Data: $\mathbb{D}_n = \{(\bm{x}_i, y_i)\}_{i=1}^{n}$; full set of prior hyperparameters: $(\bm\mu_0, \bm\Sigma_0, a_0, b_0, \bm\eta_\mathrm{op}, \bm\eta_{\mathrm{ft}}, \alpha, \delta)$; the operator set: $\mathbf{O} = \mathbf{O}_u\cup \mathbf{O}_b$; the variational parameter vector: $\phi$; the temperature parameters: $\bm \tau = (\tau_\mathrm{ex}, \tau_\mathrm{op}, \tau_{\mathrm{ft}})$; the Monte Carlo sample size: $S$.\;
}
\KwOut{$\widehat{\mathcal{E}}(\phi)$.\;}

\BlankLine
\Comment{\textcolor{Blue}{{Compute the Kullback-Leibler term}}}
$\mathsf{K} \leftarrow \mathrm{KL}(q_{\phi}(\Theta)\parallel \Pi(\Theta))$, using~\eqnref{eq:elbo-2}\;

\BlankLine
\For{$s \leftarrow 1$ \KwTo $S$}{
    \Comment{\textcolor{Blue}{Sample soft tree structural parameters from $\mathcal{P}_{\phi, \bm \tau}^{\mathrm{soft}}$}}
    $\{\widetilde{e}_{j\zeta}^{(s)}, \widetilde{\mathbf{o}}_{j\zeta}^{(s)}, \widetilde{\mathbf{h}}_{j\zeta}^{(s)}:\zeta\in \mathsf{Z}_D\}_{j=1}^{K}\sim \mathcal{P}^{\mathrm{soft}}_{\phi, \bm \tau}$, using the temperature parameters\;
    \Comment{\textcolor{Blue}{Computing soft symbolic design matrix}}
    $\mathbf{T}^{(s)}_{\mathrm{soft}} \leftarrow$ \textsc{SoftEval}$(\{\bm{x}_i\in \mathbb{R}^{p}\}_{i=1}^{n},\mathbf{O},\{\widetilde{e}_{j\zeta}^{(s)}, \widetilde{\mathbf{o}}_{j\zeta}^{(s)}, \widetilde{\mathbf{h}}_{j\zeta}^{(s)}:\zeta\in \mathsf{Z}_D\}_{j=1}^{K})$\;
    \Comment{\textcolor{Blue}{Computing the marginal likelihood at $\mathbf{T}_{\mathrm{soft}}^{(s)}$}}
    $\mathrm{ML}^{(s)} \leftarrow \log p(\mathbf{y}\mid \mathbf{T}_{\mathrm{soft}}^{(s)})$, using~\eqnref{eq:p(y|T)}\;
}

\BlankLine
$\widehat{\mathcal{E}}(\phi) \leftarrow \tfrac{1}{S}\sum_{s=1}^{S}\mathrm{ML}^{(s)} - \mathsf{K}$\;

\BlankLine
\Return{$\widehat{\mathcal{E}}(\phi)$\;}
\end{algorithm}
\newpage

\section{The \texorpdfstring{\vasst}{VaSST} Algorithm}
\label{app:vasst-algorithm}
\begin{algorithm}[H]
\caption{The \vasstmain$(\mathbb{D}_n, (\bm\mu_0, \bm\Sigma_0, a_0, b_0, \bm\eta_{\mathrm{op}}, \bm\eta_{\mathrm{ft}}, \alpha, \delta), \mathbf{O}, \phi^{\mathrm{init}}, T, (\tau_{\mathrm{start}}, \tau_{\mathrm{end}}, T_{\tau}), S, \gamma, c)$ algorithm for \sr\ using automatic differentiation-based black-box variational inference.}
\label{alg:vasst}
\DontPrintSemicolon
\KwIn{
Data: $\mathbb{D}_n = \{(\bm{x}_i,y_i)\}_{i=1}^{n}$; full set of prior hyperparameters: $(\bm\mu_0, \bm\Sigma_0, a_0, b_0, \bm\eta_{\mathrm{op}}, \bm\eta_{\mathrm{ft}}, \alpha, \delta)$; the operator set: $\mathbf{O}=\mathbf{O}_u\cup\mathbf{O}_b$; initial variational parameter vector: $\phi^{\mathrm{init}}$; number of steps: $T$; temperature schedule: $(\tau_{\mathrm{start}}, \tau_{\mathrm{end}}, T_{\tau})$; Monte Carlo sample size: $S$; learning rate: $\gamma$; gradient clipping threshold: $c$.
}
\KwOut{Optimized variational parameter vector $\phi^{\star}$.}

\BlankLine
Set $\phi_0 \leftarrow \phi^{\mathrm{init}}$\;
Initialize \texttt{AdamW} optimizer with learning rate $\gamma$\;

\For{$t \leftarrow 1$ \KwTo $T$}{
    \Comment{\textcolor{Blue}{Anneal temperature parameters}}
    $\tau_t \leftarrow \tau_{\mathrm{start}} + (\tau_{\mathrm{end}} - \tau_{\mathrm{start}})\cdot\min\{\tfrac{t}{T_{\tau}}, 1\}$\;
    Set $(\tau_{\mathrm{ex}},\tau_{\mathrm{op}},\tau_{\mathrm{ft}})\leftarrow(\tau_t,\tau_t,\tau_t)$\;

    Set all gradients to zero\;

    \Comment{\textcolor{Blue}{Compute stochastic ELBO estimate}}
    $\widehat{\mathcal{E}}(\phi) \leftarrow \textsc{ApproxELBO}$$(\mathbb{D}_n, (\bm\mu_0, \bm\Sigma_0, a_0, b_0, \bm\eta_\mathrm{op}, \bm\eta_{\mathrm{ft}}, \alpha, \delta), \mathbf{O}, \phi, (\tau_{\mathrm{ex}}, \tau_{\mathrm{op}}, \tau_{\mathrm{ft}}), S)$\;

    \Comment{\textcolor{Blue}{minimize negative objective}}
    $\mathsf J \leftarrow -\widehat{\mathcal{E}}(\phi)$

    \If{$\mathsf J$ is not finite}{
        \textbf{continue}\;
    }

    Backpropagate: compute $\nabla_{\phi}\mathsf J$\;

    \If{any component of $\nabla_{\phi}\mathsf J$ is not finite}{
        Set all gradients to zero; \textbf{continue}\;
    }

    Clip gradients: $\|\nabla_{\phi}\mathsf J\| \le c$\;

    $\phi_t \leftarrow$ \texttt{AdamW} update step on $\phi$\;
}
Set $\phi^{\star} \leftarrow \phi_T$\;
\Return{$\phi^{\star}$}\;
\end{algorithm}
\newpage

\section{Algorithm for Sampling Hard Symbolic Trees}
\label{app:sample-hard-symbolic-trees}
\begin{algorithm}[H]
\caption{The \textsc{SampleHard}$(\mathbb{D}_n, (\bm\mu_0, \bm\Sigma_0, a_0, b_0, \bm\eta_{\mathrm{op}}, \bm\eta_{\mathrm{ft}}, \alpha, \delta),\phi^{\star}, \mathbf{O}, H)$ algorithm to draw samples of hard symbolic trees from learned soft structural representations.}
\label{alg:sample-hard-trees}
\DontPrintSemicolon
\KwIn{
Data: $\mathbb{D}_n=\{(\bm{x}_i, y_i)\}_{i=1}^{n}$; full set of prior hyperparameters: $(\bm\mu_0, \bm\Sigma_0, a_0, b_0, \bm\eta_{\mathrm{op}}, \bm\eta_{\mathrm{ft}}, \alpha, \delta)$; optimizied variational parameter vector: $\phi^{\star}$; the operator set $\mathbf{O} = \mathbf{O}_u\cup \mathbf{O}_b$; the number of hard symbolic tree samples: H.
}
\KwOut{
A collection of $H$ sampled hard symbolic ensembles and their corresponding posterior mean estimates of the model regression parameters: $\{\{\mathsf{T}(f_j^{(s)})\}_{j=1}^{K}, \bm\beta^{(s)}_{\mathrm{PM}}, (\sigma^2)^{(s)}_{\mathrm{PM}}:s=1,\ldots,H\}$.
}

\BlankLine
\For{$s \leftarrow 1$ \KwTo $H$}{
    \Comment{\textcolor{Blue}{Sample one ensemble of $K$ hard symbolic trees}}
    \For{$j \leftarrow 1$ \KwTo $K$}{
        \Comment{\textcolor{Blue}{(i) sample expansion indicators}}
        $\widehat e^{(s)}_{j\zeta} \sim \mathrm{Ber}(\sigma(\ell^{\star}_{j\zeta}))$ for all $\zeta\in\mathsf{Z}_D$\;
        Enforce leaf constraint: $\widehat e_{j\zeta}^{(s)}\leftarrow 0$ for all $\zeta$ at depth $D$\;

        \Comment{\textcolor{Blue}{(ii) sample operator and feature labels}}
        $\widehat o^{(s)}_{j\zeta} \sim \mathrm{Cat}(\texttt{smax}((\mathbf a^{\mathrm{op}}_{j\zeta})^{\star}))$\;
        $\widehat h^{(s)}_{j\zeta} \sim \mathrm{Cat}(\texttt{smax}((\mathbf a^{\mathrm{ft}}_{j\zeta})^{\star}))$\;

        \Comment{\textcolor{Blue}{(iii) Construct hard symbolic tree skeleton}}
        $(\widehat{\mathsf{S}}_{j}^{(D)})^{(s)} \leftarrow \{\widehat{e}_{j\zeta}^{(s)}, \widehat{o}_{j\zeta}^{(s)}, \widehat{h}_{j\zeta}^{(s)}:\zeta\in \mathsf{Z}_D\}$
        
        \Comment{\textcolor{Blue}{(iv) symbolic trees via deterministic pruning of skeleton}}
        $\mathsf{T}(f_j^{(s)}) \leftarrow \mathfrak{p}\left((\widehat{\mathsf{S}}_{j}^{(D)})^{(s)}\right)$\;

        \For{$i \leftarrow 1$ \KwTo $n$}{
        \Comment{\textcolor{Blue}{Evaluate at sample $\bm{x}_i$}}
        $t_{ij}^{(s)} \leftarrow g(\bm{x}_i; \mathsf{T}(f_j^{(s)}))$
        }
    }
    \Comment{\textcolor{Blue}{Construct expression design matrix}}
    $t_{i0}^{(s)} \leftarrow 1$ for all $i =1,\ldots,n$\;
    $\mathbf{T}^{(s)} \leftarrow ((t_{ij}^{(s)}))_{1\leq i \leq n, 0\leq j \leq K}$\;

    \Comment{\textcolor{Blue}{Compute posterior means of model regression parameters}}
    $\bm\beta^{(s)}_{\mathrm{PM}} \leftarrow \mathbb{E}[\bm\beta\mid \sigma^{2}, \mathbf{T}^{(s)}, \mathbf{y}]$ and $(\sigma^2)^{(s)}_{\mathrm{PM}} \leftarrow \mathbb{E}[\sigma^2\mid \mathbf{T}^{(s)}, \mathbf{y}]$, using~\eqnref{eq:app-model-posterior}\;
}
\Return{$\{\{\mathsf{T}(f_j^{(s)})\}_{j=1}^{K},{\bm{\beta}}^{(s)}_{\mathrm{PM}},({\sigma}^{2})^{(s)}_{\mathrm{PM}}:s=1,\ldots,H\}$}\;
\end{algorithm}
\newpage

\section{Discussion on the Gap induced by \texttt{ELBO} Approximation}
\label{app:elbo-gap}

As discussed in \hyperref[sec:VI-vasst]{Section~\ref{sec:VI-vasst}}, the exact discrete \texttt{ELBO} in~\eqnref{eq:elbo-1} for \vasstmain\ is:
\begin{align}
\label{eq:elbo-supp-exact}
\mathcal{E}(\phi)
=
\mathbb{E}_{q}[\log p(\mathbf y\mid \mathbf T)]
-
\mathrm{KL}[q(\Theta)\parallel \Pi(\Theta)],
\end{align}
where $\mathbf T$ is the hard symbolic design matrix induced by discrete expansion, operator, and feature assignments. Direct optimization of~\eqnref{eq:elbo-supp-exact}, however, is computationally prohibitive because the expectation is over a combinatorial space of symbolic tree structures. Consequently, \vasstmain\ replaces the hard symbolic trees by temperature-controlled soft symbolic trees. Specifically, the Bernoulli expansion indicators $e_{j\zeta}$ are relaxed using Binary-Concrete random variables $\widetilde e_{j\zeta}$, while the categorical operator and feature assignments are relaxed using Gumbel-Softmax random vectors $\widetilde{\mathbf o}_{j\zeta}$ and $\widetilde{\mathbf h}_{j\zeta}$, respectively. These relaxations induce a distribution over soft symbolic tree structures, denoted by $\mathcal P^{\mathrm{soft}}_{\phi,\bm\tau}$, where $\bm\tau=(\tau_{\mathrm{ex}},\tau_{\mathrm{op}},\tau_{\mathrm{ft}})$ is the vector of relaxation temperatures. The corresponding temparture-controlled relaxed \texttt{ELBO} objective is therefore:
\begin{align}
\label{eq:elbo-supp-1}
\mathcal{E}_{\bm \tau}(\phi) = \mathbb E_{\mathcal P_{\phi, \bm \tau}^{\mathrm{soft}}} [\log p(\mathbf y \mid \mathbf T_{\mathrm{soft}})] - \mathrm{KL}[q(\Theta) \parallel \Pi(\Theta)],
\end{align}
where $\mathbf T_{\mathrm{soft}}$ is an instance of the soft symbolic design matrix induced by the recursive soft symbolic tree evaluation in~\eqnref{eq:soft-gating}. However, the expectation in~\eqnref{eq:elbo-supp-1} is still analytically intractable because $\log p(\mathbf y\mid \mathbf T_{\mathrm{soft}})$ depends nonlinearly on the recursively evaluated soft symbolic design matrix. Therefore, \vasstmain\ uses a Monte Carlo (\texttt{MC}) approximation. Given $S$ \texttt{MC} samples from $\mathcal P^{\mathrm{soft}}_{\phi,\bm\tau}$, let $\mathbf T_{\mathrm{soft}}^{(s)}$ denote the corresponding soft symbolic design matrix for $s=1,\ldots,S$. The stochastic objective optimized by \vasstmain\ is:
\begin{align}
\label{eq:elbo-supp-2}
\widehat{\mathcal{E}}(\phi) = \widehat{\mathcal{E}}_{\bm \tau,S}(\phi)
=
\frac{1}{S}\sum_{s=1}^S
\log p(\mathbf y\mid \mathbf T_{\mathrm{soft}}^{(s)})
-
\mathrm{KL}[q(\Theta)\parallel\Pi(\Theta)],
\end{align}
It is important to note that, $\widehat{\mathcal{E}}(\phi)$ in~\eqnref{eq:elbo-supp-2}
is an unbiased \texttt{MC} estimator of $\mathcal{E}_{\bm \tau}(\phi)$ in~\eqnref{eq:elbo-supp-1}, not of the exact discrete \texttt{ELBO} $\mathcal{E}(\phi)$ in \eqnref{eq:elbo-supp-exact}. This distinction leads to following decomposition of the induced approximation gap:
\begin{align}
\label{eq:elbo-supp-3}
\mathcal{E}(\phi)-\widehat{\mathcal{E}}_{\bm \tau,S}(\phi)
=
\underbrace{\mathcal{E}(\phi)-\mathcal{E}_{\bm\tau}(\phi)}_{\text{relaxation bias}}
+
\underbrace{\mathcal{E}_{\bm \tau}(\phi)-\widehat{\mathcal{E}}_{\bm \tau,S}(\phi)}_{\text{Monte Carlo (\texttt{MC}) error}}.
\end{align}
Since the same analytic Kullback-Leibler ($\mathrm{KL}$) penalty is used in both objectives, the relaxation bias arises entirely from replacing the hard tree expected marginal likelihood by the soft tree expected marginal likelihood:
\begin{align}
\label{eq:elbo-supp-4}
\mathcal{E}(\phi)-\mathcal{E}_{\bm\tau}(\phi)
=
\mathbb{E}_{q}[\log p(\mathbf y\mid \mathbf T)]
-
\mathbb{E}_{\mathcal P^{\mathrm{soft}}_{\phi,\bm\tau}}
[\log p(\mathbf y\mid \mathbf T_{\mathrm{soft}})],
\end{align}
which is non-zero in general since soft symbolic evaluation and marginal likelihood evaluation do not commute. This non-commutativity is especially important in \vasstmain, since $\mathbf T_{\mathrm{soft}}$ is produced by recursive mixtures of nonlinear symbolic operators, while $\log p(\mathbf y\mid \mathbf T_{\mathrm{soft}})$ depends on $\mathbf T_{\mathrm{soft}}$ through the collapsed Normal Inverse-Gamma marginal likelihood in~\eqnref{eq:p(y|T)}. Hence, at finite temperature, the relaxed objective targets a smoothed symbolic model that is close in spirit, but not identical to, the original discrete symbolic tree model.

The second term in~\eqnref{eq:elbo-supp-3} is standard stochastic approximation error. For fixed $(\phi,\bm\tau)$, it has mean zero. If:
\begin{align*}
V_{\phi,\bm\tau}
=
\mathrm{Var}_{\mathcal P^{\mathrm{soft}}_{\phi,\bm\tau}}
[\log p(\mathbf y\mid \mathbf T_{\mathrm{soft}})]
<
\infty,
\end{align*}
then $\mathrm{Var}_{\mathcal P^{\mathrm{soft}}_{\phi,\bm\tau}}[\widehat{\mathcal{E}}_{\bm\tau,S}(\phi)]= S^{-1}V_{\phi,\bm\tau}$, because the $\mathrm{KL}$ term is deterministic for fixed $\phi$. Thus, increasing $S$ reduces the \texttt{MC} component of the gap at the usual $S^{-1/2}$ rate, whereas it does not remove the relaxation bias. The relaxation bias is controlled instead through the annealing schedule: as the temperatures decrease, the Binary-Concrete and Gumbel-Softmax variables become increasingly concentrated near Bernoulli and Categorical draws. Under numerically stable symbolic evaluation, this progressively sharpens the soft symbolic trees toward hard symbolic structures.

In light of these observations, this approximation is best interpreted as an algorithmic device for scalable posterior learning over symbolic structures. Direct optimization of the discrete objective in~\eqnref{eq:elbo-supp-exact} would require either score-function estimators over symbolic structures or explicit combinatorial search over an exponentially large expression space. Such approaches may be high-variance or computationally prohibitive in the \sr\ setting. The relaxed objective in~\eqnref{eq:elbo-supp-1}, together with the \texttt{MC} approximation in~\eqnref{eq:elbo-supp-2}, provides a differentiable reparameterized surrogate with low-variance gradients, enabling black-box gradient-based optimization.

Finally, the distinction between the relaxed optimization objective and the original discrete symbolic model is also reflected in the post-optimization inference step of \vasstmain, where the optimized soft representations are not reported as the final outcomes. Instead, post-optimization, hard symbolic ensembles are sampled from the learned variational distributions and then evaluated using the discrete symbolic design matrix $\mathbf T$. These hard symbolic ensembles are ranked by the posterior-aware $\mathrm{LMPSE}$ criterion in~\eqnref{eq:LMPSE}, which combines the collapsed marginal likelihood under the hard design matrix with the symbolic tree prior. Thus, the relaxation is used to make posterior exploration computationally tractable, while the final reported symbolic summaries are obtained by returning to the hard symbolic tree space.
\newpage

\section{Configurations of \texorpdfstring{\vasst}{VaSST} and Competing Methods}
\label{app:configs}

\subsection{Operator Set}
\label{subsec:operator-set}

Following are the operator sets tabulated in~\hyperref[tab:operator-sets-by-method-equation]{Table~\ref{tab:operator-sets-by-method-equation}}, used across all experiments in~\hyperref[subsec:simulation-study]{Sections~\ref{subsec:simulation-study}} and \hyperref[subsec:feynman-data-study]{\ref{subsec:feynman-data-study}}. For budget equalization, we try to maintain a common operator specification across \vasstmain\ and competing \sr\ methods.

\begin{table*}[!htp]
\centering
\scriptsize
\renewcommand{\arraystretch}{1.15}
\caption{Operator set $\mathbf{O}$ used across all experiments in~\hyperref[subsec:simulation-study]{Sections~\ref{subsec:simulation-study}} and \hyperref[subsec:feynman-data-study]{\ref{subsec:feynman-data-study}}.}
\label{tab:operator-sets-by-method-equation}
\begin{tabularx}{\textwidth}{@{}>{\raggedright\arraybackslash}p{0.13\textwidth}
>{\raggedright\arraybackslash}X
>{\raggedright\arraybackslash}X
>{\raggedright\arraybackslash}X
>{\raggedright\arraybackslash}X
>{\raggedright\arraybackslash}X
>{\raggedright\arraybackslash}X
>{\raggedright\arraybackslash}X@{}}
\toprule
\toprule
\textbf{Method} &
\eqnref{eq:simulation-generating-model-1} &
\eqnref{eq:simulation-generating-model-2} &
\eqnref{eq:feynman-cl} &
\eqnref{eq:feynman-cpe} &
\eqnref{eq:feynman-fce} &
\eqnref{eq:feynman-ftc} &
\eqnref{eq:feynman-ada} \\
\midrule

\vasstmain
& $\{+, -,\times, /,^2\}$
& $\{\times, \sin, \cos\}$
& $\{\times, /, ^2\}$
& $\{+,-,\times,/\}$
& $\{+, \times, /,\sin\}$
& $\{+,-,\times,/,\sqrt{}\}$ 
& $\{+, \times, \cos\}$ \\

\addlinespace[2pt]

\operon
& $\{+, -,\times, /,^2\}$
& $\{\times, \sin, \cos\}$
& $\{\times, /, ^2\}$
& $\{+,-,\times,/\}$
& $\{+, \times, /,\sin\}$
& $\{+,-,\times,/,\sqrt{}\}$ 
& $\{+, \times, \cos\}$ \\

\addlinespace[2pt]

\bms
& $\{+,\times, /, \texttt{neg},^2\}$
& $\{\times, \sin, \cos\}$
& $\{\times, /, ^2\}$
& $\{+,\texttt{neg},\times,/\}$
& $\{+, \times, /,\sin\}$
& $\{+,\texttt{neg},\times,/,\sqrt{}\}$ 
& $\{+, \times, \cos\}$ \\

\addlinespace[2pt]

\bsr
& $\{+, -,\times, \texttt{inv},^2\}$
& $\{\times, \sin, \cos\}$
& $\{\times, \texttt{inv}, ^2\}$
& $\{+,-,\times,\texttt{inv}\}$
& $\{+, \times, \texttt{inv},\sin\}$
& $\{+,-,\times,\texttt{inv},\sqrt{}\}$ 
& $\{+, \times, \cos\}$ \\

\addlinespace[2pt]

\qlattice
& $\{+, \times, ^2, \texttt{lin}, \texttt{inv}\}$ 
& \texttt{default} 
& \texttt{default} & \texttt{default} & \texttt{default} & \texttt{default} & \texttt{default} \\

\addlinespace[2pt]

\pysr
& $\{+, -,\times, /,^2\}$
& $\{\times, \sin, \cos\}$
& $\{\times, /, ^2\}$
& $\{+,-,\times,/\}$
& $\{+, \times, /,\sin\}$
& $\{+,-,\times,/,\sqrt{}\}$ 
& $\{+, \times, \cos\}$ \\

\addlinespace[2pt]

\gplearn
& $\{+, -,\times, /,^2\}$
& $\{\times, \sin, \cos\}$
& $\{\times, /, ^2\}$
& $\{+,-,\times,/\}$
& $\{+, \times, /,\sin\}$
& $\{+,-,\times,/,\sqrt{}\}$ 
& $\{+, \times, \cos\}$ \\

\addlinespace[2pt]

\deap
& $\{+, -,\times, /,^2\}$
& $\{\times, \sin, \cos\}$
& $\{\times, /, ^2\}$
& $\{+,-,\times,/\}$
& $\{+, \times, /,\sin\}$
& $\{+,-,\times,/,\sqrt{}\}$ 
& $\{+, \times, \cos\}$ \\

\addlinespace[2pt]

\dsr
& $\{+, -,\times, /,\texttt{n2}\}$
& $\{\times, \sin, \cos\}$
& $\{\times, /, \texttt{n2}\}$
& $\{+,-,\times,/\}$
& $\{+, \times, /,\sin\}$
& $\{+,-,\times,/,\sqrt{}\}$ 
& $\{+, \times, \cos\}$ \\

\bottomrule
\bottomrule
\end{tabularx}

\vspace{2pt}
\parbox{\textwidth}{
\scriptsize
\emph{Note}: \operon\ (\url{https://github.com/heal-research/pyoperon}) and \deap\ (\url{https://github.com/DEAP/deap}) additionally uses variables and ephemeral constants. The operators $\texttt{neg}$ denotes the unary negation operator, $\texttt{neg}(z) = -z$; $\texttt{inv}$ denotes the unary reciprocal operator, $\texttt{inv}(z) = \tfrac{1}{z}$; $\texttt{lin}$ denotes the linear operator, $\texttt{lin}(z) = az + b$; and $\texttt{n2}$ denotes the unary square operator, $\texttt{n2}(z) = z^{2}$. For \qlattice\ (\url{https://docs.abzu.ai/}), \texttt{default} indicates the use of default operator set.
}
\end{table*}

\subsection{Experimental Settings of \texorpdfstring{\vasst}{VaSST}}
\label{subsec:vasst-configs}

Across all experiments in~\hyperref[subsec:simulation-study]{Sections~\ref{subsec:simulation-study}} and \hyperref[subsec:feynman-data-study]{\ref{subsec:feynman-data-study}}, \vasstmain\ was implemented using the common experimental settings tabulated in~\hyperref[tab:vasst-common-hyperparameters]{Table~\ref{tab:vasst-common-hyperparameters}}.

\begingroup
\scriptsize
\renewcommand{\arraystretch}{1.15}

\begin{xltabular}{\linewidth}{@{}>{\raggedright\arraybackslash}X
>{\raggedright\arraybackslash}X
>{\raggedright\arraybackslash}X@{}}
\caption{Common experimental settings of \vasstmain\ used in~\hyperref[subsec:simulation-study]{Sections~\ref{subsec:simulation-study}} and \hyperref[subsec:feynman-data-study]{\ref{subsec:feynman-data-study}}.}
\label{tab:vasst-common-hyperparameters}\\

\toprule
\toprule
\textbf{Component} & \textbf{Hyperparameter} & \textbf{Value} \\
\midrule
\endfirsthead

\multicolumn{3}{@{}l}{\scriptsize\emph{Table~\thetable\ continued from previous page.}}\\
\toprule
\toprule
\textbf{Component} & \textbf{Hyperparameter} & \textbf{Value} \\
\midrule
\endhead

\midrule
\multicolumn{3}{r@{}}{\scriptsize\emph{Continued on next page}}\\
\endfoot

\bottomrule
\bottomrule
\addlinespace[2pt]
\multicolumn{3}{@{}p{\linewidth}@{}}{
\scriptsize
\emph{Note}:
The Binary-Concrete and Gumbel-Softmax temperature parameters $\tau_{\mathrm{ex}}$, $\tau_{\mathrm{op}}$, and $\tau_{\mathrm{ft}}$, respectively, are linearly annealed from $\tau_{\mathrm{start}}=1.0$ to $\tau_{\mathrm{end}}=0.5$ over $T_{\tau}=1500$ steps. Out-of-sample \texttt{RMSE} is computed on a $10\%$ held-out test set with the LMPSE selected symbolic model.
}\\
\endlastfoot

Symbolic ensemble
& Number of trees, $K$
& $3$ \\

& Maximum tree depth, $D$
& $3$ \\

\midrule

Symbolic tree prior
& Split probability parameters, $(\alpha, \delta)$
& $(0.95, 2.00)$ \\

& Operator Dirichlet parameter, $\bm\eta_{\mathrm{op}}$
& $\mathbf 1_{|\mathbf O|}$ \\

& Feature Dirichlet parameter, $\bm\eta_{\mathrm{ft}}$
& $\mathbf 1_p$ \\

\midrule

Outer regression coefficients
& Prior mean, $\bm\mu_0$
& $\bm 0_{K+1}$ \\

& Prior covariance, $\bm\Sigma_0$
& $10\mathbf I_{K+1}$ \\

\midrule

Model noise variance
& Inverse-Gamma shape, $a_0$
& $2$ \\

& Inverse-Gamma scale, $b_0$
& $2$ \\

\midrule

Variational optimization
& Optimizer
& \texttt{AdamW} \\

& Learning rate, $\gamma$
& $5\times 10^{-5}$ \\

& \texttt{AdamW} momentum parameters
& $(0.90,0.99)$ \\

& \texttt{AdamW} weight decay
& $0$ \\

& Number of optimization steps, $T$
& $2000$ \\

& Monte Carlo samples per \texttt{ELBO} step, $S$
& $8$ \\

& Gradient clipping threshold, $c$
& $1.0$ \\

\midrule

Temperature annealing
& Initial temperature, $\tau_{\mathrm{start}}$
& $1.0$ \\

& Final temperature, $\tau_{\mathrm{end}}$
& $0.5$ \\

& Annealing horizon, $T_\tau$
& $1500$ steps \\

\midrule

Numerical stabilization
& Soft symbolic tree output clipping threshold
& $\texttt{value\_clip}=200$ \\

& Hyperbolic tangent clipping
& $\texttt{use\_tanh\_clip}=\texttt{True}$ \\

& Variational logit clipping threshold
& $\texttt{logits\_clip}=10.0$ \\

\midrule

Hard symbolic ensembles
& Number of sampled hard symbolic ensembles, $H$
& $2000$ \\

& Ranking criterion
& Using $\mathrm{LMPSE}$ in~\eqnref{eq:LMPSE} \\

& Number of top ranked symbolic expressions reported, $\mathsf{r}$
& $10$ \\

\midrule

Evaluation
& Training split
& $90\%$ \\

& Held-out test split for out-of-sample \texttt{RMSE}
& $10\%$ \\

\end{xltabular}
\endgroup

\subsection{Experimental Settings of \operon, \gplearn, \deap, and \pysr}
\label{subsec:operon-gplearn-deap-pysr-configs}
\paragraph{\operon.}
We run \operon\ \citep{operon} using the \texttt{SymbolicRegressor} implementation from \texttt{PyOperon} (\url{https://github.com/heal-research/pyoperon}). For each simulation experiments in~\eqnref{eq:simulation-generating-model-1}-\eqnref{eq:simulation-generating-model-2} from~\hyperref[subsec:simulation-study]{Section~\ref{subsec:simulation-study}} and the Feynman equations in~\eqnref{eq:feynman-cl}-\eqnref{eq:feynman-ada} from~\hyperref[subsec:feynman-data-study]{Section~\ref{subsec:feynman-data-study}}, the operator sets used are reported in~\hyperref[tab:operator-sets-by-method-equation]{Table~\ref{tab:operator-sets-by-method-equation}} and \operon\ was implemented with $\texttt{generations}=2000$, $\texttt{population\_size}=1000$, $\texttt{max\_length}=20$, and $\texttt{n\_threads}=1$. The number of generations ($\texttt{generations}=2000$) specifies the total number of evolutionary (e.g., mutation, recombination, and selection) iterations used to update the population of candidate symbolic expressions, while the population size ($\texttt{population\_size}=1000$) determines the number of expressions maintained at each iteration. The maximum length ($\texttt{max\_length}=20$) parameter bounds the size of the symbolic expression tree. Finally, $\texttt{n\_threads}=1$ enforces a single-threaded execution. After fitting, the best symbolic expression returned by \operon\ is extracted from the fitted symbolic model with respect to the in-sample \texttt{RMSE}.
\paragraph{\gplearn.}
We employ the \gplearn~\citep{stephens2016gplearn} \sr\ framework (\url{https://github.com/trevorstephens/gplearn}) implemented in \texttt{Python} and outline its experimental configuration for learning the symbolic expressions in~\eqnref{eq:simulation-generating-model-1}-\eqnref{eq:simulation-generating-model-2} from~\hyperref[subsec:simulation-study]{Section~\ref{subsec:simulation-study}} and the Feynman equations in~\eqnref{eq:feynman-cl}-\eqnref{eq:feynman-ada} from~\hyperref[subsec:feynman-data-study]{Section~\ref{subsec:feynman-data-study}}. The \texttt{SymbolicRegressor} is configured with 
a population size of $2000$ and evolved for $20$ generations using tournament selection with tournament size $20$. The operator sets used are reported in~\hyperref[tab:operator-sets-by-method-equation]{Table~\ref{tab:operator-sets-by-method-equation}}. The fitness metric is set to \texttt{RMSE}, and a parsimony coefficient of $10^{-4}$ is used to penalize overly complex expressions.
\paragraph{Distributed Evolutionary algorithms in \texttt{Python} (\deap).}
We implement genetic programming for \sr\ using the \deap~\citep{deap} framework (\url{https://github.com/DEAP/deap}), a general-purpose evolutionary computation library in \texttt{Python}.
The experimental configurations of \deap\ used for learning the symbolic expressions in~\eqnref{eq:simulation-generating-model-1}-\eqnref{eq:simulation-generating-model-2} from~\hyperref[subsec:simulation-study]{Section~\ref{subsec:simulation-study}} and the Feynman equations in~\eqnref{eq:feynman-cl}-\eqnref{eq:feynman-ada} from~\hyperref[subsec:feynman-data-study]{Section~\ref{subsec:feynman-data-study}} are as follows. Symbolic expressions are evolved using a $(\mu+\lambda)$ evolutionary strategy with $\mu=200$ parent individuals and $\lambda=200$ offspring per generation, for $20$ generations. The crossover and mutation probabilities are set to $0.5$ and $0.2$, respectively. The operator sets used are reported in~\hyperref[tab:operator-sets-by-method-equation]{Table~\ref{tab:operator-sets-by-method-equation}} along with ephemeral random constants sampled from $\mathrm{Unif}(-5, 5)$. Fitness is measured using \texttt{RMSE}, and the best individual is selected from the {Hall-of-Fame archive} after evolution.
\paragraph{\pysr.}
We use the \texttt{PySRRegressor} implementation (\url{https://github.com/MilesCranmer/pysr}) for \pysr~\citep{pysr}, which provides a \texttt{Python} interface to the \texttt{SymbolicRegression.jl} search engine. \pysr\ searches over symbolic expressions that optimize a user-specified objective. For learning the expressions in~\eqnref{eq:simulation-generating-model-1}-\eqnref{eq:simulation-generating-model-2} from~\hyperref[subsec:simulation-study]{Section~\ref{subsec:simulation-study}} and the Feynman equations in~\eqnref{eq:feynman-cl}-\eqnref{eq:feynman-ada} from~\hyperref[subsec:feynman-data-study]{Section~\ref{subsec:feynman-data-study}}, we configure the operator sets as in~\hyperref[tab:operator-sets-by-method-equation]{Table~\ref{tab:operator-sets-by-method-equation}} and set $\texttt{niterations}=200$, $\texttt{populations}=10$, $\texttt{population\_size}=50$, and $\texttt{max\_size}=20$. We use squared-error loss and $\texttt{model\_selection}=\texttt{``best''}$, where the final symbolic expression being extracted using \texttt{get\_best()} corresponds to the best accuracy-complexity trade-off among the discovered Pareto-optimal expressions stored in \texttt{equations\_}.
%

\subsection{Experimental Settings of \bms\ and \bsr}
\label{subsec:bms-bsr-configs}

\paragraph{Bayesian Machine Scientist (\bms).}
For each simulation experiments in~\eqnref{eq:simulation-generating-model-1}-\eqnref{eq:simulation-generating-model-2} from~\hyperref[subsec:simulation-study]{Section~\ref{subsec:simulation-study}} and the Feynman equations in~\eqnref{eq:feynman-cl}-\eqnref{eq:feynman-ada} from~\hyperref[subsec:feynman-data-study]{Section~\ref{subsec:feynman-data-study}}, we use the \texttt{AutoRA} implementation (\url{https://github.com/AutoResearch/autora-theorist-bms}) of \bms~\citep{guimera2020bayesian} with the operator sets reported in~\hyperref[tab:operator-sets-by-method-equation]{Table~\ref{tab:operator-sets-by-method-equation}} and uniform prior operator weights. Model discovery is performed using parallel-tempered Markov chain Monte Carlo (\texttt{MCMC}) over the symbolic grammar space for $2000$ epochs. The final fitted symbolic expression returned by \bms\ is the one having the minimum description length during the \texttt{MCMC} run.
\paragraph{Bayesian Symbolic Regression (\bsr).}
For each simulation experiments in~\eqnref{eq:simulation-generating-model-1}-\eqnref{eq:simulation-generating-model-2} from~\hyperref[subsec:simulation-study]{Section~\ref{subsec:simulation-study}} and the Feynman equations in~\eqnref{eq:feynman-cl}-\eqnref{eq:feynman-ada} from~\hyperref[subsec:feynman-data-study]{Section~\ref{subsec:feynman-data-study}}, we use the \texttt{AutoRA} implementation (\url{https://github.com/AutoResearch/autora-theorist-bsr}) of \bsr~\citep{BSR} with the number of trees aligned to the ensemble size used in \vasstmain; see \hyperref[tab:vasst-common-hyperparameters]{Table~\ref{tab:vasst-common-hyperparameters}}. Uniform prior operator weights were maintained as in \bms. The number of \texttt{MCMC} iterations is set to $2000$ and the operator sets used are reported in~\hyperref[tab:operator-sets-by-method-equation]{Table~\ref{tab:operator-sets-by-method-equation}}. To ensure fairness in comparison, the split probability configuration for node expansion is matched to that used in \vasstmain, as specified in~\hyperref[tab:vasst-common-hyperparameters]{Table~\ref{tab:vasst-common-hyperparameters}}. The outer regression coefficients are estimated using ordinary least squares and the final symbolic expression corresponds to the last fitted ensemble by the \texttt{MCMC}.

\subsection{Experimental Settings of \qlattice\ and \dsr}
\label{subsec:qlattice-dsr-configs}
\paragraph{\qlattice.}
We implement \qlattice~\citep{feyn-qlattice} via the \texttt{Feyn} \texttt{Python} (\url{https://docs.abzu.ai/}), a supervised machine learning framework designed to perform \sr. The default configuration of \qlattice\ with $25$ training epochs were used for learning the symbolic expressions in~\eqnref{eq:simulation-generating-model-1}-\eqnref{eq:simulation-generating-model-2} from~\hyperref[subsec:simulation-study]{Section~\ref{subsec:simulation-study}} and the Feynman equations in~\eqnref{eq:feynman-cl}-\eqnref{eq:feynman-ada} from~\hyperref[subsec:feynman-data-study]{Section~\ref{subsec:feynman-data-study}}.
The operator sets, referred to as \emph{interactions} in the \qlattice–\texttt{Feyn} interface, are configured as in~\hyperref[tab:operator-sets-by-method-equation]{Table~\ref{tab:operator-sets-by-method-equation}}.
\paragraph{Deep Symbolic Regression (\dsr).}
We use the \texttt{DeepSymbolicRegressor} implementation (\url{https://github.com/dso-org/deep-symbolic-optimization}) of \dsr~\citep{Deep-SR} for learning the symbolic expressions in~\eqnref{eq:simulation-generating-model-1}-\eqnref{eq:simulation-generating-model-2} from~\hyperref[subsec:simulation-study]{Section~\ref{subsec:simulation-study}} and the Feynman equations in~\eqnref{eq:feynman-cl}-\eqnref{eq:feynman-ada} from~\hyperref[subsec:feynman-data-study]{Section~\ref{subsec:feynman-data-study}}. \dsr\ parameterizes a policy over symbolic expressions using an autoregressive recurrent neural network (\texttt{RNN}), which sequentially samples symbolic programs from a prescribed function set. The operator sets are configured as in~\hyperref[tab:operator-sets-by-method-equation]{Table~\ref{tab:operator-sets-by-method-equation}}. The regression task is run with protected operators enabled. We maintain $\texttt{n\_samples}=200000$, $\texttt{batch\_size}=500$, and $\texttt{max\_length}=20$ across all experiments in \hyperref[subsec:simulation-study]{Sections~\ref{subsec:simulation-study}} and~\ref{subsec:feynman-data-study}.
Here, $\texttt{n\_samples}=200000$ denotes the total symbolic program sampling budget, $\texttt{batch\_size}=500$ is the number of candidate programs sampled per \texttt{RNN} policy gradient update, and $\texttt{max\_length}=20$ bounds the maximum length of each sampled symbolic expression. These settings correspond to approximately $400$ \texttt{RNN} policy gradient updates, unless early stopping occurs. During training, DSR retains the best-reward symbolic program encountered over the search. The final recovered expression is extracted from the fitted estimator through the attribute \texttt{program\_}.
\newpage

\section{Additional Results for Simulation Experiments}\label{app:additional-simulation-results}

\subsection{Symbolic Expressions Learned by \texorpdfstring{\vasst}{VaSST} and Competing Methods}\label{app:additional-symbolic-expressions-sim1}

\begin{center}
\underline{For $\mathbf{x}_0^2 - \mathbf{x}_1 + \tfrac{1}{2}\mathbf{x}_2^2$ in~\hyperref[subsec:simulation-study]{Section~\ref{subsec:simulation-study}}.}
\end{center}

\begin{table}[H]
\centering
\scriptsize
\renewcommand{\arraystretch}{1.15}
\caption{Expressions learned in a single run of \vasstmain\ and competing methods for learning~\eqnref{eq:simulation-generating-model-1} under the noiseless setting.}
\label{tab:rep1-expressions}
\begin{tabularx}{\linewidth}{@{}l >{\raggedright\arraybackslash}X r@{}}
\toprule
\toprule
\textbf{Method} & \textbf{Symbolic Expression Learned} & \textbf{\texttt{RMSE}} \\
\midrule

\rowcolor{vasstgray}
\vasstmain &
$\,\underline{0.9987\mathbf{x}_0^2
-1.0018\mathbf{x}_1
+0.5125\mathbf{x}_2^2}
-0.0031\mathbf{x}_2
-0.2283$ &
$0.003526$ \\

\addlinespace[1pt]

\operon &
$0.9451\mathbf{x}_0^2
+0.0881\mathbf{x}_0
-0.9999\mathbf{x}_1
+0.4997\mathbf{x}_2^2
-0.0680
-\dfrac{0.2716\mathbf{x}_2^2}{\Delta}$ &
$0.000636$ \\

\addlinespace[1pt]

\bms &
$\mathbf{x}_0^2-\mathbf{x}_1+0.5\mathbf{x}_2^2$ &
$0.000000$ \\

\addlinespace[1pt]

\bsr &
$5.4{\times}10^{-6}
+6.6{\times}10^{-6}\mathbf{x}_0^{-1}\mathbf{x}_1^{-2}
+1.7{\times}10^{-4}\mathbf{x}_2^2
+2.0{\times}10^{-7}
\left(
\mathbf{x}_2^4+\mathbf{x}_0^{-2}\mathbf{x}_2^{-1}
\right)$ &
$13.084523$ \\

\addlinespace[1pt]

\qlattice &
$0.9987\mathbf{x}_0^2
-1.0016\mathbf{x}_1
+0.5028\mathbf{x}_2^2
-0.0285\mathbf{x}_2
+0.0733$ &
$0.002887$ \\

\addlinespace[1pt]

\pysr &
$\mathbf{x}_0^2-\mathbf{x}_1+0.5\mathbf{x}_2^2$ &
$0.000000$ \\

\addlinespace[1pt]

\gplearn &
$0.7890\mathbf{x}_0^4+0.4260\mathbf{x}_2^2$ &
$0.394263$ \\

\addlinespace[1pt]

\deap &
$1.1724(\mathbf{x}_0+1.9821)(\mathbf{x}_2-4.7760)
+2\mathbf{x}_2$ &
$0.375525$ \\

\addlinespace[1pt]

\dsr &
$-\mathbf{x}_1+0.5\mathbf{x}_2^2$ &
$0.453799$ \\

\bottomrule
\bottomrule
\end{tabularx}
{\scriptsize{
\emph{Note}: For \operon,
$\Delta =
3.2569\mathbf{x}_0(20.4510\mathbf{x}_0^4-2.0478\mathbf{x}_2^2)
+1.0080\mathbf{x}_1^2
+0.3566\mathbf{x}_1\mathbf{x}_2
-4.0955\mathbf{x}_2^2
-1.0550\mathbf{x}_2$.
\texttt{RMSE} corresponds to the out-of-sample \texttt{RMSE} computed on a $10\%$ held-out test set.
}}
\end{table}

\begin{table}[H]
\centering
\scriptsize
\renewcommand{\arraystretch}{1.15}
\caption{Expressions learned in a single run of \vasstmain\ and competing methods for learning~\eqnref{eq:simulation-generating-model-1} under $\sigma = 0.1$.}
\label{tab:noise1-rep1-expressions}
\begin{tabularx}{\linewidth}{@{}l >{\raggedright\arraybackslash}X r@{}}
\toprule
\toprule
\textbf{Method} & \textbf{Symbolic Expression Learned} & \textbf{\texttt{RMSE}} \\
\midrule

\rowcolor{vasstgray}
\vasstmain &
$\,\underline{1.0439\mathbf{x}_0^2
-1.0105\mathbf{x}_1
+0.5219\mathbf{x}_2^2}
-0.0289\mathbf{x}_0
+0.0289\mathbf{x}_2
-0.5386$ &
$0.101862$ \\

\addlinespace[1pt]

\operon &
$0.0001\mathbf{x}_0^{16}
+0.0197\mathbf{x}_0^9
+1.0230\mathbf{x}_0^2
-1.1390\mathbf{x}_1
+0.5120\mathbf{x}_2^2
+0.3510
+\dfrac{0.0385\mathbf{x}_2^2}{\Delta}$ &
$0.101284$ \\

\addlinespace[1pt]

\bms &
$\mathbf{x}_0^2-\mathbf{x}_1+a_0^{-1}\mathbf{x}_2^2$ &
$0.101690$ \\

\addlinespace[1pt]

\bsr &
$-26.6898
-0.4150\mathbf{x}_0^2
-0.4150\mathbf{x}_1^2
+4.4860\mathbf{x}_1
+5.3160\mathbf{x}_2
+68.8933\mathbf{x}_1^{-4}$ &
$0.490357$ \\

\addlinespace[1pt]

\qlattice &
$0.0756\mathbf{x}_0^2\mathbf{x}_2
+0.6128\mathbf{x}_0^2
-0.0029\mathbf{x}_0\mathbf{x}_2
-0.0232\mathbf{x}_0
-0.0747\mathbf{x}_1\mathbf{x}_2
-0.6059\mathbf{x}_1
+0.3953\mathbf{x}_2^2
+1.3280\mathbf{x}_2
-4.0760$ &
$0.102598$ \\

\addlinespace[1pt]

\pysr &
$\mathbf{x}_0^2-\mathbf{x}_1+0.4999\mathbf{x}_2^2$ &
$0.101690$ \\

\addlinespace[1pt]

\gplearn &
$0.7890\mathbf{x}_0^4+0.4264\mathbf{x}_2^2$ &
$0.411029$ \\

\addlinespace[1pt]

\deap &
$1.1724\mathbf{x}_0\mathbf{x}_2
-5.5995\mathbf{x}_0
+4.3239\mathbf{x}_2
-11.0990$ &
$0.391945$ \\

\addlinespace[1pt]

\dsr &
$\dfrac{\mathbf{x}_2^3}{\mathbf{x}_1+2\mathbf{x}_2-1}$ &
$0.489848$ \\

\bottomrule
\bottomrule
\end{tabularx}
{\scriptsize{
\emph{Note}: For \operon,
$\Delta =
0.3842\mathbf{x}_0
-1.1130\mathbf{x}_1
-\tfrac{2.5270\mathbf{x}_2}
{-0.3346\mathbf{x}_2^2
+5.9040\mathbf{x}_2
+0.0313\mathbf{x}_1/\mathbf{x}_0}$. $a_0$ is a constant learned by \bms. \texttt{RMSE} corresponds to the out-of-sample \texttt{RMSE} computed on a $10\%$ held-out test set.
}}
\end{table}

\begin{table}[H]
\centering
\scriptsize
\renewcommand{\arraystretch}{1.15}
\caption{Expressions learned in a single run of \vasstmain\ and competing methods for learning~\eqnref{eq:simulation-generating-model-1} under $\sigma = 0.2$.}
\label{tab:noise2-rep1-expressions}
\begin{tabularx}{\linewidth}{@{}l >{\raggedright\arraybackslash}X r@{}}
\toprule
\toprule
\textbf{Method} & \textbf{Symbolic Expression Learned} & \textbf{\texttt{RMSE}} \\
\midrule

\rowcolor{vasstgray}
\vasstmain &
$\,\underline{1.0425\mathbf{x}_0^2
-1.0203\mathbf{x}_1
+0.5212\mathbf{x}_2^2}
-0.0378\mathbf{x}_0
+0.0378\mathbf{x}_2
-0.5396$ &
$0.203331$ \\

\addlinespace[2pt]

\operon &
$1.0125\mathbf{x}_0^2
-0.9759\mathbf{x}_1
+0.4987\mathbf{x}_2^2
-0.0454
+\dfrac{-0.8539\mathbf{x}_1+0.4504\mathbf{x}_2}{\Delta}$ &
$0.202378$ \\

\addlinespace[2pt]

\bms &
$\mathbf{x}_0^2-\mathbf{x}_1+a_0\mathbf{x}_2^2$ &
$0.203380$ \\

\addlinespace[2pt]

\bsr &
$-0.4211\mathbf{x}_0^2
-0.4211\mathbf{x}_1^2
+4.4754\mathbf{x}_1
+5.3177\mathbf{x}_2
-26.6139
+68.2127\mathbf{x}_1^{-4}$ &
$0.520460$ \\

\addlinespace[2pt]

\qlattice &
$1.0382\mathbf{x}_0^2
-0.0673\mathbf{x}_0
-1.0220\mathbf{x}_1
+5.5120\mathbf{x}_2
-15.0740$ &
$0.206777$ \\

\addlinespace[2pt]

\pysr &
$\mathbf{x}_0-\mathbf{x}_1+0.4944\mathbf{x}_2^2$ &
$0.218974$ \\

\addlinespace[2pt]

\gplearn &
$0.7890\mathbf{x}_0^4+0.4264\mathbf{x}_2^2$ &
$0.448464$ \\

\addlinespace[2pt]

\deap &
$1.1724\mathbf{x}_0\mathbf{x}_2
-5.5995\mathbf{x}_0
+4.3239\mathbf{x}_2
-11.0988$ &
$0.432334$ \\

\addlinespace[2pt]

\dsr &
$\dfrac{0.5\mathbf{x}_2^3}{\mathbf{x}_2+1}$ &
$0.504085$ \\

\bottomrule
\bottomrule
\end{tabularx}
{\scriptsize{
\emph{Note}: For \operon,
$\Delta =
54.196\mathbf{x}_0^2
+1.2928\mathbf{x}_1^2
-10.808\mathbf{x}_1
+1.4352\mathbf{x}_2^2
-2.8313\mathbf{x}_2
-0.4920$. $a_0$ is a constant learned by \bms. \texttt{RMSE} corresponds to the out-of-sample \texttt{RMSE} computed on a $10\%$ held-out test set.
}}
\end{table}
\newpage
\begin{center}
\underline{For $6\sin(\mathbf{x}_0)\cos(\mathbf{x}_1)$ in~\hyperref[subsec:simulation-study]{Section~\ref{subsec:simulation-study}}.}
\end{center}

\begin{table}[H]
\centering
\scriptsize
\renewcommand{\arraystretch}{1.15}
\caption{Expressions recovered in a single run of \vasstmain\ and competing methods for learning~\eqnref{eq:simulation-generating-model-2} under the noiseless setting.}
\label{tab:sim3-noiseless-rep1-expressions}
\begin{tabularx}{\linewidth}{@{}l >{\raggedright\arraybackslash}X r@{}}
\toprule
\toprule
\textbf{Method} & \textbf{Symbolic Expression Learned} & \textbf{\texttt{RMSE}} \\
\midrule

\rowcolor{vasstgray}
\vasstmain &
$\,\underline{5.6818\sin(\mathbf{x}_0)\cos(\mathbf{x}_1)}
+0.0787\mathbf{x}_0\mathbf{x}_1\cos(\mathbf{x}_1)
-0.0281\cos(\mathbf{x}_1)\cos(\mathbf{x}_0\mathbf{x}_1)
-0.0397$ &
$0.009577$ \\

\addlinespace[2pt]

\operon &
$-0.002 + 6.527\cos(-0.982 \mathbf x_1)\sin(1.026 \mathbf x_0)\cos(A(\mathbf x)B(\mathbf x))$ &
$0.003086$ \\

\addlinespace[2pt]

\bms &
$6\sin(\mathbf{x}_0)\cos(\mathbf{x}_1)$ &
$0.000000$ \\

\addlinespace[2pt]

\bsr &
$-103.575
+1.9976\sin(\mathbf{x}_1)
+101.883\cos(\mathbf{x}_0)
+43.7129\mathbf{x}_0^2$ &
$0.340724$ \\

\addlinespace[2pt]

\qlattice &
$-2.011(3.114\mathbf x_0 + 0.002)\exp\left\{-0.185(0.0482-\mathbf{x}_0)^2-2.891(0.289\mathbf x_1 + 0.832L(\mathbf{x})-1)^2-0.0004\right\}$ &
$0.004568$ \\

\addlinespace[2pt]

\pysr &
$6\sin(\mathbf{x}_0)\cos(\mathbf{x}_1)$ &
$0.000000$ \\

\addlinespace[2pt]

\gplearn &
$-0.6513\mathbf{x}_0\mathbf{x}_1^2$ &
$0.172140$ \\

\addlinespace[2pt]

\deap &
$-1.9881\mathbf{x}_1\sin\{\sin(\mathbf{x}_0)\}$ &
$0.279846$ \\

\addlinespace[2pt]

\dsr &
$\mathbf{x}_0\mathbf{x}_1^2\cos(\mathbf{x}_0)\cos(\mathbf{x}_1)$ &
$0.548577$ \\

\bottomrule
\bottomrule
\end{tabularx}
{\scriptsize{
\emph{Note}: For \operon, $A(\mathbf{x}) = \cos\!\left(\cos\!\left[\cos(-1.142\mathbf{x}_1)\cos(-1.142\mathbf{x}_1)\right]\right)
\cos\!\left(\cos\!\left[1.686\mathbf{x}_0\cos(0.258\mathbf{x}_1)\right]\right)
\cos\!\left(\cos(-1.134\mathbf{x}_1)\right)$ and $B(\mathbf{x})=\cos\!\left[-0.295\cos(-3.325\mathbf{x}_1)\sin(1.351\mathbf{x}_0)\right]
\sin(-0.159\mathbf{x}_1)$. For \qlattice, $L(\mathbf{x}) = -0.002(-0.201\mathbf{x}_0-1)^2 - 1.832(0.349\mathbf x_1 - 1)^2$. \texttt{RMSE} corresponds to the out-of-sample \texttt{RMSE} computed on a $10\%$ held-out test set.
}}
\end{table}

\begin{table}[H]
\centering
\scriptsize
\renewcommand{\arraystretch}{1.15}
\caption{Expressions recovered in a single run of \vasstmain\ and competing methods for learning~\eqnref{eq:simulation-generating-model-2} under $\sigma=0.1$.}
\label{tab:sim3-noise1-rep1-expressions}
\begin{tabularx}{\linewidth}{@{}l >{\raggedright\arraybackslash}X r@{}}
\toprule
\toprule
\textbf{Method} & \textbf{Symbolic Expression Learned} & \textbf{\texttt{RMSE}} \\
\midrule

\rowcolor{vasstgray}
\vasstmain &
$\,\underline{5.9931\sin(\mathbf{x}_0)\cos(\mathbf{x}_1)}
-0.0023\mathbf{x}_1
-0.0009\mathbf{x}_1\cos\{\cos(\mathbf{x}_1)\}
+0.0023$ &
$0.099728$ \\

\addlinespace[2pt]

\operon &
$-0.001
+1.949\cos\{0.318\mathbf{x}_0\sin(-0.894\mathbf{x}_0)\}
\{1.353\mathbf{x}_1\sin(0.917\mathbf{x}_0)\cos(-1.127\mathbf{x}_1)\}
A(\mathbf{x})B(\mathbf{x})$ &
$0.099540$ \\

\addlinespace[2pt]

\bms &
$a_0\sin(\mathbf{x}_0)\cos(\mathbf{x}_1)$ &
$0.099723$ \\

\addlinespace[2pt]

\bsr &
$-9.5947
+2.6779\cos(\mathbf{x}_1)
+10.3204\cos\{\mathbf{x}_0\sin(\mathbf{x}_0)\}$ &
$0.652872$ \\

\addlinespace[2pt]

\qlattice &
$-3.647(1.665\mathbf{x}_0+0.003)
\exp\left\{
-0.235(0.182-\mathbf{x}_0)^2
-2\exp\{-1.544(1-0.866\mathbf{x}_1)^2\}
\right\}
+0.001$ &
$0.100564$ \\

\addlinespace[2pt]

\pysr &
$6.0049\sin(\mathbf{x}_0)\cos(\mathbf{x}_1)$ &
$0.099723$ \\

\addlinespace[2pt]

\gplearn &
$-0.6513\mathbf{x}_0\mathbf{x}_1^2$ &
$0.202208$ \\

\addlinespace[2pt]

\deap &
$-1.9881\mathbf{x}_1\sin\{\sin(\mathbf{x}_0)\}$ &
$0.296343$ \\

\addlinespace[2pt]

\dsr &
$\mathbf{x}_0\mathbf{x}_1^2\cos(\mathbf{x}_0)\cos(\mathbf{x}_1)$ &
$0.556807$ \\

\bottomrule
\bottomrule
\end{tabularx}
{\scriptsize{
\emph{Note}: For \operon,
$A(\mathbf{x})
=
\cos\!\left[
\sin\{\cos(\sin(1.353\mathbf{x}_1))1.353\mathbf{x}_1\}
\sin(1.352\mathbf{x}_1)\sin(-2.031\mathbf{x}_1)
\right]$
and
$B(\mathbf{x})
=
\sin\!\left[
\cos\{-0.474\sin[\sin\{\sin(1.353\mathbf{x}_1)\}]\sin(-2.001\mathbf{x}_1)\}
\right]$. $a_0$ is a constant learned by \bms. \texttt{RMSE} corresponds to the out-of-sample \texttt{RMSE} computed on a $10\%$ held-out test set.
}}
\end{table}
\begin{table}[H]
\centering
\scriptsize
\renewcommand{\arraystretch}{1.15}
\caption{Expressions recovered in a single run of \vasstmain\ and competing methods for learning~\eqnref{eq:simulation-generating-model-2} under $\sigma=0.2$.}
\label{tab:sim3-noise2-rep1-expressions}
\begin{tabularx}{\linewidth}{@{}l >{\raggedright\arraybackslash}X r@{}}
\toprule
\toprule
\textbf{Method} & \textbf{Symbolic Expression Learned} & \textbf{\texttt{RMSE}} \\
\midrule

\rowcolor{vasstgray}
\vasstmain &
$\,\underline{5.9933\sin(\mathbf{x}_0)\cos(\mathbf{x}_1)}
-0.0031\mathbf{x}_1
-0.0039\mathbf{x}_1\cos\{\cos(\mathbf{x}_1)\}
+0.0074$ &
$0.199438$ \\

\addlinespace[2pt]

\operon &
$88.541A(\mathbf{x})B(\mathbf{x})C(\mathbf{x})$ &
$0.199028$ \\

\addlinespace[2pt]

\bms &
$a_0\sin(\mathbf{x}_0)\cos(\mathbf{x}_1)$ &
$0.199446$ \\

\addlinespace[2pt]

\bsr &
$-9.5709
+2.6796\cos(\mathbf{x}_1)
+10.2934\cos\{\mathbf{x}_0\sin(\mathbf{x}_0)\}$ &
$0.679095$ \\

\addlinespace[2pt]

\qlattice &
$-8.4212\tanh(0.8270\mathbf{x}_0-0.0052)
\tanh(1.0045\mathbf{x}_1-1.6330)
+0.0021$ &
$0.201085$ \\

\addlinespace[2pt]

\pysr &
$5.4158\mathbf{x}_0\cos(\mathbf{x}_1)$ &
$0.235483$ \\

\addlinespace[2pt]

\gplearn &
$-0.7930\mathbf{x}_0\mathbf{x}_1^2\cos(0.7930\mathbf{x}_0)$ &
$0.262413$ \\

\addlinespace[2pt]

\deap &
$-1.9881\mathbf{x}_1\sin\{\sin(\mathbf{x}_0)\}$ &
$0.342378$ \\

\addlinespace[2pt]

\dsr &
$\mathbf{x}_0\mathbf{x}_1^2\cos(\mathbf{x}_0)\cos(\mathbf{x}_1)$ &
$0.582262$ \\

\bottomrule
\bottomrule
\end{tabularx}
{\scriptsize{
\emph{Note}: For \operon,
$A(\mathbf{x})
=
\cos\!\left[
\cos\{\cos(1.134\mathbf{x}_1)\cos[\cos(1.134\mathbf{x}_1)]\}
\right]
\cos\{0.2027\mathbf{x}_0^3\sin^2(0.863\mathbf{x}_1)\}$,
$B(\mathbf{x})
=
\cos\{\cos(0.079\mathbf{x}_1)\}
\cos\!\left[
\cos\{\cos(0.079\mathbf{x}_1)\}\cos(0.079\mathbf{x}_1)
\right]$,
and
$C(\mathbf{x})
=
0.079\mathbf{x}_1
\sin(0.917\mathbf{x}_0)
\cos\{0.3141\mathbf{x}_0^2\sin(0.917\mathbf{x}_0)\}
\cos(1.134\mathbf{x}_1)$. $a_0$ is a constant learned by \bms. \texttt{RMSE} corresponds to the out-of-sample \texttt{RMSE} computed on a $10\%$ held-out test set.
}}
\end{table}
\newpage

\subsection{Out-of-sample \texttt{RMSE} for \texorpdfstring{\vasst}{VaSST} and Competing Methods}\label{app:ooRMSE-sim1}

\begin{table}[H]
\centering
\scriptsize
\renewcommand{\arraystretch}{1.15}
\caption{Out-of-sample \texttt{RMSE}s (computed on a $10\%$ held-out test set) of \vasstmain and competing methods over $10$ repetitions (mean $\pm$ standard deviation) for learning $\mathbf{x}_0^{2}-\mathbf{x}_1+\frac{1}{2}\mathbf{x}_2^{2}$ in~\eqnref{eq:simulation-generating-model-1} of~\hyperref[subsec:simulation-study]{Section~\ref{subsec:simulation-study}} across all noise settings.}
\label{tab:rmse-summary-all}
\begin{minipage}[t]{0.32\linewidth}
\centering
\caption*{\textbf{Noiseless}}
\begin{tabular}{@{}l c@{}}
\toprule
\toprule
\textbf{Method} & \textbf{\texttt{RMSE}, mean $\pm$ sd} \\
\midrule
\rowcolor{vasstgray}
\vasstmain & $0.00412 \pm 0.00133$ \\
\addlinespace[2pt]
\operon & $0.00644 \pm 0.00394$ \\
\bms & $7.74{\times}10^{-17} \pm 2.45{\times}10^{-16}$ \\
\bsr & $1.96367 \pm 3.94562$ \\
\qlattice & $0.00975 \pm 0.00684$ \\
\pysr & $0.00000 \pm 0.00000$ \\
\gplearn & $0.21830 \pm 0.12335$ \\
\deap & $0.69742 \pm 0.24570$ \\
\dsr & $0.50942 \pm 0.05589$ \\
\bottomrule
\bottomrule
\end{tabular}
\end{minipage}
\hfill
\begin{minipage}[t]{0.32\linewidth}
\centering
\caption*{\textbf{$\sigma=0.1$}}
\begin{tabular}{@{}l c@{}}
\toprule
\toprule
\textbf{Method} & \textbf{\texttt{RMSE}, mean $\pm$ sd} \\
\midrule
\rowcolor{vasstgray}
\vasstmain & $0.10036 \pm 0.00158$ \\
\addlinespace[2pt]
\operon & $0.10040 \pm 0.00152$ \\
\bms & $0.10032 \pm 0.00152$ \\
\bsr & $0.71729 \pm 0.54286$ \\
\qlattice & $0.10219 \pm 0.00199$ \\
\pysr & $0.10032 \pm 0.00152$ \\
\gplearn & $0.21616 \pm 0.09241$ \\
\deap & $0.72482 \pm 0.25997$ \\
\dsr & $0.49501 \pm 0.07044$ \\
\bottomrule
\bottomrule
\end{tabular}
\end{minipage}
\hfill
\begin{minipage}[t]{0.32\linewidth}
\centering
\caption*{\textbf{$\sigma=0.2$}}
\begin{tabular}{@{}l c@{}}
\toprule
\toprule
\textbf{Method} & \textbf{\texttt{RMSE}, mean $\pm$ sd} \\
\midrule
\rowcolor{vasstgray}
\vasstmain & $0.20061 \pm 0.00309$ \\
\addlinespace[2pt]
\operon & $0.20025 \pm 0.00309$ \\
\bms & $0.20063 \pm 0.00305$ \\
\bsr & $0.75153 \pm 0.52180$ \\
\qlattice & $0.20347 \pm 0.00319$ \\
\pysr & $0.21481 \pm 0.00351$ \\
\gplearn & $0.28095 \pm 0.08016$ \\
\deap & $0.73682 \pm 0.24150$ \\
\dsr & $0.49416 \pm 0.04973$ \\
\bottomrule
\bottomrule
\end{tabular}
\end{minipage}
\end{table}

\begin{table}[H]
\centering
\scriptsize
\renewcommand{\arraystretch}{1.15}
\caption{Out-of-sample \texttt{RMSE}s (computed on a $10\%$ held-out test set) of \vasstmain and competing methods over $10$ repetitions (mean $\pm$ standard deviation) for learning $6\sin(\mathbf{x}_0)\cos(\mathbf{x}_1)$ in~\eqnref{eq:simulation-generating-model-2} of~\hyperref[subsec:simulation-study]{Section~\ref{subsec:simulation-study}} across all noise settings.}
\label{tab:sim3-rmse-summary-all}
\begin{minipage}[t]{0.32\linewidth}
\centering
\caption*{\textbf{Noiseless}}
\begin{tabular}{@{}l c@{}}
\toprule
\toprule
\textbf{Method} & \textbf{\texttt{RMSE}, mean $\pm$ sd} \\
\midrule
\rowcolor{vasstgray}
\vasstmain & $0.01306 \pm 0.00767$ \\
\addlinespace[2pt]
\operon & $0.00244 \pm 0.00241$ \\
\bms & $0.00000 \pm 0.00000$ \\
\bsr & $0.28186 \pm 0.13300$ \\
\qlattice & $0.00400 \pm 0.00127$ \\
\pysr & $0.00000 \pm 0.00000$ \\
\gplearn & $0.11964 \pm 0.03772$ \\
\deap & $0.36454 \pm 0.14455$ \\
\dsr & $0.55193 \pm 0.00870$ \\
\bottomrule
\bottomrule
\end{tabular}
\end{minipage}
\hfill
\begin{minipage}[t]{0.32\linewidth}
\centering
\caption*{\textbf{$\sigma=0.1$}}
\begin{tabular}{@{}l c@{}}
\toprule
\toprule
\textbf{Method} & \textbf{\texttt{RMSE}, mean $\pm$ sd} \\
\midrule
\rowcolor{vasstgray}
\vasstmain & $0.10054 \pm 0.00149$ \\
\addlinespace[2pt]
\operon & $0.09904 \pm 0.00095$ \\
\bms & $0.09941 \pm 0.00091$ \\
\bsr & $0.33652 \pm 0.15671$ \\
\qlattice & $0.09994 \pm 0.00123$ \\
\pysr & $0.09941 \pm 0.00091$ \\
\gplearn & $0.15850 \pm 0.03238$ \\
\deap & $0.38290 \pm 0.13653$ \\
\dsr & $0.56119 \pm 0.00939$ \\
\bottomrule
\bottomrule
\end{tabular}
\end{minipage}
\hfill
\begin{minipage}[t]{0.32\linewidth}
\centering
\caption*{\textbf{$\sigma=0.2$}}
\begin{tabular}{@{}l c@{}}
\toprule
\toprule
\textbf{Method} & \textbf{\texttt{RMSE}, mean $\pm$ sd} \\
\midrule
\rowcolor{vasstgray}
\vasstmain & $0.19951 \pm 0.00170$ \\
\addlinespace[2pt]
\operon & $0.19828 \pm 0.00189$ \\
\bms & $0.19881 \pm 0.00181$ \\
\bsr & $0.38499 \pm 0.13881$ \\
\qlattice & $0.19971 \pm 0.00173$ \\
\pysr & $0.22836 \pm 0.00429$ \\
\gplearn & $0.23735 \pm 0.02038$ \\
\deap & $0.42092 \pm 0.12471$ \\
\dsr & $0.58739 \pm 0.01000$ \\
\bottomrule
\bottomrule
\end{tabular}
\end{minipage}
\end{table}
\newpage

\subsection{Top Ranked Symbolic Expressions Learned by \texorpdfstring{\vasst}{VaSST} with respect to LMPSE}
\label{app:top-symbolic-expressions-vasst}

\begin{center}
\underline{For $\mathbf{x}_0^2 - \mathbf{x}_1 + \tfrac{1}{2}\mathbf{x}_2^2$ in~\hyperref[subsec:simulation-study]{Section~\ref{subsec:simulation-study}}.}
\end{center}

\begin{table}[H]
\centering
\scriptsize
\renewcommand{\arraystretch}{1.15}
\caption{Top $\mathsf{r}=10$ ranked symbolic expressions recovered by a single run of \vasstmain\ under the noiseless setting for learning~\eqnref{eq:simulation-generating-model-1}. Expressions are ranked by $\mathrm{LMPSE}$ among $H=2000$ sampled hard symbolic trees.}
\label{tab:vasst-top10-LMPSE}
\begin{tabularx}{\linewidth}{@{}c >{\raggedright\arraybackslash}X r r@{}}
\toprule
\toprule
\textbf{Rank} & \textbf{\vasstmain\ Expression Learned} & \textbf{$\mathrm{LMPSE}$} & \textbf{\texttt{RMSE}} \\
\midrule
1 &
$\,{0.9987\mathbf{x}_0^2
-1.0018\mathbf{x}_1
+0.5125\mathbf{x}_2^2}
-0.0031\mathbf{x}_2
-0.2283$ &
$3296.45$ & $0.003526$ \\

\addlinespace[2pt]

2 &
$1.0009\mathbf{x}_0^2
-1.0009\mathbf{x}_1
+0.5331\mathbf{x}_2^2
-0.2247\mathbf{x}_2
+0.3650$ &
$3285.09$ & $0.002155$ \\

\addlinespace[2pt]

3 &
$0.9996\mathbf{x}_0^2
-0.9996\mathbf{x}_1
+0.5122\mathbf{x}_2^2
-0.0001\mathbf{x}_1^8
-0.2236$ &
$3276.92$ & $0.003551$ \\

\addlinespace[2pt]

4 &
$0.9993\mathbf{x}_0^2
-1.0001\mathbf{x}_1
+0.5122\mathbf{x}_2^2
-0.2409$ &
$3272.47$ & $0.003524$ \\

\addlinespace[2pt]

5 &
$0.9993\mathbf{x}_0^2
-1.0014\mathbf{x}_1
+0.5122\mathbf{x}_2^2
-0.2384$ &
$3270.59$ & $0.003529$ \\

\addlinespace[2pt]

6 &
$0.9997\mathbf{x}_0^2
-0.9987\mathbf{x}_1
+0.5737\mathbf{x}_2^2
-0.5737\mathbf{x}_2
+1.1346$ &
$3268.11$ & $0.002609$ \\

\addlinespace[2pt]

7 &
$0.9991\mathbf{x}_0^2
-1.0388\mathbf{x}_1
+0.5194\mathbf{x}_2^2
+0.1120\mathbf{x}_2
-0.8127$ &
$3239.00$ & $0.010701$ \\

\addlinespace[2pt]

8 &
$1.0013\mathbf{x}_0^2
+0.0114\mathbf{x}_0
-1.0013\mathbf{x}_1
+0.5426\mathbf{x}_2^2
-0.8243$ &
$3205.50$ & $0.011400$ \\

\addlinespace[2pt]

9 &
$0.8460\mathbf{x}_0^2
+0.1669\mathbf{x}_0
-0.1669\mathbf{x}_1^2
-0.1669\mathbf{x}_1
+0.5473\mathbf{x}_2^2
-0.3785\mathbf{x}_2
-0.2803$ &
$3184.82$ & $0.017184$ \\

\addlinespace[2pt]

10 &
$-0.0719\mathbf{x}_0^4
+0.1438\mathbf{x}_0^2\mathbf{x}_1
+0.1438\mathbf{x}_0^2\mathbf{x}_2
-0.0018\mathbf{x}_0
-0.0719\mathbf{x}_1^2
-0.1438\mathbf{x}_1\mathbf{x}_2
+0.0006\mathbf{x}_1
+0.5476\mathbf{x}_2^2
-0.0006\mathbf{x}_2
-1.3782$ &
$3153.72$ & $0.014262$ \\

\bottomrule
\bottomrule
\end{tabularx}
{
\scriptsize{
\emph{Note}: \texttt{RMSE} corresponds to the out-of-sample \texttt{RMSE} computed on a $10\%$ held-out test set with the LMPSE selected symbolic model.
}
}
\end{table}

\begin{table}[H]
\centering
\scriptsize
\renewcommand{\arraystretch}{1.15}
\caption{Top $\mathsf{r}=10$ ranked symbolic expressions recovered by a single run of \vasstmain\ under $\sigma=0.1$ for learning~\eqnref{eq:simulation-generating-model-1}. Expressions are ranked by $\mathrm{LMPSE}$ among $H=2000$ sampled hard symbolic trees.}
\label{tab:noise1-vasst-top10-LMPSE}
\begin{tabularx}{\linewidth}{@{}c >{\raggedright\arraybackslash}X r r@{}}
\toprule
\toprule
\textbf{Rank} & \textbf{\vasstmain\ Expression Learned} & \textbf{$\mathrm{LMPSE}$} & \textbf{\texttt{RMSE}} \\
\midrule

1 &
$\,{1.0439\mathbf{x}_0^2
-1.0105\mathbf{x}_1
+0.5219\mathbf{x}_2^2}
-0.0289\mathbf{x}_0
+0.0289\mathbf{x}_2
-0.5386$ &
$1506.54$ & $0.101862$ \\

\addlinespace[2pt]

2 &
$0.9914\mathbf{x}_0^2
-1.0107\mathbf{x}_1
+0.5129\mathbf{x}_2^2
-0.2253$ &
$1503.94$ & $0.101688$ \\

\addlinespace[2pt]

3 &
$0.9912\mathbf{x}_0^2
-1.0099\mathbf{x}_1
+0.5124\mathbf{x}_2^2
-0.2196$ &
$1502.74$ & $0.101685$ \\

\addlinespace[2pt]

4 &
$1.0124\mathbf{x}_0^2
-0.0345\mathbf{x}_0
-1.0124\mathbf{x}_1
+0.5238\mathbf{x}_2^2
-0.4311$ &
$1498.05$ & $0.101826$ \\

\addlinespace[2pt]

5 &
$1.0804\mathbf{x}_0^2
-0.0941\mathbf{x}_0
-0.0941\mathbf{x}_1^2
-0.5402\mathbf{x}_1
+0.5123\mathbf{x}_2^2
-0.7809$ &
$1495.14$ & $0.101806$ \\

\addlinespace[2pt]

6 &
$0.9813\mathbf{x}_0^2
-1.0114\mathbf{x}_1
+0.5361\mathbf{x}_2^2
-0.0909\mathbf{x}_2
-0.2825$ &
$1490.35$ & $0.101942$ \\

\addlinespace[2pt]

7 &
$0.9907\mathbf{x}_0^2
-1.0088\mathbf{x}_1
+0.4684\mathbf{x}_2^2
+0.5404\mathbf{x}_2
-1.8103$ &
$1489.77$ & $0.102035$ \\

\addlinespace[2pt]

8 &
$0.9912\mathbf{x}_0^2
-1.0099\mathbf{x}_1
+0.5129\mathbf{x}_2^2
-0.2270$ &
$1489.37$ & $0.101688$ \\

\addlinespace[2pt]

9 &
$1.0113\mathbf{x}_0^2
+0.0224\mathbf{x}_0
-1.0113\mathbf{x}_1
+0.5257\mathbf{x}_2^2
-0.4911$ &
$1486.86$ & $0.101895$ \\

\addlinespace[2pt]

10 &
$0.9915\mathbf{x}_0^2
+0.0147\mathbf{x}_1^2
-1.0913\mathbf{x}_1
+0.5809\mathbf{x}_2^2
-0.5809\mathbf{x}_2
+1.1513$ &
$1484.55$ & $0.101748$ \\

\bottomrule
\bottomrule
\end{tabularx}
{
\scriptsize{
\emph{Note}: \texttt{RMSE} corresponds to the out-of-sample \texttt{RMSE} computed on a $10\%$ held-out test set with the LMPSE selected symbolic model.
}
}
\end{table}

\begin{table}[H]
\centering
\scriptsize
\renewcommand{\arraystretch}{1.15}
\caption{Top $\mathsf{r}=10$ ranked symbolic expressions recovered by a single run of \vasstmain\ under $\sigma=0.2$ for learning~\eqnref{eq:simulation-generating-model-1}. Expressions are ranked by $\mathrm{LMPSE}$ among $H=2000$ sampled hard symbolic trees.}
\label{tab:noise2-vasst-top10-LMPSE}
\begin{tabularx}{\linewidth}{@{}c >{\raggedright\arraybackslash}X r r@{}}
\toprule
\toprule
\textbf{Rank} & \textbf{\vasstmain\ Expression Learned} & \textbf{$\mathrm{LMPSE}$} & \textbf{\texttt{RMSE}} \\
\midrule
1 &
$\,{1.0425\mathbf{x}_0^2
-1.0203\mathbf{x}_1
+0.5212\mathbf{x}_2^2}
-0.0378\mathbf{x}_0
+0.0378\mathbf{x}_2
-0.5396$ &
$255.26$ & $0.203331$ \\

\addlinespace[2pt]

2 &
$0.9826\mathbf{x}_0^2
-1.0187\mathbf{x}_1
+0.4675\mathbf{x}_2^2
+0.5512\mathbf{x}_2
-1.8176$ &
$251.41$ & $0.203457$ \\

\addlinespace[2pt]

3 &
$0.9832\mathbf{x}_0^2
-1.0198\mathbf{x}_1
+0.5125\mathbf{x}_2^2
-0.1982$ &
$251.27$ & $0.203272$ \\

\addlinespace[2pt]

4 &
$0.9833\mathbf{x}_0^2
-1.0205\mathbf{x}_1
+0.5130\mathbf{x}_2^2
-0.2039$ &
$249.55$ & $0.203274$ \\

\addlinespace[2pt]

5 &
$0.9841\mathbf{x}_0^2
-0.2033\mathbf{x}_1^2
+0.6355\mathbf{x}_2^2
-1.2709\mathbf{x}_2
+1.8591$ &
$248.64$ & $0.203483$ \\

\addlinespace[2pt]

6 &
$1.0234\mathbf{x}_0^2
-0.0561\mathbf{x}_0
-1.0234\mathbf{x}_1
+0.5239\mathbf{x}_2^2
-0.4025$ &
$244.71$ & $0.203299$ \\

\addlinespace[2pt]

7 &
$1.0745\mathbf{x}_0^2
-0.0968\mathbf{x}_0
-0.0968\mathbf{x}_1^2
-0.5373\mathbf{x}_1
+0.5125\mathbf{x}_2^2
-0.7737$ &
$243.29$ & $0.203169$ \\

\addlinespace[2pt]

8 &
$1.0224\mathbf{x}_0^2
+0.0008\mathbf{x}_0
-1.0224\mathbf{x}_1
+0.5258\mathbf{x}_2^2
-0.4624$ &
$241.63$ & $0.203338$ \\

\addlinespace[2pt]

9 &
$0.9827\mathbf{x}_0^2
-1.0124\mathbf{x}_1
+0.6067\mathbf{x}_2^2
-1.0124\mathbf{x}_2
+2.5075$ &
$239.59$ & $0.203264$ \\

\addlinespace[2pt]

10 &
$0.9733\mathbf{x}_0^2
-1.0212\mathbf{x}_1
+0.5372\mathbf{x}_2^2
-0.1011\mathbf{x}_2
-0.2343$ &
$238.96$ & $0.203410$ \\

\bottomrule
\bottomrule
\end{tabularx}
{
\scriptsize{
\emph{Note}: \texttt{RMSE} corresponds to the out-of-sample \texttt{RMSE} computed on a $10\%$ held-out test set with the LMPSE selected symbolic model.
}
}
\end{table}
\newpage

\begin{center}
\underline{For $6\sin(\mathbf{x}_0)\cos(\mathbf{x}_1)$ in~\hyperref[subsec:simulation-study]{Section~\ref{subsec:simulation-study}}.}
\end{center}

\begin{table}[H]
\centering
\scriptsize
\renewcommand{\arraystretch}{1.15}
\caption{Top $\mathsf{r}=10$ ranked symbolic expressions recovered by a single run of \vasstmain\ under the noiseless setting for learning~\eqnref{eq:simulation-generating-model-2}. Expressions are ranked by $\mathrm{LMPSE}$ among $H=2000$ sampled hard symbolic trees.}
\label{tab:sim3-noiseless-vasst-top10-jmp}
\begin{tabularx}{\linewidth}{@{}c >{\raggedright\arraybackslash}X r r@{}}
\toprule
\toprule
\textbf{Rank} & \textbf{\vasstmain\ Expression Learned} & $\mathrm{LMPSE}$ & \textbf{\texttt{RMSE}} \\
\midrule
1 &
$\,0.0787\mathbf{x}_0\mathbf{x}_1\cos(\mathbf{x}_1)
+{5.6818\sin(\mathbf{x}_0)\cos(\mathbf{x}_1)}
-0.0281\cos(\mathbf{x}_1)\cos(\mathbf{x}_0\mathbf{x}_1)
-0.0397$ &
$2699.67$ & $0.009557$ \\

\addlinespace[2pt]

2 &
$-0.0951\mathbf{x}_0
+5.8568\sin(\mathbf{x}_0)\cos(\mathbf{x}_1)
+0.0977\cos\{\cos(\mathbf{x}_1)\}
-0.0725$ &
$2693.37$ & $0.006417$ \\

\addlinespace[2pt]

3 &
$-0.0178\mathbf{x}_1
+5.9318\sin(\mathbf{x}_0)\cos(\mathbf{x}_1)
-0.1651\sin[\cos\{\cos(\mathbf{x}_0)\}]
+0.1208$ &
$2687.91$ & $0.004291$ \\

\addlinespace[2pt]

4 &
$0.0021\mathbf{x}_1\cos\{\cos(\mathbf{x}_1)\}
-0.0014\mathbf{x}_1
+5.9929\sin(\mathbf{x}_0)\cos(\mathbf{x}_1)
-0.0028$ &
$2687.49$ & $0.001527$ \\

\addlinespace[2pt]

5 &
$5.9249\sin(\mathbf{x}_0)\cos(\mathbf{x}_1)
+0.0343\cos(\mathbf{x}_1)
+0.1327\cos\{\sin(\mathbf{x}_0)\}
-0.1156$ &
$2681.23$ & $0.004575$ \\

\addlinespace[2pt]

6 &
$5.9704\sin(\mathbf{x}_0)\cos(\mathbf{x}_1)
-0.0567\cos\{\cos(\mathbf{x}_0)\}
+0.0267$ &
$2680.73$ & $0.002973$ \\

\addlinespace[2pt]

7 &
$0.0019\mathbf{x}_1\cos(\mathbf{x}_1)
+5.9929\sin(\mathbf{x}_0)\cos(\mathbf{x}_1)
-0.0028\sin(\mathbf{x}_1)
+0.0026$ &
$2677.57$ & $0.001530$ \\

\addlinespace[2pt]

8 &
$5.9709\sin(\mathbf{x}_0)\cos(\mathbf{x}_1)
+0.0443\sin\{\cos(\mathbf{x}_0)\}\cos(\mathbf{x}_0)
-0.0359\cos[\cos\{\cos(\mathbf{x}_0)\}]
-0.0104$ &
$2676.48$ & $0.002941$ \\

\addlinespace[2pt]

9 &
$0.0270\sin\{\cos(\mathbf{x}_0\mathbf{x}_1)\}
+5.9225\sin(\mathbf{x}_0)\cos(\mathbf{x}_1)
-0.0011\mathbf{x}_1\cos(\mathbf{x}_1)\cos(\mathbf{x}_0\mathbf{x}_1)
-0.0378$ &
$2664.00$ & $0.004804$ \\

\addlinespace[2pt]

10 &
$6.3825\cos(\mathbf{x}_1)\sin\{\sin(\mathbf{x}_0)\}
+0.9289\sin\{\cos(\mathbf{x}_0)\}
-0.2152\sin\{\sin(\mathbf{x}_1\mathbf{x}_0)\}
-0.9499$ &
$2568.17$ & $0.009784$ \\

\bottomrule
\bottomrule
\end{tabularx}
{
\scriptsize{
\emph{Note}: \texttt{RMSE} corresponds to the out-of-sample \texttt{RMSE} computed on a $10\%$ held-out test set with the LMPSE selected symbolic model.
}
}
\end{table}

\begin{table}[H]
\centering
\scriptsize
\renewcommand{\arraystretch}{1.15}
\caption{Top $\mathsf{r}=10$ ranked symbolic expressions recovered by a single run of \vasstmain\ under $\sigma=0.1$ for learning~\eqnref{eq:simulation-generating-model-2}. Expressions are ranked by $\mathrm{LMPSE}$ among $H=2000$ sampled hard symbolic trees.}
\label{tab:sim3-noise1-vasst-top10-jmp}
\begin{tabularx}{\linewidth}{@{}c >{\raggedright\arraybackslash}X r r@{}}
\toprule
\toprule
\textbf{Rank} & \textbf{\vasstmain\ Expression Learned} & $\mathrm{LMPSE}$ & \textbf{\texttt{RMSE}} \\
\midrule
1 &
${5.9931\sin(\mathbf{x}_0)\cos(\mathbf{x}_1)}
-0.0023\mathbf{x}_1
-0.0009\mathbf{x}_1\cos\{\cos(\mathbf{x}_1)\}
+0.0023$ &
$1405.78$ & $0.099728$ \\

\addlinespace[2pt]

2 &
$5.8878\sin(\mathbf{x}_0)\cos(\mathbf{x}_1)
-0.0735\mathbf{x}_0
+0.0778\cos\{\cos(\mathbf{x}_1)\}
-0.0606$ &
$1402.42$ & $0.099916$ \\

\addlinespace[2pt]

3 &
$5.9670\sin(\mathbf{x}_0)\cos(\mathbf{x}_1)
-0.0706\sin[\cos\{\cos(\mathbf{x}_0)\}]
-0.0088\mathbf{x}_1
+0.0508$ &
$1401.50$ & $0.099788$ \\

\addlinespace[2pt]

4 &
$5.9895\sin(\mathbf{x}_0)\cos(\mathbf{x}_1)
-0.0112\cos\{\cos(\mathbf{x}_0)\}
+0.0011$ &
$1397.16$ & $0.099739$ \\

\addlinespace[2pt]

5 &
$5.7657\sin(\mathbf{x}_0)\cos(\mathbf{x}_1)
-0.0181\cos(\mathbf{x}_0\mathbf{x}_1)\cos(\mathbf{x}_1)
+0.0593\mathbf{x}_0\mathbf{x}_1\cos(\mathbf{x}_1)
-0.0299$ &
$1396.37$ & $0.100150$ \\

\addlinespace[2pt]

6 &
$5.9927\sin(\mathbf{x}_0)\cos(\mathbf{x}_1)
+0.0109\mathbf{x}_1\cos(\mathbf{x}_1)
-0.0279\sin(\mathbf{x}_1)
+0.0324$ &
$1395.93$ & $0.099724$ \\

\addlinespace[2pt]

7 &
$5.9616\sin(\mathbf{x}_0)\cos(\mathbf{x}_1)
+0.0188\cos(\mathbf{x}_1)
+0.0608\cos\{\sin(\mathbf{x}_0)\}
-0.0545$ &
$1394.34$ & $0.099799$ \\

\addlinespace[2pt]

8 &
$5.9902\sin(\mathbf{x}_0)\cos(\mathbf{x}_1)
+0.0578\sin\{\cos(\mathbf{x}_0)\}\cos(\mathbf{x}_0)
-0.1480\cos[\cos\{\cos(\mathbf{x}_0)\}]
+0.0736$ &
$1393.06$ & $0.099722$ \\

\addlinespace[2pt]

9 &
$0.0166\sin\{\cos(\mathbf{x}_0\mathbf{x}_1)\}
+5.9499\sin(\mathbf{x}_0)\cos(\mathbf{x}_1)
-0.0007\mathbf{x}_1\cos(\mathbf{x}_1)\cos(\mathbf{x}_0\mathbf{x}_1)
-0.0240$ &
$1377.14$ & $0.099818$ \\

\addlinespace[2pt]

10 &
$6.3906\sin\{\sin(\mathbf{x}_0)\}\cos(\mathbf{x}_1)
+0.8903\cos\{\sin(\mathbf{x}_0)\}
+0.1985\sin[\sin(\mathbf{x}_0\mathbf{x}_1)]
-0.9055$ &
$1365.28$ & $0.100016$ \\

\bottomrule
\bottomrule
\end{tabularx}
{
\scriptsize{
\emph{Note}: \texttt{RMSE} corresponds to the out-of-sample \texttt{RMSE} computed on a $10\%$ held-out test set with the LMPSE selected symbolic model.
}
}
\end{table}

\begin{table}[H]
\centering
\scriptsize
\renewcommand{\arraystretch}{1.15}
\caption{Top $\mathsf{r}=10$ ranked symbolic expressions recovered by a single run of \vasstmain\ under $\sigma=0.2$ for learning~\eqnref{eq:simulation-generating-model-2}. Expressions are ranked by $\mathrm{LMPSE}$ among $H=2000$ sampled hard symbolic trees.}
\label{tab:sim3-noise2-vasst-top10-jmp}
\begin{tabularx}{\linewidth}{@{}c >{\raggedright\arraybackslash}X r r@{}}
\toprule
\toprule
\textbf{Rank} & \textbf{\vasstmain\ Expression Learned} & $\mathrm{LMPSE}$ & \textbf{\texttt{RMSE}} \\
\midrule
1 &
${5.9933\sin(\mathbf{x}_0)\cos(\mathbf{x}_1)}
-0.0031\mathbf{x}_1
-0.0039\mathbf{x}_1\cos\{\cos(\mathbf{x}_1)\}
+0.0074$ &
$251.27$ & $0.199438$ \\

\addlinespace[2pt]

2 &
$6.0022\sin(\mathbf{x}_0)\cos(\mathbf{x}_1)
+0.0240\sin[\cos\{\cos(\mathbf{x}_0)\}]
+0.0002\mathbf{x}_1
-0.0193$ &
$246.74$ & $0.199424$ \\

\addlinespace[2pt]

3 &
$5.9187\sin(\mathbf{x}_0)\cos(\mathbf{x}_1)
-0.0518\mathbf{x}_0
+0.0578\cos\{\cos(\mathbf{x}_1)\}
-0.0486$ &
$246.36$ & $0.199518$ \\

\addlinespace[2pt]

4 &
$6.0085\sin(\mathbf{x}_0)\cos(\mathbf{x}_1)
+0.0342\cos\{\cos(\mathbf{x}_0)\}
-0.0245$ &
$242.70$ & $0.199409$ \\

\addlinespace[2pt]

5 &
$5.9926\sin(\mathbf{x}_0)\cos(\mathbf{x}_1)
+0.0199\mathbf{x}_1\cos(\mathbf{x}_1)
-0.0530\sin(\mathbf{x}_1)
+0.0621$ &
$241.43$ & $0.199430$ \\

\addlinespace[2pt]

6 &
$5.9982\sin(\mathbf{x}_0)\cos(\mathbf{x}_1)
+0.0033\cos(\mathbf{x}_1)
-0.0110\cos\{\sin(\mathbf{x}_0)\}
+0.0066$ &
$239.50$ & $0.199429$ \\

\addlinespace[2pt]

7 &
$6.0095\sin(\mathbf{x}_0)\cos(\mathbf{x}_1)
+0.0712\sin\{\cos(\mathbf{x}_0)\}\cos(\mathbf{x}_0)
-0.2601\cos[\cos\{\cos(\mathbf{x}_0)\}]
+0.1577$ &
$238.64$ & $0.199379$ \\

\addlinespace[2pt]

8 &
$5.8496\sin(\mathbf{x}_0)\cos(\mathbf{x}_1)
-0.0080\cos(\mathbf{x}_0\mathbf{x}_1)\cos(\mathbf{x}_1)
+0.0399\mathbf{x}_0\mathbf{x}_1\cos(\mathbf{x}_1)
-0.0200$ &
$238.59$ & $0.199601$ \\

\addlinespace[2pt]

9 &
$5.3764\mathbf{x}_0\cos(\mathbf{x}_1)
-2.6272\sin\{\cos(\mathbf{x}_0)\}
-0.7567\sin(\mathbf{x}_0)
+2.2304$ &
$236.51$ & $0.201138$ \\

\addlinespace[2pt]

10 &
$5.3674\mathbf{x}_0\cos(\mathbf{x}_1)
-0.7371\sin(\mathbf{x}_0)\cos(\mathbf{x}_0)
-1.6969\sin\{\cos(\mathbf{x}_0)\}
+1.4529$ &
$236.17$ & $0.201004$ \\

\bottomrule
\bottomrule
\end{tabularx}
{
\scriptsize{
\emph{Note}: \texttt{RMSE} corresponds to the out-of-sample \texttt{RMSE} computed on a $10\%$ held-out test set with the LMPSE selected symbolic model.
}
}
\end{table}
\newpage

\section{Description of Feynman Equations}
\label{app:feynman-input-output}

\begin{table}[!htp]
\centering
\caption{Description of the response variables and input features for the Feynman equations in \hyperref[subsec:feynman-data-study]{Section~\ref{subsec:feynman-data-study}}.}
\label{tab:feynman-features}
\scriptsize
\begin{tabular}{lllll}
\toprule
\toprule
\textbf{Label} & \textbf{Feynman Equation} & \textbf{Variable} & \textbf{Type} & \textbf{Description} \\
\midrule

\eqnref{eq:feynman-cl} & Coulomb's law
& $F=0.08\tfrac{q_1q_2}{\epsilon r^{2}}$ & Response & Electrostatic force between two charges \\
& & $q_1, q_2$ & Input & Charges of two particles \\
& & $r$ & Input & Distance between the two charges \\
& & $\epsilon$ & Constant & Permittivity of the medium \\

\midrule

\eqnref{eq:feynman-cpe} & Change in gravitational potential energy
& $\Delta U = Gm_1m_2(\tfrac{1}{r_2} - \tfrac{1}{r_1})$ & Response & Change in gravitational potential energy \\
& & $m_1, m_2$ & Input & Masses of two objects \\
& & $r_1, r_2$ & Input & Initial and final separation distances \\
& & $G$ & Constant & Gravitational constant \\

\midrule

\eqnref{eq:feynman-fce} & Force on a charge in electromagnetic field
& $F=q(E_f + vB\sin\theta)$ & Response & Force on a moving charge in electromagnetic field \\
& & $q$ & Input & Electric charge of the particle \\
& & $E_f$ & Input & Electric field strength \\
& & $v$ & Input & Particle velocity magnitude \\
& & $B$ & Input & Magnetic field strength \\
& & $\theta$ & Input & Angle between velocity and magnetic field \\

\midrule

\eqnref{eq:feynman-ftc} & Fourier's law of thermal conduction
& $P=\tfrac{\kappa A(T_2-T_1)}{d}$ & Response & Rate of heat transfer or thermal power \\
& & $A$ & Input & Cross-sectional area for heat flow \\
& & $T_1, T_2$ & Input & Temperatures at the two boundaries \\
& & $d$ & Input & Distance between the two temperature points \\
& & $\kappa$ & Constant & Thermal conductivity of the material \\

\midrule

\eqnref{eq:feynman-ada} & Angular distribution asymmetry
& $f=\beta^{\dagger}(1+\alpha^{\dagger}\cos\theta)$ & Response & Angular distribution as a function of $\theta$ \\
& & $\beta^{\dagger}$ & Input & Baseline scale parameter of the angular distribution \\
& & $\alpha^{\dagger}$ & Input & Angular asymmetry parameter \\
& & $\theta$ & Input & Observation angle \\

\bottomrule
\bottomrule
\end{tabular}
\end{table}

\newpage

\section{Additional Results for Feynman Equations}
\label{app:additional-feynman-results}

\subsection{Symbolic Expressions Learned by \texorpdfstring{\vasst}{VaSST} and Competing Methods}
\label{app:additional-feynman-results-expressions-learned}

\begin{center}
\underline{For learning \eqnref{eq:feynman-cl}: $F = 0.08\tfrac{q_1 q_2}{\epsilon r^{2}}$ in~\hyperref[subsec:feynman-data-study]{Section~\ref{subsec:feynman-data-study}}}.
\end{center}

\begin{table}[H]
\centering
\scriptsize
\renewcommand{\arraystretch}{1.15}
\caption{Expressions recovered by \vasstmain\ and competing methods for learning \eqnref{eq:feynman-cl} under the noiseless setting.}
\label{tab:feynman-I-12-2-substituted-results}
\begin{tabularx}{\linewidth}{@{}l >{\raggedright\arraybackslash}X r@{}}
\toprule
\toprule
\textbf{Method} & \textbf{Symbolic Expression Learned} & \textbf{\texttt{RMSE}} \\
\midrule

\rowcolor{vasstgray}
\vasstmain &
$\,\underline{0.0797\,\tfrac{q_1q_2}{\epsilon r^2}}
-0.0001\,\tfrac{q_1^2}{\epsilon r^2}
-0.0003$ &
$0.000138$ \\

\addlinespace[1pt]

\operon &
$0.0799\,\tfrac{q_1q_2}{\epsilon r^2}$ &
$1.08{\times}10^{-7}$ \\

\addlinespace[1pt]

\pysr &
$0.0796\,\tfrac{q_1q_2}{\epsilon r^2}$ &
$1.77{\times}10^{-8}$ \\

\addlinespace[1pt]

\gplearn &
$0.0793\,\tfrac{q_1q_2}{\epsilon r^2}$ &
$0.000554$ \\

\addlinespace[1pt]

\deap &
$0.2055\,\tfrac{q_1^2}{\epsilon r^2}$ &
$0.046139$ \\

\addlinespace[1pt]

\bsr &
$0.1469
-0.0058\,\tfrac{q_1^4q_2^2}{\epsilon^3r^6}
+0.0043\,\tfrac{q_1q_2^3}{\epsilon^2r^4}
+0.0122\,\tfrac{q_1^5q_2}{\epsilon^3r^6}$ &
$0.0264$ \\

\addlinespace[1pt]

\bms &
$a_0\,\tfrac{q_1q_2}{\epsilon r^2}$ &
$3.70{\times}10^{-17}$ \\

\addlinespace[1pt]

\qlattice &
$-0.7418
\left(
0.5132\tfrac{q_1^2}{\epsilon r^2}
+4.1738{\times}10^{-6}
\right)
\left(
-0.2091\tfrac{q_2}{q_1}
-1.9028{\times}10^{-6}
\right)
-1.5291{\times}10^{-6}$ &
$4.64{\times}10^{-7}$ \\

\addlinespace[1pt]

\dsr &
$\dfrac{q_2^2}{\epsilon r^2}$ &
$13.248890$ \\

\bottomrule
\bottomrule
\end{tabularx}
{
\scriptsize{
\emph{Note}: $a_0$ is a constant learned by \bms. \texttt{RMSE} corresponds to the out-of-sample \texttt{RMSE} computed on a $10\%$ held-out test set.
}
}
\end{table}

\begin{table}[H]
\centering
\scriptsize
\renewcommand{\arraystretch}{1.15}
\caption{Expressions recovered by \vasstmain\ and competing methods for learning \eqnref{eq:feynman-cl} under $\sigma=0.1$.}
\label{tab:feynman-I-12-2-noise0p1-substituted-results}
\begin{tabularx}{\linewidth}{@{}l >{\raggedright\arraybackslash}X r@{}}
\toprule
\toprule
\textbf{Method} & \textbf{Symbolic Expression Learned} & \textbf{\texttt{RMSE}} \\
\midrule

\rowcolor{vasstgray}
\vasstmain &
$\,\underline{0.080738\,\dfrac{q_1q_2}{\epsilon r^2}}
+0.001896\,\dfrac{q_1^2}{\epsilon r^2}
-0.008712$ &
$0.099706$ \\

\addlinespace[1pt]

\operon &
$0.081301\,\tfrac{q_1q_2}{\epsilon r^2}
-0.008000$ &
$0.099706$ \\

\addlinespace[1pt]

\pysr &
$0.211932\,\tfrac{q_1^2}{\epsilon r^2}$ &
$0.109963$ \\

\addlinespace[1pt]

\gplearn &
$0.079296\,\tfrac{q_1q_2}{\epsilon r^2}$ &
$0.099793$ \\

\addlinespace[1pt]

\deap &
$0.190037\,\tfrac{q_1^2}{\epsilon r^2}$ &
$0.117293$ \\

\addlinespace[1pt]

\bsr &
$0.143050
-0.005868\,\tfrac{q_1^4q_2^2}{\epsilon^3r^6}
+0.004369\,\tfrac{q_1q_2^3}{\epsilon^2r^4}
+0.012211\,\tfrac{q_1^5q_2}{\epsilon^3r^6}$ &
$0.103577$ \\

\addlinespace[1pt]

\bms &
$a_0\,\tfrac{q_1q_2}{\epsilon r^2}$ &
$0.099776$ \\

\addlinespace[1pt]

\qlattice &
$-0.612279
\left(
0.121789
-0.711697\tfrac{q_1^2}{\epsilon r^2}
\right)
\left(
0.200641\tfrac{q_2}{q_1}
-0.0385421
\right)
+0.0278299$ &
$0.099699$ \\

\addlinespace[1pt]

\dsr &
$\tfrac{q_2^2}{\epsilon r^2}$ &
$13.247801$ \\

\bottomrule
\bottomrule
\end{tabularx}
{
\scriptsize{
\emph{Note}: $a_0$ is a constant learned by \bms. \texttt{RMSE} corresponds to the out-of-sample \texttt{RMSE} computed on a $10\%$ held-out test set.
}
}
\end{table}
\begin{table}[H]
\centering
\scriptsize
\renewcommand{\arraystretch}{1.15}
\caption{Expressions recovered by \vasstmain\ and competing methods for learning \eqnref{eq:feynman-cl} under $\sigma=0.2$.}
\label{tab:feynman-I-12-2-noise0p2-substituted-results}
\begin{tabularx}{\linewidth}{@{}l >{\raggedright\arraybackslash}X r@{}}
\toprule
\toprule
\textbf{Method} & \textbf{Symbolic Expression Learned} & \textbf{\texttt{RMSE}} \\
\midrule

\rowcolor{vasstgray}
\vasstmain &
$\,\underline{0.081740\,\tfrac{q_1q_2}{\epsilon r^2}}
+0.003941\,\tfrac{q_1^2}{\epsilon r^2}
-0.017138$ &
$0.199411$ \\

\addlinespace[1pt]

\operon &
$0.082857\,\tfrac{q_1q_2}{\epsilon r^2}
-0.016000$ &
$0.199412$ \\

\addlinespace[1pt]

\pysr &
$0.212564\,\tfrac{q_1^2}{\epsilon r^2}$ &
$0.205145$ \\

\addlinespace[1pt]

\gplearn &
$0.080850\,\tfrac{q_1q_2}{\epsilon r^2}$ &
$0.199569$ \\

\addlinespace[1pt]

\deap &
$0.190037\,\tfrac{q_1^2}{\epsilon r^2}$ &
$0.209396$ \\

\addlinespace[1pt]

\bsr &
$0.139186
-0.005879\,\tfrac{q_1^4q_2^2}{\epsilon^3r^6}
+0.004428\,\tfrac{q_1q_2^3}{\epsilon^2r^4}
+0.012213\,\tfrac{q_1^5q_2}{\epsilon^3r^6}$ &
$0.201605$ \\

\addlinespace[1pt]

\bms &
$a_0\,\tfrac{q_1q_2}{\epsilon r^2}$ &
$0.199551$ \\

\addlinespace[1pt]

\qlattice &
$-0.714214
\left(
0.138143
-0.685898\tfrac{q_1^2}{\epsilon r^2}
\right)
\left(
0.182712\tfrac{q_2}{q_1}
-0.0355843
\right)
+0.0228361$ &
$0.199442$ \\

\addlinespace[1pt]

\dsr &
$\tfrac{q_2^2}{\epsilon r^2}$ &
$13.247464$ \\

\bottomrule
\bottomrule
\end{tabularx}
{
\scriptsize{
\emph{Note}: $a_0$ is a constant learned by \bms. \texttt{RMSE} corresponds to the out-of-sample \texttt{RMSE} computed on a $10\%$ held-out test set.
}
}
\end{table}

\newpage

\begin{center}
\underline{For learning \eqnref{eq:feynman-cpe}: $\Delta U = G m_1 m_2(\tfrac{1}{r_2} - \tfrac{1}{r_1})$ in~\hyperref[subsec:feynman-data-study]{Section~\ref{subsec:feynman-data-study}}}.
\end{center}

\begin{table}[H]
\centering
\scriptsize
\renewcommand{\arraystretch}{1.15}
\caption{Expressions recovered by \vasstmain\ and competing methods for learning \eqnref{eq:feynman-cpe} under the noiseless setting.}
\label{tab:feynman-I-13-12-substituted-results}
\begin{tabularx}{\linewidth}{@{}l >{\raggedright\arraybackslash}X r@{}}
\toprule
\toprule
\textbf{Method} & \textbf{Symbolic Expression Learned} & \textbf{\texttt{RMSE}} \\
\midrule

\rowcolor{vasstgray}
\vasstmain &
$\,\underline{1.014839\,\tfrac{Gm_1m_2}{r_2}
-0.998513\,\tfrac{Gm_1m_2}{r_1}}
-0.003078\,\tfrac{r_2}{r_1}
-0.021220$ &
$0.003817$ \\

\addlinespace[1pt]

\operon &
$0.998287\,\tfrac{Gm_1m_2}{r_2}
-0.999754\,\tfrac{Gm_1m_2}{r_1}$ &
$0.000155$ \\

\addlinespace[1pt]

\pysr &
$Gm_1m_2\left(\tfrac{1}{r_2}-\tfrac{1}{r_1}\right)$ &
$8.57{\times}10^{-16}$ \\

\addlinespace[1pt]

\gplearn &
$Gm_1m_2\left(\tfrac{1}{r_2}-\tfrac{1}{r_1}\right)$ &
$0.000000$ \\

\addlinespace[1pt]

\deap &
$-\tfrac{Gm_1m_2}{r_1}
-\tfrac{Gm_1^2}{r_1}
+\tfrac{m_2}{m_1}
+\tfrac{2.8770}{-Gm_1^2/r_1+m_2/m_1}
-4.5144$ &
$0.556532$ \\

\addlinespace[1pt]

\bsr &
$8.54998
+0.826934\,\tfrac{m_2r_2}{m_1r_1}
+0.253004\left(
\tfrac{Gm_1m_2r_2}{r_1^2}
-\tfrac{Gm_1^2}{r_1}
\right)
+0.379514\,\tfrac{m_2^2}{m_1^2}$ &
$1.090235$ \\

\addlinespace[1pt]

\bms &
$-\tfrac{Gm_1m_2}{r_1}
+\tfrac{Gm_1m_2}{a_0r_2}$ &
$0.000000$ \\

\addlinespace[1pt]

\qlattice &
$15.5019
\left(
-0.511191\tfrac{Gm_1^2}{r_1}
-0.0180004
\right)
\left(
0.200553\tfrac{m_2}{m_1}
-0.0060554
\right)
\left(
-0.0263949\tfrac{r_2}{r_1}
-1.70128
\right)
\{
-0.104696\tfrac{r_2}{r_1}
-\tanh\!(-0.114106\tfrac{r_2}{r_1}-1.02825)
-1.43374
\}
+0.161139$ &
$0.072571$ \\

\addlinespace[1pt]

\dsr &
$Gm_1m_2\left(\tfrac{1}{r_2}-\tfrac{1}{r_1}\right)$ &
$8.58{\times}10^{-16}$ \\

\bottomrule
\bottomrule
\end{tabularx}
{
\scriptsize{
\emph{Note}: $a_0$ is a constant learned by \bms. \texttt{RMSE} corresponds to the out-of-sample \texttt{RMSE} computed on a $10\%$ held-out test set.
}
}
\end{table}

\begin{table}[H]
\centering
\scriptsize
\renewcommand{\arraystretch}{1.15}
\caption{Expressions recovered by \vasstmain\ and competing methods for learning \eqnref{eq:feynman-cpe} under $\sigma=0.1$.}
\label{tab:feynman-I-13-12-noise0p1-substituted-results}
\begin{tabularx}{\linewidth}{@{}l >{\raggedright\arraybackslash}X r@{}}
\toprule
\toprule
\textbf{Method} & \textbf{Symbolic Expression Learned} & \textbf{\texttt{RMSE}} \\
\midrule

\rowcolor{vasstgray}
\vasstmain &
$\,\underline{1.014802\,\tfrac{Gm_1m_2}{r_2}
-0.999747\,\tfrac{Gm_1m_2}{r_1}}
-0.004945\,\tfrac{r_2}{r_1}
-0.038218$ &
$0.099843$ \\

\addlinespace[1pt]

\operon &
$0.997823\,\tfrac{Gm_1m_2}{r_2}
-1.003906\,\tfrac{Gm_1m_2}{r_1}
-0.009000$ &
$0.099712$ \\

\addlinespace[1pt]

\pysr &
$Gm_1m_2\left(\tfrac{1}{r_2}-\tfrac{1}{r_1}\right)$ &
$0.099784$ \\

\addlinespace[1pt]

\gplearn &
$Gm_1m_2\left(\tfrac{1}{r_2}-\tfrac{1}{r_1}\right)$ &
$0.099784$ \\

\addlinespace[1pt]

\deap &
$-\tfrac{Gm_1m_2}{r_1}
-\tfrac{Gm_1^2}{r_1}
+\tfrac{m_2}{m_1}
+\tfrac{2.8770}{-Gm_1^2/r_1+m_2/m_1}
-4.5144$ &
$0.564690$ \\

\addlinespace[1pt]

\bsr &
$8.54536
+0.827299\,\tfrac{m_2r_2}{m_1r_1}
+0.253255\left(
\tfrac{Gm_1m_2r_2}{r_1^2}
-\tfrac{Gm_1^2}{r_1}
\right)
+0.379697\,\tfrac{m_2^2}{m_1^2}$ &
$1.094409$ \\

\addlinespace[1pt]

\bms &
$-\tfrac{Gm_1m_2}{r_1}
+\tfrac{a_0Gm_1m_2}{2r_2}$ &
$0.099778$ \\

\addlinespace[1pt]

\qlattice &
$15.5533
\left(
-0.230750\tfrac{Gm_1^2}{r_1}
+0.00218109
\right)
\left(
0.355623\tfrac{m_2}{m_1}
-0.144215
+1.54939\exp\!\left\{0.638983\tfrac{r_2}{r_1}\right\}
\right)
+15.5533\exp\![
-0.734346(0.428116\tfrac{m_2}{m_1}+1)^2
-0.747608(-0.00406677\tfrac{r_2}{r_1}+1)^2
]
-0.100993$ &
$0.126651$ \\

\addlinespace[1pt]

\dsr &
$Gm_1m_2\left(\tfrac{1}{r_2}-\tfrac{1}{r_1}\right)$ &
$0.099784$ \\

\bottomrule
\bottomrule
\end{tabularx}
{
\scriptsize
{
\emph{Note}: $a_0$ is a constant learned by \bms. \texttt{RMSE} corresponds to the out-of-sample \texttt{RMSE} computed on a $10\%$ held-out test set.
}
}
\end{table}

\begin{table}[H]
\centering
\scriptsize
\renewcommand{\arraystretch}{1.15}
\caption{Expressions recovered by \vasstmain\ and competing methods for learning \eqnref{eq:feynman-cpe} under $\sigma=0.2$.}
\label{tab:feynman-I-13-12-noise0p2-substituted-results}
\begin{tabularx}{\linewidth}{@{}l >{\raggedright\arraybackslash}X r@{}}
\toprule
\toprule
\textbf{Method} & \textbf{Symbolic Expression Learned} & \textbf{\texttt{RMSE}} \\
\midrule

\rowcolor{vasstgray}
\vasstmain &
$\,\underline{1.014765\,\tfrac{Gm_1m_2}{r_2}
-1.000981\,\tfrac{Gm_1m_2}{r_1}}
-0.006813\,\tfrac{r_2}{r_1}
-0.055216$ &
$0.199507$ \\

\addlinespace[1pt]

\operon &
$-0.010000
-0.332\left[
\tfrac{
0.707\,Gm_1^2/r_1
-6.275451\,Gm_1m_2/r_1
}
{2.031\,r_2/r_1}
+3.007296\,\tfrac{Gm_1m_2}{r_1}
+\tfrac{
0.670\,Gm_1^2/r_1
}
{
3.388\,Gm_1^2/r_1
+2.718552\,m_2^2/m_1^2
}
\right]$ &
$0.199207$ \\

\addlinespace[1pt]

\pysr &
$Gm_1m_2\left(\tfrac{1}{r_2}-\tfrac{1}{r_1}\right)$ &
$0.199569$ \\

\addlinespace[1pt]

\gplearn &
$Gm_1m_2\left(\tfrac{1}{r_2} - \tfrac{1}{r_1}\right)$ &
$0.199569$ \\

\addlinespace[1pt]

\deap &
$-\tfrac{Gm_1m_2}{r_1}
-\tfrac{Gm_1^2}{r_1}
+\tfrac{m_2}{m_1}
+\tfrac{2.8770}{-Gm_1^2/r_1+m_2/m_1}
-4.5144$ &
$0.589860$ \\

\addlinespace[1pt]

\bsr &
$8.54074
+0.827664\,\tfrac{m_2r_2}{m_1r_1}
+0.253507\left(
\tfrac{Gm_1m_2r_2}{r_1^2}
-\tfrac{Gm_1^2}{r_1}\right)
+0.379881\,\tfrac{m_2^2}{m_1^2}$ &
$1.107579$ \\

\addlinespace[1pt]

\bms &
$-\tfrac{Gm_1m_2}{r_1}
+\tfrac{a_0Gm_1m_2}{2r_2}$ &
$0.199557$ \\

\addlinespace[1pt]

\qlattice &
$2.11181
\left(
-0.270397\tfrac{Gm_1^2}{r_1}
-0.00788965
\right)
\left(
-0.0392111\tfrac{r_2}{r_1}
+1.45354
+1.86691\exp\!\left\{0.281573\tfrac{r_2}{r_1}\right\}
\right)
\sqrt{
(
\tfrac{m_2}{m_1}
-0.0186931
)^2}
+0.238388$ &
$0.246633$ \\

\addlinespace[1pt]

\dsr &
$Gm_1m_2\left(\tfrac{1}{r_2}-\tfrac{1}{r_1}\right)$ &
$0.199569$ \\

\bottomrule
\bottomrule
\end{tabularx}
{
\scriptsize
{
\emph{Note}: $a_0$ is a constant learned by \bms. \texttt{RMSE} corresponds to the out-of-sample \texttt{RMSE} computed on a $10\%$ held-out test set.
}
}
\end{table}
\newpage

\begin{center}
\underline{For learning \eqnref{eq:feynman-fce}: $F = q(E_f + Bv \sin \theta)$ in~\hyperref[subsec:feynman-data-study]{Section~\ref{subsec:feynman-data-study}}}.
\end{center}

\begin{table}[H]
\centering
\scriptsize
\renewcommand{\arraystretch}{1.15}
\caption{Expressions recovered by \vasstmain\ and competing methods for learning \eqnref{eq:feynman-fce} under the noiseless setting.}
\label{tab:feynman-I-12-11-substituted-results}
\begin{tabularx}{\linewidth}{@{}l >{\raggedright\arraybackslash}X r@{}}
\toprule
\toprule
\textbf{Method} & \textbf{Symbolic Expression Learned} & \textbf{\texttt{RMSE}} \\
\midrule

\rowcolor{vasstgray}
\vasstmain &
$\,\underline{0.979592\,qE_f
+1.066203\,qBv\sin\theta}
-0.002652\,\theta
+0.001420$ &
$0.001574$ \\

\addlinespace[1pt]

\operon &
$0.988416\,qE_f
+0.439639\,qBv\sin(0.982\theta)
+0.565508\,qBv\sin(1.014\theta)
+0.074$ &
$0.277625$ \\

\addlinespace[1pt]

\pysr &
$qE_f+qBv\sin\theta$ &
$0.000000$ \\

\addlinespace[1pt]

\gplearn &
$qE_f+qBv\sin\theta$ &
$0.000000$ \\

\addlinespace[1pt]

\deap &
$\left\{
qBv
+4\sin\theta
+\sin(\sin\theta)
+\sin(\sin(\sin\theta))
\right\}\sin\theta$ &
$5.511325$ \\

\addlinespace[1pt]

\bsr &
$14.9902
+0.245458\left[
\left(\tfrac{Bv}{E_f}+\theta^2\right)\sin\theta
+\sin\left(\tfrac{Bv}{E_f}\right)
\right]
+0.182367\,\tfrac{Bv}{E_f}\sin(qE_f\theta)
-2.18703\left\{
\tfrac{Bv}{E_f}
+\sin\left(\tfrac{Bv\theta}{E_f}\right)
\right\}$ &
$19.116390$ \\

\addlinespace[1pt]

\bms &
$qBv\sin\theta+a_0^2qE_f$ &
$0.000000$ \\

\addlinespace[1pt]

\qlattice &
$95.8349(0.0317161qE_f-1.20835)
\exp\![
-2\exp\!\{
-1.83685(0.0751253\tfrac{Bv}{E_f}-1)^2
-23.7706(0.125944\theta-1)^2
\}
-2\exp\!\{
-1.96799(0.634099\theta-1)^2
\}
]
+74.3577$ &
$13.875759$ \\

\addlinespace[1pt]

\dsr &
$qE_f+qBv\sin\theta$ &
$0.000000$ \\

\bottomrule
\bottomrule
\end{tabularx}
{
\scriptsize
{
\emph{Note}: $a_0$ is a constant learned by \bms. \texttt{RMSE} corresponds to the out-of-sample \texttt{RMSE} computed on a $10\%$ held-out test set.
}
}
\end{table}

\begin{table}[H]
\centering
\scriptsize
\renewcommand{\arraystretch}{1.15}
\caption{Expressions recovered by \vasstmain\ and competing methods for learning \eqnref{eq:feynman-fce} under $\sigma=0.1$.}
\label{tab:feynman-I-12-11-noise0p1-substituted-results}
\begin{tabularx}{\linewidth}{@{}l >{\raggedright\arraybackslash}X r@{}}
\toprule
\toprule
\textbf{Method} & \textbf{Symbolic Expression Learned} & \textbf{\texttt{RMSE}} \\
\midrule

\rowcolor{vasstgray}
\vasstmain &
$\,\underline{0.979179\,qE_f
+1.066295\,qBv\sin\theta}
-0.002220\,\theta
+0.001432$ &
$0.101577$ \\

\addlinespace[1pt]

\operon &
$1.003148\,qE_f
+1.002258\,qBv\sin\theta
-0.043937\sin(0.883\theta)
-0.012000$ &
$0.101138$ \\

\addlinespace[1pt]

\pysr &
$qE_f+qBv\sin\theta$ &
$0.098647$ \\

\addlinespace[1pt]

\gplearn &
$qE_f+qBv\sin\theta$ &
$0.098647$ \\

\addlinespace[1pt]

\deap &
$\left\{
qBv
+4\sin\theta
+\sin(\sin\theta)
+\sin(\sin(\sin\theta))
\right\}\sin\theta$ &
$5.513179$ \\

\addlinespace[1pt]

\bsr &
$14.9833
+0.245459\left[
\left(\tfrac{Bv}{E_f}+\theta^2\right)\sin\theta
+\sin\left(\tfrac{Bv}{E_f}\right)
\right]
+0.182854\,\tfrac{Bv}{E_f}\sin(qE_f\theta)
-2.18523\left\{
\tfrac{Bv}{E_f}
+\sin\left(\tfrac{Bv\theta}{E_f}\right)
\right\}$ &
$19.119963$ \\

\addlinespace[1pt]

\bms &
$qE_f+qBv\sin\theta$ &
$0.098647$ \\

\addlinespace[1pt]

\qlattice &
$255.963
\exp\!\left[
-5.76788\left\{
0.446894
-\exp\!\left[-2.16672(1-0.213868\theta)^2\right]
\right\}^2
-1.78239(1-0.0236962qE_f)^2
\right]
-35.3727$ &
$14.287933$ \\

\addlinespace[1pt]

\dsr &
$qE_f+qBv\sin\theta$ &
$0.098647$ \\

\bottomrule
\bottomrule
\end{tabularx}
{
\scriptsize
{
\emph{Note}: \texttt{RMSE} corresponds to the out-of-sample \texttt{RMSE} computed on a $10\%$ held-out test set.
}
}
\end{table}

\begin{table}[H]
\centering
\scriptsize
\renewcommand{\arraystretch}{1.15}
\caption{Expressions recovered by \vasstmain\ and competing methods for learning \eqnref{eq:feynman-fce} under $\sigma=0.2$.}
\label{tab:feynman-I-12-11-noise0p2-substituted-results}
\begin{tabularx}{\linewidth}{@{}l >{\raggedright\arraybackslash}X r@{}}
\toprule
\toprule
\textbf{Method} & \textbf{Symbolic Expression Learned} & \textbf{\texttt{RMSE}} \\
\midrule

\rowcolor{vasstgray}
\vasstmain &
$\,\underline{0.978765\,qE_f
+1.066388\,qBv\sin\theta}
-0.001789\,\theta
+0.001444$ &
$0.201587$ \\

\addlinespace[1pt]

\operon &
$0.995742\,qE_f
+0.427023\,qBv\sin(0.989\theta)
+0.576095\,qBv\sin(1.008\theta)
+0.033000$ &
$0.216361$ \\

\addlinespace[1pt]

\pysr &
$1.000169\,qE_f+qBv\sin\theta$ &
$0.197290$ \\

\addlinespace[1pt]

\gplearn &
$qE_f+qBv\sin\theta$ &
$0.197294$ \\

\addlinespace[1pt]

\deap &
$\left\{
qBv
+4\sin\theta
+\sin(\sin\theta)
+\sin(\sin(\sin\theta))
\right\}\sin\theta$ &
$5.516797$ \\

\addlinespace[1pt]

\bsr &
$14.9764
+0.245460\left[
\left(\tfrac{Bv}{E_f}+\theta^2\right)\sin\theta
+\sin\left(\tfrac{Bv}{E_f}\right)
\right]
+0.183341\,\tfrac{Bv}{E_f}\sin(qE_f\theta)
-2.18344\left\{
\tfrac{Bv}{E_f}
+\sin\left(\tfrac{Bv\theta}{E_f}\right)
\right\}$ &
$19.124043$ \\

\addlinespace[1pt]

\bms &
$qE_f+qBv\sin\theta$ &
$0.197294$ \\

\addlinespace[1pt]

\qlattice &
$71.7853
-111.307\exp\![
-2\{
0.0512725\theta
+\exp\![
-0.887081(1-0.0224396qE_f)^2
-5.71614(
-0.022938\tfrac{Bv}{E_f}
+0.150368\theta
-1
)^2
]
-0.225009
\}^2
-2\exp\!\left\{
-1.30897(0.61342\theta-1)^2
\right\}
]$ &
$13.574922$ \\

\addlinespace[1pt]

\dsr &
$qE_f+qBv\sin\theta$ &
$0.197294$ \\

\bottomrule
\bottomrule
\end{tabularx}
{
\scriptsize
{
\emph{Note}: \texttt{RMSE} corresponds to the out-of-sample \texttt{RMSE} computed on a $10\%$ held-out test set.
}
}
\end{table}

\newpage

\begin{center}
\underline{For learning \eqnref{eq:feynman-ftc}: $P = \frac{\kappa A(T_2-T_1)}{d}$ in~\hyperref[subsec:feynman-data-study]{Section~\ref{subsec:feynman-data-study}}}.
\end{center}

\begin{table}[H]
\centering
\scriptsize
\renewcommand{\arraystretch}{1.15}
\caption{Expressions recovered by \vasstmain\ and competing methods for learning \eqnref{eq:feynman-ftc} under the noiseless setting.}
\label{tab:feynman-II-2-42-substituted-results}
\begin{tabularx}{\linewidth}{@{}l >{\raggedright\arraybackslash}X r@{}}
\toprule
\toprule
\textbf{Method} & \textbf{Symbolic Expression Learned} & \textbf{\texttt{RMSE}} \\
\midrule

\rowcolor{vasstgray}
\vasstmain &
$\,\underline{
-0.805371\,\tfrac{AT_1\kappa}{d}
+0.998520\,\tfrac{A\kappa T_2}{d}}
-0.000254\left\{
\tfrac{1}{\sqrt A\,T_1\kappa}
-\tfrac{d/\sqrt A-T_2/T_1}{\sqrt A\,T_1\kappa-T_2/T_1}
\right\}
+0.014782$ &
$0.012602$ \\

\addlinespace[1pt]

\operon &
$6.66730\,\tfrac{AT_1\kappa}{d}
\left(
0.150000\tfrac{T_2}{T_1}
-0.149937
\right)$ &
$6.30{\times}10^{-5}$ \\

\addlinespace[1pt]

\pysr &
$\tfrac{A\kappa(T_2-T_1)}{d}$ &
$1.59{\times}10^{-16}$ \\

\addlinespace[1pt]

\gplearn &
$\tfrac{A\kappa(T_2-T_1)}{d}$ &
$1.71{\times}10^{-16}$ \\

\addlinespace[1pt]

\deap &
$0.675078\,\tfrac{A\kappa T_2}{d}$ &
$0.199012$ \\

\addlinespace[1pt]

\bsr &
$1.06626
-0.113972\,\tfrac{dT_2}{\sqrt A\,T_1}
+0.031160\left(
\tfrac{d\kappa T_2}{\sqrt A}
-\sqrt A\,T_1\kappa
\right)
+0.069105\,\tfrac{\sqrt A\,T_2^2}{dT_1^2}$ &
$0.178899$ \\

\addlinespace[1pt]

\bms &
$a_0\,\tfrac{A\kappa T_2}{d}$ &
$0.074917$ \\

\addlinespace[1pt]

\qlattice &
$-3.01307
\left(
0.262817
-0.263207\tfrac{T_2}{T_1}
\right)
\left(
-0.597221\sqrt A\,T_1\kappa
-0.00029899
\right)
\tfrac{
0.000501386\,d/\sqrt A
-0.584919
}{
0.274851\,d/\sqrt A
+0.00563991
}
-0.000457$ &
$0.000277$ \\

\addlinespace[1pt]

\dsr &
$\tfrac{A\kappa(T_2-T_1)}{d}$ &
$2.18{\times}10^{-16}$ \\

\bottomrule
\bottomrule
\end{tabularx}
{
\scriptsize
{
\emph{Note}: $a_0$ is a constant learned by \bms. \texttt{RMSE} corresponds to the out-of-sample \texttt{RMSE} computed on a $10\%$ held-out test set.
}
}
\end{table}

\begin{table}[H]
\centering
\scriptsize
\renewcommand{\arraystretch}{1.15}
\caption{Expressions recovered by \vasstmain\ and competing methods for learning \eqnref{eq:feynman-ftc} under $\sigma=0.1$.}
\label{tab:feynman-II-2-42-noise0p1-substituted-results}
\begin{tabularx}{\linewidth}{@{}l >{\raggedright\arraybackslash}X r@{}}
\toprule
\toprule
\textbf{Method} & \textbf{Symbolic Expression Learned} & \textbf{\texttt{RMSE}} \\
\midrule

\rowcolor{vasstgray}
\vasstmain &
$\,
\underline{0.871347\tfrac{A\kappa T_2}{d}}
-0.019212\,\sqrt A\,T_1\kappa
-0.046585\left(
\sqrt A\,T_1\kappa
-1
-\tfrac{T_2}{T_1}
\right)
-0.293006$ &
$0.101254$ \\

\addlinespace[2pt]

\operon &
$1.007879\,\tfrac{A\kappa T_2}{d}
-1.015839\,\tfrac{AT_1\kappa}{d}
+0.249325\,\tfrac{\sqrt A}{d}
-0.063014$ &
$0.099538$ \\

\addlinespace[2pt]

\pysr &
$\tfrac{A\kappa(T_2-1.001694T_1)}{d}$ &
$0.099782$ \\

\addlinespace[2pt]

\gplearn &
$0.900000\,\tfrac{A\kappa(T_2-0.720000T_1)}{d}$ &
$0.111707$ \\

\addlinespace[2pt]

\deap &
$0.675078\,\tfrac{A\kappa T_2}{d}$ &
$0.223923$ \\

\addlinespace[2pt]

\bsr &
$1.09040
-0.117204\,\tfrac{dT_2}{\sqrt A\,T_1}
+0.031340\left(
\tfrac{d\kappa T_2}{\sqrt A}
-\sqrt A\,T_1\kappa
\right)
+0.070823\,\tfrac{\sqrt A\,T_2^2}{dT_1^2}$ &
$0.206912$ \\

\addlinespace[2pt]

\bms &
$a_0\,\tfrac{A\kappa T_2}{d}$ &
$0.126948$ \\

\addlinespace[2pt]

\qlattice &
$-1.19133
\left(
0.409023
-0.399587\tfrac{T_2}{T_1}
\right)
\left(
1.18272\sqrt A\,T_1\kappa
-0.0111254
\right)
\exp\!\left(
-0.203901\tfrac{d}{\sqrt A}
\right)
+0.0143714$ &
$0.101341$ \\

\addlinespace[2pt]

\dsr &
$\tfrac{A\kappa(T_2-T_1)}{d}$ &
$0.099784$ \\

\bottomrule
\bottomrule
\end{tabularx}
{
\scriptsize
{
\emph{Note}: $a_0$ is a constant learned by \bms. \texttt{RMSE} corresponds to the out-of-sample \texttt{RMSE} computed on a $10\%$ held-out test set.
}
}
\end{table}

\begin{table}[H]
\centering
\scriptsize
\renewcommand{\arraystretch}{1.15}
\caption{Expressions recovered by \vasstmain\ and competing methods for learning \eqnref{eq:feynman-ftc} under $\sigma=0.2$.}
\label{tab:feynman-II-2-42-noise0p2-substituted-results}
\begin{tabularx}{\linewidth}{@{}l >{\raggedright\arraybackslash}X r@{}}
\toprule
\toprule
\textbf{Method} & \textbf{Symbolic Expression Learned} & \textbf{\texttt{RMSE}} \\
\midrule

\rowcolor{vasstgray}
\vasstmain &
$\,\underline{
0.903072\,\tfrac{A\kappa T_2}{d}}
-0.094430\,\sqrt A\,T_1\kappa
-0.661969\,\tfrac{T_1}{T_2}
+0.112071$ &
$0.199821$ \\

\addlinespace[2pt]

\operon &
$-0.021000
+0.100\left[
2.053\tfrac{\sqrt{A}}{d}\left\{
1.071212\,\tfrac{T_1}{T_2}
+6.962264\,\sqrt A\,T_1\kappa
\right\}
\left(
0.707\,\tfrac{T_2}{T_1}
-0.693
\right)
-0.002139\left(\tfrac{d}{\sqrt A}\right)^3
\right]$ &
$0.198948$ \\

\addlinespace[2pt]

\pysr &
$\tfrac{A\kappa(T_2-1.003386T_1)}{d}$ &
$0.199565$ \\

\addlinespace[2pt]

\gplearn &
$\tfrac{A\kappa(T_2-0.900000T_1)}{d}$ &
$0.203313$ \\

\addlinespace[2pt]

\deap &
$0.675078\,\tfrac{A\kappa T_2}{d}$ &
$0.283886$ \\

\addlinespace[2pt]

\bsr &
$1.11454
-0.120437\,\tfrac{dT_2}{\sqrt A\,T_1}
+0.031521\left(
\tfrac{d\kappa T_2}{\sqrt A}
-\sqrt A\,T_1\kappa
\right)
+0.072541\,\tfrac{\sqrt A\,T_2^2}{dT_1^2}$ &
$0.271019$ \\

\addlinespace[2pt]

\bms &
$\tfrac{A\kappa T_2}{a_0d}$ &
$0.215714$ \\

\addlinespace[2pt]

\qlattice &
$2.2079
(
0.36293\tfrac{T_2}{T_1}
-0.419621
)
[
0.167743\sqrt A\,T_1\kappa
+(
0.0940613\tfrac{d}{\sqrt A}
-0.639193
)
(
-0.581647\sqrt A\,T_1\kappa
+0.248409\tfrac{d}{\sqrt A}
-0.775808
)
+0.0568153
]
+0.0549773$ &
$0.200905$ \\

\addlinespace[2pt]

\dsr &
$\tfrac{A\kappa(T_2-T_1)}{d}$ &
$0.199569$ \\

\bottomrule
\bottomrule
\end{tabularx}
{
\scriptsize
{
\emph{Note}: $a_0$ is a constant learned by \bms. \texttt{RMSE} corresponds to the out-of-sample \texttt{RMSE} computed on a $10\%$ held-out test set.
}
}
\end{table}

\newpage

\begin{center}
\underline{For learning \eqnref{eq:feynman-ada}: $f = \beta^{\dagger}(1 + \alpha^{\dagger}\cos \theta)$ in~\hyperref[subsec:feynman-data-study]{Section~\ref{subsec:feynman-data-study}}}.
\end{center}

\begin{table}[H]
\centering
\scriptsize
\renewcommand{\arraystretch}{1.15}
\caption{Expressions recovered by \vasstmain\ and competing methods for learning \eqnref{eq:feynman-ada} under the noiseless setting.}
\label{tab:feynman-III-17-37-substituted-results}
\begin{tabularx}{\linewidth}{@{}l >{\raggedright\arraybackslash}X r@{}}
\toprule
\toprule
\textbf{Method} & \textbf{Symbolic Expression Learned} & \textbf{\texttt{RMSE}} \\
\midrule

\rowcolor{vasstgray}
\vasstmain &
$\,\underline{
1.023842\,\beta^\dagger\{1+\alpha^\dagger\cos\theta\}}
+0.001029\,\theta
+0.000125\cos(2\alpha^\dagger)
-0.000899$ &
$0.001349$ \\

\addlinespace[1pt]

\operon &
$1.91884\,\beta^\dagger
\left\{
0.520\,\alpha^\dagger\cos(0.994\theta)
+0.002\,\theta
+0.366
\right\}
+0.417217\,\beta^\dagger\cos(0.594\theta)\cos(0.583\theta)
+0.004$ &
$0.133614$ \\

\addlinespace[1pt]

\pysr &
$\beta^\dagger\{1+\alpha^\dagger\cos\theta\}$ &
$0.000000$ \\

\addlinespace[1pt]

\gplearn &
$\beta^\dagger+\alpha^\dagger\beta^\dagger\cos\theta$ &
$1.39{\times}10^{-15}$ \\

\addlinespace[1pt]

\deap &
$\cos\theta
\left\{
\alpha^\dagger\beta^\dagger
+3\cos\theta
+\cos(1.020634\theta)
\right\}$ &
$2.220867$ \\

\addlinespace[1pt]

\bsr &
$88.5628
-11.4517\,\theta
+0.643483\cos(\alpha^\dagger)
+0.065444\left\{
\cos\big((\beta^\dagger)^2\big)
+2\beta^\dagger\theta
\right\}$ &
$7.863577$ \\

\addlinespace[1pt]

\bms &
$\beta^\dagger+\alpha^\dagger\beta^\dagger\cos\theta$ &
$1.80{\times}10^{-15}$ \\

\addlinespace[1pt]

\qlattice &
$-225.967
[
-0.0466202\theta
-0.307711
+\exp\!\{
-2\exp\!(
-0.302799(0.0450206\alpha^\dagger+1)^2
-11.5544(0.164397\theta-1)^2
)
\}
]
\exp\!\{
-2.16269(0.14257\beta^\dagger-1)^2
-2.24866(0.0880658\alpha^\dagger-1)^2
\}
+0.330437$ &
$0.478315$ \\

\addlinespace[1pt]

\dsr &
$\beta^\dagger+\alpha^\dagger\beta^\dagger\cos\theta$ &
$1.41{\times}10^{-15}$ \\

\bottomrule
\bottomrule
\end{tabularx}
{
\scriptsize
{
\emph{Note}: \texttt{RMSE} corresponds to the out-of-sample \texttt{RMSE} computed on a $10\%$ held-out test set.
}
}
\end{table}

\begin{table}[H]
\centering
\scriptsize
\renewcommand{\arraystretch}{1.15}
\caption{Expressions recovered by \vasstmain\ and competing methods for learning \eqnref{eq:feynman-ada} under $\sigma=0.1$.}
\label{tab:feynman-III-17-37-noise0p1-substituted-results}
\begin{tabularx}{\linewidth}{@{}l >{\raggedright\arraybackslash}X r@{}}
\toprule
\toprule
\textbf{Method} & \textbf{Symbolic Expression Learned} & \textbf{\texttt{RMSE}} \\
\midrule

\rowcolor{vasstgray}
\vasstmain &
$\,\underline{
1.024106\,\beta^\dagger\{1+\alpha^\dagger\cos\theta\}}
+0.000106\,\theta
+0.000302\cos(2\alpha^\dagger)
-0.000092$ &
$0.108279$ \\

\addlinespace[1pt]

\operon &
$1.015367\,\alpha^\dagger\beta^\dagger\cos(0.996\theta)
+1.46096\,\beta^\dagger
\cos\!\left[
\cos\left\{
-0.872
+0.607\theta
+\cos(0.291\theta)
\right\}
\right]
+0.015$ &
$0.158603$ \\

\addlinespace[1pt]

\pysr &
$\beta^\dagger\{\alpha^\dagger\cos\theta+0.999743\}$ &
$0.099780$ \\

\addlinespace[1pt]

\gplearn &
$\beta^\dagger+\alpha^\dagger\beta^\dagger\cos\theta$ &
$0.099784$ \\

\addlinespace[1pt]

\deap &
$\cos\theta
\left\{
\alpha^\dagger\beta^\dagger
+3\cos\theta
+\cos(1.020634\theta)
\right\}$ &
$2.225102$ \\

\addlinespace[1pt]

\bsr &
$88.5585
-11.4506\,\theta
+0.646516\cos(\alpha^\dagger)
+0.065331\left\{
\cos\big((\beta^\dagger)^2\big)
+2\beta^\dagger\theta
\right\}$ &
$7.866517$ \\

\addlinespace[1pt]

\bms &
$\beta^\dagger+\alpha^\dagger\beta^\dagger\cos\theta$ &
$0.099784$ \\

\addlinespace[1pt]

\qlattice &
$-39.2975
(
0.00653208
-0.317349\beta^\dagger
)
(
0.108825
-0.269264\alpha^\dagger
)
[
-0.00439472\beta^\dagger
-0.678794
+\exp\!\{
-2\exp\![
-15.4029(1-0.159096\theta)^2
-0.419709(0.0349345\alpha^\dagger+1)^2
]
\}
]
+1.02986$ &
$0.467908$ \\

\addlinespace[1pt]

\dsr &
$\beta^\dagger+\alpha^\dagger\beta^\dagger\cos\theta$ &
$0.099784$ \\

\bottomrule
\bottomrule
\end{tabularx}
{
\scriptsize
{
\emph{Note}: \texttt{RMSE} corresponds to the out-of-sample \texttt{RMSE} computed on a $10\%$ held-out test set.
}
}
\end{table}

\begin{table}[H]
\centering
\scriptsize
\renewcommand{\arraystretch}{1.15}
\caption{Expressions recovered by \vasstmain\ and competing methods for learning \eqnref{eq:feynman-ada} under $\sigma=0.2$.}
\label{tab:feynman-III-17-37-noise0p2-substituted-results}
\begin{tabularx}{\linewidth}{@{}l >{\raggedright\arraybackslash}X r@{}}
\toprule
\toprule
\textbf{Method} & \textbf{Symbolic Expression Learned} & \textbf{\texttt{RMSE}} \\
\midrule

\rowcolor{vasstgray}
\vasstmain &
$\,\underline{
1.024369\,\beta^\dagger\{1+\alpha^\dagger\cos\theta\}}
+0.001094\,\theta
+0.00048\cos(2\alpha^\dagger)
-0.000952$ &
$0.200414$ \\

\addlinespace[1pt]

\operon &
$0.995938\,\beta^\dagger
+1.000753\,\alpha^\dagger\beta^\dagger\cos\theta
+0.007000$ &
$0.199415$ \\

\addlinespace[1pt]

\pysr &
$\beta^\dagger\{\alpha^\dagger\cos\theta+0.999471\}$ &
$0.199560$ \\

\addlinespace[1pt]

\gplearn &
$\beta^\dagger+\alpha^\dagger\beta^\dagger\cos\theta$ &
$0.199569$ \\

\addlinespace[1pt]

\deap &
$\cos\theta
\left\{
\alpha^\dagger\beta^\dagger
+3\cos\theta
+\cos(1.020634\theta)
\right\}$ &
$2.233791$ \\

\addlinespace[1pt]

\bsr &
$88.5542
-11.4494\,\theta
+0.649548\cos(\alpha^\dagger)
+0.065219\left\{
\cos\big((\beta^\dagger)^2\big)
+2\beta^\dagger\theta
\right\}$ &
$7.870720$ \\

\addlinespace[1pt]

\bms &
$\beta^\dagger+\alpha^\dagger\beta^\dagger\cos\theta$ &
$0.199569$ \\

\addlinespace[1pt]

\qlattice &
$-41.1561
(0.691287\beta^\dagger-0.000452184)
[
-0.000217418\beta^\dagger
-0.285239\theta
+\tanh\!\{
(6.8735-0.874459\theta)
(-0.0417498\alpha^\dagger-0.324422)
\}
+2.20758
]
+0.0668375$ &
$0.416285$ \\

\addlinespace[1pt]

\dsr &
$\beta^\dagger+\alpha^\dagger\beta^\dagger\cos\theta$ &
$0.199569$ \\

\bottomrule
\bottomrule
\end{tabularx}
{
\scriptsize
{
\emph{Note}: \texttt{RMSE} corresponds to the out-of-sample \texttt{RMSE} computed on a $10\%$ held-out test set.
}
}
\end{table}

\newpage

\subsection{Compute Runtimes for \texorpdfstring{\vasst}{VaSST} and Competing Bayesian \sr\ Methods}
\label{app:runtime_bayes_final}
\begin{table*}[!htp]
\centering
\small
\renewcommand{\arraystretch}{1.1}
\caption{Single-run compute runtimes (in seconds) of \vasstmain\ and competing Bayesian methods for learning the Feynman equations in \hyperref[subsec:feynman-data-study]{Section~\ref{subsec:feynman-data-study}} under different noise settings.}
\label{tab:time_all_vertical}
\begin{tabularx}{\textwidth}{@{}>{\raggedright\arraybackslash}X l c c c@{}}
\toprule
\toprule
\textbf{Feynman Equation} & \textbf{Method} & $\boldsymbol{\textbf{Noiseless}}$ & ${\sigma^{2}=0.1^{2}}$ & ${\sigma^{2}=0.2^{2}}$ \\
\midrule

\multirow{3}{=}{\eqnref{eq:feynman-cl}:\;$F=0.08\tfrac{q_1q_2}{\epsilon r^{2}}$}
& \cellcolor{vasstgray}\vasstmain & \cellcolor{vasstgray}$96.41$ & \cellcolor{vasstgray}$94.45$  & \cellcolor{vasstgray}$96.04$ \\
& \bms   & $143.55$  & $144.23$  & $144.48$ \\
& \bsr   & $319.96$ & $315.43$ & $328.92$ \\

\midrule

\multirow{3}{=}{\eqnref{eq:feynman-cpe}:\;$\Delta U=Gm_1m_2(\tfrac{1}{r_2} - \tfrac{1}{r_1})$}
& \cellcolor{vasstgray}\vasstmain & \cellcolor{vasstgray}$129.45$ & \cellcolor{vasstgray}$108.75$ & \cellcolor{vasstgray}$109.66$ \\
& \bms   & $155.50$ & $224.50$ & $232.54$ \\
& \bsr   & $242.06$ & $240.68$ & $244.30$ \\

\midrule

\multirow{3}{=}{\eqnref{eq:feynman-fce}:\;$F=q(E_f+vB\sin\theta)$}
& \cellcolor{vasstgray}\vasstmain & \cellcolor{vasstgray}$121.54$ & \cellcolor{vasstgray}$114.00$ & \cellcolor{vasstgray}$109.91$ \\
& \bms   & $184.79$ & $222.07$ & $340.54$ \\
& \bsr   & $246.79$ & $252.23$ & $256.57$ \\

\midrule

\multirow{3}{=}{\eqnref{eq:feynman-ftc}:\;$P = \tfrac{\kappa A(T_2-T_1)}{d}$}
& \cellcolor{vasstgray}\vasstmain & \cellcolor{vasstgray}$108.87$ & \cellcolor{vasstgray}$117.82$ & \cellcolor{vasstgray}$115.23$ \\
& \bms   & $143.06$  & $141.42$  & $144.24$ \\
& \bsr   & $236.09$ & $249.89$ & $246.77$ \\

\midrule

\multirow{3}{=}{\eqnref{eq:feynman-ada}:\;$f = \beta^{\dagger}(1 + \alpha^{\dagger}\cos \theta)$}
& \cellcolor{vasstgray}\vasstmain & \cellcolor{vasstgray}$104.72$ & \cellcolor{vasstgray}$116.34$ & \cellcolor{vasstgray}$107.88$ \\
& \bms   & $261.41$ & $268.45$ & $264.55$ \\
& \bsr   & $252.54$ & $257.96$ & $259.78$ \\

\bottomrule
\bottomrule
\end{tabularx}

\vspace{0.35em}

\parbox{\textwidth}{
\footnotesize
\emph{Note}: Runtimes may vary depending on the machine architecture and computational environment. The reported runtimes were obtained on a MacBook Air with an Apple M2 chip and 8GB RAM.
}
\end{table*}

\newpage

\subsection{Top Ranked Symbolic Expressions Learned by \texorpdfstring{\vasst}{VaSST} with respect to LMPSE}
\label{app:top-expressions-vasst-feynman-equations}

\begin{center}
\underline{For learning \eqnref{eq:feynman-cl}: $F = 0.08\tfrac{q_1 q_2}{\epsilon r^{2}}$ in~\hyperref[subsec:feynman-data-study]{Section~\ref{subsec:feynman-data-study}}}.
\end{center}

\begingroup
\scriptsize
\renewcommand{\arraystretch}{1.10}

\begin{xltabular}{\linewidth}{@{}c >{\raggedright\arraybackslash}X r r@{}}
\caption{{Top $\mathsf{r}=10$ ranked symbolic expressions recovered by a single run of \vasstmain\ under the noiseless setting for learning \eqnref{eq:feynman-cl}. Expressions are ranked by $\mathrm{LMPSE}$ among $H=2000$ sampled hard symbolic trees.}}
\label{tab:vasst-top10-I-12-2-mapped-noiseless}\\

\toprule
\toprule
\textbf{Rank} & \textbf{\vasstmain\ Expression Learned} & \textbf{\texttt{RMSE}} & $\mathrm{LMPSE}$ \\
\midrule
\endfirsthead

\multicolumn{4}{@{}l}{\scriptsize\emph{Table~\thetable\ continued from previous page.}}\\
\toprule
\toprule
\textbf{Rank} & \textbf{\vasstmain\ Expression Learned} & \textbf{\texttt{RMSE}} & $\mathrm{LMPSE}$ \\
\midrule
\endhead

\midrule
\multicolumn{4}{r@{}}{\scriptsize\emph{Continued on next page}}\\
\endfoot

\bottomrule
\bottomrule
\addlinespace[2pt]
\multicolumn{4}{@{}p{\linewidth}@{}}{
\scriptsize
\emph{Note}: \texttt{RMSE} corresponds to the out-of-sample \texttt{RMSE} computed on a $10\%$ held-out test set with the LMPSE selected symbolic model.
}\\
\endlastfoot

1 &
$\displaystyle
-0.0003
- \tfrac{0.0001q_1^{2}}{\epsilon r^{2}}
+ {\tfrac{0.0797q_1q_2}{\epsilon r^{2}}}$ &
$0.0001$ & $3365.2575$ \\

\addlinespace[1pt]

2 &
$\displaystyle
-0.0003
- \tfrac{2.4261{\times}10^{-5}q_2}{q_1}
+ \tfrac{0.0797q_1q_2}{\epsilon r^{2}}$ &
$0.0001$ & $3364.8864$ \\

\addlinespace[1pt]

3 &
$\displaystyle
-0.0003
- \tfrac{0.0001q_1^{2}}{\epsilon r^{2}}
+ \tfrac{0.0797q_1q_2}{\epsilon r^{2}}$ &
$0.0001$ & $3363.8731$ \\

\addlinespace[1pt]

4 &
$\displaystyle
-0.0003
- \tfrac{2.2908{\times}10^{-5}q_2}{q_1}
+ \tfrac{0.0797q_1q_2}{\epsilon r^{2}}$ &
$0.0001$ & $3363.5020$ \\

\addlinespace[1pt]

5 &
$\displaystyle
-0.0003
- \tfrac{2.2908{\times}10^{-5}q_2}{q_1}
+ \tfrac{0.0797q_1q_2}{\epsilon r^{2}}$ &
$0.0001$ & $3363.5020$ \\

\addlinespace[1pt]

6 &
$\displaystyle
-0.0003
+ \tfrac{0.0796q_1q_2}{\epsilon r^{2}}
+ \tfrac{2.8973{\times}10^{-5}q_2^{2}}{\epsilon r^{2}}$ &
$0.0001$ & $3362.8026$ \\

\addlinespace[1pt]

7 &
$\displaystyle
0.0003
- \tfrac{8.6888{\times}10^{-5}q_2^{2}}{q_1^{2}}
- \tfrac{0.0007q_1^{2}}{\epsilon r^{2}}
+ \tfrac{0.0799q_1q_2}{\epsilon r^{2}}$ &
$0.0001$ & $3360.9947$ \\

\addlinespace[1pt]

8 &
$\displaystyle
-0.0003
+ \tfrac{0.0061q_1^{2}}{\epsilon r^{2}}
+ \tfrac{0.0749q_1q_2}{\epsilon r^{2}}
+ \tfrac{0.0009q_2^{2}}{\epsilon r^{2}}$ &
$0.0002$ & $3360.1902$ \\

\addlinespace[1pt]

9 &
$\displaystyle
0.0012
- \tfrac{0.0006q_2}{q_1}
+ \tfrac{0.0793q_1q_2}{\epsilon r^{2}}
+ \tfrac{0.0001q_2^{2}}{\epsilon r^{2}}$ &
$0.0001$ & $3358.6486$ \\

\addlinespace[1pt]

10 &
$\displaystyle
-9.3659{\times}10^{-5}
+ \tfrac{0.0796q_1q_2}{\epsilon r^{2}}
+ \tfrac{5.2089{\times}10^{-6}q_1^{8}}{\epsilon^{4}r^{8}}$ &
$0.0001$ & $3358.2041$ \\

\end{xltabular}
\endgroup

\begingroup
\scriptsize
\renewcommand{\arraystretch}{1.10}

\begin{xltabular}{\linewidth}{@{}c >{\raggedright\arraybackslash}X r r@{}}
\caption{{Top $\mathsf{r}=10$ ranked symbolic expressions recovered by a single run of \vasstmain\ under $\sigma=0.1$ for learning \eqnref{eq:feynman-cl}. Expressions are ranked by $\mathrm{LMPSE}$ among $H=2000$ sampled hard symbolic trees.}}
\label{tab:vasst-top10-I-12-2-mapped-noise0p1}\\

\toprule
\toprule
\textbf{Rank} & \textbf{\vasstmain\ Expression Learned} & \textbf{\texttt{RMSE}} & $\mathrm{LMPSE}$ \\
\midrule
\endfirsthead

\multicolumn{4}{@{}l}{\scriptsize\emph{Table~\thetable\ continued from previous page.}}\\
\toprule
\toprule
\textbf{Rank} & \textbf{\vasstmain\ Expression Learned} & \textbf{\texttt{RMSE}} & $\mathrm{LMPSE}$ \\
\midrule
\endhead

\midrule
\multicolumn{4}{r@{}}{\scriptsize\emph{Continued on next page}}\\
\endfoot

\bottomrule
\bottomrule
\addlinespace[2pt]
\multicolumn{4}{@{}p{\linewidth}@{}}{
\scriptsize
\emph{Note}: \texttt{RMSE} corresponds to the out-of-sample \texttt{RMSE} computed on a $10\%$ held-out test set with the LMPSE selected symbolic model.
}\\
\endlastfoot

1 &
$\displaystyle
\tfrac{
-0.0087\epsilon r^{2}
+0.0019q_{1}^{2}
+{0.0807q_{1}q_{2}}
}{\epsilon r^{2}}$ &
$0.0997$ & $1574.9014$ \\

\addlinespace[1pt]

2 &
$\displaystyle
0.0018
- \tfrac{0.0043q_{2}}{q_{1}}
+ \tfrac{0.0816q_{1}q_{2}}{\epsilon r^{2}}$ &
$0.0997$ & $1574.6367$ \\

\addlinespace[1pt]

3 &
$\displaystyle
\tfrac{
-0.0087\epsilon r^{2}
+0.0019q_{1}^{2}
+0.0807q_{1}q_{2}
}{\epsilon r^{2}}$ &
$0.0997$ & $1573.5842$ \\

\addlinespace[1pt]

4 &
$\displaystyle
0.0018
- \tfrac{0.0043q_{2}}{q_{1}}
+ \tfrac{0.0816q_{1}q_{2}}{\epsilon r^{2}}$ &
$0.0997$ & $1573.3191$ \\

\addlinespace[1pt]

5 &
$\displaystyle
0.0018
- \tfrac{0.0043q_{2}}{q_{1}}
+ \tfrac{0.0816q_{1}q_{2}}{\epsilon r^{2}}$ &
$0.0997$ & $1573.3191$ \\

\addlinespace[1pt]

6 &
$\displaystyle
\tfrac{
-0.0086\epsilon r^{2}
+0.0819q_{1}q_{2}
-0.0002q_{2}^{2}
}{\epsilon r^{2}}$ &
$0.0997$ & $1572.4406$ \\

\addlinespace[1pt]

7 &
$\displaystyle
0.0127
- \tfrac{0.0031q_{2}^{2}}{q_{1}^{2}}
- \tfrac{0.0186q_{1}^{2}}{\epsilon r^{2}}
+ \tfrac{0.0885q_{1}q_{2}}{\epsilon r^{2}}$ &
$0.0997$ & $1571.0526$ \\

\addlinespace[1pt]

8 &
$\displaystyle
\tfrac{
-0.0091\epsilon r^{2}
+0.0421q_{1}^{2}
+0.0499q_{1}q_{2}
+0.0059q_{2}^{2}
}{\epsilon r^{2}}$ &
$0.0997$ & $1569.9638$ \\

\addlinespace[1pt]

9 &
$\displaystyle
\tfrac{
-0.0097\epsilon r^{2}
+0.1073q_{1}^{2}
+0.0154q_{2}^{2}
}{\epsilon r^{2}}$ &
$0.0997$ & $1569.3446$ \\

\addlinespace[1pt]

10 &
$\displaystyle
0.0253
- \tfrac{0.0131q_{2}}{q_{1}}
+ \tfrac{0.0757q_{1}q_{2}}{\epsilon r^{2}}
+ \tfrac{0.0022q_{2}^{2}}{\epsilon r^{2}}$ &
$0.0997$ & $1568.6428$ \\

\end{xltabular}
\endgroup

\begingroup
\scriptsize
\renewcommand{\arraystretch}{1.10}

\begin{xltabular}{\linewidth}{@{}c >{\raggedright\arraybackslash}X r r@{}}
\caption{{Top $\mathsf{r}=10$ ranked symbolic expressions recovered by a single run of \vasstmain\ under $\sigma=0.2$ for learning \eqnref{eq:feynman-cl}. Expressions are ranked by $\mathrm{LMPSE}$ among $H=2000$ sampled hard symbolic trees.}}
\label{tab:vasst-top10-I-12-2-mapped-noise0p2}\\

\toprule
\toprule
\textbf{Rank} & \textbf{\vasstmain\ Expression Learned} & \textbf{\texttt{RMSE}} & $\mathrm{LMPSE}$ \\
\midrule
\endfirsthead

\multicolumn{4}{@{}l}{\scriptsize\emph{Table~\thetable\ continued from previous page.}}\\
\toprule
\toprule
\textbf{Rank} & \textbf{\vasstmain\ Expression Learned} & \textbf{\texttt{RMSE}} & $\mathrm{LMPSE}$ \\
\midrule
\endhead

\midrule
\multicolumn{4}{r@{}}{\scriptsize\emph{Continued on next page}}\\
\endfoot

\bottomrule
\bottomrule
\addlinespace[2pt]
\multicolumn{4}{@{}p{\linewidth}@{}}{
\scriptsize
\emph{Note}: \texttt{RMSE} corresponds to the out-of-sample \texttt{RMSE} computed on a $10\%$ held-out test set with the LMPSE selected symbolic model.
}\\
\endlastfoot

1 &
$\displaystyle
\tfrac{
-0.0171\epsilon r^2
+0.0039q_1^2
+{0.0817q_1q_2}
}{\epsilon r^2}$ &
$0.1994$ & $320.3584$ \\

\addlinespace[1pt]

2 &
$\displaystyle
0.0039
-\tfrac{0.0085q_2}{q_1}
+\tfrac{0.0835q_1q_2}{\epsilon r^2}$ &
$0.1994$ & $320.1072$ \\

\addlinespace[1pt]

3 &
$\displaystyle
\tfrac{
-0.0171\epsilon r^2
+0.0040q_1^2
+0.0817q_1q_2
}{\epsilon r^2}$ &
$0.1994$ & $319.0511$ \\

\addlinespace[1pt]

4 &
$\displaystyle
0.0039
-\tfrac{0.0085q_2}{q_1}
+\tfrac{0.0835q_1q_2}{\epsilon r^2}$ &
$0.1994$ & $318.7996$ \\

\addlinespace[1pt]

5 &
$\displaystyle
0.0039
-\tfrac{0.0085q_2}{q_1}
+\tfrac{0.0835q_1q_2}{\epsilon r^2}$ &
$0.1994$ & $318.7996$ \\

\addlinespace[1pt]

6 &
$\displaystyle
\tfrac{
-0.0169\epsilon r^2
+0.0843q_1q_2
-0.0004q_2^2
}{\epsilon r^2}$ &
$0.1994$ & $317.8968$ \\

\addlinespace[1pt]

7 &
$\displaystyle
0.0251
-\tfrac{0.0061q_2^2}{q_1^2}
-\tfrac{0.0366q_1^2}{\epsilon r^2}
+\tfrac{0.0971q_1q_2}{\epsilon r^2}$ &
$0.1994$ & $316.5585$ \\

\addlinespace[1pt]

8 &
$\displaystyle
\tfrac{
-0.0178\epsilon r^2
+0.0780q_1^2
+0.0249q_1q_2
+0.0108q_2^2
}{\epsilon r^2}$ &
$0.1994$ & $315.4286$ \\

\addlinespace[1pt]

9 &
$\displaystyle
\tfrac{
-0.0181\epsilon r^2
+0.1105q_1^2
+0.0156q_2^2
}{\epsilon r^2}$ &
$0.1994$ & $314.9703$ \\

\addlinespace[1pt]

10 &
$\displaystyle
0.0494
-\tfrac{0.0255q_2}{q_1}
+\tfrac{0.0721q_1q_2}{\epsilon r^2}
+\tfrac{0.0042q_2^2}{\epsilon r^2}$ &
$0.1994$ & $314.1365$ \\

\end{xltabular}
\endgroup

\begin{center}
\underline{For learning \eqnref{eq:feynman-cpe}: $\Delta U = G m_1 m_2(\tfrac{1}{r_2} - \tfrac{1}{r_1})$ in~\hyperref[subsec:feynman-data-study]{Section~\ref{subsec:feynman-data-study}}}.
\end{center}

\begingroup
\scriptsize
\renewcommand{\arraystretch}{1.10}

\begin{xltabular}{\linewidth}{@{}c >{\raggedright\arraybackslash}X r r@{}}
\caption{Top $\mathsf{r}=10$ ranked symbolic expressions recovered by a single run of \vasstmain\ under the noiseless setting for learning \eqnref{eq:feynman-cpe}. Expressions are ranked by $\mathrm{LMPSE}$ among $H=2000$ sampled hard symbolic trees.}
\label{tab:vasst-top10-I-13-12-mapped-noiseless}\\

\toprule
\toprule
\textbf{Rank} & \textbf{\vasstmain\ Expression Learned} & \textbf{\texttt{RMSE}} & $\mathrm{LMPSE}$ \\
\midrule
\endfirsthead

\multicolumn{4}{@{}l}{\scriptsize\emph{Table~\thetable\ continued from previous page.}}\\
\toprule
\toprule
\textbf{Rank} & \textbf{\vasstmain\ Expression Learned} & \textbf{\texttt{RMSE}} & $\mathrm{LMPSE}$ \\
\midrule
\endhead

\midrule
\multicolumn{4}{r@{}}{\scriptsize\emph{Continued on next page}}\\
\endfoot

\bottomrule
\bottomrule
\addlinespace[2pt]
\multicolumn{4}{@{}p{\linewidth}@{}}{
\scriptsize
\emph{Note}: \texttt{RMSE} corresponds to the out-of-sample \texttt{RMSE} computed on a $10\%$ held-out test set with the LMPSE selected symbolic model.
}\\
\endlastfoot

1 &
$\displaystyle
{\tfrac{1.0148Gm_1m_2}{r_2}
-\tfrac{0.9985Gm_1m_2}{r_1}}
-0.0212
-\tfrac{0.0031r_2}{r_1}$ &
$0.0038$ & $3278.4007$ \\

\addlinespace[1pt]

2 &
$\displaystyle
\tfrac{0.0009G^{2}m_1^{3}m_2}{r_1^{2}}
+\tfrac{1.0107Gm_1m_2}{r_2}
-\tfrac{0.9965Gm_1m_2}{r_1}
-0.0007$ &
$0.0037$ & $3199.2274$ \\

\addlinespace[1pt]

3 &
$\displaystyle
\tfrac{0.0064G^{2}m_1^{5}}{m_2r_1r_2}
+\tfrac{0.0064Gm_1^{3}}{m_2r_1}
-\tfrac{0.0010Gm_1^{2}\{2Gm_1^{2}/r_1-1\}}{r_1}
-\tfrac{1.0073Gm_1^{2}\{-m_2r_1/(m_1r_2)+m_2/m_1\}}{r_1}
-0.0086
-\tfrac{0.0064m_2}{m_1}$ &
$0.0152$ & $3174.0205$ \\

\addlinespace[1pt]

4 &
$\displaystyle
\tfrac{2.2506G^{2}m_1^{3}m_2}{r_1^{2}\{-2Gm_1^{2}/r_1+1\}}
+\tfrac{0.0873Gm_1m_2\{-Gm_1^{2}/r_1+m_2/m_1\}}{r_2}
+0.9087
-\tfrac{0.1176m_2r_1\{-Gm_1^{2}/r_1+m_2/m_1\}}{m_1r_2}$ &
$0.1501$ & $770.4722$ \\

\addlinespace[1pt]

5 &
$\displaystyle
\tfrac{1.0486Gm_1^{2}}{r_1}
-\tfrac{1.8288Gm_1m_2}{r_1}
+1.0821
+\tfrac{0.6100r_2\{Gm_1m_2/r_1+m_2r_1/(m_1r_2)\}}
{r_1\{Gm_1^{2}/r_1+r_2/r_1\}}$ &
$0.1915$ & $357.5249$ \\

\addlinespace[1pt]

6 &
$\displaystyle
\tfrac{0.6894Gm_1^{2}}{r_1}
-\tfrac{1.4571Gm_1m_2}{r_1}
+1.5451
+\tfrac{0.0898r_2}{r_1}
-\tfrac{1.4571m_2r_1}{m_1r_2}
-\tfrac{0.6894m_2}{m_1}$ &
$0.2020$ & $245.0738$ \\

\addlinespace[1pt]

7 &
$\displaystyle
\tfrac{0.0361Gm_1^{2}}{r_1}
-\tfrac{1.2526Gm_1m_2}{r_1}
+\tfrac{17.2393r_1^{2}}{r_2^{2}}
-0.6275
+\tfrac{0.0542r_2}{r_1}
-\tfrac{0.0181m_2}{m_1}$ &
$0.1932$ & $13.0193$ \\

\addlinespace[1pt]

8 &
$\displaystyle
-\tfrac{0.8731Gm_1^{2}\{-r_2/r_1+m_2/m_1\}}{r_1}
-\tfrac{0.7120Gm_1^{2}r_2}{r_1^{2}}
-0.8223
+\tfrac{0.8731m_2r_1\{Gm_1^{2}/r_1+r_2/r_1\}}{m_1r_2}
+\tfrac{0.1539m_2r_2}{m_1r_1}$ &
$0.2330$ & $-31.8476$ \\

\addlinespace[1pt]

9 &
$\displaystyle
\tfrac{1.0196Gm_1^{2}}{r_1}
-\tfrac{1.4805Gm_1m_2}{r_1}
+2.4189
+\tfrac{0.1999r_2}{r_1}
-\tfrac{0.3998m_2}{m_1}$ &
$0.2354$ & $-48.6053$ \\

\addlinespace[1pt]

10 &
$\displaystyle
\tfrac{0.0025Gm_1^{2}}{r_2}
+\tfrac{0.8171Gm_1m_2\{-r_2/r_1+m_2/m_1\}}
{r_1\{r_2/r_1+m_2r_1/(Gm_1^{3})\}}
+0.0206
-\tfrac{0.0051r_2}{r_1}
+\tfrac{0.0245m_2\{-2r_2/r_1+2m_2/m_1\}}{m_1}
+\tfrac{0.0025m_2}{m_1}$ &
$0.2363$ & $-66.4971$ \\

\end{xltabular}
\endgroup

\begingroup
\scriptsize
\renewcommand{\arraystretch}{1.10}

\begin{xltabular}{\linewidth}{@{}c >{\raggedright\arraybackslash}X r r@{}}
\caption{Top $\mathsf{r}=10$ ranked symbolic expressions recovered by a single run of \vasstmain\ under $\sigma=0.1$ for learning \eqnref{eq:feynman-cpe}. Expressions are ranked by $\mathrm{LMPSE}$ among $H=2000$ sampled hard symbolic trees.}
\label{tab:vasst-top10-I-13-12-mapped-noise0p1}\\

\toprule
\toprule
\textbf{Rank} & \textbf{\vasstmain\ Expression Learned} & \textbf{\texttt{RMSE}} & $\mathrm{LMPSE}$ \\
\midrule
\endfirsthead

\multicolumn{4}{@{}l}{\scriptsize\emph{Table~\thetable\ continued from previous page.}}\\
\toprule
\toprule
\textbf{Rank} & \textbf{\vasstmain\ Expression Learned} & \textbf{\texttt{RMSE}} & $\mathrm{LMPSE}$ \\
\midrule
\endhead

\midrule
\multicolumn{4}{r@{}}{\scriptsize\emph{Continued on next page}}\\
\endfoot

\bottomrule
\bottomrule
\addlinespace[2pt]
\multicolumn{4}{@{}p{\linewidth}@{}}{
\scriptsize
\emph{Note}: \texttt{RMSE} corresponds to the out-of-sample \texttt{RMSE} computed on a $10\%$ held-out test set with the LMPSE selected symbolic model.
}\\
\endlastfoot

1 &
$\displaystyle
{\tfrac{1.0148Gm_1m_2}{r_2}
-\tfrac{0.9997Gm_1m_2}{r_1}}
-0.0382
-\tfrac{0.0049r_2}{r_1}$ &
$0.0998$ & $1534.4676$ \\

\addlinespace[1pt]

2 &
$\displaystyle
\tfrac{0.0011G^{2}m_1^{3}m_2}{r_1^{2}}
+\tfrac{1.0077Gm_1m_2}{r_2}
-\tfrac{0.9980Gm_1m_2}{r_1}
-0.0077$ &
$0.0998$ & $1513.0846$ \\

\addlinespace[1pt]

3 &
$\displaystyle
\tfrac{0.0046G^{2}m_1^{5}}{m_2r_1r_2}
-\tfrac{0.0009G^{2}m_1^{4}}{r_1^{2}}
+\tfrac{0.0046Gm_1^{3}}{m_2r_1}
+\tfrac{0.0004Gm_1^{2}}{r_1}
+\tfrac{1.0071Gm_1m_2}{r_2}
-\tfrac{1.0071Gm_1m_2}{r_1}
-0.0194
-\tfrac{0.0046m_2}{m_1}$ &
$0.1010$ & $1489.7953$ \\

\addlinespace[1pt]

4 &
$\displaystyle
\tfrac{
2.2538G^{2}m_1^{5}m_2r_2
-0.9018m_1^{2}r_1r_2\{2Gm_1^{2}-r_1\}
+m_2\{0.0872Gm_1^{2}-0.1176r_1\}\{2Gm_1^{2}-r_1\}\{Gm_1^{3}-m_2r_1\}
}
{-m_1^{2}r_1r_2\{2Gm_1^{2}-r_1\}}$ &
$0.1781$ & $455.0898$ \\

\addlinespace[1pt]

5 &
$\displaystyle
\tfrac{
-0.6088Gm_1^{2}m_2r_2
+m_1\{Gm_1^{2}+r_2\}\{-1.0444Gm_1^{2}+1.8285Gm_1m_2-1.0711r_1\}
-0.6088m_2r_1^{2}
}
{-m_1r_1\{Gm_1^{2}+r_2\}}$ &
$0.2157$ & $130.8494$ \\

\addlinespace[1pt]

6 &
$\displaystyle
\tfrac{0.6888Gm_1^{2}}{r_1}
-\tfrac{1.4581Gm_1m_2}{r_1}
+1.5241
+\tfrac{0.0876r_2}{r_1}
-\tfrac{1.4581m_2r_1}{m_1r_2}
-\tfrac{0.6888m_2}{m_1}$ &
$0.2255$ & $35.3900$ \\

\addlinespace[1pt]

7 &
$\displaystyle
\tfrac{0.0354Gm_1^{2}}{r_1}
-\tfrac{1.2537Gm_1m_2}{r_1}
+\tfrac{17.1818r_1^{2}}{r_2^{2}}
-0.6402
+\tfrac{0.0531r_2}{r_1}
-\tfrac{0.0177m_2}{m_1}$ &
$0.2187$ & $-161.2716$ \\

\addlinespace[1pt]

8 &
$\displaystyle
\tfrac{0.1616Gm_1^{2}r_2}{r_1^{2}}
+\tfrac{0.8735Gm_1m_2}{r_2}
-\tfrac{0.8735Gm_1m_2}{r_1}
-0.8325
+\tfrac{0.8735m_2}{m_1}
+\tfrac{0.1539m_2r_2}{m_1r_1}$ &
$0.2529$ & $-190.5952$ \\

\addlinespace[1pt]

9 &
$\displaystyle
\tfrac{1.0151Gm_1^{2}}{r_1}
-\tfrac{1.4807Gm_1m_2}{r_1}
+2.4027
+\tfrac{0.1993r_2}{r_1}
-\tfrac{0.3986m_2}{m_1}$ &
$0.2555$ & $-205.9995$ \\

\addlinespace[1pt]

10 &
$\displaystyle
\tfrac{
0.8178G^{2}m_1^{5}m_2r_2\{m_1r_2-m_2r_1\}
-0.0038Gm_1^{4}r_1\{Gm_1^{3}r_2+m_2r_1^{2}\}
+r_2\{Gm_1^{3}r_2+m_2r_1^{2}\}
\{0.0018m_1^{2}r_1+0.0038m_1r_2^{2}-0.0245m_2r_1\{2m_1r_2-m_2r_1\}\}
}
{-r_1r_2\{Gm_1^{3}r_2+m_2r_1^{2}\}}$ &
$0.2569$ & $-228.4333$ \\

\end{xltabular}
\endgroup

\begingroup
\scriptsize
\renewcommand{\arraystretch}{1.10}

\begin{xltabular}{\linewidth}{@{}c >{\raggedright\arraybackslash}X r r@{}}
\caption{Top $\mathsf{r}=10$ ranked symbolic expressions recovered by a single run of \vasstmain\ under $\sigma=0.2$ for learning \eqnref{eq:feynman-cpe}. Expressions are ranked by $\mathrm{LMPSE}$ among $H=2000$ sampled hard symbolic trees.}
\label{tab:vasst-top10-I-13-12-mapped-noise0p2}\\

\toprule
\toprule
\textbf{Rank} & \textbf{\vasstmain\ Expression Learned} & \textbf{\texttt{RMSE}} & $\mathrm{LMPSE}$ \\
\midrule
\endfirsthead

\multicolumn{4}{@{}l}{\scriptsize\emph{Table~\thetable\ continued from previous page.}}\\
\toprule
\toprule
\textbf{Rank} & \textbf{\vasstmain\ Expression Learned} & \textbf{\texttt{RMSE}} & $\mathrm{LMPSE}$ \\
\midrule
\endhead

\midrule
\multicolumn{4}{r@{}}{\scriptsize\emph{Continued on next page}}\\
\endfoot

\bottomrule
\bottomrule
\addlinespace[2pt]
\multicolumn{4}{@{}p{\linewidth}@{}}{
\scriptsize
\emph{Note}: \texttt{RMSE} corresponds to the out-of-sample \texttt{RMSE} computed on a $10\%$ held-out test set with the LMPSE selected symbolic model.
}\\
\endlastfoot

1 &
$\displaystyle
{\tfrac{1.0148Gm_1m_2}{r_2}
-\tfrac{1.0010Gm_1m_2}{r_1}}
-0.0552
-\tfrac{0.0068r_2}{r_1}$ &
$0.1995$ & $287.4354$ \\

\addlinespace[1pt]

2 &
$\displaystyle
\tfrac{0.0012G^2m_1^3m_2}{r_1^2}
+\tfrac{1.0048Gm_1m_2}{r_2}
-\tfrac{0.9994Gm_1m_2}{r_1}
-0.0148$ &
$0.1995$ & $275.0103$ \\

\addlinespace[1pt]

3 &
$\displaystyle
\tfrac{0.0028G^2m_1^5}{m_2r_1r_2}
+\tfrac{0.0002G^2m_1^4}{r_1^2}
+\tfrac{0.0028Gm_1^3}{m_2r_1}
-\tfrac{0.0001Gm_1^2}{r_1}
+\tfrac{1.0068Gm_1m_2}{r_2}
-\tfrac{1.0068Gm_1m_2}{r_1}
-0.0302
-\tfrac{0.0028m_2}{m_1}$ &
$0.2002$ & $252.9171$ \\

\addlinespace[1pt]

4 &
$\displaystyle
\tfrac{
2.2570G^2m_1^5m_2r_2
-0.8948m_1^2r_1r_2\{2Gm_1^2-r_1\}
+m_2\{0.0871Gm_1^2-0.1175r_1\}\{2Gm_1^2-r_1\}\{Gm_1^3-m_2r_1\}
}
{-m_1^2r_1r_2\{2Gm_1^2-r_1\}}$ &
$0.2466$ & $-163.8244$ \\

\addlinespace[1pt]

5 &
$\displaystyle
\tfrac{
-m_1\{0.6077Gm_1m_2r_2-\{Gm_1^2+r_2\}(-1.0402Gm_1^2+1.8282Gm_1m_2-1.0601r_1)\}
-0.6077m_2r_1^2
}
{-m_1r_1\{Gm_1^2+r_2\}}$ &
$0.2762$ & $-346.8534$ \\

\addlinespace[1pt]

6 &
$\displaystyle
\tfrac{0.6883Gm_1^2}{r_1}
-\tfrac{1.4591Gm_1m_2}{r_1}
+1.5032
+\tfrac{0.0854r_2}{r_1}
-\tfrac{1.4591m_2r_1}{m_1r_2}
-\tfrac{0.6883m_2}{m_1}$ &
$0.2842$ & $-412.0230$ \\

\addlinespace[1pt]

7 &
$\displaystyle
\tfrac{0.0346Gm_1^2}{r_1}
-\tfrac{1.2549Gm_1m_2}{r_1}
+\tfrac{17.1240r_1^2}{r_2^2}
-0.6530
+\tfrac{0.0519r_2}{r_1}
-\tfrac{0.0173m_2}{m_1}$ &
$0.2796$ & $-544.3299$ \\

\addlinespace[1pt]

8 &
$\displaystyle
\tfrac{0.1622Gm_1^2r_2}{r_1^2}
+\tfrac{0.8740Gm_1m_2}{r_2}
-\tfrac{0.8740Gm_1m_2}{r_1}
-0.8426
+\tfrac{0.8740m_2}{m_1}
+\tfrac{0.1540m_2r_2}{m_1r_1}$ &
$0.3058$ & $-561.1996$ \\

\addlinespace[1pt]

9 &
$\displaystyle
\tfrac{1.0107Gm_1^2}{r_1}
-\tfrac{1.4810Gm_1m_2}{r_1}
+2.3864
+\tfrac{0.1987r_2}{r_1}
-\tfrac{0.3974m_2}{m_1}$ &
$0.3083$ & $-570.1379$ \\

\addlinespace[1pt]

10 &
$\displaystyle
\tfrac{
0.8185G^2m_1^5m_2r_2\{m_1r_2-m_2r_1\}
-0.0051Gm_1^4r_1\{Gm_1^3r_2+m_2r_1^2\}
}
{-r_1r_2\{Gm_1^3r_2+m_2r_1^2\}}
+
\tfrac{
r_2\{Gm_1^3r_2+m_2r_1^2\}
\{+0.0241m_1^2r_1+0.0102m_1^2r_2-0.0051m_1m_2r_1-0.0490m_2^2r_1\}
}
{-r_1r_2\{Gm_1^3r_2+m_2r_1^2\}}$ &
$0.3099$ & $-594.8235$ \\

\end{xltabular}
\endgroup

\begin{center}
\underline{For learning \eqnref{eq:feynman-fce}: $F = q(E_f + Bv \sin \theta)$ in~\hyperref[subsec:feynman-data-study]{Section~\ref{subsec:feynman-data-study}}}.
\end{center}

\begingroup
\scriptsize
\renewcommand{\arraystretch}{1.10}

\begin{xltabular}{\linewidth}{@{}c >{\raggedright\arraybackslash}X r r@{}}
\caption{Top $\mathsf{r}=10$ ranked symbolic expressions recovered by a single run of \vasstmain\ under the noiseless setting for learning \eqnref{eq:feynman-fce}. Expressions are ranked by $\mathrm{LMPSE}$ among $H=2000$ sampled hard symbolic trees.}
\label{tab:vasst-top10-I-12-11-mapped-noiseless}\\

\toprule
\toprule
\textbf{Rank} & \textbf{\vasstmain\ Expression Learned} & \textbf{\texttt{RMSE}} & $\mathrm{LMPSE}$ \\
\midrule
\endfirsthead

\multicolumn{4}{@{}l}{\scriptsize\emph{Table~\thetable\ continued from previous page.}}\\
\toprule
\toprule
\textbf{Rank} & \textbf{\vasstmain\ Expression Learned} & \textbf{\texttt{RMSE}} & $\mathrm{LMPSE}$ \\
\midrule
\endhead

\midrule
\multicolumn{4}{r@{}}{\scriptsize\emph{Continued on next page}}\\
\endfoot

\bottomrule
\bottomrule
\addlinespace[2pt]
\multicolumn{4}{@{}p{\linewidth}@{}}{
\scriptsize
\emph{Note}: \texttt{RMSE} corresponds to the out-of-sample \texttt{RMSE} computed on a $10\%$ held-out test set with the LMPSE selected symbolic model.
}\\
\endlastfoot

1 &
$\displaystyle
1.0662Bvq\sin\theta
+0.9796E_fq
-0.0027\theta
+0.001420$ &
$0.0016$ & $-3800.7159$ \\

\addlinespace[1pt]

2 &
$\displaystyle
0.1920Bvq\theta
-\tfrac{0.1550Bv\theta(E_fq+\theta)}{E_f}
+0.8769(Bvq+E_fq)\sin\theta
+0.1920\sin\{\sin(E_fq)\}
+0.0121$ &
$0.0055$ & $-4807.8071$ \\

\addlinespace[1pt]

3 &
$\displaystyle
1.0511Bvq\sin\theta
+0.1952Bvq
+\tfrac{0.1952Bv\theta}{E_f}
+\tfrac{1.0511Bv\sin(Bv/E_f)}{E_f}
+\tfrac{0.1952Bv}{E_f}
+0.3903\theta
-0.0395(Bv/E_f+2\theta)(Bv/E_f+\theta^2)
+0.0001$ &
$0.0065$ & $-4896.5806$ \\

\addlinespace[1pt]

4 &
$\displaystyle
1.0389Bvq\sin\theta
+0.0979\theta
+1.6425\sin(E_fq)
+0.0063$ &
$0.0080$ & $-4954.2483$ \\

\addlinespace[1pt]

5 &
$\displaystyle
1.1034Bvq\sin\theta
-\tfrac{0.3719Bv}{E_f}
-3.7914\sin\{\sin\theta\}
+0.097$ &
$0.0081$ & $-4966.7859$ \\

\addlinespace[1pt]

6 &
$\displaystyle
0.9973Bvq\sin\theta
+0.0990\theta
+0.0084(E_fq+\theta)^2\{\sin\theta+\sin(E_fq)\}
+0.0990\sin(Bv/E_f)
+0.0211$ &
$0.0086$ & $-5032.7066$ \\

\addlinespace[1pt]

7 &
$\displaystyle
1.0184Bvq\sin\theta
-0.0215E_fq\theta\{\theta+\sin(Bv/E_f)\}
+0.1353\sin(Bv/E_f)
+0.0051$ &
$0.0106$ & $-5045.5640$ \\

\addlinespace[1pt]

8 &
$\displaystyle
1.0134Bvq\sin\theta
+0.1554\theta
+0.1554\sin\theta
-0.0688\sin(2\theta)\sin\{\sin(E_fq)\}
+0.1554\sin(E_fq)
+0.0071$ &
$0.0101$ & $-5048.4944$ \\

\addlinespace[1pt]

9 &
$\displaystyle
0.0950BvE_fq^2
+\tfrac{0.0950Bv(E_fq+\theta)}{E_f}
+1.0087(Bvq+E_fq)\sin\{\sin\theta\}
-0.8010\sin\{E_f^2q^2(Bv/E_f+\theta)\}
-0.0073$ &
$0.0104$ & $-5072.3435$ \\

\addlinespace[1pt]

10 &
$\displaystyle
1.1751Bvq\sin\{\sin\theta\}
-0.1238\theta\{\sin(Bv/E_f)+\sin(E_fq)\}\sin(E_fq)
+0.1329\sin(Bv/E_f)
+0.0995$ &
$0.0128$ & $-5201.4007$ \\

\end{xltabular}
\endgroup

\begingroup
\scriptsize
\renewcommand{\arraystretch}{1.10}

\begin{xltabular}{\linewidth}{@{}c >{\raggedright\arraybackslash}X r r@{}}
\caption{Top $\mathsf{r}=10$ ranked symbolic expressions recovered by a single run of \vasstmain\ under $\sigma=0.1$ for learning \eqnref{eq:feynman-fce}. Expressions are ranked by $\mathrm{LMPSE}$ among $H=2000$ sampled hard symbolic trees.}
\label{tab:vasst-top10-I-12-11-mapped-noise0p1}\\

\toprule
\toprule
\textbf{Rank} & \textbf{\vasstmain\ Expression Learned} & \textbf{\texttt{RMSE}} & $\mathrm{LMPSE}$ \\
\midrule
\endfirsthead

\multicolumn{4}{@{}l}{\scriptsize\emph{Table~\thetable\ continued from previous page.}}\\
\toprule
\toprule
\textbf{Rank} & \textbf{\vasstmain\ Expression Learned} & \textbf{\texttt{RMSE}} & $\mathrm{LMPSE}$ \\
\midrule
\endhead

\midrule
\multicolumn{4}{r@{}}{\scriptsize\emph{Continued on next page}}\\
\endfoot

\bottomrule
\bottomrule
\addlinespace[2pt]
\multicolumn{4}{@{}p{\linewidth}@{}}{
\scriptsize
\emph{Note}: \texttt{RMSE} corresponds to the out-of-sample \texttt{RMSE} computed on a $10\%$ held-out test set with the LMPSE selected symbolic model.
}\\
\endlastfoot

1 &
$\displaystyle
1.0663Bvq\sin{\left(\theta \right)}
+0.9792E_{f}q
-0.0022\theta
+0.0014$ &
$0.1016$ & $-3805.6060$ \\

\addlinespace[1pt]

2 &
$\displaystyle
0.0370Bvq\theta
+0.8770Bvq\sin{\left(\theta \right)}
-\tfrac{0.1549Bv\theta^{2}}{E_{f}}
+0.8770E_{f}q\sin{\left(\theta \right)}
+0.1919\sin{\left(\sin{\left(E_{f}q \right)} \right)}
+0.0051$ &
$0.1021$ & $-4810.8994$ \\

\addlinespace[1pt]

3 &
$\displaystyle
-\tfrac{0.0395Bv^{2}}{E_{f}^{2}}
+1.0512Bvq\sin{\left(\theta \right)}
+0.1954Bvq
-\tfrac{0.0395Bv\theta^{2}}{E_{f}}
+\tfrac{0.1163Bv\theta}{E_{f}}
+\tfrac{1.0512Bv\sin{\left(\tfrac{Bv}{E_{f}} \right)}}{E_{f}}
+0.0908\theta
+0.030$ &
$0.1037$ & $-4899.0665$ \\

\addlinespace[1pt]

4 &
$\displaystyle
1.0390Bvq\sin{\left(\theta \right)}
+0.0983\theta
+1.6438\sin{\left(E_{f}q \right)}
+0.0091$ &
$0.1028$ & $-4954.6377$ \\

\addlinespace[1pt]

5 &
$\displaystyle
1.1036Bvq\sin{\left(\theta \right)}
-\tfrac{0.3703Bv}{E_{f}}
-3.7934\sin{\left(\sin{\left(\theta \right)} \right)}
+0.0001$ &
$0.1045$ & $-4967.3706$ \\

\addlinespace[1pt]

6 &
$\displaystyle
0.9974Bvq\sin{\left(\theta \right)}
+0.0084E_{f}^{2}q^{2}\sin{\left(\theta \right)}
+0.0084E_{f}^{2}q^{2}\sin{\left(E_{f}q \right)}
+0.0168E_{f}q\theta\sin{\left(\theta \right)}
+0.0168E_{f}q\theta\sin{\left(E_{f}q \right)}
+0.0084\theta^{2}\sin{\left(\theta \right)}
+0.0084\theta^{2}\sin{\left(E_{f}q \right)}
+0.0992\theta
+0.0992\sin{\left(\tfrac{Bv}{E_{f}} \right)}
+0.5121$ &
$0.1074$ & $-5033.1722$ \\

\addlinespace[1pt]

7 &
$\displaystyle
1.0185Bvq\sin{\left(\theta \right)}
-0.0214E_{f}q\theta^{2}
-0.0214E_{f}q\theta\sin{\left(\tfrac{Bv}{E_{f}} \right)}
+0.1274\sin{\left(\tfrac{Bv}{E_{f}} \right)}
+0.1023$ &
$0.1334$ & $-5046.4801$ \\

\addlinespace[1pt]

8 &
$\displaystyle
1.0135Bvq\sin{\left(\theta \right)}
+0.1558\theta
+0.1558\sin{\left(\theta \right)}
-0.0744\sin{\left(2\theta \right)}
\sin{\left(\sin{\left(E_{f}q \right)} \right)}
+0.1558\sin{\left(E_{f}q \right)}
+6.0358$ &
$0.1903$ & $-5048.9328$ \\

\addlinespace[1pt]

9 &
$\displaystyle
0.0950BvE_{f}q^{2}
+1.0088Bvq\sin{\left(\sin{\left(\theta \right)} \right)}
+0.0950Bvq
+\tfrac{0.0950Bv\theta}{E_{f}}
+1.0088E_{f}q\sin{\left(\sin{\left(\theta \right)} \right)}
-0.7980\sin{\left(E_{f}q\left(\tfrac{Bv}{E_{f}}+\theta\right) \right)}
-1.1405$ &
$0.2031$ & $-5073.3466$ \\

\addlinespace[1pt]

10 &
$\displaystyle
1.1752Bvq\sin{\left(\sin{\left(\theta \right)} \right)}
-0.1240\theta\sin{\left(\tfrac{Bv}{E_{f}} \right)}\sin{\left(E_{f}q \right)}
-0.1240\theta\sin^{2}{\left(E_{f}q \right)}
+0.1251\sin{\left(\tfrac{Bv}{E_{f}} \right)}
+1.6583$ &
$0.1949$ & $-5201.4565$ \\

\end{xltabular}
\endgroup

\begingroup
\scriptsize
\renewcommand{\arraystretch}{1.10}

\begin{xltabular}{\linewidth}{@{}c >{\raggedright\arraybackslash}X r r@{}}
\caption{Top $\mathsf{r}=10$ ranked symbolic expressions recovered by a single run of \vasstmain\ under $\sigma=0.2$ for learning \eqnref{eq:feynman-fce}. Expressions are ranked by $\mathrm{LMPSE}$ among $H=2000$ sampled hard symbolic trees.}
\label{tab:vasst-top10-I-12-11-mapped-noise0p2}\\

\toprule
\toprule
\textbf{Rank} & \textbf{\vasstmain\ Expression Learned} & \textbf{\texttt{RMSE}} & $\mathrm{LMPSE}$ \\
\midrule
\endfirsthead

\multicolumn{4}{@{}l}{\scriptsize\emph{Table~\thetable\ continued from previous page.}}\\
\toprule
\toprule
\textbf{Rank} & \textbf{\vasstmain\ Expression Learned} & \textbf{\texttt{RMSE}} & $\mathrm{LMPSE}$ \\
\midrule
\endhead

\midrule
\multicolumn{4}{r@{}}{\scriptsize\emph{Continued on next page}}\\
\endfoot

\bottomrule
\bottomrule
\addlinespace[2pt]
\multicolumn{4}{@{}p{\linewidth}@{}}{
\scriptsize
\emph{Note}: \texttt{RMSE} corresponds to the out-of-sample \texttt{RMSE} computed on a $10\%$ held-out test set with the LMPSE selected symbolic model.
}\\
\endlastfoot

1 &
$\displaystyle
1.0664Bqv\sin\theta
+0.9788E_fq
-0.0018\theta
+0.0014$ &
$0.2016$ & $-3818.2221$ \\

\addlinespace[1pt]

2 &
$\displaystyle
0.0370Bqv\theta
+0.8771Bqv\sin\theta
-\tfrac{0.1549Bv\theta^2}{E_f}
+0.8771E_fq\sin\theta
+0.1919\sin\{\sin(E_fq)\}
+0.0021$ &
$0.2034$ & $-4816.9309$ \\

\addlinespace[1pt]

3 &
$\displaystyle
-\tfrac{0.0395B^2v^2}{E_f^2}
+1.0513Bqv\sin\theta
+0.1956Bqv
-\tfrac{0.0395Bv\theta^2}{E_f}
+\tfrac{0.1166Bv\theta}{E_f}
+\tfrac{1.0513Bv\sin(Bv/E_f)}{E_f}
+\tfrac{0.1956Bv}{E_f}
-0.0790\theta^3
+0.3912\theta
+0.0048$ &
$0.2032$ & $-4904.2802$ \\

\addlinespace[1pt]

4 &
$\displaystyle
1.0391Bqv\sin\theta
+0.0987\theta
+1.6451\sin(E_fq)
+0.0714$ &
$0.2050$ & $-4957.4885$ \\

\addlinespace[1pt]

5 &
$\displaystyle
1.1037Bqv\sin\theta
-\tfrac{0.3686Bv}{E_f}
-3.7954\sin\{\sin\theta\}
+0.1075$ &
$0.2049$ & $-4970.4218$ \\

\addlinespace[1pt]

6 &
$\displaystyle
0.9974Bqv\sin\theta
+0.0084E_f^2q^2\sin\theta
+0.0084E_f^2q^2\sin(E_fq)
+0.0168E_fq\theta\sin\theta
+0.0168E_fq\theta\sin(E_fq)
+0.0084\theta^2\sin\theta
+0.0084\theta^2\sin(E_fq)
+0.0994\theta
+0.0994\sin(Bv/E_f)
+1.0122$ &
$0.2293$ & $-5036.0061$ \\

\addlinespace[1pt]

7 &
$\displaystyle
1.0186Bqv\sin\theta
-0.0213E_fq\theta^2
-0.0213E_fq\theta\sin(Bv/E_f)
+0.1195\sin(Bv/E_f)
+0.9636$ &
$0.2531$ & $-5049.6806$ \\

\addlinespace[1pt]

8 &
$\displaystyle
1.0135Bqv\sin\theta
+0.1562\theta
+0.1562\sin\theta
-0.0800\sin(2\theta)\sin\{\sin(E_fq)\}
+0.1562\sin(E_fq)
+1.0345$ &
$0.2714$ & $-5051.6859$ \\

\addlinespace[1pt]

9 &
$\displaystyle
0.0951BE_fq^2v
+1.0089Bqv\sin\{\sin\theta\}
+0.0951Bqv
+\tfrac{0.0951Bv\theta}{E_f}
+1.0089E_fq\sin\{\sin\theta\}
-0.7951\sin(BE_fq^2v+E_f^2q^2\theta)
-1.0005$ &
$0.2612$ & $-5076.6401$ \\

\addlinespace[1pt]

10 &
$\displaystyle
1.1753Bqv\sin\{\sin\theta\}
-0.1241\theta\sin^2(E_fq)
-0.1241\theta\sin(E_fq)\sin(Bv/E_f)
+0.1174\sin(Bv/E_f)
+1.9158$ &
$0.3121$ & $-5203.5019$ \\

\end{xltabular}
\endgroup

\begin{center}
\underline{For learning \eqnref{eq:feynman-ftc}: $P = \frac{\kappa A(T_2-T_1)}{d}$ in~\hyperref[subsec:feynman-data-study]{Section~\ref{subsec:feynman-data-study}}}.
\end{center}

\begingroup
\scriptsize
\renewcommand{\arraystretch}{1.10}

\begin{xltabular}{\linewidth}{@{}c >{\raggedright\arraybackslash}X r r@{}}
\caption{Top $\mathsf{r}=10$ ranked symbolic expressions recovered by a single run of \vasstmain\ under the noiseless setting for learning \eqnref{eq:feynman-ftc}. Expressions are ranked by $\mathrm{LMPSE}$ among $H=2000$ sampled hard symbolic trees.}
\label{tab:vasst-top10-II-2-42-mapped-noiseless}\\

\toprule
\toprule
\textbf{Rank} & \textbf{\vasstmain\ Expression Learned} & \textbf{\texttt{RMSE}} & $\mathrm{LMPSE}$ \\
\midrule
\endfirsthead

\multicolumn{4}{@{}l}{\scriptsize\emph{Table~\thetable\ continued from previous page.}}\\
\toprule
\toprule
\textbf{Rank} & \textbf{\vasstmain\ Expression Learned} & \textbf{\texttt{RMSE}} & $\mathrm{LMPSE}$ \\
\midrule
\endhead

\midrule
\multicolumn{4}{r@{}}{\scriptsize\emph{Continued on next page}}\\
\endfoot

\bottomrule
\bottomrule
\addlinespace[2pt]
\multicolumn{4}{@{}p{\linewidth}@{}}{
\scriptsize
\emph{Note}: \texttt{RMSE} corresponds to the out-of-sample \texttt{RMSE} computed on a $10\%$ held-out test set with the LMPSE selected symbolic model.
}\\
\endlastfoot

1 &
$\displaystyle
-0.805371\,\tfrac{AT_1\kappa}{d}
+0.998520\,\tfrac{A\kappa T_2}{d}
-0.000254\{
\tfrac{1}{\sqrt A\,T_1\kappa}
-\tfrac{d/\sqrt A-T_2/T_1}{\sqrt A\,T_1\kappa-T_2/T_1}
\}
+0.014782$
&
$0.0126$ & $3205.1951$ \\

\addlinespace[1pt]

2 &
$\displaystyle
-0.0484\sqrt A\,T_1\kappa
-\tfrac{0.8506\sqrt A(-\sqrt A\,T_2\kappa+d/\sqrt A)}{d}
+0.5893
+\tfrac{0.0508T_2}{T_1}$ &
$0.0207$ & $3125.1013$ \\

\addlinespace[1pt]

3 &
$\displaystyle
\tfrac{2.0449A^{3/2}T_2\kappa}{d^2}
-0.2527\sqrt A\,T_1\kappa
+0.1263\sqrt A\,T_2\kappa
-0.1748
+\tfrac{0.0409d}{\sqrt A}$ &
$0.0173$ & $3111.3952$ \\

\addlinespace[1pt]

4 &
$\displaystyle
0.0032AT_1^2\kappa^2(-\sqrt A\,T_1\kappa+T_2/T_1)
+\tfrac{0.7742AT_2\kappa}{d}
-0.3208
+\tfrac{0.0679T_2}{T_1}$ &
$0.0224$ & $3097.3072$ \\

\addlinespace[1pt]

5 &
$\displaystyle
0.8039\sqrt A\,T_1\kappa
-\tfrac{0.8675AT_1\kappa(-T_2/T_1+d/\sqrt A)}{d}
-\tfrac{0.7636T_1}{T_2}
+0.1411$ &
$0.0226$ & $3078.9767$ \\

\addlinespace[1pt]

6 &
$\displaystyle
0.0047\sqrt A\,T_1\kappa
+\tfrac{0.0047\sqrt A\,T_2}{T_1d}
-0.0047AT_1^2\kappa^2
+\tfrac{0.8076AT_2\kappa}{d}
-0.3331
+\tfrac{0.0645T_2}{T_1}$ &
$0.0237$ & $3075.3877$ \\

\addlinespace[1pt]

7 &
$\displaystyle
-0.0045AT_1^2\kappa^2
-\tfrac{0.8146AT_2\kappa(-T_2/T_1+d/\sqrt A)}{d(T_2/T_1-d/\sqrt A)}
-0.3277
+\tfrac{0.0637T_2}{T_1}$ &
$0.0241$ & $3055.1463$ \\

\addlinespace[1pt]

8 &
$\displaystyle
\tfrac{0.8197AT_2\kappa}{d}
-0.0010T_2d\kappa
-0.3395
+\tfrac{0.0679T_2}{T_1}
+\tfrac{0.0010d}{\sqrt A}$ &
$0.0241$ & $3050.3108$ \\

\addlinespace[1pt]

9 &
$\displaystyle
-0.0005AT_1T_2\kappa^2
+\tfrac{0.8000AT_2\kappa}{d}
-0.3437
+\tfrac{0.0699T_2}{T_1}
+\tfrac{0.0005d}{\sqrt A}$ &
$0.0248$ & $3049.2081$ \\

\addlinespace[1pt]

10 &
$\displaystyle
0.0034\sqrt A\,T_1\kappa
-\tfrac{0.7927\sqrt A\,T_2(-\sqrt A\,T_1\kappa+d/\sqrt A)}{T_1d}
-0.3544
+\tfrac{0.8633T_2}{T_1}
-\tfrac{0.0034(\sqrt A\,T_1\kappa-d/\sqrt A)}{\sqrt A\,T_1\kappa}$ &
$0.0248$ & $3029.5726$ \\

\end{xltabular}
\endgroup

\begingroup
\scriptsize
\renewcommand{\arraystretch}{1.10}

\begin{xltabular}{\linewidth}{@{}c >{\raggedright\arraybackslash}X r r@{}}
\caption{Top $\mathsf{r}=10$ ranked symbolic expressions recovered by a single run of \vasstmain\ under $\sigma=0.1$ for learning \eqnref{eq:feynman-ftc}. Expressions are ranked by $\mathrm{LMPSE}$ among $H=2000$ sampled hard symbolic trees.}
\label{tab:vasst-top10-II-2-42-mapped-noise0p1}\\

\toprule
\toprule
\textbf{Rank} & \textbf{\vasstmain\ Expression Learned} & \textbf{\texttt{RMSE}} & $\mathrm{LMPSE}$ \\
\midrule
\endfirsthead

\multicolumn{4}{@{}l}{\scriptsize\emph{Table~\thetable\ continued from previous page.}}\\
\toprule
\toprule
\textbf{Rank} & \textbf{\vasstmain\ Expression Learned} & \textbf{\texttt{RMSE}} & $\mathrm{LMPSE}$ \\
\midrule
\endhead

\midrule
\multicolumn{4}{r@{}}{\scriptsize\emph{Continued on next page}}\\
\endfoot

\bottomrule
\bottomrule
\addlinespace[2pt]
\multicolumn{4}{@{}p{\linewidth}@{}}{
\scriptsize
\emph{Note}: \texttt{RMSE} corresponds to the out-of-sample \texttt{RMSE} computed on a $10\%$ held-out test set with the LMPSE selected symbolic model.
}\\
\endlastfoot

1 &
$\displaystyle
-0.0658\sqrt{A}T_{1}\kappa
+\tfrac{0.8713AT_{2}\kappa}{d}
-0.2464
+\tfrac{0.0466T_{2}}{T_{1}}$ &
$0.1013$ & $1519.3407$ \\

\addlinespace[1pt]

2 &
$\displaystyle
-0.0790\sqrt{A}T_{1}\kappa
+\tfrac{0.8853AT_{2}\kappa}{d}
-\tfrac{0.7128T_{1}}{T_{2}}
+0.1266$ &
$0.1015$ & $1519.0129$ \\

\addlinespace[1pt]

3 &
$\displaystyle
\tfrac{2.0411A^{3/2}T_{2}\kappa}{d^{2}}
-0.2565\sqrt{A}T_{1}\kappa
+0.1283\sqrt{A}T_{2}\kappa
-0.1242
+\tfrac{0.0285d}{\sqrt{A}}$ &
$0.1012$ & $1508.1725$ \\

\addlinespace[1pt]

4 &
$\displaystyle
-0.0041A^{3/2}T_{1}^{3}\kappa^{3}
+0.0041AT_{1}T_{2}\kappa^{2}
+\tfrac{0.7700AT_{2}\kappa}{d}
-0.3306
+\tfrac{0.0702T_{2}}{T_{1}}$ &
$0.1023$ & $1499.4539$ \\

\addlinespace[1pt]

5 &
$\displaystyle
0.0076\sqrt{A}T_{1}\kappa
+\tfrac{0.0076\sqrt{A}T_{2}}{T_{1}d}
-0.0076AT_{1}^{2}\kappa^{2}
+\tfrac{0.8186AT_{2}\kappa}{d}
-0.3429
+\tfrac{0.0633T_{2}}{T_{1}}$ &
$0.1024$ & $1499.0290$ \\

\addlinespace[1pt]

6 &
$\displaystyle
-\tfrac{0.8347\sqrt{A}T_{2}\kappa}{-T_{2}/T_{1}+T_{2}d/(\sqrt{A}T_{1})}
+\tfrac{1.0066AT_{2}\kappa}{d}
-\tfrac{0.0001T_{2}}{\sqrt{A}T_{1}^{2}\kappa-T_{2}}
+\tfrac{0.0001d}{AT_{1}\kappa-\sqrt{A}T_{2}/T_{1}}
+0.0148$ &
$0.1013$ & $1496.6708$ \\

\addlinespace[1pt]

7 &
$\displaystyle
\tfrac{0.8381AT_{2}\kappa}{d}
-0.0017T_{2}d\kappa
-0.3522
+\tfrac{0.0688T_{2}}{T_{1}}
+\tfrac{0.0017d}{\sqrt{A}}$ &
$0.1024$ & $1495.2980$ \\

\addlinespace[1pt]

8 &
$\displaystyle
-0.0071AT_{1}^{2}\kappa^{2}
+\tfrac{0.8274AT_{2}\kappa}{d}
-0.3337
+\tfrac{0.0625T_{2}}{T_{1}}$ &
$0.1025$ & $1486.1853$ \\

\addlinespace[1pt]

9 &
$\displaystyle
-0.0012AT_{1}T_{2}\kappa^{2}
+\tfrac{0.8147AT_{2}\kappa}{d}
-0.3606
+\tfrac{0.0700T_{2}}{T_{1}}
+\tfrac{0.0012d}{\sqrt A}$ &
$0.1030$ & $1485.0374$ \\

\addlinespace[1pt]

10 &
$\displaystyle
-0.0837\sqrt A\,T_{1}\kappa
+\tfrac{0.0837\sqrt A\,T_{2}}{T_{1}d}
-\tfrac{0.0837AT_{1}\kappa}{d}
+\tfrac{0.9207AT_{2}\kappa}{d}
-0.1414$ &
$0.1024$ & $1478.0320$ \\

\end{xltabular}
\endgroup

\begingroup
\scriptsize
\renewcommand{\arraystretch}{1.10}

\begin{xltabular}{\linewidth}{@{}c >{\raggedright\arraybackslash}X r r@{}}
\caption{Top $\mathsf{r}=10$ ranked symbolic expressions recovered by a single run of \vasstmain\ under $\sigma=0.2$ for learning \eqnref{eq:feynman-ftc}. Expressions are ranked by $\mathrm{LMPSE}$ among $H=2000$ sampled hard symbolic trees.}
\label{tab:vasst-top10-II-2-42-mapped-noise0p2}\\

\toprule
\toprule
\textbf{Rank} & \textbf{\vasstmain\ Expression Learned} & \textbf{\texttt{RMSE}} & $\mathrm{LMPSE}$ \\
\midrule
\endfirsthead

\multicolumn{4}{@{}l}{\scriptsize\emph{Table~\thetable\ continued from previous page.}}\\
\toprule
\toprule
\textbf{Rank} & \textbf{\vasstmain\ Expression Learned} & \textbf{\texttt{RMSE}} & $\mathrm{LMPSE}$ \\
\midrule
\endhead

\midrule
\multicolumn{4}{r@{}}{\scriptsize\emph{Continued on next page}}\\
\endfoot

\bottomrule
\bottomrule
\addlinespace[2pt]
\multicolumn{4}{@{}p{\linewidth}@{}}{
\scriptsize
\emph{Note}: \texttt{RMSE} corresponds to the out-of-sample \texttt{RMSE} computed on a $10\%$ held-out test set with the LMPSE selected symbolic model.
}\\
\endlastfoot

1 &
$\displaystyle
-0.0944\sqrt A\,T_1\kappa
+\tfrac{0.9031AT_2\kappa}{d}
-\tfrac{0.6620T_1}{T_2}
+0.1121$ &
$0.1998$ & $296.1263$ \\

\addlinespace[1pt]

2 &
$\displaystyle
-0.0832\sqrt A\,T_1\kappa
+\tfrac{0.8921AT_2\kappa}{d}
-0.2315
+\tfrac{0.0424T_2}{T_1}$ &
$0.1998$ & $290.2211$ \\

\addlinespace[1pt]

3 &
$\displaystyle
\tfrac{2.0374A^{3/2}T_2\kappa}{d^2}
-0.2604\sqrt A\,T_1\kappa
+0.1302\sqrt A\,T_2\kappa
-0.0737
+\tfrac{0.0160d}{\sqrt A}$ &
$0.2000$ & $285.3669$ \\

\addlinespace[1pt]

4 &
$\displaystyle
0.0105\sqrt A\,T_1\kappa
+\tfrac{0.0105\sqrt A\,T_2}{T_1d}
-0.0105AT_1^2\kappa^2
+\tfrac{0.8296AT_2\kappa}{d}
-0.3527
+\tfrac{0.0620T_2}{T_1}$ &
$0.2007$ & $279.9495$ \\

\addlinespace[1pt]

5 &
$\displaystyle
\tfrac{0.8566AT_2\kappa}{d}
-0.0024T_2d\kappa
-0.3648
+\tfrac{0.0697T_2}{T_1}
+\tfrac{0.0024d}{\sqrt A}$ &
$0.2005$ & $279.1597$ \\

\addlinespace[1pt]

6 &
$\displaystyle
0.9642\left(\tfrac{AT_2\kappa}{d}-\sqrt A\,T_1\kappa\right)
+0.8266\sqrt A\,T_1\kappa
-0.0003
\left\{
\tfrac{\sqrt A\,T_1\kappa}
{\sqrt A\,T_1\kappa-\sqrt A\,T_1\kappa}
\right\}$ &
$0.2009$ & $277.5255$ \\

\addlinespace[1pt]

7 &
$\displaystyle
-0.0050A^{3/2}T_1^3\kappa^3
+0.0050AT_1T_2\kappa^2
+\tfrac{0.7657AT_2\kappa}{d}
-0.3404
+\tfrac{0.0726T_2}{T_1}$ &
$0.2007$ & $276.9519$ \\

\addlinespace[1pt]

8 &
$\displaystyle
-0.0019AT_1T_2\kappa^2
+\tfrac{0.8293AT_2\kappa}{d}
-0.3775
+\tfrac{0.0701T_2}{T_1}
+\tfrac{0.0019d}{\sqrt A}$ &
$0.2011$ & $270.7425$ \\

\addlinespace[1pt]

9 &
$\displaystyle
-0.0096AT_1^2\kappa^2
+\tfrac{0.8401AT_2\kappa}{d}
-0.3396
+\tfrac{0.0612T_2}{T_1}$ &
$0.2007$ & $267.8984$ \\

\addlinespace[1pt]

10 &
$\displaystyle
-0.0911\sqrt A\,T_1\kappa
+\tfrac{0.0911\sqrt A\,T_2}{T_1d}
-\tfrac{0.0911AT_1\kappa}{d}
+\tfrac{0.9303AT_2\kappa}{d}
-0.1489$ &
$0.2001$ & $263.8642$ \\

\end{xltabular}
\endgroup

\begin{center}
\underline{For learning \eqnref{eq:feynman-ada}: $f = \beta^{\dagger}(1 + \alpha^{\dagger}\cos \theta)$ in~\hyperref[subsec:feynman-data-study]{Section~\ref{subsec:feynman-data-study}}}.
\end{center}

\begingroup
\scriptsize
\renewcommand{\arraystretch}{1.10}

\begin{xltabular}{\linewidth}{@{}c >{\raggedright\arraybackslash}X r r@{}}
\caption{Top $\mathsf{r}=10$ ranked symbolic expressions recovered by a single run of \vasstmain\ under the noiseless setting for learning \eqnref{eq:feynman-ada}. Expressions are ranked by $\mathrm{LMPSE}$ among $H=2000$ sampled hard symbolic trees.}
\label{tab:vasst-top10-III-17-37-mapped-noiseless}\\

\toprule
\toprule
\textbf{Rank} & \textbf{\vasstmain\ Expression Learned} & \textbf{\texttt{RMSE}} & $\mathrm{LMPSE}$ \\
\midrule
\endfirsthead

\multicolumn{4}{@{}l}{\scriptsize\emph{Table~\thetable\ continued from previous page.}}\\
\toprule
\toprule
\textbf{Rank} & \textbf{\vasstmain\ Expression Learned} & \textbf{\texttt{RMSE}} & $\mathrm{LMPSE}$ \\
\midrule
\endhead

\midrule
\multicolumn{4}{r@{}}{\scriptsize\emph{Continued on next page}}\\
\endfoot

\bottomrule
\bottomrule
\addlinespace[2pt]
\multicolumn{4}{@{}p{\linewidth}@{}}{
\scriptsize
\emph{Note}: \texttt{RMSE} corresponds to the out-of-sample \texttt{RMSE} computed on a $10\%$ held-out test set with the LMPSE selected symbolic model.
}\\
\endlastfoot

1 &
$\displaystyle
1.0238\alpha^\dagger\beta^\dagger\cos\theta
+1.0238\beta^\dagger
+0.00103\theta
+0.0012\cos(2\alpha^\dagger)
-0.0009$ &
$0.0013$ & $1000.8680$ \\

\addlinespace[1pt]

2 &
$\displaystyle
0.0975\alpha^\dagger
+0.0505\beta^\dagger(\beta^\dagger+2\theta)
+0.0325\theta
+0.8895(\alpha^\dagger\beta^\dagger+\beta^\dagger)\cos\theta
+0.0325\cos(\beta^\dagger)
-0.5612$ &
$0.5480$ & $-1720.7743$ \\

\addlinespace[1pt]

3 &
$\displaystyle
0.0440\alpha^\dagger\beta^\dagger(\beta^\dagger+\theta)
-0.1722\beta^\dagger\cos(\alpha^\dagger)
+1.0081\{\alpha^\dagger\beta^\dagger+\cos(\alpha^\dagger)\}\cos\theta
+0.0440\cos(\beta^\dagger+\theta)
-1.6680$ &
$0.7250$ & $-2283.3134$ \\

\addlinespace[1pt]

4 &
$\displaystyle
0.8978\alpha^\dagger\beta^\dagger\cos\theta
+0.0826\alpha^\dagger
+0.0476\beta^\dagger\theta^2
+0.8978\beta^\dagger\cos\theta
+0.0476\theta
-2.9743$ &
$0.7862$ & $-2428.7997$ \\

\addlinespace[1pt]

5 &
$\displaystyle
1.0678\alpha^\dagger\beta^\dagger\cos\theta
+0.1879\alpha^\dagger\theta
-1.1075\alpha^\dagger
+0.5636\beta^\dagger
-0.5537\cos(\beta^\dagger)
-0.0448$ &
$0.8996$ & $-2700.6837$ \\

\addlinespace[1pt]

6 &
$\displaystyle
1.0092\alpha^\dagger\beta^\dagger\cos\theta
+0.2084\cos(\alpha^\dagger)
+0.2084\cos(2\alpha^\dagger)
+0.2084\cos(\beta^\dagger)
-0.0643\cos(2\alpha^\dagger\beta^\dagger\theta)
+3.4579$ &
$1.1653$ & $-3228.7870$ \\

\addlinespace[1pt]

7 &
$\displaystyle
1.0434\alpha^\dagger\beta^\dagger\cos\theta
-0.1611\beta^\dagger\cos\theta
-0.0155\cos\{(\alpha^\dagger)^2\}
-0.1611\cos(2\beta^\dagger)
+3.4235$ &
$1.1933$ & $-3266.0467$ \\

\addlinespace[1pt]

8 &
$\displaystyle
1.1812\beta^\dagger
+4.3937(\alpha^\dagger+\beta^\dagger)\cos\theta
-20.0205\cos\theta
+4.3937\cos(\cos\theta)
-3.9709$ &
$1.7949$ & $-4090.8640$ \\

\addlinespace[1pt]

9 &
$\displaystyle
-0.1459\alpha^\dagger
+0.8872\beta^\dagger\{\alpha^\dagger+\cos(\alpha^\dagger)\}\cos\theta
-0.0613\theta^2
-0.0613\theta
-0.0613\cos\{\cos(\alpha^\dagger)\}
+8.1100$ &
$1.8462$ & $-4140.0795$ \\

\addlinespace[1pt]

10 &
$\displaystyle
0.5482\beta^\dagger(\alpha^\dagger+\beta^\dagger)\cos\theta
+0.8815\beta^\dagger\cos(\cos\theta)
-1.0354\theta
-1.0354\cos(\cos\theta)
+9.9864$ &
$1.9516$ & $-4253.3533$ \\

\end{xltabular}
\endgroup

\begingroup
\scriptsize
\renewcommand{\arraystretch}{1.10}

\begin{xltabular}{\linewidth}{@{}c >{\raggedright\arraybackslash}X r r@{}}
\caption{Top $\mathsf{r}=10$ ranked symbolic expressions recovered by a single run of \vasstmain\ under $\sigma=0.1$ for learning \eqnref{eq:feynman-ada}. Expressions are ranked by $\mathrm{LMPSE}$ among $H=2000$ sampled hard symbolic trees.}
\label{tab:vasst-top10-III-17-37-mapped-noise0p1}\\

\toprule
\toprule
\textbf{Rank} & \textbf{\vasstmain\ Expression Learned} & \textbf{\texttt{RMSE}} & $\mathrm{LMPSE}$ \\
\midrule
\endfirsthead

\multicolumn{4}{@{}l}{\scriptsize\emph{Table~\thetable\ continued from previous page.}}\\
\toprule
\toprule
\textbf{Rank} & \textbf{\vasstmain\ Expression Learned} & \textbf{\texttt{RMSE}} & $\mathrm{LMPSE}$ \\
\midrule
\endhead

\midrule
\multicolumn{4}{r@{}}{\scriptsize\emph{Continued on next page}}\\
\endfoot

\bottomrule
\bottomrule
\addlinespace[2pt]
\multicolumn{4}{@{}p{\linewidth}@{}}{
\scriptsize
\emph{Note}: \texttt{RMSE} corresponds to the out-of-sample \texttt{RMSE} computed on a $10\%$ held-out test set with the LMPSE selected symbolic model.
}\\
\endlastfoot

1 &
$\displaystyle
1.0241\alpha^\dagger\beta^\dagger\cos\theta
+1.0241\beta^\dagger
+0.0001\theta
+0.0003\cos(2\alpha^\dagger)
-0.0001$ &
$0.1083$ & $596.2956$ \\

\addlinespace[1pt]

2 &
$\displaystyle
0.8896\alpha^\dagger\beta^\dagger\cos\theta
+0.0999\alpha^\dagger
+0.0504(\beta^\dagger)^2
+0.1009\beta^\dagger\theta
+0.8896\beta^\dagger\cos\theta
+0.0333\theta
+0.0333\cos(\beta^\dagger)
-0.5769$ &
$0.5599$ & $-1763.2871$ \\

\addlinespace[1pt]

3 &
$\displaystyle
0.0440\alpha^\dagger(\beta^\dagger)^2
+0.0440\alpha^\dagger\beta^\dagger\theta
+1.0081\alpha^\dagger\beta^\dagger\cos\theta
-0.1715\beta^\dagger\cos(\alpha^\dagger)
+1.0081\cos(\alpha^\dagger)\cos\theta
+0.0440\cos(\beta^\dagger+\theta)
-1.6715$ &
$0.7327$ & $-2297.4057$ \\

\addlinespace[1pt]

4 &
$\displaystyle
0.8978\alpha^\dagger\beta^\dagger\cos\theta
+0.0847\alpha^\dagger
+0.0476\beta^\dagger\theta^2
+0.8978\beta^\dagger\cos\theta
+0.0476\theta
-2.9871$ &
$0.7872$ & $-2433.3839$ \\

\addlinespace[1pt]

5 &
$\displaystyle
1.0679\alpha^\dagger\beta^\dagger\cos\theta
+0.1878\alpha^\dagger\theta
-1.1050\alpha^\dagger
+0.5634\beta^\dagger
-0.5525\cos(\beta^\dagger)
-0.0551$ &
$0.9001$ & $-2704.0075$ \\

\addlinespace[1pt]

6 &
$\displaystyle
1.0092\alpha^\dagger\beta^\dagger\cos\theta
+0.2101\cos(\alpha^\dagger)
+0.2101\cos(2\alpha^\dagger)
+0.2101\cos(\beta^\dagger)
-0.0647\cos(2\alpha^\dagger\beta^\dagger\theta)
+3.4477$ &
$1.1641$ & $-3226.8548$ \\

\addlinespace[1pt]

7 &
$\displaystyle
1.0438\alpha^\dagger\beta^\dagger\cos\theta
-0.1627\beta^\dagger\cos\theta
-0.0180\cos\{(\alpha^\dagger)^2\}
-0.1627\cos(2\beta^\dagger)
+3.4140$ &
$1.1915$ & $-3263.6318$ \\

\addlinespace[1pt]

8 &
$\displaystyle
4.3980\alpha^\dagger\cos\theta
+4.3980\beta^\dagger\cos\theta
+1.1795\beta^\dagger
-20.0600\cos\theta
+4.3980\cos(\cos\theta)
-3.9615$ &
$1.7949$ & $-4091.1167$ \\

\addlinespace[1pt]

9 &
$\displaystyle
0.8875\alpha^\dagger\beta^\dagger\cos\theta
-0.1441\alpha^\dagger
+0.8875\beta^\dagger\cos(\alpha^\dagger)\cos\theta
-0.0610\theta^2
-0.0610\theta
-0.0610\cos\{\cos(\alpha^\dagger)\}
+8.0903$ &
$1.8461$ & $-4140.0206$ \\

\addlinespace[1pt]

10 &
$\displaystyle
0.5484\alpha^\dagger\beta^\dagger\cos\theta
+0.5484(\beta^\dagger)^2\cos\theta
+0.8827\beta^\dagger\cos(\cos\theta)
-1.0310\theta
-1.0310\cos(\cos\theta)
+9.9696$ &
$1.9507$ & $-4252.5163$ \\

\end{xltabular}
\endgroup

\begingroup
\scriptsize
\renewcommand{\arraystretch}{1.10}

\begin{xltabular}{\linewidth}{@{}c >{\raggedright\arraybackslash}X r r@{}}
\caption{Top $\mathsf{r}=10$ ranked symbolic expressions recovered by a single run of \vasstmain\ under $\sigma=0.2$ for learning \eqnref{eq:feynman-ada}. Expressions are ranked by $\mathrm{LMPSE}$ among $H=2000$ sampled hard symbolic trees.}
\label{tab:vasst-top10-III-17-37-mapped-noise0p2}\\

\toprule
\toprule
\textbf{Rank} & \textbf{\vasstmain\ Expression Learned} & \textbf{\texttt{RMSE}} & $\mathrm{LMPSE}$ \\
\midrule
\endfirsthead

\multicolumn{4}{@{}l}{\scriptsize\emph{Table~\thetable\ continued from previous page.}}\\
\toprule
\toprule
\textbf{Rank} & \textbf{\vasstmain\ Expression Learned} & \textbf{\texttt{RMSE}} & $\mathrm{LMPSE}$ \\
\midrule
\endhead

\midrule
\multicolumn{4}{r@{}}{\scriptsize\emph{Continued on next page}}\\
\endfoot

\bottomrule
\bottomrule
\addlinespace[2pt]
\multicolumn{4}{@{}p{\linewidth}@{}}{
\scriptsize
\emph{Note}: \texttt{RMSE} corresponds to the out-of-sample \texttt{RMSE} computed on a $10\%$ held-out test set with the LMPSE selected symbolic model.
}\\
\endlastfoot

1 &
$\displaystyle
1.0244\alpha^\dagger\beta^\dagger\cos\theta
+1.0244\beta^\dagger
+0.0011\theta
+0.0005\cos(2\alpha^\dagger)
-0.0009$ &
$0.2004$ & $-73.3202$ \\

\addlinespace[1pt]

2 &
$\displaystyle
0.8896\alpha^\dagger\beta^\dagger\cos\theta
+0.1024\alpha^\dagger
+0.0504(\beta^\dagger)^2
+0.1007\beta^\dagger\theta
+0.8896\beta^\dagger\cos\theta
+0.0341\theta
+0.0341\cos(\beta^\dagger)
-0.5925$ &
$0.5797$ & $-1871.4716$ \\

\addlinespace[1pt]

3 &
$\displaystyle
0.0440\alpha^\dagger(\beta^\dagger)^2
+0.0440\alpha^\dagger\beta^\dagger\theta
+1.0082\alpha^\dagger\beta^\dagger\cos\theta
-0.1708\beta^\dagger\cos(\alpha^\dagger)
+1.0082\cos(\alpha^\dagger)\cos\theta
+0.0440\cos(\beta^\dagger+\theta)
-1.6750$ &
$0.7474$ & $-2343.8105$ \\

\addlinespace[1pt]

4 &
$\displaystyle
0.8979\alpha^\dagger\beta^\dagger\cos\theta
+0.0867\alpha^\dagger
+0.0477\beta^\dagger\theta^2
+0.8979\beta^\dagger\cos\theta
+0.0477\theta
-2.9999$ &
$0.7891$ & $-2443.9352$ \\

\addlinespace[1pt]

5 &
$\displaystyle
1.0679\alpha^\dagger\beta^\dagger\cos\theta
+0.1877\alpha^\dagger\theta
-1.1024\alpha^\dagger
+0.5631\beta^\dagger
-0.5512\cos(\beta^\dagger)
-0.0654$ &
$0.9011$ & $-2711.0094$ \\

\addlinespace[1pt]

6 &
$\displaystyle
1.0093\alpha^\dagger\beta^\dagger\cos\theta
+0.2119\cos(\alpha^\dagger)
+0.2119\cos(2\alpha^\dagger)
+0.2119\cos(\beta^\dagger)
-0.0523\cos(2\alpha^\dagger\beta^\dagger\theta)
+3.4568$ &
$1.1614$ & $-3217.6691$ \\

\addlinespace[1pt]

7 &
$\displaystyle
1.0441\alpha^\dagger\beta^\dagger\cos\theta
-0.1643\beta^\dagger\cos\theta
-0.0206\cos\{(\alpha^\dagger)^2\}
-0.1643\cos(2\beta^\dagger)
+3.4217$ &
$1.1877$ & $-3257.2784$ \\

\addlinespace[1pt]

8 &
$\displaystyle
4.4023\alpha^\dagger\cos\theta
+4.4023\beta^\dagger\cos\theta
+1.1778\beta^\dagger
-20.1000\cos\theta
+4.4023\cos(\cos\theta)
-3.9658$ &
$1.7950$ & $-4092.0921$ \\

\addlinespace[1pt]

9 &
$\displaystyle
0.8878\alpha^\dagger\beta^\dagger\cos\theta
-0.1423\alpha^\dagger
+0.8878\beta^\dagger\cos(\alpha^\dagger)\cos\theta
-0.0607\theta^2
-0.0607\theta
-0.0607\cos\{\cos(\alpha^\dagger)\}
+8.0502$ &
$1.8460$ & $-4139.9095$ \\

\addlinespace[1pt]

10 &
$\displaystyle
0.5487\alpha^\dagger\beta^\dagger\cos\theta
+0.5487(\beta^\dagger)^2\cos\theta
+0.8838\beta^\dagger\cos(\cos\theta)
-1.0266\theta
-1.0266\cos(\cos\theta)
+9.9047$ &
$1.9489$ & $-4249.8368$ \\

\end{xltabular}
\endgroup
\newpage

\section{Distinction between \texorpdfstring{\vasst}{VaSST} and \vart}
\label{app:vart-vs-vasst}

Although \vasstmain\ and \vart~\citep{VART} both use variational inference over relaxed tree-structured objects, the underlying statistical objects, computational relaxations, and inferential targets are fundamentally different. \vart\ is designed for variational inference in decision trees, which consist of recursive partitions of the covariate space followed by prediction through routing of observations to terminal regions. In contrast, \vasstmain\ addresses \sr, where trees encode compositional mathematical expressions; see \hyperref[sec:symbolic-tree-representation]{Section~\ref{sec:symbolic-tree-representation}}. This distinction changes the meaning of every node of the underlying tree structures, the form of the forward evaluation, the nature of the posterior distribution, and the role of uncertainty quantification, as summarized below.

First, as noted in \hyperref[sec:intro]{Sections~\ref{sec:intro}} and \ref{sec:symbolic-tree-representation}, \vasstmain\ is based on symbolic tree representations. Internal nodes in a symbolic tree represent mathematical operators such as $+$, $-$, $\times$, $/$, $\sin$, $\cos$, $\log$, $\exp$, and powers, while terminal nodes represent primitive features. Thus, a symbolic tree evaluates to an analytical expression. This is fundamentally different from a decision tree, where internal nodes represent split rules and the tree defines a partition of the covariate space rather than a closed-form scientific law.

Second, \vasstmain\ requires a new \emph{soft symbolic evaluation} mechanism; see \hyperref[alg:soft-eval]{\textsc{SoftEval} Algorithm ~\ref{alg:soft-eval}} of \hyperref[app:soft-evaluation-algorithms]{Appendix~\ref{app:soft-evaluation-algorithms}}. In \vart, relaxation is achieved through soft routing, where an observation is probabilistically routed through a decision tree. In \vasstmain, there is no routing interpretation. Instead, the relaxation must propagate mixtures over operator and feature choices through a compositional expression tree. The resulting soft tree is therefore a differentiable surrogate for symbolic expression evaluation, not a probabilistic partitioning rule.

Third, \vasstmain\ uses reparameterized relaxations tailored to symbolic structures. In~\eqnref{eq:continuous-relaxation}, Binary-Concrete variables are used for expansion decisions, while Gumbel-Softmax relaxations are used for operator and feature assignments. These relaxations must respect the compositional nature of \sr, where invalid or numerically unstable operations, such as division by near-zero quantities or undefined logarithmic evaluations, require careful treatment. Such issues do not arise in standard soft decision tree routing.

Fourth, \vasstmain\ incorporates a collapsed Bayesian linear layer that is absent in the \vart\ modeling framework. Specifically, \vasstmain\ places a conjugate Normal Inverse-Gamma prior on the outer regression coefficients and noise variance, and analytically marginalizes these quantities. This yields a collapsed structural objective over symbolic ensembles, allowing the variational optimization to focus on expression structure while preserving uncertainty over the regression layer. This collapsed formulation is central to the underlying Bayesian model of \vasstmain.

Finally, the posterior object in \vasstmain\ is a distribution over symbolic expressions, not a distribution over predictive partitions. This enables posterior-aware sampling, ranking, and selection of candidate scientific laws after optimization. Consequently, uncertainty in \vasstmain\ is uncertainty over competing analytical explanations, whereas uncertainty in \vart\ concerns alternative tree partitions and their associated predictions. Thus, \vasstmain\ is not a direct application of \vart\ to \sr; it introduces a distinct probabilistic framework for differentiable, uncertainty-aware discovery of compositional scientific expressions.
\newpage

\section{Ablation Studies for \texorpdfstring{\vasst}{VaSST}}
\label{app:ablation-study}

Using ablation studies, we investigate the sensitivity of \vasstmain\ with respect to the number of symbolic trees $K$ and the maximum tree depth $D$. These two architectural hyperparameters control complementary aspects of the symbolic representation. The number of trees $K$ determines the number of additive symbolic components in the final ensemble, whereas the depth $D$ controls the maximum compositional complexity allowed within each symbolic tree.

We consider two representative Feynman equations~\citep{AI-Feynman} from~\hyperref[subsec:feynman-data-study]{Section~\ref{subsec:feynman-data-study}}:
\begin{align*}
\eqnref{eq:feynman-cpe}:\; \Delta U = G m_1 m_2\left(\frac{1}{r_2} - \frac{1}{r_1}\right), \quad \eqnref{eq:feynman-ftc}:\;P=\frac{\kappa A(T_2-T_1)}{d}.
\end{align*}
For each equation in the preceding display, $n=2000$ observations are subsampled from $10^{5}$ observations, and Gaussian noise with variance $\sigma^{2} = 0.1^{2}$ is added to the response. We then run \vasstmain\ over the grid $K\in \{2, 3, 4, 5\}$ and $D\in \{2, 3, 4, 5\}$, while keeping all other prior, relaxation, and optimization settings configured to be the same as in~\hyperref[subsec:vasst-configs]{Appendix~\ref{subsec:vasst-configs}}. The operator sets used for \eqnref{eq:feynman-cpe} and \eqnref{eq:feynman-ftc} are $\mathbf{O} = \{+,\times,/\}$ and $\mathbf{O} = \{\times, -, /\}$, respectively.

\begin{table}[htbp]
\centering
\caption{Runtime and out-of-sample \texttt{RMSE} (computed on a $10\%$ held-out test set) of \vasstmain\ for $\eqnref{eq:feynman-cpe}:\;\Delta U = Gm_1m_2\left(\frac{1}{r_2} - \frac{1}{r_1}\right)$ across different choices of $K$ and $D$.}
\label{tab:ablation-I_13_12}

\begin{minipage}[t]{0.47\textwidth}
\centering
\textbf{Runtime}

\vspace{4pt}

\begin{tabular}{@{}ccccc@{}}
\toprule
\toprule
$K \backslash D$ & $2$ & $3$ & $4$ & $5$ \\
\midrule
$2$ & $54.7$ & $98.8$ & $192.4$ & $399.6$ \\
$3$ & $61.5$ & $112.0$ & $221.4$ & $491.3$ \\
$4$ & $67.9$ & $123.1$ & $251.8$ & $600.9$ \\
$5$ & $72.3$ & $134.8$ & $286.4$ & $800.5$ \\
\bottomrule
\bottomrule
\end{tabular}
\end{minipage}
\hfill
\begin{minipage}[t]{0.47\textwidth}
\centering
\textbf{\texttt{RMSE}}

\vspace{4pt}

\begin{tabular}{@{}ccccc@{}}
\toprule
\toprule
$K \backslash D$ & $2$ & $3$ & $4$ & $5$ \\
\midrule
$2$ & $0.2330$ & $0.1012$ & $0.1959$ & $0.2349$ \\
$3$ & $0.1002$ & $0.1019$ & $0.1051$ & $0.1918$ \\
$4$ & $0.1168$ & $0.1002$ & $0.1005$ & $0.1012$ \\
$5$ & $0.1002$ & $0.1005$ & $0.1004$ & $0.1010$ \\
\bottomrule
\bottomrule
\end{tabular}
\end{minipage}
\end{table}

\begin{table}[htbp]
\centering
\caption{Runtime and out-of-sample \texttt{RMSE} (computed on a $10\%$ held-out test set) of \vasstmain\ for $\eqnref{eq:feynman-ftc}:\;P = \frac{\kappa A(T_2 - T_1)}{d}$ across different choices of $K$ and $D$.}
\label{tab:ablation-II_2_42}

\begin{minipage}[t]{0.47\textwidth}
\centering
\textbf{Runtime}

\vspace{4pt}

\begin{tabular}{@{}ccccc@{}}
\toprule
\toprule
$K \backslash D$ & $2$ & $3$ & $4$ & $5$ \\
\midrule
$2$ & $50.6$ & $97.2$ & $191.3$ & $401.6$ \\
$3$ & $62.1$ & $112.0$ & $221.8$ & $498.5$ \\
$4$ & $72.2$ & $125.0$ & $249.8$ & $628.2$ \\
$5$ & $83.1$ & $146.0$ & $299.4$ & $803.4$ \\
\bottomrule
\bottomrule
\end{tabular}
\end{minipage}
\hfill
\begin{minipage}[t]{0.47\textwidth}
\centering
\textbf{\texttt{RMSE}}

\vspace{4pt}

\begin{tabular}{@{}ccccc@{}}
\toprule
\toprule
$K \backslash D$ & $2$ & $3$ & $4$ & $5$ \\
\midrule
$2$ & $0.1019$ & $0.1002$ & $0.1002$ & $0.1070$ \\
$3$ & $0.1000$ & $0.1011$ & $0.1012$ & $0.1019$ \\
$4$ & $0.1000$ & $0.0999$ & $0.1000$ & $0.1000$ \\
$5$ & $0.0999$ & $0.0999$ & $0.0998$ & $0.1007$ \\
\bottomrule
\bottomrule
\end{tabular}
\end{minipage}
\end{table}

\hyperref[tab:ablation-I_13_12]{Tables~\ref{tab:ablation-I_13_12}} and~\ref{tab:ablation-II_2_42} report the runtimes and out-of-sample \texttt{RMSE}s of \vasstmain\ across the chosen $(K, D)$ grid while learning \eqnref{eq:feynman-cpe} and \eqnref{eq:feynman-ftc}. The key findings are summarized as follows.

\paragraph{Performance stabilizes with model complexity.}
For both the Feynman equations, out-of-sample \texttt{RMSE} improves significantly and tracks the noise level $\sigma^{2} = 0.1^{2}$ closely when moving from $K=2$ to $K\geq 3$, and stabilizes for $K\geq 4$ and $D\geq 3$. This indicates that \vasstmain\ does not require large model sizes to achieve optimal predictive accuracy.

\paragraph{Low capacity regimes can limit symbolic recovery.}
As expected, insufficient model capacity leads to occasional failures in recovering the true symbolic law. For \eqnref{eq:feynman-ftc}, failure occurs at $(K=2, D=5)$, and for \eqnref{eq:feynman-cpe}, failure occurs at $(K=2, D\in \{2, 4, 5\})$ and $(K=3, D=5)$. This highlights that both sufficient number of trees $K$ and appropriate depth $D$ are necessary for capturing the underlying compositional symbolic structure.

\paragraph{Runtime behavior with $K$ and $D$.}
For fixed depth $D$, runtime increases approximately linearly in $K$ and for fixed $K$, runtime increases much more rapidly with $D$, reflecting the exponential growth in the number of nodes in a symbolic tree. Importantly, moderate configurations, e.g., $(K=3, D=3)$, already achieve near-optimal \texttt{RMSE} value, offering a favorable accuracy-efficiency trade-off.

\paragraph{Principled heuristics for choosing $K$ and $D$.}
Based on the above ablation studies, we prescribe the following practical guideline for choosing $K$ and $D$. Since \vasstmain\ is designed to recover simple and interpretable scientific expressions, our recommendation follows an Occam's razor principle~\citep{Occams-Razor-1}: choose the smallest model capacity $(K, D)$ that attains stable out-of-sample \texttt{RMSE} and symbolic recovery. Empirically, $K\in \{3, 4\}$ and $D\in \{3, 4\}$ provide a favorable accuracy-interpretability-runtime trade-off in the applications considered in~\hyperref[sec:vasst-in-action]{Section~\ref{sec:vasst-in-action}}, while larger depths increase runtime exponentially and may lead to unnecessarily complex symbolic expressions without substantial predictive gains.
\newpage

\section{Computational Scaling and Gradient Variance Diagnostics for \texorpdfstring{\vasst}{VaSST}}
\label{app:memory-footprint}

In this section, we examine how the computational footprint and stochastic gradient behavior of \vasstmain\ scale with the maximum symbolic tree depth $D$. This diagnostic complements the ablation studies in~\hyperref[app:ablation-study]{Appendix~\ref{app:ablation-study}} by isolating the effect of depth on memory, runtime, and gradient variability in the relaxed symbolic tree parameterization.

\subsection{Scaling of the Relaxed Symbolic Tree Parameterization}

A full binary tree skeleton $\mathsf{S}_j$ of maximum depth $D$ contains $N=2^{D+1}-1$ nodes. For each soft symbolic tree $\{\mathsf{S}_j^{\mathrm{soft}}\}_{j=1}^{K}$, \vasstmain\ maintains relaxed logits for node expansion ($\ell_{j\zeta}$), operator assignment ($\mathbf a_{j\zeta}^{\mathrm{op}}$), and feature assignment ($\mathbf a_{j\zeta}^{\mathrm{ft}}$) for every node $\zeta \in \mathsf Z_D = \{0, 1, \ldots, N-1\}$, along with the variational concentration hyperparameters $\widetilde{\bm \eta}_{\mathrm{op}}$ and $\widetilde{\bm \eta}_{\mathrm{ft}}$. Therefore, if $|\mathbf O|$ denotes the number of operators and $p$ denotes the input dimension, then the total number of variational parameters in $\phi = ((\ell_{j\zeta}, \mathbf{a}^{\mathrm{op}}_{j\zeta}, \mathbf{a}^{\mathrm{ft}}_{j\zeta})_{j=1,\ldots,K;\zeta\in \mathsf{Z}_D}, \widetilde{\bm \eta}_{\mathrm{op}}, \widetilde{\bm \eta}_{\mathrm{ft}})$ is $d_{\mathrm{relax}}(K, D) = |\mathbf{O}| + p + K N(1 + |\mathbf{O}| + p)$. Thus, for fixed $|\mathbf{O}|$ and $p$, the number of variational parameters grows as $d_{\mathrm{relax}}(K, D) = \mathsf{O}(K 2^{D})$.

This scaling gives the dominant depth-dependent memory footprint for the structural relaxation. In addition, during Monte Carlo (\texttt{MC}) \texttt{ELBO} evaluation in the \hyperref[alg:approx-elbo]{\textsc{ApproxELBO} Algorithm~\ref{alg:approx-elbo}} in \hyperref[app:approx-elbo]{Appendix~\ref{app:approx-elbo}}, the soft tree forward pass stores node-level evaluations and stochastic relaxation samples. With $S$ \texttt{MC} samples and $n$ training observations, the activation-level memory is of order $\mathsf{O}(S n K N)$ up to operator-specific constants. Hence, the memory footprint grows approximately linearly in $K$, $S$, and $n$, but exponentially in $D$.

The same node count scaling also determines runtime. A stochastic forward-backward pass evaluates all relaxed nodes and backpropagates through the Binary-Concrete and Gumbel-Softmax relaxations. Therefore, the per-iteration computational cost is approximately $\mathsf{O}(Sn KNC_{\mathbf O})$, where $C_{\mathbf{O}}$ denotes the average cost of evaluating the operator mixture. Since $N=2^{D+1}-1$, the runtime is expected to increase exponentially with depth $D$.

\subsection{Gradient Variance Diagnostic}

We next quantify how the stochastic gradient variability changes with depth $D$. 
For one draw of the relaxed tree variables, the stochastic gradient of the negative \texttt{ELBO} is $g(\phi; \omega) = \nabla_{\phi}\{-\widehat{\mathcal{E}}( \phi; \omega)\}$, where $\omega$ encodes the randomness owing to the Binary-Concrete and Gumbel-Softmax relaxations. We estimate the coordinate-wise gradient variance using $R$ independent stochastic gradients $g^{(1)}, \ldots, g^{(R)}$:
\begin{equation}
\label{eq:coordinate-wise-gradient-variance}
\widehat{\mathbb{V}}_j(g) = \frac{1}{R-1}\sum_{r=1}^{R}(g_j^{(r)} - \overline{g}_j)^2, \quad \overline{g}_{j} = \frac{1}{R}\sum_{r=1}^{R}g_{j}^{(r)}.
\end{equation}
We summarize the gradient variability in~\eqnref{eq:coordinate-wise-gradient-variance} using:
\begin{equation}
\label{eq:var-summary}
\widehat{V}_{\mathrm{coord}} = \frac{1}{d_{\mathrm{relax}}}\sum_{j=1}^{d_{\mathrm{relax}}}\widehat{\mathbb V}_j(g), \quad \widehat{V}_{\mathrm{total}} = \sum_{j=1}^{d_{\mathrm{relax}}}\widehat{\mathbb{V}}_{j}(g),
\end{equation}
which are the mean coordinate-wise gradient variance and the total gradient variance across all relaxed structural logits, respectively. The first quantity in~\eqnref{eq:var-summary} measures the average stochastic fluctuation per relaxed coordinate, while the second quantity in~\eqnref{eq:var-summary} measures the aggregate stochastic fluctuation over the full relaxed symbolic tree parameterization.

\subsection{Results}

We run the diagnostic experiment on the Feynman equation $\eqnref{eq:feynman-cpe}:\; \Delta U = G m_1m_2(\frac{1}{r_2} - \frac{1}{r_1})$ from \hyperref[subsec:feynman-data-study]{Section~\ref{subsec:feynman-data-study}} for which \vasstmain\ is implemented with $K=3$ symbolic trees and depths $D\in \{2, 3, 4, 5\}$. The diagnostic experiment in run on CPU with $n=2000$ subsampled training observations and Gaussian noise level $\sigma^{2} = 0.2^2$ added to the response. The operator set is configured as $\mathbf{O} = \{+,\times,/\}$. All remaining settings follow similarly  from~\hyperref[subsec:vasst-configs]{Appendix~\ref{subsec:vasst-configs}}.
In this experiment, $K=3$, $|\mathbf O|=3$, and $p=3$. Therefore, the size of the relaxed structural parameterization grows as shown in \hyperref[tab:relaxed-logit-scaling]{Table~\ref{tab:relaxed-logit-scaling}}.

\begin{table}[H]
\centering
\caption{Growth of the relaxed symbolic parameterization for $K=3$, $|\mathbf O|=3$, and $p=3$.}
\label{tab:relaxed-logit-scaling}
\begin{tabular}{@{}cccc@{}}
\toprule
\toprule
Depth $D$ & Nodes per tree ($N$) & Total nodes ($K N$) & Variational Parameters ($d_{\mathrm{relax}}(K,D)$) \\
\midrule
$2$ & $7$  & $21$  & $153$ \\
$3$ & $15$ & $45$  & $321$ \\
$4$ & $31$ & $93$  & $657$ \\
$5$ & $63$ & $189$ & $1329$ \\
\bottomrule
\bottomrule
\end{tabular}
\end{table}

For each depth $D$, we measure the runtime of one stochastic forward-backward pass using $S=8$ \texttt{MC} samples in the \texttt{ELBO}. For the gradient variance diagnostic, we generate $R=100$ independent stochastic gradients using one \texttt{MC} sample per gradient. Gradients are collected only with respect $\phi$. Each diagnostic is repeated $5$ times, and the median value is reported. We compute these quantities before training and after $2000$ training iterations.

\begin{figure}[!htp]
    \centering
    \includegraphics[width=0.45\linewidth]{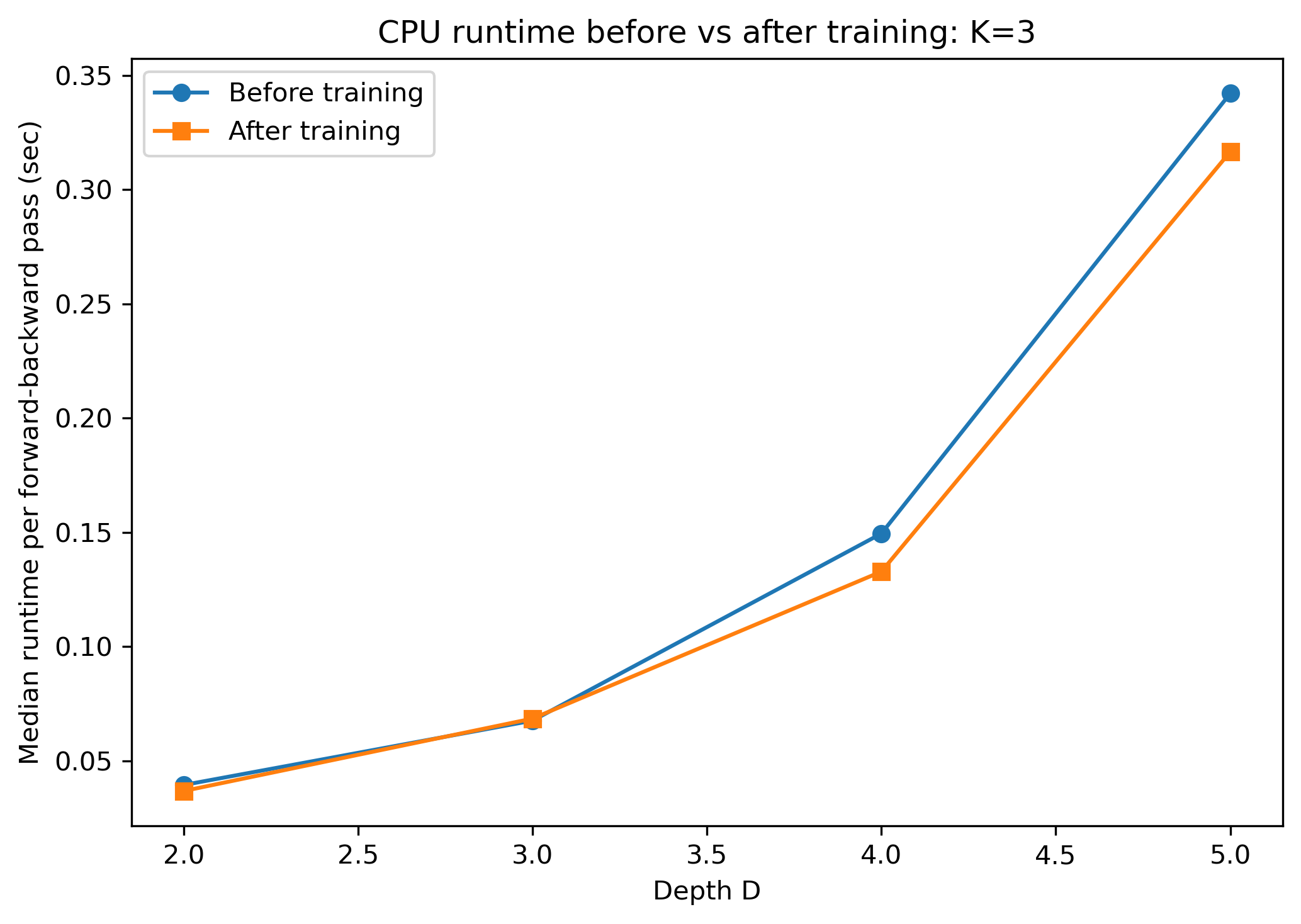}
    \caption{Median (over $5$ repetitions) CPU runtime per forward-backward pass against depth $D$, before and after training for $\eqnref{eq:feynman-cpe}:\;\Delta U = Gm_1m_2(\frac{1}{r_2}-\frac{1}{r_1})$ with $K=3$ and $\sigma^{2}=0.2^{2}$.}
    \label{fig:cpu_runtime}
\end{figure}

\begin{figure}[!htp]
    \centering
    \begin{subfigure}[t]{0.45\linewidth}
        \centering
        \includegraphics[width=\linewidth]{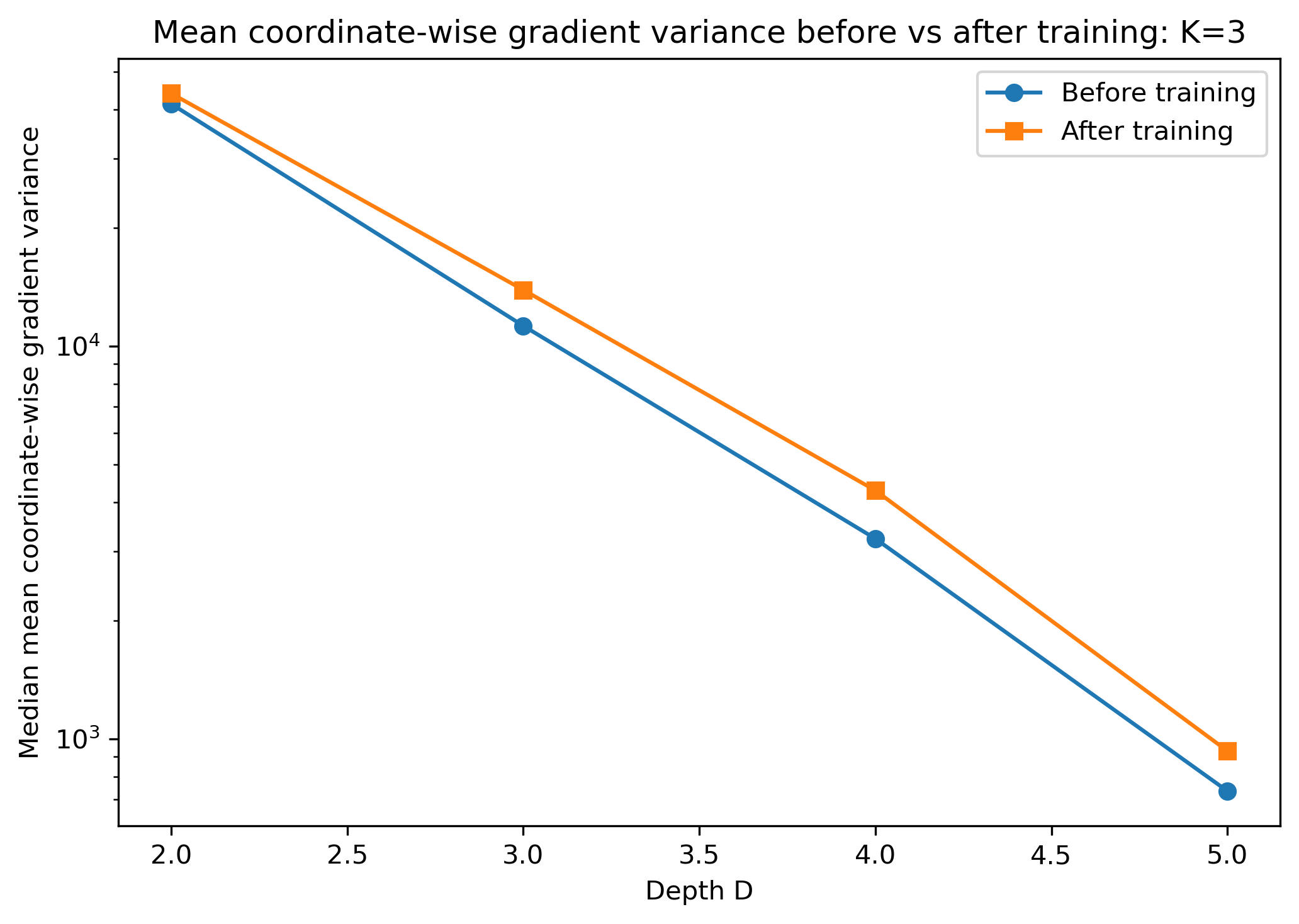}
        \caption{Median (over $5$ repetitions) of $\widehat{V}_{\mathrm{coord}}$ in~\eqnref{eq:var-summary}.}
        \label{fig:mean-coordinate-gradient-var}
    \end{subfigure}
    \hfill
    \begin{subfigure}[t]{0.47\linewidth}
        \centering
        \includegraphics[width=\linewidth]{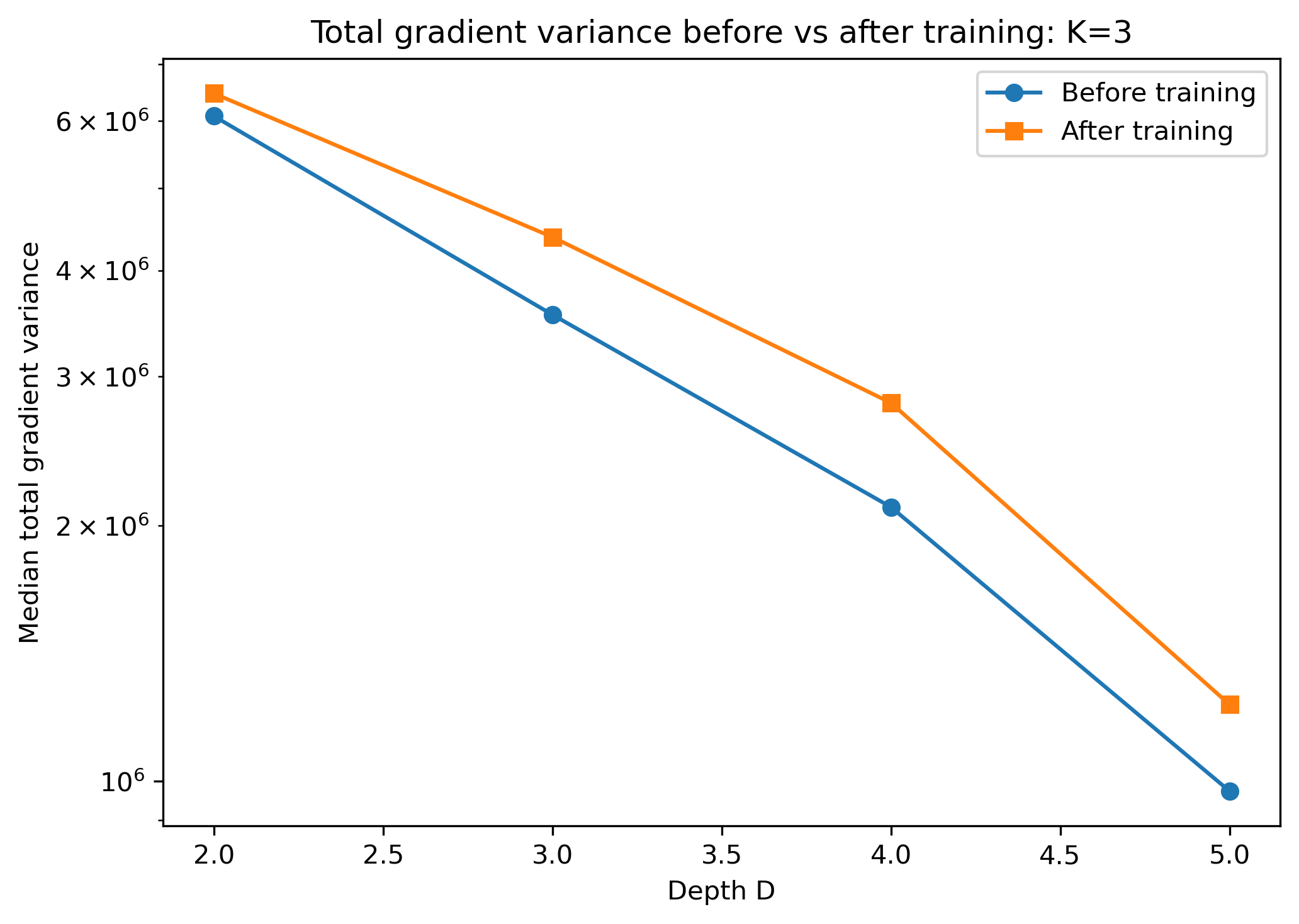}
        \caption{Median (over $5$ repetitions) of $\widehat{V}_{\mathrm{total}}$ in \eqnref{eq:var-summary}.}
        \label{fig:total-coordinate-gradient-var}
    \end{subfigure}
    \caption{Gradient variance diagnostic against depth $D$, before and after training for $\Delta U = Gm_1m_2(\frac{1}{r_2}-\frac{1}{r_1})$ with $K=3$ and $\sigma^{2}=0.2^{2}$.}
    \label{fig:gradvar_diagnostics}
\end{figure}

The runtime behavior in \hyperref[fig:cpu_runtime]{Figure~\ref{fig:cpu_runtime}} confirm the theoretical scaling as noted in \hyperref[tab:relaxed-logit-scaling]{Table~\ref{tab:relaxed-logit-scaling}}. The median CPU time per forward-backward pass increases monotonically with depth $D$. This growth is consistent with the expansion of the full binary tree skeleton $\mathsf{S}_j$ from $N=2^{2+1}-1 = 7$ to $N=2^{5+1}-1 = 63$ nodes per tree. The before-training and after-training curves are nearly identical, indicating that computational cost is governed primarily by the size of the relaxed tree skeleton rather than by the learned parameter values.

The mean coordinate-wise gradient variance ($\widehat{V}_{\mathrm{coord}}$) in \hyperref[fig:mean-coordinate-gradient-var]{Figure~\ref{fig:mean-coordinate-gradient-var}} provide a useful clarification. Increasing the depth $D$ does not lead to an explosion of the gradient variance. Instead, $\widehat{V}_{\mathrm{coord}}$ decreases as $D$ increases. This behavior is consistent with the depth-dependent split prior in \hyperref[subsec:symbolic-tree-prior]{Section~\ref{subsec:symbolic-tree-prior}}: deeper nodes have smaller probability of being active, therefore many deep expansion, operator, and feature logits receive weaker effective gradients and contribute smaller coordinate-wise fluctuations. In other words, although increasing $D$ adds multiple relaxed coordinates, many of these coordinates correspond to deeper nodes that are appropriately penalized for the learning of simple and interpretable scientific expressions~\citep{Occams-Razor-1}.
The total gradient variance ($\widehat{V}_{\mathrm{total}}$) in \hyperref[fig:total-coordinate-gradient-var]{Figure~\ref{fig:total-coordinate-gradient-var}} also decreases over the range of depths $D$. After training, $\widehat{V}_{\mathrm{total}}$ is slightly larger than before training, but the same decreasing trend with $D$ is preserved. Thus, for this Feynman equation~\eqnref{eq:feynman-cpe}, increasing depth $D$ mainly increases the deterministic computation and memory burden through the exponential growth of the full relaxed tree skeleton; it does not destabilize optimization through exploding stochastic gradient variance.

These diagnostics further justify the use of moderate depth values in the applications considered in~\hyperref[sec:vasst-in-action]{Section~\ref{sec:vasst-in-action}}. Even though larger $D$ expands the symbolic expression search space and allows for more complex structures, the memory and runtime scale as $\mathsf{O}(K2^{D})$. Moreover, the ablation studies performed in~\hyperref[app:ablation-study]{Appendix~\ref{app:ablation-study}} shows that $D=3$ already suffices to recover the relevant scientific structures for the Feynman equations viz., \eqnref{eq:feynman-cpe} and \eqnref{eq:feynman-ftc}. Therefore, an appropriate choice of $(K, D)$ (which is $(K=3, D=3)$ in our case) offers a principled accuracy-interpretability-efficiency trade-off, while larger depths can be used when the underlying scientific problem requires more compositional complexity.

\newpage

\section{Additional Experiments with Large Language Models}
\label{app:LLM-SR}

For comparing \vasstmain\ with more recent machine learning-centric \sr\ modules, we conduct additional experiments with \texttt{LLM-SR}~\citep{llmsr}, which casts \sr\ as a large language model (\texttt{LLM})-guided program synthesis task, iteratively generating and evaluating executable symbolic expressions from data. We consider the Feynman equation~\eqnref{eq:feynman-ada}: $f = \beta^{\dagger}(1 + \alpha^{\dagger}\cos\theta)$ from~\hyperref[subsec:feynman-data-study]{Section~\ref{subsec:feynman-data-study}} with $n=2000$ subsampled training observations under additive Gaussian noise level $\sigma^{2}=0.2^{2}$.

\paragraph{\texttt{LLM-SR} configurations and prompts.}
Two open-source language models were used as the \texttt{LLM} backend: \texttt{TinyLlama-1.1B-Chat}~\citep{zhang2024tinyllama} and \texttt{Qwen2.5-Coder-7B-Instruct}~\citep{hui2024qwen25coder}, configured as in~\hyperref[tab:llmsr-config-iii-17-37]{Table~\ref{tab:llmsr-config-iii-17-37}}. Candidate expressions are scored by negative mean squared error after fitting the free continuous parameters using \texttt{BFGS}~\citep{nocedal2006numerical}. We execute three prompting regimes of increasing informativness, as summarized below, ranging from weak syntactic guidance to strongly structured scientific prompting.

\begin{table}[!htp]
\centering
\scriptsize
\renewcommand{\arraystretch}{1.15}
\caption{\texttt{LLM-SR} configurations for learning~\eqnref{eq:feynman-ada} under $\sigma=0.2$.}
\label{tab:llmsr-config-iii-17-37}
\begin{tabularx}{\linewidth}{@{}>{\raggedright\arraybackslash}X c >{\raggedright\arraybackslash}X c c >{\raggedright\arraybackslash}X@{}}
\toprule
\toprule
\texttt{LLM} &
\textbf{Prompt} &
\textbf{Initialization} &
\texttt{MAX\_NPARAMS} &
\texttt{BFGS}  \textbf{maxiter} &
\textbf{Compute Resources} \\
\midrule

\texttt{TinyLlama-1.1B-Chat} &
Prompt 1 &
$c_0\beta^{\dagger}$ &
$3$ &
$100$ &
$1$ GPU, $16$ GB memory \\

\addlinespace[2pt]

\texttt{TinyLlama-1.1B-Chat}  &
Prompt 2 &
$c_0\beta^{\dagger}+c_1\alpha^{\dagger}$ &
$3$ &
$100$ &
$1$ GPU, $16$ GB memory \\

\addlinespace[2pt]

\texttt{TinyLlama-1.1B-Chat}  &
Prompt 3 &
$c_0\beta^{\dagger}(c_1+c_2\alpha^{\dagger}\cos\theta)$ &
$3$ &
$100$ &
$1$ GPU, $16$ GB memory \\

\addlinespace[2pt]

\texttt{Qwen2.5-Coder-7B-Instruct} &
Prompt 1 &
$c_0\beta^{\dagger}$ &
$3$ &
$100$ &
$1$ GPU, $64$ GB memory \\

\addlinespace[2pt]

\texttt{Qwen2.5-Coder-7B-Instruct} &
Prompt 2 &
$c_0\beta^{\dagger}+c_1\alpha^{\dagger}$ &
$3$ &
$100$ &
$1$ GPU, $64$ GB memory \\

\addlinespace[2pt]

\texttt{Qwen2.5-Coder-7B-Instruct} &
Prompt 3 &
$c_0\beta^{\dagger}(c_1+c_2\alpha^{\dagger}\cos\theta)$ &
$3$ &
$100$ &
$1$ GPU, $64$ GB memory \\

\bottomrule
\bottomrule
\end{tabularx}
{
\scriptsize
{
\emph{Note}: $c_0$, $c_1$, and $c_2$ are free continuous parameters to be learned. GPU partition uses NVIDIA A30 GPU.
}
}
\end{table}

\begin{tcolorbox}[
  llmsrpromptbox,
  title={\texttt{LLM-SR} prompt design for \texttt{TinyLlama-1.1B-Chat} while learning~\eqnref{eq:feynman-ada} under $\sigma=0.2$.}
]
\textbf{Prompt 1: Weak syntactic guidance.}
\begin{itemize}
  \item Search target: a simple vectorized \texttt{NumPy} expression mapping
  $(\beta^{\dagger},\alpha^{\dagger},\theta)$ to $f$.
  \item Allowed objects: $\beta^{\dagger},\alpha^{\dagger},\theta$, number of free constants are $3$,
  and elementary trigonometric operations are $\sin$ and $\cos$.
  \item Constraints: exactly two executable lines, no auxiliary variables, reassignment,
  or invalid output shapes.
  \item Initialization: $c_0\beta^{\dagger}$.
\end{itemize}

\textbf{Prompt 2: Scientist-guided moderate prompt.}
\begin{itemize}
  \item Adds domain context: the response is an angular distribution with possible
  forward--backward asymmetry.
  \item Describes the roles of the variables: $\beta^{\dagger}$ acts as an overall scale,
  $\alpha^{\dagger}$ controls asymmetry, and $\theta$ is an angular variable.
  \item Encourages compact expressions with low complexity and cosine-type angular dependence.
  \item Initialization: $c_0\beta^{\dagger}+c_1\alpha^{\dagger}$.
\end{itemize}

\textbf{Prompt 3: Strongly structured scientific prompt.}
\begin{itemize}
  \item States the additive Gaussian noise model and gives the expected scientific form: $\text{response}
    =
    \text{scale}\times
    (\text{baseline}
    +\text{asymmetry}\times\text{angular factor})$.
  \item Specifies that $\beta^{\dagger}$ is the scale, $\alpha^{\dagger}$ controls asymmetry, and the
  angular factor $\theta$ is cosine-like.
  \item Initialization: $c_0\beta^{\dagger}(c_1+c_2\alpha^{\dagger}\cos\theta)$.
\end{itemize}
\end{tcolorbox}

\begin{tcolorbox}[
  llmsrpromptbox,
  title={\texttt{LLM-SR} prompt design for \texttt{Qwen2.5-Coder-7B-Instruct} while learning~\eqnref{eq:feynman-ada} under $\sigma=0.2$.}
]
\textbf{Prompt 1: Weak / fair prompt.}
\begin{itemize}
  \item Search target: a compact vectorized NumPy expression learned from data.
  \item Guidance level: minimal scientific information; only syntactic and execution
  constraints are imposed.
  \item Constraints: two executable lines, no reassignment, valid output shape,
  and restricted elementary operations.
  \item Initialization: $c_0\beta^{\dagger}$.
\end{itemize}

\textbf{Prompt 2: Scientist-guided prompt.}
\begin{itemize}
  \item Provides the scientific target structure $\beta^{\dagger}(1 + \alpha^{\dagger}\cos \theta)$.
  \item Explains variable roles: $\beta^{\dagger}$ is a scale parameter, $\alpha^{\dagger}$ controls
  asymmetry, and $\theta$ is the angular coordinate.
  \item Encourages recovery of the interaction term $\beta^{\dagger}\alpha^{\dagger}\cos\theta$,
  while discouraging unnecessary noise-fitting terms.
  \item Initialization: $c_0\beta^{\dagger}+c_1\alpha^{\dagger}$.
\end{itemize}

\vspace{0.25em}
\textbf{Prompt 3: Strongly structured scientific prompt.}
\begin{itemize}
  \item Fully specifies the expected multiplicative scientific structure: $\text{scale}\times
    (\text{baseline}+\text{angular modulation})$.
  \item Introduces a strong inductive bias toward a cosine-based interaction between
  $\alpha$ and $\theta$.
  \item Initialization: $c_0\beta^{\dagger}(c_1+c_2\alpha^{\dagger}\cos\theta)$.
\end{itemize}
\end{tcolorbox}

\paragraph{Results.}
The results below show that \texttt{LLM-SR} is highly sensitive to prompt specification and initialization across both \texttt{TinyLlama-1.1B-Chat} and \texttt{Qwen2.5-Coder-7B-Instruct}. Under weak or moderately informative prompts, the generated candidates are often incomplete or non-executable, typically consisting only of docstrings or function signatures, and are therefore rejected by the evaluator. Even with strongly structured prompts that describe the underlying scientific mechanism, the generated expressions may only partially recover the interaction
\(\beta^{\dagger}(\cdot+\alpha^{\dagger}\cos\theta)\), while introducing extraneous transformations or violating execution constraints. This leads to degraded predictive performance relative to the noise level \(\sigma=0.2\). These findings suggest that \texttt{LLM}-based \sr\ methods are powerful and complementary, but their practical reliability can depend strongly on prompt design, initialization, model hosting, and filtering of invalid generated expressions.

In contrast, \vasstmain\ provides a computationally efficient probabilistic framework for symbolic structure learning. It directly optimizes a differentiable soft symbolic tree representation and quantifies uncertainty over competing symbolic forms, without relying on prompt engineering or large-scale generative sampling. As reported in~\hyperref[app:additional-feynman-results]{Appendix~\ref{app:additional-feynman-results}}, \vasstmain\ exactly recovers the symbolic structure of~\eqnref{eq:feynman-ada} across the considered noise levels, and its out-of-sample \texttt{RMSE}s on a \(10\%\) held-out test set closely track the corresponding injected noise levels.

\begin{tcolorbox}[
  llmsrpromptbox,
  title={\texttt{LLM-SR} results for \texttt{TinyLlama-1.1B-Chat} while learning~\eqnref{eq:feynman-ada} under \(\sigma=0.2\).}
]
\textbf{Prompt 1: Weak syntactic guidance.}
\begin{itemize}
  \item Generated output: only a function signature and descriptive text were produced; no executable symbolic body was returned.
  \item Interpretation: invalid/incomplete candidate.
  \item Score: \(-10^{20}\).
\end{itemize}

\textbf{Prompt 2: Scientist-guided moderate prompt.}
\begin{itemize}
  \item Generated output: only a function signature and descriptive text were produced; no executable symbolic body was returned.
  \item Interpretation: invalid/incomplete candidate.
  \item Score: \(-10^{20}\).
\end{itemize}

\textbf{Prompt 3: Strongly structured scientific prompt.}
\begin{itemize}
  \item Generated output: the evolved candidate was equivalent to $\widehat f =1+c_0\beta^{\dagger}(c_1+\alpha^{\dagger}\cos\theta)$.
  \item Interpretation: the candidate partially captures the angular interaction
  \(\beta^{\dagger}(\cdot+\alpha^{\dagger}\cos\theta)\), but introduces an extraneous additive shift and violates the required execution format.
  \item Out-of-sample \texttt{RMSE} computed on a $10\%$ held-out test set: \(0.3748\).
\end{itemize}

\textbf{Summary.}
\texttt{TinyLlama-1.1B-Chat} is highly sensitive to prompt specification. Prompts 1-2 fail to produce executable symbolic expressions, while Prompt 3 recovers part of the correct angular structure but adds an incorrect transformation. The initialized Prompt 3 structure $c_0\beta^{\dagger}(c_1+c_2\alpha^{\dagger}\cos\theta)$
achieves out-of-sample \texttt{RMSE} \(\approx 0.2012\) after parameter fitting, close to the injected noise level.
\end{tcolorbox}

\begin{tcolorbox}[
  llmsrpromptbox,
  title={\texttt{LLM-SR} results for \texttt{Qwen2.5-Coder-7B-Instruct} while learning~\eqnref{eq:feynman-ada} under \(\sigma=0.2\).}
]
\textbf{Prompt 1: Weak / fair prompt.}
\begin{itemize}
  \item Generated output: only a function signature and descriptive text were produced; no executable symbolic body was returned.
  \item Interpretation: invalid/incomplete candidate.
  \item Score: \(-10^{20}\).
\end{itemize}

\textbf{Prompt 2: Scientist-guided prompt.}
\begin{itemize}
  \item Generated output: only a function signature and descriptive text were produced, despite the additional scientific guidance.
  \item Interpretation: invalid/incomplete candidate.
  \item Score: \(-10^{20}\).
\end{itemize}

\textbf{Prompt 3: Strongly structured scientific prompt.}
\begin{itemize}
  \item Generated output: the model produced a descriptive docstring with an extended explanation and an additional ``improved version'' section, but no valid executable symbolic body.
  \item Interpretation: invalid/incomplete candidate under the strict execution constraints.
  \item Score: \(-10^{20}\).
\end{itemize}

\textbf{Summary.}
Across all prompting regimes, \texttt{Qwen2.5-Coder-7B-Instruct} frequently generates descriptive rather than executable symbolic programs under the strict \texttt{LLM-SR} formatting constraints. Nevertheless, the structured initialized form $c_0\beta^{\dagger}(c_1+c_2\alpha^{\dagger}\cos\theta)$
achieves out-of-sample \texttt{RMSE} \(\approx 0.2012\) after parameter fitting, indicating that the target expression is representable within the search space but is not reliably recovered through the generated samples.
\end{tcolorbox}


\end{document}